\newcommand{\mytitle}{
  A Framework for Specifying, Prototyping, and
  Reasoning about Computational Systems
}
\newcommand{\ignore}[1]{}
\newcommand{\lvl}{{\rm lvl}}
\newcommand{\supp}{{\rm supp}}
\newcommand{\htf}{{\rm ht}}
\newcommand{\hsl}[1]{\hbox{\sl #1}}
\newcommand{\abs}[2]{\hbox{\sl abs} \; #1 \; #2}
\newcommand{\abst}[3]{\hbox{\sl abs} \; #1 \; #2 \; #3}
\newcommand{\var}[1]{\hbox{\sl var} \; #1}
\newcommand{\app}[2]{\hbox{\sl app} \; #1 \; #2}
\newcommand{\arr}[2]{\hbox{\sl arr} \; #1 \; #2}
\newcommand{\arrow}[2]{\hbox{\sl arrow} \; #1 \; #2}
\newcommand{\all}[2]{\hbox{\sl all} \; #1 \; #2}
\newcommand{\sub}[2]{\hbox{\sl sub} \; #1 \; #2}
\newcommand{\bound}[2]{\hbox{\sl bound} \; #1 \; #2}
\newcommand{\assm}[2]{\hbox{\sl assm} \; #1 \; #2}
\newcommand{\ftop}{\hbox{\sl top}}
\newcommand{\wfty}[1]{\hbox{\sl wfty} \; #1}
\newcommand{\lambdat}[3]{\lambda #1\!:\!#2 . \ #3}
\newcommand{\of}[2]{\hbox{\sl of} \; #1 \; #2}
\newcommand{\typeof}[3]{\hbox{\sl typeof} \; #1 \; #2 \; #3}
\newcommand{\eval}[2]{\hbox{\sl eval} \; #1 \; #2}
\newcommand{\member}[2]{\hbox{\sl member} \; #1 \; #2}
\newcommand{\seq}[3]{\hbox{\sl seq}_{#1} \; #2 \; #3}
\newcommand{\prog}[2]{\hbox{\sl prog} \; #1 \; #2}
\newcommand{\ctx}[1]{\hbox{\sl ctx} \; #1}
\newcommand{\nat}[1]{\hbox{\sl nat} \; #1}
\newcommand{\ack}[3]{\hbox{\sl ack} \; #1 \; #2 \; #3}
\newcommand{\lt}[2]{\hbox{\sl lt} \; #1 \; #2}
\newcommand{\even}[1]{\hbox{\sl even} \; #1}
\newcommand{\odd}[1]{\hbox{\sl odd} \; #1}
\newcommand{\name}[1]{\hbox{\sl name} \; #1}
\newcommand{\fresh}[2]{\hbox{\sl fresh} \; #1 \; #2}
\newcommand{\dom}{\hbox{\sl dom}}
\newcommand{\tup}[1]{\langle #1\rangle}
\newcommand{\tridot}{\!\Vdash\!}
\newcommand{\tridotn}[1]{\!\Vdash_{\!#1}\!}
\newcommand{\new}{\reflectbox{\ensuremath{\mathsf{N}}}\xspace}
\newcommand{\spec}[3]{\hbox{\sl spec} \; #1 \; #2 \; #3}
\newcommand{\monoTy}[1]{\hbox{\sl monoTy} \; #1}
\newcommand{\polyTy}[1]{\hbox{\sl polyTy} \; #1}
\newcommand{\restrict}{\uparrow}
\newcommand{\CSNAS}{\hbox{\sl CSNAS}}
\newcommand{\CSU}{\hbox{\sl CSU}}
\newcommand{\fsub}{\mbox{\tt <\!:}}
\newcommand{\term}[1]{\hbox{\sl term} \; #1}
\newcommand{\pathp}[2]{\hbox{\sl path} \; #1 \; #2}
\newcommand{\uabs}[1]{\hbox{\sl abs} \; #1}
\newcommand{\pleft}[1]{\hbox{\sl left} \; #1}
\newcommand{\pright}[1]{\hbox{\sl right} \; #1}
\newcommand{\bnd}[1]{\hbox{\sl bnd} \; #1}
\newcommand{\ctxs}[2]{\hbox{\sl ctxs} \; #1 \; #2}
\newcommand{\add}[3]{\hbox{\sl add} \; #1 \; #2 \; #3}
\newcommand{\hodb}[3]{\hbox{\sl ho2db} \; #1 \; #2 \; #3}
\newcommand{\depth}[2]{\hbox{\sl depth} \; #1 \; #2}
\newcommand{\dapp}[2]{\hbox{\sl dapp} \; #1 \; #2}
\newcommand{\dvar}[3]{\hbox{\sl dvar} \; #1 \; #2 \; #3}
\newcommand{\mle}[2]{\hbox{\sl le} \; #1 \; #2}
\newcommand{\dctx}[2]{\hbox{\sl dctx} \; #1 \; #2}
\newcommand{\type}[1]{\hbox{\sl type} \; #1}
\newcommand{\reduce}[2]{\hbox{\sl reduce} \; #1 \; #2}
\newcommand{\subst}[3]{\hbox{\sl subst} \; #1 \; #2 \; #3}
\newcommand{\step}[2]{\hbox{\sl step} \; #1 \; #2}
\newcommand{\sn}[1]{\hbox{\sl sn} \; #1}
\newcommand{\neutral}[1]{\hbox{\sl neutral} \; #1}
\newcommand{\simp}[2]{\hbox{\sl sim} \; #1 \; #2}
\newcommand{\FOL}{FO\lambda}
\newcommand{\N}{{\rm I} \! {\rm N}}
\newcommand{\FOLDN}{\ensuremath{\FOL^{\Delta\N}}\xspace}
\newcommand{\foldnb}{\ensuremath{FO\lambda^{\Delta\nabla}}\xspace}
\newcommand{\G}{$\mathcal{G}$\xspace}
\newcommand{\logic}{\G}
\newcommand{\LG}{$LG^\omega$\xspace}
\newcommand{\ie}{{\em i.e.}}
\newcommand{\eg}{{\em e.g.}}
\newcommand{\hh}{$hH^2$\xspace}
\newcommand{\TRUE}{\mbox{TRUE}}
\newcommand{\OR}{\mbox{OR}}
\newcommand{\AND}{\mbox{AND}}
\newcommand{\AUGMENT}{\mbox{AUGMENT}}
\newcommand{\GENERIC}{\mbox{GENERIC}}
\newcommand{\INSTANCE}{\mbox{INSTANCE}}
\newcommand{\BACKCHAIN}{\mbox{BACKCHAIN}}
\newcommand{\lra}{\longrightarrow}
\newcommand{\defL}{\hbox{\sl def}\mathcal{L}}
\newcommand{\defR}{\hbox{\sl def}\mathcal{R}}
\newcommand{\cL}{\hbox{\sl c}\mathcal{L}}
\newcommand{\unrhdL}{\unrhd\mathcal{L}}
\newcommand{\unrhdR}{\unrhd\mathcal{R}}
\newcommand{\cut}{\hbox{\sl cut}}
\newcommand{\botL}{\bot\mathcal{L}}
\newcommand{\topR}{\top\mathcal{R}}
\newcommand{\lorL}{\lor\mathcal{L}}
\newcommand{\lorR}{\lor\mathcal{R}}
\newcommand{\landL}{\land\mathcal{L}}
\newcommand{\landR}{\land\mathcal{R}}
\newcommand{\supsetL}{\supset\!\mathcal{L}}
\newcommand{\supsetR}{\supset\!\mathcal{R}}
\newcommand{\forallL}{\forall\mathcal{L}}
\newcommand{\forallR}{\forall\mathcal{R}}
\newcommand{\nablaL}{\nabla\mathcal{L}}
\newcommand{\nablaR}{\nabla\mathcal{R}}
\newcommand{\existsL}{\exists\mathcal{L}}
\newcommand{\existsR}{\exists\mathcal{R}}
\newcommand{\IL}{\mathcal{IL}}
\newcommand{\CIR}{\mathcal{CIR}}
\newcommand{\mueq}{\stackrel{\mu}{=}}
\newcommand{\nueq}{\stackrel{\nu}{=}}
\newcommand{\cas}[1]{[\![ #1 ]\!]}
\newcommand{\enc}[1]{\ulcorner #1 \urcorner}
\newtheorem{definition}{Definition}[section]
\newtheorem{theorem}[definition]{Theorem}
\newtheorem{lemma}[definition]{Lemma}
\newtheorem{corollary}[definition]{Corollary}
\newtheorem{notation}[definition]{Notation}
\begin{document}

\Author{Andrew Jude Gacek}
\Title{\mytitle}
\Month{September}
\Year{2009}
\Advisor{Gopalan Nadathur}
\Degree{DOCTOR OF PHILOSOPHY}
\degreelevel{doctoral}

\prelimpages

%\signaturepage

\titlepage

\copyrightpage

\setcounter{page}{1}

\acknowledgments{Many people have supported me during the development of this thesis
and I owe them all a debt of gratitude.

Firstly, I would like to thank my advisor Gopalan Nadathur for his
patience and guidance which have played a significant part in my
development as a researcher. His willingness to share his opinions on
everything from academic life to playing squash has helped me to
develop a perspective and to have fun while doing this. I look forward
to continuing my interactions with him far into the future.

I am grateful to Dale Miller for sharing with me an excitement for
research and an appreciation of the uncertainty that precedes
understanding. I have never met anybody else who so enjoys when things
seem amiss, because he knows that a new perspective will eventually
emerge and bring clarity.

This thesis has been heavily influenced by the time I have spent
working with Alwen Tiu, David Baelde, Zach Snow, and Xiaochu Qi.
Understanding their work has given me a deeper understanding of my own
research and its role in the bigger picture. I am thankful for the
time I have had with each and every one of them.

I have been inspired in my studies by my friends Mike Whalen and Jared
Davis. Their intelligence, drive, and curiosity are remarkable and have
challenged me to work harder so that I may hope to be considered their
equals.

I also want to thank my committee members Eric Van Wyk, Mats Heimdahl,
and Wayne Richter for their time and for their guidance in my research
career.

Finally, I am thankful to the many people who have supported me long
before this thesis began. I want especially to thank my wife, Ann,
for her patience, understanding, and love, and my
parents for their never-ending encouragement and support. To the rest
of my family and friends: I thank you all!

\vfill
\vfill
\vfill
\vfill
\vfill
\vfill
\vfill
\vfill
\vfill
\vfill
\vfill
\vfill
\vfill
\vfill
\vfill
\vfill
\vfill
\vfill
\vfill
\vfill
\vfill
\vfill
\vfill
\vfill
\vfill
\vfill

Work on this thesis has been partially funded by the NSF Grants
CCR-0429572 and CCF-0917140. Support has also been received from a
research contract from Boston Scientific and from funds provided by
the Institute of Technology and the Department of Computer Science and
Engineering at the University of Minnesota. Opinions, findings, and
conclusions or recommendations expressed in this thesis should be
understood as mine. In particular, they do not necessarily reflect the
views of the National Science Foundation.

%%% Local Variables:
%%% mode: latex
%%% TeX-master: "root"
%%% End:

% LocalWords:  advisor Gopalan Nadathur Alwen Tiu Baelde Xiaochu Qi Whalen Wyk
% LocalWords:  Heimdahl CCR CCF
}

\abstract{A major motivation for formal systems such as programming languages
and logics is that they support the ability to perform computations in
a safe, secure, and understandable way. A considerable amount of
effort has consequently been devoted to developing tools and
techniques for structuring and analyzing such systems. It is natural
to imagine that research in this setting might draw benefits from its
own labor. In particular, one might expect the study of formal systems
to be conducted with the help of languages and logics designed for
such study. There are, however, significant problems that must be
solved before such a possibility can be made a practical reality. One
such problem arises from the fact that formal systems often have to
treat objects such as formulas, proofs, programs, and types that have
an inherent binding structure. In this context, it is necessary to
provide a flexible and logically precise treatment of related notions
such as the equality of objects under the renaming of bound variables
and substitution that respects the scopes of binders; there is
considerable evidence that if such issues are not dealt with in an
intrinsic and systematic way, then they can overwhelm any relevant
reasoning tasks. For a logic to be useful in this setting, it must
also support rich capabilities such as those for inductive reasoning
over computations that are described by recursion over syntax.

This thesis concerns the development of a framework that facilitates
the design and analysis of formal systems. Specifically, this
framework is intended to provide 1) a specification language which
supports the concise and direct description of a system based on its
informal presentation, 2) a mechanism for animating the specification
language so that descriptions written in it can quickly and
effectively be turned into prototypes of the systems they are about,
and 3) a logic for proving properties of descriptions provided in the
specification language and thereby of the systems they encode. A
defining characteristic of the proposed framework is that it is based
on two separate but closely intertwined logics. One of these is a
specification logic that facilitates the description of computational
structure while the other is a logic that exploits the special
characteristics of the specification logic to support reasoning about
the computational behavior of systems that are described using it.
Both logics embody a natural treatment of binding structure by using
the $\lambda$-calculus as a means for representing objects and by
incorporating special mechanisms for working with such structure. By
using this technique, they lift the treatment of binding from the
object language into the domain of the relevant meta logic, thereby
allowing the specification or analysis components to focus on the more
essential logical aspects of the systems that are encoded.

One focus of this thesis is on developing a rich and expressive
reasoning logic that is of use within the described framework. This
work exploits a previously developed capability of definitions for
embedding recursive specifications into the reasoning logic; this
notion of definitions is complemented by a device for a case-analysis
style reasoning over the descriptions they encode. Use is also made of
a special kind of judgment called a generic judgment for reflecting
object language binding into the meta logic and thereby for reasoning
about such structure. Existing methods have, however, had a
shortcoming in how they combine these two devices. Generic judgments
lead to the introduction of syntactic objects called nominal constants
into formulas and terms. The manner in which such objects are
introduced often ensures that they satisfy certain properties which
are necessary to take note of in the reasoning process. Unfortunately,
this has heretofore not been possible to do. To overcome this problem,
we introduce a special binary relation between terms called {\it
  nominal abstraction} and show this can be combined with definitions
to encode the desired properties. The treatment of definitions is
further enriched by endowing them with the capability of being
interpreted inductively or co-inductively. The resulting logic is
shown to be consistent and examples are presented to demonstrate its
richness and usefulness in reasoning tasks.

This thesis is also concerned with the practical application of the
logical machinery it develops. Specifically, it describes an
interactive, tactic-style theorem prover called Abella that realizes
the reasoning logic. Abella embodies the use of lemmas in proofs and
also provides intuitively well-motivated tactics for inductive and
co-inductive reasoning. The idea of reasoning using two-levels of
logic is exploited in this context. This form of reasoning, pioneered
by McDowell and Miller, embeds the specification logic explicitly into
the reasoning logic and then reasons about particular specifications
through this embedding. The usefulness of this approach is
demonstrated by showing that general properties can be proved about
the specification logic and then used as lemmas to simplify the
overall reasoning process. We use these ideas together with Abella to
develop several interesting and challenging proofs. The examples
considered include ones in the recently proposed POPLmark challenge
and a formalization of Girard's proof of strong normalization for the
simply-typed $\lambda$-calculus. We also explore the notion of
adequacy that relates theorems proved using Abella to the properties
of the object systems that are ultimately of primary interest.

%%% Local Variables:
%%% mode: latex
%%% TeX-master: "root"
%%% End:

% LocalWords:  Harrop POPLmark Girard's Abella
}

\setcounter{tocdepth}{2}
\tableofcontents

\listoffigures

\textpages

%\doublespace

\chapter{Introduction}
\label{ch:introduction}

In this thesis we are interested in developing a framework for
mechanizing the specification and prototyping of formal systems and
also the process of reasoning about the properties of such systems
based on their specifications. The formal systems that are of interest
to us are ones that concern computation: for instance, they might
characterize evaluation and typing in a programming language,
provability in a logic, or behavior in a concurrency system. Formal
systems of these kinds typically manipulate syntactically complex
objects such as formulas, proofs, and programs. Mechanized
specification and reasoning about such systems has proven difficult to
achieve through the use of traditional tools and techniques
\cite{aydemir05tphols}. We propose a framework here which overcomes
these difficulties and, through this process, brings the benefits of
automation and computer-aided verification to bear on the development
of these types of systems. In particular, this thesis proposes a
framework that facilitates the development of such systems by
providing 1) a specification language which supports the concise and
direct description of a system based on its informal presentation, 2)
a mechanism for animating the specification language so that
descriptions written in it can quickly and effectively be turned into
prototypes of the systems they are about, and 3) a logic for proving
properties of descriptions provided in the specification language and
thereby of the systems they encode.

\section[A Specification, Prototyping, and Reasoning Framework]
        {A Framework for Specification, Prototyping, and Reasoning}

% SOS

The formal systems that we would like to specify and reason about are
all characterized by the fact that they are based on syntactic
expressions and their behavior is determined by the structure of these
expressions. For brevity we will refer to such systems simply as {\em
  computational systems}. A popular approach to describing such
systems starts by describing various possible judgments over the
syntax of the systems. Then rule schemas are presented where each
schema allows a judgment to be formed from other judgments, often in a
compositional manner. Finally, instances of these rules schemas are
chained together into a {\em derivation} where each premise judgment
of a rule instance is the consequence judgment of another rule
instance. A judgment is said to hold if and only if it is the final
conclusion judgment of derivation. Thus one can understand the
behavior of a system by studying the rule schemas for forming
judgments about the system. This approach to describing a
computational system is known as structural operational semantics
\cite{plotkin81}.

Structural operational semantics descriptions have a logical flavor in
that one simply describes a few declarative rules for manipulating
syntax and these are orchestrated together to reach larger conclusions
about the behavior of the system. The framework we propose allows for
such descriptions to be formally specified via a {\em specification
  logic} similar to the logic of Horn clauses. We call such an
encoding of a computational system into this logic a {\em
  specification}. More specifically, the system syntax is encoded as
specification logic terms, judgments are encoded as specification
logic atomic formulas, and rules are encoded as richer specification
logic formulas. Derivations of atomic formulas within the
specification logic then correspond to derivations in structural
operational semantics descriptions. Thus we can study a wide variety
of computational systems via a study of the specification logic.

% Prototyping

In order to interact with computational systems, our proposed
framework supports prototyping based on the system specification. This
prototyping is driven directly by the formal specification, by giving
a computational interpretation of the specification logic in the same
sense that Prolog provides a computational interpretation to the logic
of Horn clauses. This eliminates the need for the framework user to
manually develop a prototype based on the specification, thus avoiding
a source of potential errors. Also, as the specification evolves this
ensures that the prototype remains faithful to the current
specification.

% Reasoning & two-levels

The specification of a computational system consists of local rules
about the system behavior, but one is often interested in global
properties of the system. For example, programming language designers
often describe the rules for evaluation and typing judgments for a
language and then prove properties which relate the two judgments
together such as that the evaluation judgment preserves the typing
judgment. Such properties ensure that the language is well-behaved
relative to programmers' expectations. In order to prove these
properties about a structural operational semantics description one
must be able to analyze the ways in which derivations may be formed.
In the example of proving that evaluation preserves typing, one may
inductively analyze the possible forms that a derivation of an
evaluation judgment may have and for each possibility argue that the
typing judgment for the evaluated term can be restructured into a
typing judgment for the term which results from the evaluation.

The proposed framework allows for reasoning over structural
operational semantics descriptions via a {\em meta-logic}. The
meta-logic contains mechanisms such as induction and co-induction
which are essential to sophisticated reasoning. The meta-logic also
contains a mechanism called {\em definitions} which allows one to
connect atomic judgments to descriptions of behavior in a ``closed
world'' fashion. Thus, it allows for both positive reasoning, \ie,
showing that a judgment holds, and negative reasoning, \ie, analyzing
why a judgment holds. This allows one to easily carry out the case
analysis-like reasoning described in the example of typing and
evaluation.

We refer to this second logic as a meta-logic because, in our
approach, we use it to encode the entire specification logic, rather
than to encode each specification independently. We then reason about
particular specifications by reasoning about their descriptions in the
specification logic. This style of reasoning was pioneered by McDowell
and Miller \cite{mcdowell02tocl} and is called the {\em two-level
  logic approach to reasoning}. One of its benefits is that it allows
us to reason over specifications exactly as they are written and used
in prototyping. Another is that it allows properties of the
specification logic to be formally proven once and for all in the
meta-logic and then used freely during reasoning. In practice, many
tedious substitution lemmas proven about particular specifications are
subsumed by these more general properties of the specification logic.

% Binding & HOAS

A pervasive issue in the computational systems of interest is dealing
with the binding structure of syntactic objects. For example, to
develop a programming language we need to formalize the rules for
binding local variables which requires a systematic way 1) to
associate variable occurrences with their binders, 2) to treat objects
which differ only in the name of bound variables as being identical,
and 3) to realize a logically correct notion of capture-avoiding
substitution which respects the binding structure of objects. Our
proposed framework addresses all of these issue by mapping the binding
structure of objects into the abstraction mechanism of the
meta-language, \ie, the specification logic during specification and
the meta-logic during reasoning. This is called a {\em higher-order
  abstract syntax} representation \cite{miller87slp, pfenning88pldi}.
In this way, the meta-language notion of binding describes how
variables occurrences are associated to the binder, the meta-language
notion of equality provides a way to identify objects differing only
in the names of bound variables, and meta-language function
application and reduction realize capture-avoiding substitution.

\section{An Illustration of the Application of the Framework}
\label{sec:example}

Throughout this thesis we will use the example of the simply-typed
$\lambda$-calculus \cite{church41,barendregt84}. This is a compact
example which highlights many of the essential difficulties involved
in specifying, prototyping, and reasoning about a computational system
with binding. Anytime we use such a system as the focus of study we
shall refer to it as the {\em object language} or the {\em object
  logic}.

The syntax of the simply-typed $\lambda$-calculus is made up of two
classes of expressions called {\em types} and {\em pre-terms} which
are defined, respectively, by the following grammar rules.
\begin{align*}
a &::= i \mid a \to a &
t &::= x \mid (\lambdat x a t) \mid (t\ t)
\end{align*}
Here $x$ is variable occurrence and in the expression $(\lambdat x a
t)$ the $x$ is to be considered bound within the expression $t$. We
assume the standard notions of binding including free and bound
variables, equivalence under renaming of bound variables, and a notion
of capture-avoiding substitution denoted by $t[x := s]$. Note,
however, that when one formally specifies this system within a
framework, these notions will need to be dealt with somehow. We will
denote types using variables named $a$, $b$, $c$, and $d$, pre-terms
using variables named $m$, $n$, $r$, $s$, $t$, and $v$, and object
language variables using $x$, $y$, and $z$.

\begin{figure}[t]
\begin{align*}
\infer
 {(\lambdat x a r) \Downarrow (\lambdat x a r)}
 {}
&&
\infer
 {(m\ n) \Downarrow v}
 {m \Downarrow (\lambdat x a r) &
  r[x := n] \Downarrow v}
\end{align*}
\caption{Evaluation in the simply-typed $\lambda$-calculus}
\label{fig:stlc-eval}
\end{figure}

%% Define derivation? define height?

We define a notion of big-step call-by-name weak reduction which we
call simply {\em evaluation}. This is denoted by the judgment $t
\Downarrow v$ which can be read as ``$t$ evaluates to $v$.'' The rules
for forming derivations of this judgment are presented in
Figure~\ref{fig:stlc-eval}.

\begin{figure}[t]
\begin{align*}
\infer
 {\Gamma \vdash x : a}
 {x : a \in \Gamma}
&&
\infer[x\notin\dom(\Gamma)]
  {\Gamma \vdash (\lambdat x a r) : a\to b}
  {\Gamma, x : a \vdash r : b}
&&
\infer
  {\Gamma \vdash (m\ n) : b}
  {\Gamma \vdash m : a \to b & \Gamma \vdash n : a}
\end{align*}
\caption{Typing in the simply-typed $\lambda$-calculus}
\label{fig:stlc-typing}
\end{figure}

We define a notion of typing via the judgment $\Gamma \vdash t : a$
which can be read as ``$t$ has type $a$ relative to the context
$\Gamma$.'' Here $\Gamma$ is called a {\em typing context} and is
described by the following grammar.
\begin{equation*}
\Gamma ::= \cdot \mid \Gamma, x : a
\end{equation*}
We will write a context of the form $\cdot, x_1 : a_1, \ldots, x_n :
a_n$ simply as $x_1 : a_1, \ldots, x_n : a_n$. We define $\dom(x_1 :
a_1, \ldots, x_n : a_n)$ as $\{x_1,\ldots,x_n\}$. In $\Gamma, x : a$
we require that $x\notin\dom(\Gamma)$. We satisfy this restriction by
renaming bound variables as needed. The rules for forming derivations
of the typing judgment are presented in Figure~\ref{fig:stlc-typing}.
If $t$ is a pre-term such that there exists a type $a$ for which
$\Gamma \vdash t : a$ holds, then we call $t$ a {\em term}.

\begin{figure}[t]
\centering
\begin{align*}
&\forall a, m. [\eval {(\abs a m)} {(\abs a m)}] \\[3pt]
&\forall m, a, r, n, v.[\eval m (\abs a r) \supset \eval {(r\ n)} v \supset \eval {(\app m n)} {v}]\\[10pt]
&\forall m, a, b, n.[\of m (\arr a b) \supset \of n a
\supset \of{(\app m n)} b] \\[3pt]
&\forall a, r, b.[(\forall x. \of x a \supset \of{(r\
  x)}{b}) \supset \of{(\abs a r)}{(\arr a b)}]
\end{align*}
\caption{A Horn clause-like encoding of evaluation and typing}
\label{fig:stlc-horn-like}
\end{figure}

We can now think of encoding the simply-typed $\lambda$-calculus into
our specification logic. This begins with the constructors $i$ and
{\sl arr} for representing the base and arrow types. We also use the
constructors {\sl app} and {\sl abs} for representing applications and
abstractions. Using a higher-order abstract syntax encoding there is
no constructor for variables, and instead the {\sl abs} constructor
takes two arguments: 1) the type of the abstracted variable and 2) a
specification logic abstraction representing the body. For example,
the object language term $(\lambdat x i (\lambdat y i x))$ is denoted by
$(\abs i (\lambda x. \abs i (\lambda y. x)))$ where these latter
$\lambda$s are specification logic abstractions.

We introduce the specification logic predicates {\sl eval} and {\sl
  of} for representing evaluation and typing judgments respectively.
Assuming a Horn clause-like specification logic, the rules for forming
evaluation and typing judgments are encoded into the specification
logic formulas shown in Figure~\ref{fig:stlc-horn-like}. This
specification uses various features of the specification logic which
go beyond simple Horn clauses such as function application for
realizing capture-avoiding substitution, universal quantification to
avoid explicit side-conditions, and specification logic hypotheses for
representing typing contexts. The complete details of this
specification are presented in Chapter~\ref{ch:specification-logic}.
For now it is sufficient to appreciate that the structural operational
semantics description of the simply-typed $\lambda$-calculus can be
encoded very directly into the specification logic. Moreover, a
Prolog-like operational interpretation of proof search for the
specification logic yields a prototype for our specification.

Returning to the original structural operational semantics description
of the simply-typed $\lambda$-calculus for the moment, let us think of
proving some global property of the system. One such property of
interest is that evaluation preserves the type of a term, called the
{\em type preservation} property. Let us consider how such a property
can be proved in an informal, mathematical setting. We might proceed
by first showing the auxiliary properties of typing judgments that are
contained in the following two lemmas.

\begin{lemma}
\label{lem:stlc-perm}
If $\Gamma \vdash t : a$ and $\Gamma'$ is a permutation of $\Gamma$,
then $\Gamma' \vdash t : a$. Moreover, the derivations have the same
height.
\end{lemma}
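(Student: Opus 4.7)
The plan is to proceed by induction on the height of the derivation of $\Gamma \vdash t : a$, doing a case analysis on the last rule used. The key observation underlying every case is that a permutation of a context preserves both its membership relation (what pairs $x : a$ are in it) and its domain (the set of variables bound).

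First I would handle the variable rule, where $\Gamma \vdash x : a$ holds because $x : a \in \Gamma$. Since $\Gamma'$ is a permutation of $\Gamma$, we have $x : a \in \Gamma'$ as well, so the same rule derives $\Gamma' \vdash x : a$ at the same height. Next, for the application case, where $\Gamma \vdash (m\ n) : b$ comes from $\Gamma \vdash m : a \to b$ and $\Gamma \vdash n : a$, I would apply the induction hypothesis to each premise, using the same permutation $\Gamma'$ of $\Gamma$; this yields derivations of $\Gamma' \vdash m : a \to b$ and $\Gamma' \vdash n : a$ of the same respective heights, and the application rule then reconstructs a derivation of $\Gamma' \vdash (m\ n) : b$ of the same height as the original.

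The case I expect to be the main obstacle (and it is really only a mild one) is the abstraction rule, where $\Gamma \vdash (\lambdat x a r) : a \to b$ is derived from $\Gamma, x : a \vdash r : b$ under the side condition $x \notin \dom(\Gamma)$. Here I would first observe that $\Gamma', x : a$ is a permutation of $\Gamma, x : a$, so the induction hypothesis applied to the premise derivation yields $\Gamma', x : a \vdash r : b$ at the same height. The subtlety is checking the side condition: I need $x \notin \dom(\Gamma')$ in order to apply the abstraction rule with the permuted context. This holds because permutations preserve the underlying set of variables, so $\dom(\Gamma') = \dom(\Gamma)$, and the original side condition transfers. If the bound variable name happened to clash with something in $\Gamma'$ that was not present in $\Gamma$, we could rename it, but in fact no such clash is possible.

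Having discharged all three cases, the induction goes through and both conclusions — the existence of the permuted derivation and the preservation of height — follow simultaneously, since at each step the height of the new derivation is constructed to match the original.
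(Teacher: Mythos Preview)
Your proposal is correct and follows exactly the approach the paper indicates: induction on the height of the derivation of $\Gamma \vdash t : a$, with a case analysis on the last rule. The paper leaves the details implicit, and you have filled them in accurately, including the observation that $\dom(\Gamma') = \dom(\Gamma)$ handles the side condition in the abstraction case.
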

\begin{proof}
The proof is by induction on the height of the derivation of $\Gamma
\vdash t : a$.
\end{proof}

\begin{lemma}
\label{lem:stlc-sub}
If $\Gamma, x : a \vdash t : b$ and $\Gamma \vdash s : a$ then $\Gamma
\vdash t[x := s] : b$.
\end{lemma}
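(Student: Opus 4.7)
The plan is to prove this by induction on the height of the derivation of $\Gamma, x : a \vdash t : b$, with a case analysis on the last rule applied (equivalently, on the topmost constructor of $t$). The base case is the variable rule, where $t$ is some variable $y$. If $y = x$ then $b = a$ and $t[x := s] = s$, so the goal is exactly the second hypothesis $\Gamma \vdash s : a$; otherwise $y \neq x$, so $y : b \in \Gamma$ and $t[x := s] = y$, and the conclusion follows by reapplying the variable rule.

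The application case is routine: for $t = (m\ n)$, the two subderivations yield, via two applications of the induction hypothesis, typings for $m[x := s]$ and $n[x := s]$ in $\Gamma$, which recombine under the application rule. The abstraction case is where the real work lives. Here $t = \lambdat y c r$ with $b = c \to d$, and a subderivation of $\Gamma, x : a, y : c \vdash r : d$ witnesses the side condition $y \notin \dom(\Gamma, x : a)$. By renaming $y$ to a fresh name I may assume also that $y$ does not occur free in $s$. I would then invoke Lemma~\ref{lem:stlc-perm} to rearrange the context to $\Gamma, y : c, x : a \vdash r : d$ with a derivation of the same height, apply the induction hypothesis to obtain $\Gamma, y : c \vdash r[x := s] : d$, and close with one application of the abstraction rule to derive $\Gamma \vdash \lambdat y c {(r[x := s])} : c \to d$; since $y$ is fresh in $s$, this term coincides with $(\lambdat y c r)[x := s]$.

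The principal obstacle surfaces in the step where the induction hypothesis is applied in the abstraction case: it demands the typing $\Gamma, y : c \vdash s : a$, whereas only $\Gamma \vdash s : a$ is given. No weakening lemma is available at this point in the development, and the most economical remedy is to strengthen the statement being proved. Specifically, I would prove the slightly more general claim that if $\Gamma_1, x : a, \Gamma_2 \vdash t : b$ and $\Gamma_1 \vdash s : a$, then $\Gamma_1, \Gamma_2 \vdash t[x := s] : b$. The fresh variable $y$ introduced in the abstraction case is then simply appended to $\Gamma_2$, the induction hypothesis applies without any need to reprove $s$ in an extended context, and Lemma~\ref{lem:stlc-perm} suffices to bridge any discrepancies between the order in which contexts are assembled and the form in which the induction hypothesis expects them.
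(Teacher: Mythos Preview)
Your proposal is correct and follows the same skeleton as the paper's proof: induction on the height of the typing derivation, with Lemma~\ref{lem:stlc-perm} invoked in the abstraction case to move $x:a$ past the newly introduced $y:c$. The paper's proof is a one-line sketch that stops there; you go further and correctly observe that after permuting to $\Gamma, y:c, x:a \vdash r : d$, the induction hypothesis as literally stated demands $\Gamma, y:c \vdash s : a$, not merely $\Gamma \vdash s : a$ --- a weakening step the paper leaves implicit.

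Your fix via the generalized statement (allowing $x:a$ to sit between $\Gamma_1$ and an arbitrary $\Gamma_2$) is standard and works cleanly. One small observation: with that generalization in hand, Lemma~\ref{lem:stlc-perm} is no longer needed in the abstraction case at all, since the subderivation $\Gamma_1, x:a, \Gamma_2, y:c \vdash r : d$ already has the required shape with $\Gamma_2' = \Gamma_2, y:c$, and the induction hypothesis applies directly. So your generalization does not merely patch the weakening gap; it also subsumes the role the paper assigns to the permutation lemma. The original statement is then just the instance $\Gamma_2 = \cdot$.
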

\begin{proof}
The proof is by induction on the height of the derivation of $\Gamma,
x : a \vdash t : b$. In the case where $t$ is an abstraction we use
Lemma~\ref{lem:stlc-perm} to permute the assumption $x:a$ to the end
of the context.
\end{proof}

We can now state and prove the main property of interest.

\begin{theorem}
\label{thm:stlc-sr}
If $t \Downarrow v$ and $\vdash t : a$ then $\vdash v : a$.
\end{theorem}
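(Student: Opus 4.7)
The plan is to proceed by induction on the height of the derivation of $t \Downarrow v$, appealing to the substitution lemma (Lemma~\ref{lem:stlc-sub}) at the critical step.

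There are two cases corresponding to the two rules for evaluation in Figure~\ref{fig:stlc-eval}. First, if $t \Downarrow v$ is derived by the rule for abstractions, then $t$ and $v$ are both the same term $(\lambdat x {a'} r)$ and the conclusion is immediate since the typing derivation of $t$ already gives us what we need. Second, if $t \Downarrow v$ is derived by the rule for applications, then $t = (m\ n)$ and we have premises $m \Downarrow (\lambdat x {a'} r)$ and $r[x := n] \Downarrow v$. By inversion on the typing derivation $\vdash (m\ n) : a$, there exists a type $b$ such that $\vdash m : b \to a$ and $\vdash n : b$.

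Applying the induction hypothesis to $m \Downarrow (\lambdat x {a'} r)$ and $\vdash m : b \to a$, we obtain $\vdash (\lambdat x {a'} r) : b \to a$. Inversion on this typing derivation yields that $a' = b$ and $x : b \vdash r : a$. We then invoke Lemma~\ref{lem:stlc-sub} with this typing and $\vdash n : b$ to conclude $\vdash r[x := n] : a$. A final application of the induction hypothesis to $r[x := n] \Downarrow v$ and $\vdash r[x := n] : a$ gives $\vdash v : a$, as desired.

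The main subtlety is in the application case, where we must carefully invert two typing judgments and align the type annotation $a'$ on the abstraction with the argument type $b$ derived from the typing of the application. The substitution lemma does the essential work of transporting the typing of $r$ under the assumption $x : b$ through the meta-level substitution of $n$ for $x$. Informally this is routine, but it is precisely the kind of reasoning about binding and capture-avoiding substitution that the framework developed later in this thesis is designed to handle in a systematic and logically precise manner.
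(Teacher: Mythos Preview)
Your proof is correct and follows essentially the same approach as the paper's: induction on the height of the evaluation derivation, inversion on the typing judgments in the application case, two applications of the induction hypothesis, and a single use of the substitution lemma (Lemma~\ref{lem:stlc-sub}) at the key step. The only differences are cosmetic variable-naming choices.
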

\begin{proof}
The proof is by induction on the height of the derivation of $t
\Downarrow v$.

{\em Base case.} If the derivation has height one then it must end
with the following.
\begin{equation*}
\infer
 {(\lambdat x a r) \Downarrow (\lambdat x a r)}
 {}
\end{equation*}
Then $t = v$ and the result is trivial.

{\em Inductive case.} If the derivation has height greater than one,
then it must end with the following.
\begin{equation*}
\infer
 {(m\ n) \Downarrow v}
 {m \Downarrow (\lambdat x b r) &
  r[x := n] \Downarrow v}
\end{equation*}
Here $t = (m\ n)$ and we have shorter derivations of $m
\Downarrow (\lambdat x b r)$ and $r[x := n] \Downarrow v$. By
assumption we know that $\vdash (m\ n) : a$ holds which means
it has a derivation which must end with
\begin{equation*}
\infer
 {\vdash (m\ n) : a}
 {\vdash m : c \to a &
  \vdash n : c}
\end{equation*}
for some type $c$. Now we can apply the inductive hypothesis to $m
\Downarrow (\lambdat x b r)$ and $\vdash m : c \to a$ to obtain
a derivation of $\vdash (\lambdat x b r) : c \to a$. Then it must
be that $b = c$ and this derivation ends with the following rule.
\begin{equation*}
\infer
 {\vdash (\lambdat x b r) : b \to a}
 {x : b \vdash r : a}
\end{equation*}
By Lemma~\ref{lem:stlc-sub} we have a derivation of $\vdash r[x
:= n] : a$. Finally, we use the inductive hypothesis on $r[x := n]
\Downarrow v$ and this typing judgment to conclude $\vdash v : a$.
\end{proof}

Our objective is to carry out the style of reasoning described above
in a formalized, computer-supported way. The framework that we will
develop in this thesis will support such an ability. The key to doing
this is designing a meta-logic for reasoning directly about the
specification logic and, in this particular instance, the descriptions
of evaluation and typing that have been encoded in it. The meta-logic
that we will describe will allow the specification logic to be encoded
as a definition in it, which then leads to the ability to reason,
within the meta-logic, about the structure of specification logic
derivations. Since these derivations have a close correspondence to
the structural operational semantics derivations, a reasoning process
very similar to that in Theorem~\ref{thm:stlc-sr} can be carried out
within the meta-logic. Moreover, Lemmas \ref{lem:stlc-sub} and
\ref{lem:stlc-perm} turn out to be instances of more general
properties of the specification logic, and thus one can essentially
obtain these results for free.

\section{The Contributions of this Thesis}
\label{sec:contributions}

The framework that we are interested in developing in this thesis is
characterized by a specification logic, a meta-logic, and an
integration of these logics in a way that supports the two-level logic
approach to reasoning. We shall base our specification logic on the
intuitionistic theory of higher-order hereditary Harrop formulas
\cite{miller91apal}. This theory, which supports higher-order abstract
syntax, underlies the $\lambda$Prolog programming language
\cite{nadathur88iclp} and descriptions written in it can be animated
using the Teyjus system \cite{teyjus.website,qi09phd}. Our focus in
this work is on developing the meta-logic and the two-level logic
approach to reasoning and on demonstrating their practical usefulness.

The starting point for our work will be a variant of the meta-logic
called \FOLDN described by McDowell and Miller \cite{mcdowell00tcs}
that also supports the notion of higher-order abstract syntax.
Further, our work will be inspired by the two-level logic approach to
reasoning also
described by McDowell and Miller \cite{mcdowell02tocl}; from one
perspective, we will mainly be strengthening the foundations of this
approach and demonstrating how that it can be exploited effectively in
practice. The specific realization of the two-level logic approach to
reasoning in
the work of McDowell and Miller is based on \FOLDN together with the
same specification logic that we will be using in our framework. One
of the most significant components of \FOLDN is a definition mechanism
which allows one to reason about ``closed'' descriptions of systems.
Thus, one can use the logic to perform case analysis-like reasoning
about the behavior of an encoded system.
This definition mechanism is based
on earlier work on closed-world reasoning by many others, but most
notably by Schroeder-Heister \cite{schroeder-Heister93lics}, Eriksson
\cite{eriksson91elp}, and Girard \cite{girard92mail}. The \FOLDN logic
includes within it a mechanism for induction on natural numbers. Tiu
extended this capability in the meta-logic Linc to a more general one
that allows definitions themselves to be treated inductively and
co-inductively. The co-inductive treatment was initially limited, but
Tiu and Momigliano have subsequently developed the logic Linc$^-$
which removes these limitations \cite{tiu.momigliano}.

McDowell and Miller's original meta-logic has also evolved in another
way: the idea of generic judgments has been added to it to provide a
better treatment of binding structure in higher-order abstract syntax
representations than that afforded by the universal judgments
originally used for this purpose. More specifically, Miller and Tiu
introduced a new quantifier called $\nabla$ which provides an elegant
way to decompose higher-order abstract syntax representations by
mapping term-level binding structure into a closely related
proof-level binding structure. However, the original treatment of the
$\nabla$ quantifier interacted poorly with inductive and co-inductive
reasoning. This has motivated Tiu to develop the logic \LG which
refines the treatment of this new quantifier by including certain
structural rules for it \cite{tiu06lfmtp}.

This thesis makes contributions to the setting described above by
further strengthening the meta-logic, by using it to develop an actual
computer-based system for reasoning about specifications, and by
demonstrating the benefit of the overall framework through actual
reasoning applications. We discuss these contributions in more detail
below.

\begin{enumerate}
\item We define a meta-logic called \logic which improves on previous
logics such as Linc and \LG. These other logics allow one to decompose
higher-order abstract syntax by introducing $\nabla$-quantified
variables into the structure of terms. These variables act like
proof-level binders and allowed one to reason about the binding
structure of objects without explicitly selecting variable names.
However, these logics do not have any way to analyze the structure of
terms with respect to the occurrences of such proof-level bound
variables, a task which is common to almost all reasoning about
binding structure. The meta-logic \logic rectifies this situation by
providing a generalization of the notion of equality which allows for
exactly the type of analysis described. This generalized notion of
equality behaves well with respect to definitions, induction, and
co-induction. We establish consistency and more generally the
cut-elimination property for \logic, and we find that this meta-theory
is a natural and pleasing extension of the meta-theory of previous
logics. These contributions are the contents of
Chapters~\ref{ch:meta-logic}~and~\ref{ch:meta-theory}.

\item The two-level logic approach had previously not been implemented
and, hence, tested and the Linc logic had received only a partial
implementation in a system called Bedwyr \cite{baelde06manual}. This
thesis develops, for the first time, a complete realization of the
reasoning component of the proposed framework. In particular, it
develops a system called Abella that implements the meta-logic \logic
and supports the two-level logic approach to reasoning. Abella greatly
extends the capabilities of Bedwyr by incorporating full inductive and
co-inductive reasoning capabilities. Experiments with Abella have
largely verified the effectiveness of the framework it supports, and
this aspect of our work has consequently contributed significantly to
demonstrating the practicality of the two-level logic approach to
reasoning. The discussion of Abella and its architecture is the
content of Chapter~\ref{ch:architecture}.

\item We use Abella to expose a methodology of proof construction
within the proposed framework which has a close correspondence with
traditional pencil-and-paper proofs. We formally prove part of this
correspondence through adequacy results for our two-level logic
approach, and we demonstrate how to prove the full correspondence
between the two-level logic approach to reasoning and traditional
pencil-and-paper
proofs. Finally, though concrete examples, we showcase the expressive
power of the meta-logic \logic and the practical benefits of the
two-level logic approach to reasoning. These contributions are the
contents of
Chapters~\ref{ch:two-level-reasoning}~and~\ref{ch:applications}.
\end{enumerate}

We note that the work described in this thesis has already contributed
to the tools and techniques used by other researchers. The Abella
system, that has been freely distributed, has been downloaded and
experimented with by several researchers. It has also been used in at
least one instance to verify a paper-and-pencil proof in a research
paper \cite{tiu.tocl}.

\section{Overview of the Thesis}
\label{sec:overview-thesis}

In Chapter~\ref{ch:specification-logic} we present the specification
logic used in our proposed framework. We prove properties of this
logic which make it a good basis for reasoning about object systems.
We then encode the example of the simply-typed $\lambda$-calculus
within the specification logic and prove the type preservation
property via this encoding. The reasoning techniques used in this
proof motivate some of the design of the meta-logic \logic. We pick up
on the specification logic again when we discuss the two-level
approach to reasoning in Chapter~\ref{ch:two-level-reasoning}.

Chapter~\ref{ch:meta-logic} introduces the meta-logic \logic and its
various features including an extended notion of equality, a
definition mechanism for encoding specifications, and induction and
co-induction capabilities. We show how the extended notion of equality
can be combined with the definition mechanism to produce a useful way
of describing certain objects which occur frequently when reasoning
over higher-order abstract syntax descriptions. Finally, we provide
examples which highlight the expressiveness of the new extended notion
of equality. The contents of this chapter and the next also appear
in \cite{gacek08lics,gacek.na}.

We develop the meta-theory of the meta-logic \logic in
Chapter~\ref{ch:meta-theory}. The primary result of this chapter is
the proof of cut-elimination which we use to prove other useful
properties relative to our meta-logic. We discover here that there is
a nice (meta-theoretic) modularity to our use of an extended notion of
equality as the basis for endowing \logic with richer capabilities
than the logics it builds on. In particular, we are able to reuse in
this chapter much of the meta-theory already developed for Linc$^-$
\cite{tiu.momigliano}, thereby greatly reducing the effort that is
needed for proving properties such as cut-elimination.

In Chapter~\ref{ch:architecture} we describe the Abella system and its
architecture. We describe the role of lemmas and
lemma-like hypotheses during proof construction, and we show how the
induction and co-induction rules of \logic can be presented to the
user in a very natural way.

Chapter~\ref{ch:two-level-reasoning} brings together the specification
logic and the meta-logic to develop the two-level logic approach to
reasoning. In particular, this chapter describes how the specification
logic can be embedded in the meta-logic and what benefit this has
towards formalizing the properties of the specification logic. We
reconsider the example of the simply-typed $\lambda$-calculus and
using the two-level logic approach to reasoning we provide a very
short and elegant
proof of type preservation. Finally, we show that our encoding of the
specification logic is adequate subject to some minor conditions.

Using the two-level logic approach to reasoning and its embodiment in
the Abella
theorem prover we present larger applications of our framework in
Chapter~\ref{ch:applications}. These applications are intended to
highlight the strengths and weaknesses of the two-level logic approach
to reasoning. They include examples such as the POPLmark challenge
\cite{aydemir05tphols} and Girard's proof of strong normalization for
the simply-typed $\lambda$-calculus.

In Chapter~\ref{ch:related-work} we compare our framework against
other approaches to specifying, prototyping, and reasoning about
computational systems with binding.

We conclude this thesis in Chapter~\ref{ch:future-work} and discuss
various avenues of future work. These range from foundational
extensions which would increase the expressive power of the meta-logic
to more implementation oriented extensions which would better
facilitate the reasoning process.

%%% Local Variables:
%%% mode: latex
%%% TeX-master: "root"
%%% End:

% LocalWords:  pre Abella POPLmark Girard's schemas eval Prolog Heister Girard
% LocalWords:  Eriksson Tiu Momigliano Bedwyr arr foundational intuitionistic
% LocalWords:  Harrop Teyjus

\chapter{A Logic for Specifying Computational Systems}
\label{ch:specification-logic}

The primary requirement of a specification logic within the framework
that we want to develop is that it allow for a transparent encoding of
the kinds of formal systems that are of interest to us. In particular,
such an encoding should cover both the objects manipulated within the
formal system and the rules by which they are manipulated. In the
context of our work, we are particularly concerned with the
representation of objects that incorporate a variable binding
structure. A logically precise encoding of such structure plays an
important role in the overall treatment of the relevant computational
systems. An encoding that has this character usually requires the
treatment of concepts related to binding, such as equality under bound
variable renaming and capture-avoiding substitution. If these aspects
are not dealt with in a systematic way within the specification logic,
they can overwhelm the process of constructing encodings and can make
the subsequent process of reasoning about specifications unnecessarily
complex. We therefore seek a specification logic which incorporates a
flexible and sophisticated treatment of variable binding structure and
which also builds in the related binding notions.

In this chapter we introduce the specification logic of second-order
hereditary Harrop formulas, abbreviated \hh. This logic is essentially
a restriction of the logic of higher-order hereditary Harrop formulas
\cite{miller91apal} that underlies the language $\lambda$Prolog
\cite{nadathur88iclp}. The \hh logic can be seen as an extension of
the Horn clause logic, the logic that underlies Prolog, with devices
for representing, examining, and manipulating objects with binding
structure. In particular, \hh allows for a higher-order abstract
syntax representation of objects with binding structure
\cite{miller87slp, pfenning88pldi}. Thus issues
% structure. In particular, \hh supports the $\lambda$-tree syntax
% version \cite{} of higher-order abstract syntax representations of
% objects with binding structure. Thus issues
of variable renaming and capture-avoiding substitution are taken care
of once and for all in the specification logic, leaving particular
specifications free to focus on the more essential aspects of the
system they encode. Furthermore, like the logic of higher-order
hereditary Harrop formulas that it derives from, \hh admits an
operational semantics which allows specifications to be animated
automatically thus yielding quick prototypes of the computational
systems they encode.

In this chapter we formally define the \hh logic, describe its
operational semantics, state and prove properties of the logic, and
demonstrate its use through a concrete example.

\section{The Syntax and Semantics of the Logic}

Following Church \cite{church40}, terms in \hh are constructed using
abstraction and application from constants and bound variables. All
terms are typed using a monomorphic typing system. The provability
relation concerns well-formed terms of the the distinguished type $o$
that are also called formulas. Logic is introduced by including
special constants representing the propositional connectives $\top$,
$\land$, $\lor$, $\supset$ and, for every type $\tau$ that does not
contain $o$, the constants $\forall_\tau$ and $\exists_\tau$ of type
$(\tau \rightarrow o) \rightarrow o$.
We do not allow any other constants or variables to have a type
containing the type $o$.
The binary propositional
connectives are written as usual in infix form and the expressions
$\forall_\tau x. B$ and $\exists_\tau x. B$ abbreviate the formulas
$\forall_\tau \lambda x.B$ and $\exists_\tau \lambda x.B$,
respectively. Type subscripts will be omitted from quantified formulas
when they can be inferred from the context or are not important to the
discussion. We also use a shorthand for iterated quantification: if
${\cal Q}$ is a quantifier, we will often abbreviate ${\cal
  Q}x_1\ldots{\cal Q}x_n.P$ to ${\cal Q}x_1,\ldots,x_n.P$ or simply
${\cal Q}\vec{x}.P$. We consider the scope of $\lambda$-binders (and
therefore quantifiers) as extending as far right as possible. We
further assume that $\supset$ is right associative and has lower
precedence than $\land$ and $\lor$. For example, $\forall x. t_1
\supset t_2 \supset t_3 \land t_4$ should be read as $\forall x.(t_1
\supset (t_2 \supset (t_3\land t_4)))$.

We restrict our attention to two classes of formulas in \hh described
by the following grammar.
\begin{align*}
G &::= \top \mid A \mid A \supset G \mid \forall_\tau x. G \mid
\exists_\tau x . G \mid G \land G \mid G \lor G \\
D &::= A \mid G \supset D \mid \forall_\tau x. D
\end{align*}
Here $A$ denotes an atomic formula. The formulas denoted by $G$ are
called {\em goals} and represent the conclusions we can infer in the
logic. A notable restriction on implication in goal formulas is that
the left hand side must be an atomic formula.
Formulas denoted by $D$ are
called {\em definite clauses} and represent the hypotheses we can
assume in the logic. Notice that disjunctions and existentials are not
allowed in definite formulas because they represent indefinite
knowledge. For simplicity, we also disallow conjunction, but the
effect of conjunctions can be recovered by using a set of clauses in
place of a single clause.
The order of a formula is the depth of implications which are nested
to the left of other implications. Our restriction on implication
means goal formulas are at most first-order and definite clauses
are at most second-order. It is precisely this restriction which
carves out the logic of second-order hereditary Harrop formulas from
the larger logic of higher-order hereditary Harrop formulas.
Finally, by using logical equivalences we
can percolate universal quantifiers to the top, to rewrite all
definite clauses to be of the form $\forall x_1 \ldots \forall x_n .
(G_1 \supset \cdots \supset G_m \supset A)$ where $n$ and $m$ may both
be zero. In the future we will assume all definite clauses are in this
form.

The semantics of \hh are formalized by means of a proof-theoretic
presentation of what it means for a goal to follow from a set of
definite clauses. Specifically, we will be concerned with the
derivation of {\em sequents} of the form $\Sigma : \Delta \vdash G$
where $\Delta$ is a list of $D$-formulas, $G$ is a $G$-formula, and
$\Sigma$ is a set of variables called eigenvariables. For such a
sequent to be well-formed, we require that the formulas in
$\Delta\cup\{G\}$ must be constructed using using only the logical and
non-logical constants of the language and the eigenvariables in
$\Sigma$. This well-formedness condition is guaranteed for every
sequent considered in a derivation by ensuring that we try to
construct derivations only for well-formed ones at the top-level and
by the use of typing judgments of the form $\Sigma \vdash t : \tau$ in
rules that introduce new terms when these rules are interpreted in a
proof search direction. The meaning of this typing judgment, that we
do not explicitly formalize here, is the following: for it to hold,
the term $t$ must have the type $\tau$ and it must also be constructed
using only the non-logical constants and the eigenvariables in
$\Sigma$.

The rules for constructing proofs for such sequents are presented in
Figure~\ref{fig:hh-rules}. The GENERIC rule introduces an
eigenvariable when read in a proof search direction. There is a
freshness side-condition associated with this eigenvariable: $c$ must
not already be in $\Sigma$. Note that for this to be possible, we must
assume that there is an unlimited supply of eigenvariables of each
type.
% GN* It is a bit awkward to talk explicitly of the typing judgment and
% to then not include it in the rule. You might think of removing the
% next sentences by including the typing judgment directly in the
% rules.
In the INSTANCE rule $t$ is required to be a term such that $\Sigma
\vdash t : \tau$ holds. Similarly, in the BACKCHAIN rule for each term
$t_i \in \vec{t}$ we must have $\Sigma \vdash t_i : \tau_i$ where
$\tau_i$ is the type of the quantified variable $x_i$. An important
property to note about these rules is that if we use them to search
for a proof of the sequent $\Delta \vdash G$, then all the
intermediate sequents that we will encounter will have the form
$\Delta, \mathcal{L} \vdash G'$ for some $G$-formula $G'$ and some
list of atomic formulas $\mathcal{L}$. Thus the initial context
$\Delta$ is global, and only atomic formulas are added to the context
during proof construction.

In presenting sequents in later parts of this thesis, we shall
occasionally omit writing the signature. We will do this only when
either the identity of the signature is irrelevant to the discussion
of when it can be inferred from the context.

\begin{figure}[t]
\centering
\begin{equation*}
\infer[\TRUE]
      {\Sigma : \Delta \vdash \top}
      {}
\end{equation*}
\medskip
\begin{equation*}
\infer[\OR_1]
      {\Sigma : \Delta \vdash G_1 \lor G_2}
      {\Sigma : \Delta \vdash G_1}
\hspace{1cm}
\infer[\OR_2]
      {\Sigma : \Delta \vdash G_1 \lor G_2}
      {\Sigma : \Delta \vdash G_2}
\end{equation*}
\medskip
\begin{equation*}
\infer[\AND]
      {\Sigma : \Delta \vdash G_1 \land G_2}
      {\Sigma : \Delta \vdash G_1 &
       \Sigma : \Delta \vdash G_2}
\hspace{1cm}
\infer[\INSTANCE]
      {\Sigma : \Delta \vdash \exists x.G}
      {\Sigma : \Delta \vdash G[t/x]}
\end{equation*}
\medskip
\begin{equation*}
\infer[\AUGMENT]
      {\Sigma : \Delta \vdash A \supset G}
      {\Sigma : \Delta, A \vdash G}
\hspace{1cm}
\infer[\GENERIC]
      {\Sigma : \Delta \vdash \forall_\tau x.G}
      {\Sigma \cup \{c\!:\!\tau\} : \Delta \vdash G[c/x]}
\end{equation*}
\medskip
\begin{equation*}
\infer[\BACKCHAIN]
      {\Sigma : \Delta \vdash A}
      {\Sigma : \Delta \vdash G_1[\vec{t}/\vec{x}] &
       \cdots &
       \Sigma : \Delta
       \vdash G_m[\vec{t}/\vec{x}]}
\end{equation*}
where $\forall \vec{x} . (G_1 \supset \cdots \supset
G_m \supset A') \in \Delta$ and $A'[\vec{t}/\vec{x}] = A$
\caption{Derivation rules for the \hh logic}
\label{fig:hh-rules}
\end{figure}

The rules of \hh admit a simple proof search procedure: given a
sequent $\Delta \vdash G$ we decompose the goal $G$ until we reach an
atomic formula at which point we backchain and attempt to prove the
resulting goals. This is, in fact, a manifestation of the {\em uniform
  proofs} property that \hh inherits from the parent logic of
higher-order hereditary Harrop formulas \cite{miller91apal}. The
resulting procedure is non-deterministic since we have a choice 
when the goal is a disjunction, an existential, or an atomic formula
(we can choose which clause to backchain on). The non-determinism
induced by existentials can be handled using the standard notion of
instantiatable variables and unification while the non-determinism of the OR and
BACKCHAIN rules can be handled using depth-first search complemented
with backtracking. Computations 
described by \hh are included within those corresponding to
$\lambda$Prolog and can therefore be compiled and executed efficiently,
\eg, by the Teyjus system \cite{teyjus.website,qi09phd}.

\section{Properties of the Specification Logic}
\label{sec:prop-spec-logic}

We will eventually encode object logic judgments into specification
logic judgments.  By doing this, we enable ourselves to use properties
of the specification logic in proving properties of the object logic.
Therefore in this section we enumerate the various properties of the
\hh logic which may be useful in such reasoning. The proofs of these
properties will be based on induction over the {\em height} of a
derivation, a notion we define now.

\begin{definition}
\label{def:hh2-ht}
The {\em height} of a derivation $\Pi$, denoted by $\htf(\Pi)$, is $1$
if $\Pi$ has no premise derivations and is $\max
\{\htf(\Pi_i)+1\}_{i\in 1..n}$ if $\Pi$ has the premise derivations
$\{\Pi_i\}_{i\in 1..n}$.
\end{definition}

The {\em monotonicity property} of \hh states that the eigenvariables
and the context of a sequent can always be expanded while preserving
provability.
\begin{lemma}
Let $\Sigma : \Delta \vdash G$ be a well-formed sequent, let $\Delta'$
be a list of definite clauses such that $\Delta \subseteq \Delta'$,
and let $\Sigma'$ be a set of eigenvariables such that $\Sigma
\subseteq \Sigma'$ and $\Sigma'$ contains all the eigenvariables of
$\Delta'$. If $\Sigma : \Delta \vdash G$ has a derivation then
$\Sigma' : \Delta' \vdash G$ has a derivation. Moreover, the height of
the derivation does not increase.
\end{lemma}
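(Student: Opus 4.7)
The plan is to proceed by induction on the height of the derivation of $\Sigma : \Delta \vdash G$, doing a case analysis on the last rule applied. In every case, I will exhibit a derivation of $\Sigma' : \Delta' \vdash G$ whose height is no greater than that of the original derivation, and whose premises (when any) can be obtained from the induction hypothesis applied to strictly shorter derivations.

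The cases for TRUE, AND, OR${}_1$, OR${}_2$, AUGMENT, INSTANCE, and BACKCHAIN should all be routine. For TRUE, the conclusion is immediate using the same rule. For the binary connectives and disjunction, I simply apply the induction hypothesis to the premises using the same $\Sigma'$ and $\Delta'$ and reassemble with the same rule. For AUGMENT, with $G = A \supset G'$, the premise $\Sigma : \Delta, A \vdash G'$ is still well-formed under $\Sigma'$ (since all eigenvariables of $A$ are already in $\Sigma \subseteq \Sigma'$), so the induction hypothesis applied with context $\Delta', A$ gives a derivation of the same height, and AUGMENT restores the original goal. For INSTANCE and BACKCHAIN, the witness term $t$ (respectively $\vec{t}$) is well-typed relative to $\Sigma$ and therefore also relative to $\Sigma'$, and the backchained clause remains in $\Delta'$ by $\Delta \subseteq \Delta'$; applying the induction hypothesis to each premise suffices.

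The one subtle case is GENERIC. Here $G = \forall_\tau x. G'$ and the premise is $\Sigma \cup \{c : \tau\} : \Delta \vdash G'[c/x]$ with $c \notin \Sigma$; the trouble is that $c$ may already occur in $\Sigma'$ or in the formulas of $\Delta'$, so I cannot directly apply the induction hypothesis with the extended signature $\Sigma' \cup \{c : \tau\}$. The remedy is to first rename: pick $c'$ of type $\tau$ not appearing in $\Sigma'$ or in $\Delta'$ (possible because we assume an unlimited supply of eigenvariables of each type), and observe that uniformly substituting $c'$ for $c$ throughout the premise derivation produces a derivation of $\Sigma \cup \{c' : \tau\} : \Delta \vdash G'[c'/x]$ of exactly the same height. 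Then the induction hypothesis applies, yielding a derivation of $\Sigma' \cup \{c' : \tau\} : \Delta' \vdash G'[c'/x]$ of no greater height, to which an application of GENERIC delivers the required conclusion.

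I expect the only real obstacle to be the bookkeeping around eigenvariable freshness in the GENERIC case — in particular the implicit fact that renaming a fresh eigenvariable uniformly through a derivation preserves both well-formedness and height. This is an essentially trivial observation but it deserves explicit mention, as it is the one place where naive ``just reuse the derivation'' reasoning fails. All other cases follow the same pattern: invoke the induction hypothesis on premises with $\Sigma'$ and $\Delta'$, and reapply the same rule, so no height is ever added.
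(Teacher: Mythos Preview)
Your proposal is correct and follows exactly the approach the paper indicates: induction on the height of the derivation of $\Sigma : \Delta \vdash G$. The paper's own proof is stated in a single line without detailing the cases, so your handling of the \GENERIC{} case---including the eigenvariable renaming to avoid a clash with $\Sigma'$---is more careful than what the paper spells out, but entirely in the same spirit.
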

\begin{proof}
Induction on the height of the derivation of $\Sigma : \Delta \vdash
G$.
\end{proof}
The {\em instantiation property} states that a eigenvariable $c$
which arises from a use of the GENERIC rule can always be instantiated
with a particular value while preserving provability. As a result, our
use of eigenvariables to denote universal quantification in \hh is
well justified.
\begin{lemma}
Let $c$ be a variable not in $\Sigma$. If $\Sigma \cup \{c:\tau\} :
\Delta \vdash G$ has a derivation then for all terms $t$ such that
$\Sigma \vdash t : \tau$ there is a derivation of $\Sigma :
\Delta[t/c] \vdash G[t/c]$. Moreover, the height of the derivation
does not increase.
\end{lemma}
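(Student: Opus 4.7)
The plan is to prove the statement by induction on the height of the derivation of $\Sigma \cup \{c\!:\!\tau\} : \Delta \vdash G$, dispatching each of the seven rules in Figure~\ref{fig:hh-rules} in turn. In every case the strategy is the same: apply the induction hypothesis to the premise derivation(s) to obtain derivation(s) with $t$ substituted for $c$ throughout, then re-apply the same rule to reconstruct a derivation of the substituted conclusion whose height does not exceed the original.

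For the rules TRUE, $\OR_1$, $\OR_2$, AND, and AUGMENT, the cases are essentially bookkeeping once one observes that substitution distributes across the propositional connectives and over adding an atom to the context: for instance in the AUGMENT case the premise $\Sigma \cup \{c\!:\!\tau\} : \Delta, A \vdash G$ yields by induction $\Sigma : \Delta[t/c], A[t/c] \vdash G[t/c]$, which is exactly $(\Delta, A)[t/c] \vdash (A \supset G)[t/c]$ after reapplying AUGMENT. The INSTANCE case requires noting that if the premise witness is $s$, then the induction hypothesis gives a derivation of $G[s/x][t/c]$, and by the substitution lemma for $\lambda$-terms (using that $x$ is bound and therefore capture-avoidingly disjoint from $c$ and from the free variables of $t$) this equals $G[t/c][s[t/c]/x]$, so INSTANCE can be reapplied with witness $s[t/c]$. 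A small side-point here is that we must verify $\Sigma \vdash s[t/c] : \tau'$, which follows from the preservation of typing under substitution given $\Sigma \cup \{c\!:\!\tau\} \vdash s : \tau'$ and $\Sigma \vdash t : \tau$.

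The GENERIC case is where the main care is required. The premise derivation introduces a fresh eigenvariable $c'\!:\!\tau'$ giving $\Sigma \cup \{c\!:\!\tau\} \cup \{c'\!:\!\tau'\} : \Delta \vdash G[c'/x]$. Before applying the induction hypothesis, I would rename $c'$ if necessary so that $c' \neq c$ and $c'$ is not free in $t$; this renaming is innocuous because $c'$ is itself fresh. Then the induction hypothesis produces $\Sigma \cup \{c'\!:\!\tau'\} : \Delta[t/c] \vdash G[c'/x][t/c]$, which by the choice of $c'$ equals $\Delta[t/c] \vdash G[t/c][c'/x]$, so GENERIC reapplies. The BACKCHAIN case is analogous: the clause $\forall \vec{x}.(G_1 \supset \cdots \supset G_m \supset A') \in \Delta$ becomes $\forall \vec{x}.(G_1[t/c] \supset \cdots \supset G_m[t/c] \supset A'[t/c]) \in \Delta[t/c]$ (after suitable renaming of the $\vec{x}$ so they avoid $c$ and the free variables of $t$), the equation $A'[\vec{s}/\vec{x}] = A$ yields $A'[t/c][\vec{s}[t/c]/\vec{x}] = A[t/c]$ by the substitution lemma, and the induction hypothesis on each premise provides derivations of $\Sigma : \Delta[t/c] \vdash G_i[t/c][\vec{s}[t/c]/\vec{x}]$, which reassemble via BACKCHAIN.

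The main obstacle is the capture-avoidance bookkeeping in the GENERIC and BACKCHAIN cases, i.e., making sure the bound variables introduced by the premise derivation can be chosen disjoint from $c$ and from the free variables of $t$ so that substitutions commute in the expected way. Once this is done, the height does not increase because each case applies the same rule with the same number of premises, and the induction hypothesis preserves height by assumption.
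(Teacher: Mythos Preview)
Your proposal is correct and follows exactly the approach the paper takes: the paper's proof is simply the one-line ``Induction on the height of the derivation of $\Sigma \cup \{c\!:\!\tau\} : \Delta \vdash G$,'' and you have faithfully expanded that sketch, including the capture-avoidance bookkeeping in the \textsc{Generic}, \textsc{Instance}, and \textsc{Backchain} cases that the paper leaves implicit.
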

\begin{proof}
Induction on the height of the derivation of $\Sigma \cup \{c:\tau\} :
\Delta \vdash G$.
\end{proof}
Finally, the {\em cut admissibility property} says that the assumption
of an atomic formula can be discharged if the atomic formula is itself
provable.
\begin{lemma}
If $\Sigma : \Delta, A \vdash G$ and $\Sigma : \Delta \vdash A$ then
$\Sigma : \Delta \vdash G$.
\end{lemma}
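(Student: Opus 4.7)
The plan is to prove the lemma by induction on the height of the derivation of $\Sigma : \Delta, A \vdash G$, performing a case analysis on the last rule applied. For most rules (TRUE, AND, OR, INSTANCE), the argument is routine: the context carrying $A$ is unchanged in the premises, so the inductive hypothesis applies directly to each premise derivation and the same rule is reapplied below to assemble the new derivation of $\Sigma : \Delta \vdash G$.

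Two structural rules require a small additional step. In the AUGMENT case, the premise has an extended context $\Delta, A, A'$ for some atomic $A'$; to apply the inductive hypothesis I first invoke the monotonicity property stated above to promote the given derivation of $\Sigma : \Delta \vdash A$ to a derivation of $\Sigma : \Delta, A' \vdash A$ of no greater height, and then apply the inductive hypothesis to the premise before reapplying AUGMENT. The GENERIC case is symmetric but with the signature extended by a fresh eigenvariable $c$: monotonicity again lifts $\Sigma : \Delta \vdash A$ to $\Sigma \cup \{c{:}\tau\} : \Delta \vdash A$, after which the inductive hypothesis discharges the hypothesis $A$ in the premise and GENERIC is reapplied.

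The only genuinely interesting case is BACKCHAIN, which is where the fact that $A$ is atomic is actually used. Suppose the last rule derives some atomic goal $A''$ by backchaining on a clause $\forall \vec{x}.(G_1 \supset \cdots \supset G_m \supset A')$ drawn from $\Delta, A$. If this clause comes from $\Delta$, each premise $\Sigma : \Delta, A \vdash G_i[\vec{t}/\vec{x}]$ has smaller height, so I apply the inductive hypothesis to each and reassemble with the same BACKCHAIN. If the clause is the cut formula $A$ itself, then because $A$ is atomic it has no quantifier prefix and no body, forcing $m = 0$, $A' = A$, and $A'' = A$; hence BACKCHAIN has no premises and the conclusion of the original derivation is simply $A$. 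But then the desired conclusion $\Sigma : \Delta \vdash G = \Sigma : \Delta \vdash A$ is precisely the second hypothesis of the lemma, and we are done.

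The main obstacle to watch for is making sure the appeal to monotonicity in the AUGMENT and GENERIC cases does not disturb the structural induction: because monotonicity preserves derivation heights, the inductive hypothesis remains available on the (still shorter) premise derivation. The pivotal observation that makes the whole argument go through is that cut is restricted to atomic formulas, which collapses the otherwise problematic BACKCHAIN-on-$A$ sub-case to an immediate appeal to the given derivation of $A$.
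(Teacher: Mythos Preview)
Your proposal is correct and follows the same approach as the paper, which also proceeds by induction on derivation height and singles out exactly the AUGMENT and BACKCHAIN cases as the interesting ones. One small nit: in the AUGMENT case you also need monotonicity to reorder the premise context from $\Delta, A, A'$ to $\Delta, A', A$ so that the cut formula sits at the end as the inductive hypothesis requires; the paper's proof emphasizes this reordering step and leaves the weakening of $\Sigma : \Delta \vdash A$ to $\Sigma : \Delta, A' \vdash A$ implicit, so your write-up and the paper's are complementary in which application of monotonicity they spell out.
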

\begin{proof}
Induction on the height of the derivation of $\Sigma : \Delta, A
\vdash G$. There are two interesting cases. The first case is when $G$
is $A' \supset G'$ in which case we must apply the monotonicity
property to move from $\Sigma : \Delta, A, A' \vdash G$ to $\Sigma :
\Delta, A', A \vdash G$. The other case is when the BACKCHAIN rule
selects $A$, in which case the derivation of $\Sigma : \Delta \vdash
A$ can be substituted.
\end{proof}

\section{Example Encoding in the Specification Logic}
\label{sec:spec-example}

We now take the example of evaluation and typing for the simply-typed
$\lambda$-calculus from Section~\ref{sec:example}, and we encode it
into the specification logic. We introduce the specification logic
types $tp$ and $tm$ for representing types and pre-terms respectively
in the simply-typed $\lambda$-calculus. Types in the simply-typed
$\lambda$-calculus will be mapped to specification logic terms
constructed from the constants {\sl i} and {\sl arr} of types $tp$ and
$tp \to tp \to tp$, respectively. Pre-terms in the simply-typed
$\lambda$-calculus will be mapped to specification logic terms
constructed from the constants {\sl app} and {\sl abs} of types $tm
\to tm \to tm$ and $tp \to (tm \to tm) \to tm$, respectively. Notice
that the second argument of {\sl abs} is expected to be an abstraction
over $tm$ in the specification logic. Finally, we will have two
constants {\sl of} and {\sl eval} of types $tm \to tp \to o$ and $tm
\to tm \to o$, respectively, which denote typing and evaluation,
respectively. The clauses for these predicates are presented in
Figure~\ref{fig:stlc-hh}. Here and in the future we use the convention
that tokens given by capital letters denote variables that are
implicitly universally quantified over the entire formula. In the
second clause for evaluation, $R$ is an abstraction in the
specification logic and thus the built-in notion of $\beta$-reduction
means that $(R\ N)$ realizes capture-avoiding substitution of $N$ in
for the bound variable in $R$. For the typing judgment, we do not keep
an explicit context of typing assumptions, instead relying on the
specification logic context. This is reflected in the rule for typing
abstractions where we use the $\forall$ quantifier to create a fresh
eigenvariable and we assume that this eigenvariable has the proper
type while we derive a typing assignment for the body of the
abstraction. In this way, we avoid having an explicit base case for
typing. Next, when we reason about this specification we will be able
to exploit this encoding of the typing context.

\begin{figure}[t]
\centering
\begin{align*}
&\eval {(\abs A M)} {(\abs A M)} \\[3pt]
&\eval M (\abs A R) \supset \eval {(R\ N)} V \supset \eval {(\app M
  N)} {V}\\[10pt]
&\of M (\arr A B) \supset \of N A
\supset \of{(\app M N)} B \\[3pt]
&(\forall x. \of x A \supset \of{(R\
  x)}{B}) \supset \of{(\abs A R)}{(\arr A B)}
\end{align*}
\caption{\hh specification of evaluation and typing}
\label{fig:stlc-hh}
\end{figure}

Using this encoding, we can now repeat the proof of type preservation
and leverage on the properties we have shown of the \hh logic. Let
$\Delta$ be the clauses from Figure~\ref{fig:stlc-hh}.
\begin{theorem}
If $\Delta \vdash \eval e v$ holds and $\Delta \vdash \of e t$ holds
then $\Delta \vdash \of v t$ holds.
\end{theorem}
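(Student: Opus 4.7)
The plan is to formalize the informal proof of Theorem 1.2.3 by mimicking its structure, with the induction on the height of the derivation of $\Delta \vdash \eval e v$ replacing the induction on the height of the evaluation derivation. The key conceptual change is that the substitution lemma (Lemma 1.2.2) and the permutation lemma (Lemma 1.2.1) are no longer needed explicitly: capture-avoiding substitution is already handled by $\beta$-reduction in the specification logic via the term $(R\ N)$, and context permutation is subsumed by the monotonicity property of \hh. In its place, we will rely on the instantiation and cut admissibility properties established in Section~\ref{sec:prop-spec-logic}.

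I would induct on the height of the derivation of $\Delta \vdash \eval e v$ and case-analyze its final rule. Since $\eval e v$ is atomic, that final rule is necessarily BACKCHAIN, and the clause used must have head $\eval \_ \_$. There are two candidates. If the reflexive clause $\eval {(\abs A M)} {(\abs A M)}$ is selected, then unification forces $e = v$ and the conclusion $\Delta \vdash \of v t$ is just the second hypothesis. If the application clause $\eval M (\abs A R) \supset \eval {(R\ N)} V \supset \eval {(\app M N)} {V}$ is used, then there exist terms $M, N, A, R$ with $e = (\app M N)$ and $v = V$, together with strictly shorter derivations of $\Delta \vdash \eval M (\abs A R)$ and $\Delta \vdash \eval {(R\ N)} V$.

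In this inductive case, I would invert the typing derivation of $\Delta \vdash \of {(\app M N)} t$ in the same way. The only $\of$ clause whose head unifies with $\of {(\app M N)} t$ is the one for application, so backchaining yields some type $A'$ with $\Delta \vdash \of M (\arr {A'} t)$ and $\Delta \vdash \of N {A'}$. Applying the inductive hypothesis to the shorter evaluation of $M$ and the typing $\of M (\arr {A'} t)$ gives $\Delta \vdash \of {(\abs A R)} (\arr {A'} t)$. Inverting once more on this typing derivation, the only applicable clause is the one for abstraction, which forces $A = A'$ via unification and supplies a derivation of $\Delta \vdash \forall x. \of x A \supset \of {(R\ x)} t$. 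Unfolding this goal derivation yields, for a fresh eigenvariable $c$, a derivation of $\Delta, \of c A \vdash \of {(R\ c)} t$. By the instantiation property we may replace $c$ by $N$ to obtain $\Delta, \of N A \vdash \of {(R\ N)} t$, and then cut admissibility applied with $\Delta \vdash \of N A$ delivers $\Delta \vdash \of {(R\ N)} t$. A final appeal to the inductive hypothesis on the shorter derivation of $\eval {(R\ N)} V$ paired with this typing yields $\Delta \vdash \of V t$, which is the desired $\Delta \vdash \of v t$.

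The main obstacle is not any single calculation but rather the inversion reasoning for BACKCHAIN: we must argue that, given the syntactic head of an atomic formula (here, $\abs\_\_$ versus $\app\_\_$ in $e$, and the shape of $(\abs A R)$ in the intermediate typing derivation), at most one clause in $\Delta$ unifies with it, and we must cleanly extract from such a derivation the goal derivations of the clause body under the unifier. This inversion principle is used informally throughout the argument and is the bridge between the rule-based presentation of \hh and the case-analysis style common to informal operational-semantics proofs; making it precise is exactly the feature that motivates the case-analysis mechanism of the meta-logic \logic developed in the following chapters.
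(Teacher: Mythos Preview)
Your proof is correct and follows essentially the same approach as the paper's: induction on the height of the evaluation derivation, case analysis on the BACKCHAIN clause used, inversion of the typing derivations, and the use of the instantiation and cut admissibility properties of \hh in place of an explicit substitution lemma. The paper's proof proceeds through exactly the same sequence of steps with the same intermediate sequents.
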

\begin{proof}
By induction on the height of the derivation of $\Delta \vdash \eval e
v$. We proceed by cases on the derivation of $\Delta \vdash \eval e
v$. This judgment must have been derived by backchaining on one of the
clauses for {\sl eval}. If it was by the first clause, then $e = v$
and the case is complete. Otherwise it was by the second clause so $e$
must be $(\app m n)$ for some $m$ and $n$ and we have shorter
derivations of $\Delta \vdash \eval m (\abs a r)$ and and $\Delta
\vdash \eval {(r\ n)} v$ for some $a$ and $r$. By similarly examining
the derivation $\Delta \vdash \of {(\app m n)} t$ we must have
derivations of $\Delta \vdash \of m (\arr b t)$ and $\Delta \vdash \of
n b$ for some $b$. Applying the inductive hypothesis to $\Delta \vdash
\eval m (\abs a r)$ and $\Delta \vdash \of m (\arr b t)$ we have
$\Delta \vdash \of {(\abs a r)} (\arr b t)$. This derivation could
only result if $a = b$ and we have a derivation of $\Delta \vdash
\forall x[\of x a \supset \of {(r\ x)} t]$ and thus a derivation of
$\Delta, \of c a \vdash \of {(r\ c)} t$ for some eigenvariable $c$.
Now we can apply the instantiation property of our specification logic
to get a derivation of $\Delta, \of n a \vdash \of {(r\ n)} t$. Next
we apply the cut property with our derivation of $\Delta \vdash \of n
a$ to get $\Delta \vdash \of {(r\ n)} t$. Finally, we apply the
inductive hypothesis again to $\Delta \vdash \of {(r\ n)} t$ and
$\Delta \vdash \eval {(r\ n)} v$ to get $\Delta \vdash \of v t$ which
completes the proof.
\end{proof}

It is important to note in this proof that we did not have to prove a
type substitution property for the object logic. Instead, the object
logic inherited this property from the more general instantiation and
cut properties of the specification logic. Also, induction over the
height of specification logic derivations corresponded with induction
over the height of object logic derivations. Thus we can reason about
computational systems through their encoding in the specification
logic with little overhead cost.

\section{Adequacy of Encodings in the Specification Logic}
\label{sec:spec-adequacy}

A tacit assumption in the example we considered in the previous
section is that the specification of pre-terms, types, typing, and
evaluation are all faithful representations of the corresponding
concepts in the object logic. This kind of property of encodings is
referred to as the {\em adequacy} property. We must, of course, prove
such a property before we can derive benefit from it. To do this, we
need to prove that there is a bijection between components of the
object logic and their specification logic representations and that
this bijection preserves properties of relevance in the two systems. With
specific reference to the example encoding we have considered, we have
to show that each object in the simply-typed $\lambda$-calculus has a
unique representation in the specification logic, and each
representation in the specification logic corresponds to a unique
object in the simply-typed $\lambda$-calculus. We show below how such
arguments are typically carried out.

To simplify the argument we will assume an implicit mapping between
bound variables in the object language and bound variables in the
specification language, and between free variables in the object
language and eigenvariables in the specification language. A more
rigorous treatment of adequacy would make this mapping explicit
\cite{felty89phd}.

We define the bijections $\phi_{tp}$, $\phi_{tm}$, $\phi_{eval}$,
$\phi_{ctx}$, and $\phi_{of}$ which are used to map types,
pre-terms, evaluation judgments, typing contexts, and typing judgments
in the simply-typed $\lambda$-calculus to their corresponding
representations in the specification logic. We will omit the
subscripts on $\phi$ when they can be inferred from context. The proofs
that these mappings are bijective are always by straightforward
induction on the size of terms or strong induction on the height of
derivations.

Types in the simply-typed $\lambda$-calculus map to terms of type $tp$
in the specification logic. We formalize this mapping as follows.
\begin{align*}
\phi(i) = i &&
\phi(a \to b) = \arr {\phi(a)} \phi(b)
\end{align*}
This function is clearly a bijection.

Next we define the mapping between $\alpha$-equivalence classes of
pre-terms in the object logic and terms of type $tm$ in the
specification logic.
\begin{align*}
\phi(x) = x &&
\phi(m\ n) = \app {\phi(m)} \phi(n) &&
\phi(\lambdat x a r) = \abs {\phi(a)} (\lambda x.\phi(r))
\end{align*}
In the last rule for this mapping note that the $\lambda$ within the
$\phi$ is that of the simply-typed $\lambda$-calculus while the one
outside of $\phi$ is from the specification logic. This mapping
is clearly bijective under the assumption that $\alpha$-convertible
terms in the specification logic are considered to be identical.

Let $\Delta$ be the clauses from Figure~\ref{fig:stlc-hh}. Then
derivations of evaluation judgments in the simply typed
$\lambda$-calculus correspond to derivations of the sequent $\Delta
\vdash \eval e v$ in the specification logic as follows. First
consider the translation of evaluation for abstractions:
\begin{align*}
\phi\left(\raisebox{-1.5ex}{
  \infer[]
        {\lambdat x a t \Downarrow \lambdat x a t}
        {}
  }\right)
&=
\raisebox{-1.5ex}{
   \infer[]
         {\Delta \vdash \eval {\phi(\lambdat x a t)}
                               \phi(\lambdat x a t)
         }
         {}
   } \\
&=
\raisebox{-1.5ex}{
   \infer[]
         {\Delta \vdash \eval
            {(\abs {\phi(a)} (\lambda x.\phi(t)))}
            {(\abs {\phi(a)} (\lambda x.\phi(t)))}}
         {}
   }
\end{align*}
Here and in the future we propagate the mapping $\phi$ to make it
clear that the specification logic inference rules are well-formed. In
this case, the right-hand inference rule an instance of the BACKCHAIN
rule over the clause for evaluating abstractions.

The translation for evaluations of applications is the following.
\begin{tabbing}
\hspace{1cm}\= $\phi\left(\raisebox{-3.5ex}{
  \infer[]
        {m\ n \Downarrow v}
        {\deduce{m \Downarrow \lambdat x a r}{\vdots} &
         \deduce{r[x := n] \Downarrow v}{\vdots}}
     }\right)$ \\[20pt]
\>\hspace{1cm}\= $=
\raisebox{-1.5ex}{
  \infer[]
        {\Delta \vdash \eval {\phi(m\ n)} \phi(v)}
        {\phi\left(\raisebox{-1.5ex}{
          \deduce{m \Downarrow \lambdat x a r}{\vdots}
         }\right) &
         \phi\left(\raisebox{-1.5ex}{
          \deduce{r[x := n] \Downarrow v}{\vdots}
         }\right)
        }
  }$ \\[10pt]
\>\> $=
\raisebox{-1.5ex}{
  \infer[]
        {\Delta \vdash \eval {(\app {\phi(m)} \phi(n))} \phi(v)}
        {\raisebox{-1.5ex}{
          \deduce{\Delta \vdash \eval
                    {\phi(m)}
                    {(\abs {\phi(a)} (\lambda x. \phi(r)))}}
                 {\phi(\vdots)}
         } &
         \raisebox{-1.5ex}{
          \deduce{\Delta \vdash \eval
                    {(\phi(r)[\phi(n)/x])}
                    \phi(v)}
                 {\phi(\vdots)}}
        }
  }$
\end{tabbing}
In the final formula, we make use of the automatic $\beta$-conversion
in the specification logic where $(\lambda x. \phi(r))\ \phi(n) =
\phi(r)[\phi(n)/x]$, and we use the following compositional
property of the bijection for terms.
\begin{equation*}
\phi(r[x := n]) = \phi(r)[\phi(n)/x]
\end{equation*}
This equation relates the substitution of the simply-typed
$\lambda$-calculus on the left with the substitution in the
specification logic on the right. The proof of this equality is by
induction on the structure of $r$. Thus the inference rule on the
right-hand side above is a proper instance of the BACKCHAIN rule over
the clause for evaluating applications. The inverse of the $\phi$
mapping is defined in the natural way and thus $\phi$ is a bijection.

Finally, we look at derivations of typing judgments in the
simply-typed $\lambda$-calculus and we map these to derivations of
sequents of the form $\Delta, \mathcal{L} \vdash \of e t$ where
$\mathcal{L}$ is a list of atomic formulas of the form $\of {x_1}
{a_1}, \ldots, \of {x_k} {a_k}$ where each $x_i$ is a unique
eigenvariable. We first define the following bijection between a list
of typing assumptions $\Gamma$ from the simply-typed
$\lambda$-calculus and a list of atomic formulas of the form described
for $\mathcal{L}$.
\begin{equation*}
\phi(x_1 : a_1, \ldots, x_k : a_k) =
\of{x_1}{\phi(a_1)}, \ldots, \of{x_k}{\phi(a_k)}
\end{equation*}
Given this, we can define the mapping for typing variables as follows.
\begin{equation*}
\phi\left(\raisebox{-1.5ex}{
  \infer[]
        {\Gamma \vdash x_i : a_i}
        {}
  }\right)
=
\raisebox{-1.5ex}{
   \infer[]
         {\Delta, \phi(\Gamma) \vdash \of{x_i}{\phi(a_i)}}
         {}
   }
\end{equation*}
If the typing derivation within the $\phi$ is correct then it must be
that $x_i : a_i \in \Gamma$. Thus the right-hand side is an
instance of the BACKCHAIN rule on the clause $\of {x_i} \phi(a_i)$
which is in $\phi(\Gamma)$.

The typing rule for applications is mapped in the expected way:
\begin{tabbing}
\hspace{1cm}\= $\phi\left(\raisebox{-3.5ex}{
  \infer[]
        {\Gamma \vdash m\ n : b}
        {\deduce{\Gamma \vdash m : a \to b}{\vdots\ \ \ } &
         \deduce{\Gamma \vdash n : a}{\vdots}}
  }\right)$ \\[20pt]
\>\hspace{1cm}\=
$= \raisebox{-3.5ex}{
   \infer[]
         {\Delta, \phi(\Gamma) \vdash \of {\phi(m\ n)} \phi(b)}
         {\phi\left(\raisebox{-1.5ex}{
           \deduce{\Gamma \vdash m : a \to b}{\vdots\ \ \ }
          }\right)
          &
          \phi\left(\raisebox{-1.5ex}{
            \deduce{\Gamma \vdash n : a}{\vdots}
          }\right)
         }
  }$ \\[10pt]
\>\>
$= \raisebox{-3.5ex}{
   \infer[]
         {\Delta, \phi(\Gamma) \vdash
            \of {(\app {\phi(m)} \phi(n)))} \phi(b)}
          {\raisebox{-1.5ex}{
             \deduce{\Delta, \phi(\Gamma) \vdash
               \of {\phi(m)} {(\arr {\phi(a)} \phi(b))}}
               {\phi(\vdots)}
           }
          &
          \raisebox{-1.5ex}{
            \deduce{\Delta, \phi(\Gamma) \vdash
              \of {\phi(n)} {\phi(a)}}
             {\phi(\vdots)}
          }
         }
  }$
\end{tabbing}

For mapping the abstraction typing rule, we need to be mindful of the
variable naming restriction and how this is realized in the
specification logic. Suppose we want to define the following mapping.
\begin{equation*}
\phi\left(\raisebox{-3.5ex}{
  \infer[]
        {\Gamma \vdash (\lambdat x a r) : a \to b}
        {\deduce{\Gamma, x : a \vdash r : b}{\vdots}}
  }\right)
\end{equation*}
Here we assume that $x$ does not appear in $\Gamma$ so that the naming
restriction is satisfied. We map this to the following specification
logic derivation.
\begin{equation*}
\raisebox{-6ex}{
   \infer[\BACKCHAIN]
     {\Delta, \phi(\Gamma) \vdash
        \of {(\abs {\phi(a)} (\lambda x. \phi(r)))}
            (\arr {\phi(a)} \phi(b))}
     {\infer[GENERIC]
        {\Delta, \phi(\Gamma) \vdash \forall x . [\of x \phi(a)
          \supset \of {((\lambda x. \phi(r))\ x)} \phi(b)]}
        {\infer[AUGMENT]
          {\Delta, \phi(\Gamma) \vdash \of x \phi(a)
           \supset \of {\phi(r)} \phi(b)}
          {\infer[]
            {\Delta, \phi(\Gamma), \of x \phi(a) \vdash
               \of {\phi(r)} \phi(b)}
            {\phi(\vdots)}}}}
  }
\end{equation*}
In the GENERIC rule we overload notation to let $x$ be the eigenvariable
we select. Since it does not appear in $\Gamma$ it will not appear in
$\phi(\Gamma)$, and thus the freshness side-condition on the GENERIC
rule is satisfied. In fact, the naming restriction in the object logic
matches up with the freshness side-condition in the specification
logic exactly as needed.

The inverse of the $\phi$ mapping for typing judgments can be defined
in the expected way, and thus $\phi$ is a bijection. This concludes
the proof of adequacy for our specification. In the future we will
omit such arguments since our specifications are often transparent
encodings of the systems they represent.

%%% Local Variables:
%%% mode: latex
%%% TeX-master: "root"
%%% End:

% LocalWords:  TODO existentials hh backchain stlc eval arr tp tm Pre ctx pre
% LocalWords:  bijective Harrop Prolog monomorphic eigenvariables eigenvariable
% LocalWords:  formedness Teyjus backchaining bijection bijections encodings GN
% LocalWords:  instantiatable

\chapter{A Logic for Reasoning About Specifications}
\label{ch:meta-logic}

In this chapter we present the meta-logic \logic. This logic allows
for encoding descriptions of computational systems and for reasoning
over those descriptions. The logic includes traditional reasoning
devices such as case analysis, induction, and co-induction as well as
new devices specifically designed for working with higher-order abstract
syntax.

The relevant history of \logic begins with the meta-logic \FOLDN
developed by McDowell and Miller for the purposes of inductive
reasoning over higher-order abstract syntax descriptions
\cite{mcdowell02tocl, mcdowell00tcs}. This logic contains a definition
mechanism which allows one to specify and reason about closed-world
descriptions, \ie, allows one to form judgments and to perform case
analysis on them. This definition mechanism is based on earlier work
on closed-world reasoning by many others, but most notably by
Schroeder-Heister \cite{schroeder-Heister93lics}, Eriksson
\cite{eriksson91elp}, and Girard \cite{girard92mail}. The primary
contribution of \FOLDN was the recognition that definitions provided a
way of encoding higher-order abstract syntax descriptions in such a
way that does not conflict with inductive reasoning. In particular,
\FOLDN allowed for natural number induction, and so many reasoning
tasks could be naturally encoded. More recently, Tiu \cite{tiu04phd}
developed the meta-logic Linc which extends the mechanism of
definitions to integrate notions of generalized induction and
co-induction over the structure of definitions. These more general
notions are present in \logic as well.

Another central advancement in the development of logics for reasoning
over higher-order abstract syntax descriptions was the recognition
that one needed a way to reflect the binding structure of terms into
the structure of proofs. This was realized in earlier logics by using
universal judgments. However, this kind of correspondence was always
an uneasy one and the mismatch became explicit when it was necessary
to use case analysis arguments over binding structure as must be done,
for example, in bisimilarity proofs associated with $\pi$-calculus
models of concurrent systems. The desire to provide a logically
precise and cleaner treatment led to the development of the
$\nabla$-quantifier and the associated generic judgment by Miller and
Tiu in the meta-logic \foldnb \cite{miller05tocl}. Tiu later refined
this notion in the meta-logic \LG so that $\nabla$-quantifier behaved
well with respect to inductive reasoning \cite{tiu06lfmtp}. This
interpretation of the $\nabla$-quantifier is present in \logic, and in
this context it can be understood as quantifying over fresh names.

The meta-logic \logic is a continuation of the research surrounding
inductive reasoning and higher-order abstract syntax descriptions. In
particular, it extends the notion of equality in the logic to one
which can describe the binding structure of terms relative to the
proof context in which they occur. This turns out to be essential to
describing the structure of terms which are generated during inductive
reasoning over higher-order abstract syntax descriptions. Moreover,
\logic identifies how this extended notion of equality can be
integrated with the definition mechanism to allow a succinct
description of such objects.

The presentation of \logic is divided into three parts. First,
Section~\ref{sec:logic} contains the core of the logic including
generic quantification. Then Section~\ref{sec:nominal-abstraction}
introduces the extended notion of equality known as {\em nominal
  abstraction} and rules for treating this notion within the logic.
Finally, Section~\ref{sec:definitions} presents rules for treating
fixed-points in the logic including mechanisms for induction and
co-induction. Although the logical features of \logic are described in
their entirety in the first three sections, it is sometimes convenient
to use an alternative presentation for fixed-point definitions. This
form, which uses patterns to distinguish different cases in the
structure of the atom being defined, is introduced in
Section~\ref{sec:pattern-form} and is elaborated as an interpretation
of the basic form of definitions that uses nominal abstractions
explicitly. Rules for treating this alternative form of fixed-points
are presented and proven to be admissible. Finally,
Section~\ref{sec:examples} provides some small examples to illustrate
the expressive power of the logic.

\section{A Logic with Generic Quantification}
\label{sec:logic}

In this section we present the core logic underlying \logic. This
logic is obtained by extending an intuitionistic and predicative
subset of Church's Simple Theory of Types with a treatment of generic
judgments. The encoding of generic judgments is based on the
quantifier called $\nabla$ (pronounced nabla) introduced by Miller and
Tiu \cite{miller05tocl} and further includes the structural rules
associated with this quantifier in the logic \LG described by Tiu
\cite{tiu06lfmtp}.
% While it is
% possible to develop a classical variant of \logic as well,  we do not
% follow that path here, observing simply that the choice between an
% intuitionistic and a classical interpretation can lead to
% interesting differences in the meaning of specifications written in
% the logic. For example, it has been shown that the
% specification of bisimulation for the $\pi$-calculus within this
% logic corresponds to {\em open bisimulation} under an intuitionistic
% reading and to {\em late bisimulation} under a classical reading
% \cite{tiu04fguc}.

\subsection{The Basic Syntax}

Following Church \cite{church40}, terms are constructed from constants
and variables using abstraction and application. All terms are
assigned types using a monomorphic typing system; these types also
constrain the set of well-formed expressions in the expected way. The
collection of types includes $o$, a type that corresponds to
propositions. Well-formed terms of this type are also called formulas.
Two terms are considered to be equal if one can be obtained from the
other by a sequence of applications of the $\alpha$-, $\beta$- and
$\eta$-conversion rules, \ie, the $\lambda$-conversion rules. This
notion of equality is henceforth assumed implicitly wherever there is
a need to compare terms. Logic is introduced by including special
constants representing the propositional connectives $\top$, $\bot$,
$\land$, $\lor$, $\supset$ and, for every type $\tau$ that does not
contain $o$, the constants $\forall_\tau$ and $\exists_\tau$ of type
$(\tau \rightarrow o) \rightarrow o$. The binary propositional
connectives are written as usual in infix form and the expressions
$\forall_\tau x. B$ and $\exists_\tau x. B$ abbreviate the formulas
$\forall_\tau \lambda x.B$ and $\exists_\tau \lambda x.B$,
respectively. Type subscripts will be omitted from quantified formulas
when they can be inferred from the context or are not important to the
discussion. We also use a shorthand for iterated quantification: if
${\cal Q}$ is a quantifier, we will often abbreviate ${\cal
  Q}x_1\ldots{\cal Q}x_n.P$ to ${\cal Q}x_1,\ldots,x_n.P$ or simply
${\cal Q}\vec{x}.P$. We consider the scope of $\lambda$-binders (and
therefore quantifiers) as extending as far right as possible. We
further assume that $\supset$ is right associative and has lower
precedence than $\land$ and $\lor$. For example, $\forall x. t_1
\supset t_2 \supset t_3 \land t_4$ should be read as $\forall x.(t_1
\supset (t_2 \supset (t_3\land t_4)))$.

The usual inference rules for the universal quantifier can be seen as
equating it to the conjunction of all of its instances: that is, this
quantifier is treated extensionally.
There are several situations  where one
wishes to treat an expression such as ``$B(x)$ holds for all
$x$'' as a statement about the existence of a uniform argument for
every instance rather than the truth of a particular property for each
instance \cite{miller05tocl};
such situations typically arise when one is reasoning about the
binding structure of formal objects represented using the
$\lambda${\em -tree syntax} \cite{miller00cl} version of {\em
higher-order abstract syntax} \cite{pfenning88pldi}.
%GN*: re-worded here to avoid possible confusion about what we seem to want
%later as a technical term.
The $\nabla$-quantifier serves to encode judgments that have this kind
of a ``generic'' property associated with them. Syntactically, this
quantifier corresponds to
including a constant $\nabla_\tau$ of type $(\tau \rightarrow o)
\rightarrow o$ for each type $\tau$ not containing $o$.\footnote{
We may choose to allow $\nabla$-quantification at fewer types in
particular applications; such a restriction may be
  useful in adequacy arguments for reasons we discuss later.}
As with the
other quantifiers, $\nabla_\tau x. B$ abbreviates $\nabla_\tau \lambda
x. B$ and the type subscripts are often suppressed for readability.

\subsection{Generic Judgments and $\nabla$-quantification}

Sequents in intuitionistic logic can be written as
\[\Sigma : B_1, \ldots, B_n \lra B_0 \qquad (n\ge0)
\]
where $\Sigma$ is the ``global signature'' for the sequent that
contains the {\em eigenvariables} ({\em i.e.}, variables associated to
the $\existsL$ and $\forallR$ inference rules) relevant to the sequent
proof. We shall think of $\Sigma$ in this prefix position as an
operator that binds each of the variables it contains and that has the
rest of the sequent as its scope. To treat the $\nabla$-quantifier,
the \foldnb logic \cite{miller05tocl} extends the notion of a judgment
from just a formula to a formula paired with a ``local signature.''
Thus, sequents within this logic are written more elaborately as
\[\Sigma : \sigma_1\triangleright B_1, \ldots,
           \sigma_n\triangleright B_n \lra
           \sigma_0\triangleright B_0,
\]
where each $\sigma_0, \ldots, \sigma_n$ is a list of variables that are
bound locally in the formula adjacent to it.  Such local signatures
correspond to a proof-level encoding of binding that is expressed
within formulas through the $\nabla$-quantifier. In particular, the
judgment $x_1,\ldots,x_n\triangleright B$ and the formula $\nabla
x_1\cdots\nabla x_n.B$ for $n \ge 0$ have the same proof-theoretic
force. In keeping with this observation, we shall refer to a judgment
of the form $\sigma \triangleright B$ as a {\it generic judgment}.

As part of a generalization of sequents that bases them on generic
judgments rather than on formulas, we need to
define when two such judgments are equal: this is necessary
for describing at least the initial and cut inference rules.  The
\foldnb logic \cite{miller05tocl} uses a simple form of equality for
this purpose. It deems two generic judgments of the form
$x_1,\ldots,x_n\triangleright B$ and $y_1,\ldots,y_m\triangleright C$
to be equal exactly when the $\lambda$-terms $\lambda
x_1\ldots\lambda x_n. B$ and $\lambda y_1\ldots\lambda y_m. C$ are
$\lambda$-convertible; notice that this necessarily implies that
$n=m$. An equality notion is also needed in formulating an induction
rule. Unfortunately, the simple form of equality present in \foldnb
leads to a rather weak version of such a rule. To overcome this difficulty,
Tiu proposed the addition to the logic of two natural ``structural''
identities between generic judgments.
%In the \LG proof system \cite{tiu06lfmtp}, Tiu introduced two additional
%``structural'' identities between generic judgments.
These identities are the $\nabla${\em -strengthening rule}
$\nabla x. F = F$, provided $x$ is not free in $F$, and the
$\nabla${\em -exchange rule} $\nabla x\nabla y. F = \nabla y\nabla
x. F$.  In its essence, the \LG proof system \cite{tiu06lfmtp} is
obtained from \foldnb by strengthening its notion of equality based on
$\lambda$-conversion through the addition of these two structural
rules for $\nabla$.

%GN*: needs to be a disjoint para from the previous one since it is
%developing a different thought, the consequences of adding the two
%rules. The next two paras that elaborate on
%the comment in this sentence can be merged with this one if we are
%too concerned about short paras (that I personally think are okay).
The move from the weaker
logic \foldnb to the stronger logic \LG involves an ontological
commitment and has a proof-theoretic consequence.

At the ontological level, the strengthening rule implies that every
type at which one is
willing to use $\nabla$-quantification is non-empty and, in fact,
contains an unbounded number of members.  For example, the formula
$\exists_\tau x.\top$ is always provable, even if there are no closed
terms of type $\tau$ because this formula is equivalent to
$\nabla_\tau y.\exists_\tau x.\top$, which is provable.
% , as will be clear from the proof system given in
% Figure~\ref{fig:core-rules}.
Similarly, for any given $n\geq 1$, the following formula
is provable
\[
\exists_\tau x_1\ldots\exists_\tau x_n.
 \left[\bigwedge_{1\leq i,j\leq n, i\not= j} x_i \not= x_j\right].
\]

At the proof-theoretic level, an acceptance of the strengthening and
exchange rules means
that the length of a local context and the order of variables within
it are unimportant.  For example, a sequent that contains the generic
judgments $x_1,\ldots,x_n\triangleright B$  and
$y_1,\ldots,y_m\triangleright C$ can be rewritten (assuming  $n\ge m$)
using
$\alpha$-conversion and strengthening into the judgments
 $z_1,\ldots,z_n\triangleright B'$  and
$z_1,\ldots,z_n\triangleright C'$ where $B'$ and $C'$ are
equal to $B$ and $C$ modulo variable renamings.  In this fashion, all
local bindings in a sequent can be made to involve the same
variables, and, hence, the local bindings can be seen as a global
binding over a sequent that contains formulas and not generic
judgments.  The resulting sequent-level variable bindings will be represented by
specially designated {\it nominal constants}.
Notice, however, that each of these nominal ``constants'' has as its scope only a single
formula.  Thus, we must distinguish the same nominal constant when it
appears in two different formulas and we should treat judgments
as being equal if they are identical up to permutations of these
constants.

% We shall follow the \LG approach to treating $\nabla$.
% Thus, for every type we assume an infinite collection
% of nominal
% constants.
\subsection{A Sequent Calculus Presentation of the Core Logic}

The logic \logic inherits from \LG the shift from a local to a global
scope in the treatment of the $\nabla$-quantifier.  In particular,
we assume that the collection of constants is partitioned into the set
$\mathcal{C}$ of nominal constants and the set $\mathcal{K}$ of
usual, non-nominal constants.
We assume the set $\mathcal{C}$ contains an infinite number of nominal
constants for each type at which $\nabla$ quantification is permitted.
We define the {\it support} of a term (or
formula), written $\supp(t)$, as the set of nominal constants
appearing in it.
A permutation of nominal constants is a type-preserving bijection $\pi$ from
$\mathcal{C}$ to $\mathcal{C}$ such that $\{ x\ |\ \pi(x) \neq x\}$ is
finite.  We denote the application of such a
permutation to a term or formula $t$ by $\pi . t$ and define this as
follows:
\[
\begin{array}{l@{\qquad\qquad }l}
\pi.a = \pi(a), \mbox{ if $a \in \mathcal{C}$} &
\pi.c = c, \mbox{ if $c\notin \mathcal{C}$ is atomic} \\
\pi.(\lambda x.M) = \lambda x.(\pi.M) &
\pi.(M\; N) = (\pi.M)\; (\pi.N)
\end{array}
\]
We extend the notion of equality between terms to encompass also
the application of permutations to nominal constants appearing in
them. Specifically, we write $B \approx B'$ to denote the fact that
there is a permutation $\pi$ such that $B$ $\lambda$-converts to
$\pi.B'$. Using the observations that permutations are invertible and
composable and that $\lambda$-convertibility is an equivalence
relation, it is easy to see that $\approx$ is also an equivalence
relation.

\begin{figure*}[t]
\small
\begin{align*}
\infer[id]{\Sigma : \Gamma, B \lra B'}{B \approx B'} &&
\infer[\cut]{\Sigma : \Gamma, \Delta \lra C} {\Sigma : \Gamma \lra B &
  \Sigma : B, \Delta \lra C} &&
\infer[\cL]{\Sigma : \Gamma, B \lra C} {\Sigma : \Gamma, B, B \lra C}
\end{align*}
\begin{align*}
\infer[\botL]{\Sigma :\Gamma,\bot \lra C}{} &&
\infer[\topR]{\Sigma : \Gamma \lra \top}{}
\end{align*}
\begin{align*}
\infer[\lorL]{\Sigma :\Gamma,B\lor D \lra C} {\Sigma :\Gamma,B\lra C &
  \Sigma:\Gamma,D\lra C} &&
\infer[\lorR,i\in\{1,2\}]{\Sigma : \Gamma \lra B_1 \lor B_2} {\Sigma :
  \Gamma \lra B_i}
\end{align*}
\begin{align*}
\infer[\landL,i\in\{1,2\}]{\Sigma : \Gamma, B_1 \land B_2 \lra
  C}{\Sigma : \Gamma, B_i \lra C} &&
\infer[\landR]{\Sigma : \Gamma \lra B \land C}{\Sigma : \Gamma \lra B
  & \Sigma : \Gamma \lra C}
\end{align*}
\begin{align*}
\infer[\supsetL]{\Sigma : \Gamma, B \supset D \lra C} {\Sigma : \Gamma
  \lra B & \Sigma : \Gamma, D \lra C} &&
\infer[\supsetR]{\Sigma : \Gamma \lra B \supset C} {\Sigma : \Gamma, B
  \lra C}
\end{align*}
\begin{align*}
\infer[\forallL]{\Sigma : \Gamma, \forall_\tau x.B \lra C} {\Sigma,
  \mathcal{K}, \mathcal{C} \vdash t : \tau & \Sigma : \Gamma, B[t/x]
  \lra C} &&
\infer[\forallR,h\notin\Sigma, \supp(B)=\{\vec{c}\}]
{\Sigma : \Gamma \lra \forall x.B} {\Sigma, h : \Gamma \lra B[h\
  \vec{c}/x]}
\end{align*}
\begin{align*}
\infer[\existsL,h\notin\Sigma,\supp(B)=\{\vec{c}\}] {\Sigma : \Gamma,
  \exists x. B \lra C} {\Sigma, h : \Gamma, B[h\; \vec{c}/x] \lra C} &&
\infer[\existsR]{\Sigma : \Gamma \lra \exists_\tau x.B} {\Sigma,
  \mathcal{K}, \mathcal{C} \vdash t:\tau & \Sigma : \Gamma \lra
  B[t/x]}
\end{align*}
\begin{align*}
\infer[\nablaL,a\notin \supp(B)] {\Sigma : \Gamma, \nabla x. B \lra C}
{\Sigma : \Gamma, B[a/x] \lra C} &&
\infer[\nablaR,a\notin \supp(B)]
{\Sigma : \Gamma \lra \nabla x.B} {\Sigma : \Gamma \lra B[a/x]}
\end{align*}
\caption{The core rules of \logic}
\label{fig:core-rules}
\end{figure*}

The rules defining the core of \logic are presented in Figure
\ref{fig:core-rules}. Sequents in this logic have the form $\Sigma :
\Gamma \lra C$ where $\Gamma$ is a multiset and the signature $\Sigma$
contains all the free variables of $\Gamma$ and $C$. We use
expressions of the form $B[t/x]$ in the quantifier rules to denote the
result of substituting the term $t$ for $x$ in the formula $B$. Note
that such a substitution must be done carefully, making sure to rename
bound variables in $B$ to avoid capture of variables appearing in $t$.
In the $\nabla\mathcal{L}$ and $\nabla\mathcal{R}$ rules, $a$ denotes
a nominal constant of an appropriate type. In the $\exists\mathcal{L}$
and $\forall\mathcal{R}$ rule we use raising \cite{miller92jsc} to
encode the dependency of the quantified variable on the support of
$B$; the expression $(h\ \vec{c})$ in which $h$ is a fresh
eigenvariable is used in these two rules to denote the (curried)
application of $h$ to the constants appearing in the sequence
$\vec{c}$. The $\forall\mathcal{L}$ and $\exists\mathcal{R}$ rules
make use of judgments of the form $\Sigma, \mathcal{K}, \mathcal{C}
\vdash t : \tau$. These judgments enforce the requirement that the
expression $t$ instantiating the quantifier in the rule is a
well-formed term of type $\tau$ constructed from the eigenvariables in
$\Sigma$ and the constants in ${\cal K} \cup {\cal C}$. Notice that in
contrast the $\forall\mathcal{R}$ and $\exists\mathcal{L}$ rules seem
to allow for a dependency on only a restricted set of nominal
constants. However, this asymmetry is not significant:
Corollary~\ref{cor:extend} in Section~\ref{sec:meta-theory} will tell
us that the dependency expressed through raising in the latter rules
can be extended to any number of nominal constants that are not in the
relevant support set without affecting the provability of sequents.

Equality modulo $\lambda$-conversion is built into the rules in
Figure~\ref{fig:core-rules}, and also into later extensions of
this logic, in a fundamental way: in particular, proofs are preserved
under the replacement of formulas in sequents by ones to which they
$\lambda$-convert.  A more involved observation
is that we can replace a formula $B$ in a sequent by another formula
$B'$ such that $B\approx B'$ without affecting the provability of the
sequent or even the very structure of the proof. For the core logic,
this observation follows from the form of the $id$ rule and the fact
that permutations distribute over logical structure. We shall prove
this property explicitly for the full logic in
Chapter~\ref{ch:meta-theory}.

\section{Characterizing Occurrences of Nominal Constants}
\label{sec:nominal-abstraction}

We are interested in adding to our logic the capability of
characterizing occurrences of nominal constants within terms and also
of analyzing the structure of terms with respect to such
occurrences. For example, we may want to define a predicate called
{\em name} that holds of a term exactly when that term is a nominal
constant. Similarly, we might need to identify a binary relation
called {\em fresh} that holds between two terms just in the case that
the first term is a nominal constant that does not occur in the second
term. Towards supporting such possibilities, we define in this section
a special binary relation called {\it nominal abstraction} and then
present proof rules that incorporate an understanding of this relation
into the logic. A formalization of these ideas requires a careful
treatment of substitution. In particular, this operation must be
defined to respect the intended formula-level scope of nominal
constants. We begin our discussion with an elaboration of this aspect.

\subsection{Substitutions and their Interaction with Nominal Constants}

The following definition reiterates a common view of substitutions
in logical contexts.

\begin{definition}\label{subst}
A substitution is a type preserving mapping from variables
to terms that is the identity at all but a finite number of variables.
The domain of a substitution is the set of variables that are
not mapped to themselves and its range is the
set of terms resulting from applying it to the variables in its
domain.  We write a substitution as $\{t_1/x_1,\ldots,t_n/x_n\}$
where $x_1,\ldots,x_n$ is a list of variables that contains the
domain of the substitution and $t_1,\ldots,t_n$ is the value of the
map on these variables. The support of a substitution $\theta$,
written as $\supp(\theta)$, is the set of nominal constants that appear in
the range of $\theta$. The restriction of a substitution $\theta$ to
the set of variables $\Sigma$, written as $\theta \restrict
\Sigma$, is a mapping that is like $\theta$ on the variables in
$\Sigma$ and the identity everywhere else.
\end{definition}

A substitution essentially calls for the replacement of
variables by their associated terms in any context to which it is
applied. A complicating factor in our setting is that nominal
constants can appear in the terms that are to replace
particular variables. A substitution may be determined relative to one
formula in a sequent but may then have to be applied to other formulas
in the same sequent. In doing this, we have to take into account the
fact that the scopes of the implicit quantifiers over nominal
constants are restricted to individual formulas. Thus, the logically
correct application of a substitution should be accompanied by a
renaming of these constants in the term being substituted into so as to
ensure that they are not confused with the ones appearing in
the range of the substitution.

\begin{definition}\label{ncasubst}
The ordinary application of a substitution $\theta$ to a term $B$ is
denoted by $B[\theta]$ and corresponds to the replacement of the
variables in $B$ by the terms that $\theta$ maps them to, making sure,
as usual, to avoid accidental binding of the variables appearing in
the range of $\theta$. More precisely, if $\theta = \{t_1/x_1,\ldots,
t_n/x_n\}$, then $B[\theta]$ is the term $(\lambda x_1\ldots\lambda x_n.B)\;
t_1\; \ldots\; t_n$; this term is, of course, considered to be equal
to any other term that it $\lambda$-converts to. By contrast,
the {\em nominal capture avoiding application} of $\theta$ to $B$ is
written as $B\cas{\theta}$ and is defined as follows. Assuming that
$\pi$ is a permutation of nominal constants that maps those appearing
in $supp(B)$ to ones not appearing in $\supp(\theta)$, let $B' =
\pi.B$. Then $B\cas{\theta} = B'[\theta]$.
\end{definition}

The notation $B[\theta]$ generalizes the one
used in the quantifier rules in Figure~\ref{fig:core-rules}.
The definition of the nominal capture avoiding application of a
substitution is ambiguous in that we do not uniquely specify the
permutation to be used.  We resolve this ambiguity by deeming as
acceptable {\it any} permutation that avoids conflicts. As a special
instance of the lemma below, we see that for any given formula $B$ and
substitution $\theta$,
all the possible values for $B\cas{\theta}$ are equivalent modulo the
$\approx$ relation. Moreover, as we show in
Chapter~\ref{ch:meta-theory}, formulas that are equivalent under
$\approx$ are interchangeable in the contexts of proofs.

\begin{lemma}
\label{lem:approx-cas}
If $t \approx t'$ then $t\cas{\theta} \approx t'\cas{\theta}$.
\end{lemma}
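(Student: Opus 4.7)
The hypothesis $t \approx t'$ unfolds to the existence of a permutation $\pi$ of nominal constants such that $t$ $\lambda$-converts to $\pi.t'$. Unfolding the definition of nominal capture avoiding substitution, I may choose permutations $\sigma$ and $\sigma'$ that move $\supp(t)$ and $\supp(t')$, respectively, outside of $\supp(\theta)$, so that $t\cas{\theta} = (\sigma.t)[\theta]$ and $t'\cas{\theta} = (\sigma'.t')[\theta]$. My plan is to exhibit a single permutation $\rho$ for which $(\sigma.t)[\theta]$ equals $\rho.((\sigma'.t')[\theta])$ up to $\lambda$-conversion; this directly witnesses $t\cas{\theta} \approx t'\cas{\theta}$, and since $\sigma$ and $\sigma'$ are arbitrary acceptable choices, the same argument will also entail that all choices in the definition of $\cas{\cdot}$ yield $\approx$-equivalent results.

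The first step is algebraic: using composability of permutations together with $t = \pi.t'$, I would rewrite $\sigma.t$ as $\tau.(\sigma'.t')$ where $\tau = \sigma \circ \pi \circ (\sigma')^{-1}$. Tracing supports, $(\sigma')^{-1}$ carries $\sigma'(\supp(t'))$ back to $\supp(t')$, then $\pi$ carries this to $\supp(\pi.t') = \supp(t)$, and finally $\sigma$ moves $\supp(t)$ clear of $\supp(\theta)$. Consequently $\tau$ sends $\supp(\sigma'.t')$ entirely outside of $\supp(\theta)$. I would then build the desired $\rho$ to agree with $\tau$ on $\supp(\sigma'.t')$ but to be the identity on $\supp(\theta)$. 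Such a $\rho$ exists because the three sets $\supp(\theta)$, $\supp(\sigma'.t')$, and $\tau(\supp(\sigma'.t'))$ are pairwise disjoint; I would match up any remaining constants in the union of these finite sets by an arbitrary bijection and take $\rho$ to be the identity elsewhere. By construction $\rho.(\sigma'.t') = \tau.(\sigma'.t')$.

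The final step is a standard commutation identity: whenever $\rho$ fixes every nominal constant in $\supp(\theta)$, the terms in the range of $\theta$ are unchanged by $\rho$, so $(\rho.s)[\theta] = \rho.(s[\theta])$ for any term $s$. Instantiating this with $s = \sigma'.t'$ chains the equalities $t\cas{\theta} = (\sigma.t)[\theta] = (\tau.(\sigma'.t'))[\theta] = (\rho.(\sigma'.t'))[\theta] = \rho.((\sigma'.t')[\theta]) = \rho.(t'\cas{\theta})$, which establishes $t\cas{\theta} \approx t'\cas{\theta}$. The main delicacy I anticipate is in the middle step, namely verifying the pairwise disjointness of $\supp(\theta)$, $\supp(\sigma'.t')$, and $\tau(\supp(\sigma'.t'))$ so that the partial specification of $\rho$ extends consistently to a well-defined permutation. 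Once this disjointness is in hand, the commutation identity and the algebraic manipulations on either side of it are routine.
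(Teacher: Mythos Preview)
Your approach is essentially the paper's: the paper builds the same permutation---agreeing with $\sigma\circ\pi\circ(\sigma')^{-1}$ on $\supp(\sigma'.t')$ and with the identity on $\supp(\theta)$---and then establishes the commutation by a structural induction on $t'$ that amounts to your commutation identity $(\rho.s)[\theta] = \rho.(s[\theta])$.

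One correction on the point you flagged as delicate: the sets $\supp(\sigma'.t')$ and $\tau(\supp(\sigma'.t'))$ need \emph{not} be disjoint from each other, so full pairwise disjointness can fail. Fortunately you only need the two disjointnesses that involve $\supp(\theta)$: the first, $\supp(\theta)\cap\supp(\sigma'.t')=\emptyset$, guarantees the two defining clauses for $\rho$ do not conflict; the second, $\supp(\theta)\cap\tau(\supp(\sigma'.t'))=\emptyset$ (which holds because $\tau(\supp(\sigma'.t'))=\supp(\sigma.t)$), guarantees the ranges of the two clauses are disjoint, so the partial map is injective. With just these two facts the partial bijection extends to a finitely-supported permutation exactly as the paper does, and the rest of your chain of equalities goes through unchanged.
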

\begin{proof}
Let $t$ be $\lambda$-convertible to $\pi_1.t'$, let $t\cas{\theta} =
(\pi_2.t)[\theta]$ where $\supp(\pi_2.t) \cap \supp(\theta) =
\emptyset$, and let $t'\cas{\theta}$ be $\lambda$-convertible to
$(\pi_3.t')[\theta]$ where $\supp(\pi_3.t')\cap \supp(\theta) =
\emptyset$. Then we define a function $\pi$ partially by the following
rules:
\begin{enumerate}
\item $\pi(c) = \pi_2.\pi_1.\pi_3^{-1}(c)$ if $c \in \supp(\pi_3.t')$ and
\item $\pi(c) = c$ if $c \in \supp(\theta)$.
\end{enumerate}
Since $\supp(\pi_3.t') \cap \supp(\theta) = \emptyset$, these rules
are not contradictory, \ie, this (partial) function is well-defined.
The range of the first rule is $\supp(\pi_2.\pi_1.\pi_3^{-1}.\pi_3.t')
= \supp(\pi_2.\pi_1.t') = \supp(\pi_2.t)$ which is disjoint from the
range of the second rule, $\supp(\theta)$. Since the mapping in each
rule is determined by a permutation, these rules together define a
one-to-one partial mapping that can be extended to a bijection on
$\mathcal{C}$. We take any such extension to be the complete
definition of $\pi$ that must therefore be a permutation.

To prove that $t\cas{\theta} \approx t'\cas{\theta}$ it suffices to
show that $(\pi_2.t)[\theta]$ is $\lambda$-convertible to
$\pi.((\pi_3.t')[\theta])$. We do this by induction on the structure
of $t'$ under the further assumption that $t$ $\lambda$-converts to
$\pi_1.t'$. Suppose $t'$ is an abstraction. Then, it is easy to see
that $(\pi_2.t)[\theta]$ $\lambda$-converts to $\lambda x.
((\pi_2.s)[\theta])$ and $\pi.((\pi_3.t')[\theta])$ $\lambda$-converts
to $\lambda x. (\pi.((\pi_3.s')[\theta]))$ for some choice of variable
$x$ and terms $s$ and $s'$ such that $s'$ is structurally less complex
than $t'$ and $s$ $\lambda$-converts to $\pi_1.s'$. But then, by the
induction hypothesis, $(\pi_2.s)[\theta]$ $\lambda$-converts to
$\pi.((\pi_3.s')[\theta])$ and hence $(\pi_2.t)[\theta]$ is
$\lambda$-convertible to $\pi.((\pi_3.t')[\theta])$. A similar and, in
fact, simpler argument can be provided in the case where $t'$ is an
application. If $t'$ is a nominal constant $c$ then
$(\pi_2.t)[\theta]$ must be $\lambda$-convertible to
$(\pi_2.\pi_1.c)[\theta] = \pi_2.\pi_1.c$. Also,
$\pi.((\pi_3.t')[\theta])$ must be $\lambda$-convertible to
$\pi.\pi_3.c$. Further, in this case the first rule for $\pi$ applies
which means $\pi.\pi_3.c = \pi_2.\pi_1.\pi_3^{-1}.\pi_3.c =
\pi_2.\pi_1.c$. Thus $(\pi_2.t)[\theta]$ is again
$\lambda$-convertible to $\pi.((\pi_3.t')[\theta])$. Finally, suppose
$t'$ is a variable $x$. In this case $t$ must be $\lambda$-convertible
to $x$ so that we must show $x[\theta]$ $\lambda$-converts to
$\pi.(x[\theta])$. If $x$ does not have a binding in $\theta$ then
both terms are equal. Alternatively, if $x[\theta] = s$ then $\pi.s =
s$ by the second rule for $\pi$ and so the two terms are again equal.
Thus $(\pi_2.t)[\theta]$ $\lambda$-converts to
$\pi.((\pi_3.t')[\theta])$, as is required.
\end{proof}

The nominal capture avoiding application of substitutions turns out to
be the dominant notion in the analysis of provability.
For this reason, when we speak of the application of a
substitution in an unqualified way, we shall mean the nominal capture
avoiding form of this notion.

We shall need to consider the composition of substitutions later in
this section. The definition of this notion must also pay attention to
the presence of nominal constants.

\begin{definition}\label{nascomp}
Given a substitution $\theta$ and a permutation $\pi$ of nominal
constants, let $\pi.\theta$ denote
the substitution that is obtained by replacing each $t/x$ in $\theta$
with $(\pi.t)/x$. Given any two substitutions $\theta$ and $\rho$, let
$\theta \circ \rho$ denote the substitution that is such that
$B[\theta\circ \rho] = B[\theta][\rho]$. In this context, the {\em
  nominal capture   avoiding composition} of $\theta$ and $\rho$ is
written as $\theta\bullet\rho$ and defined as follows. Let $\pi$ be a
permutation of nominal constants such that
$\supp(\pi.\theta)$ is disjoint from $\supp(\rho)$. Then
$\theta\bullet\rho = (\pi.\theta)\circ \rho$.
\end{definition}\label{substequiv}
The notation $\theta \circ \rho$ in the above definition represents
the usual composition of $\theta$ and $\rho$ and can, in fact, be
given in an explicit form based on these substitutions. Thus, $\theta
\bullet \rho$ can also be presented in an explicit form. Notice that our
definition of nominal capture avoiding composition is, once again,
ambiguous because it does not fix the permutation to be used,
accepting instead any one that satisfies the constraints. However, as
before, this ambiguity is harmless. To understand this, we first
extend the notion of equivalence under permutations to substitutions.
\begin{definition}
Two substitutions $\theta$ and $\rho$ are considered to be permutation
equivalent, written $\theta \approx \rho$, if and only if there is a
permutation of nominal constants $\pi$ such that $\theta =
\pi.\rho$. This notion of equivalence may also be parameterized by a
set of variables $\Sigma$ as follows: $\theta \approx_\Sigma \rho$
just in the case that $\theta \restrict \Sigma \approx \rho \restrict
\Sigma$.
\end{definition}
It is easy to see that all possible choices for $\theta \bullet \rho$
are permutation equivalent and that if $\varphi_1 \approx \varphi_2$
then $B\cas{\varphi_1} \approx B\cas{\varphi_2}$ for any term $B$.
Thus, if our focus is on provability, the ambiguity in
Definition~\ref{nascomp} is inconsequential by a result to be
established in Chapter~\ref{ch:meta-theory}. As a further observation,
note that $B\cas{\theta\bullet\rho} \approx B\cas{\theta}\cas{\rho}$
for any $B$. Hence our notion of nominal capture avoiding composition
of substitutions is sensible.

The composition operation can be used to define an ordering
relation between substitutions:
\begin{definition}\label{nasordering}
Given two substitutions $\rho$ and $\theta$, we say $\rho$ is {\em
  less general than} $\theta$, notated as $\rho \leq \theta$, if and
only if there exists a $\sigma$ such that $\rho \approx
\theta\bullet\sigma$. This relation can also be parameterized by a
set of variables: $\rho$ is less general than $\theta$
relative to $\Sigma$, written as $\rho \leq_\Sigma \theta$, if and
only if $\rho
\restrict \Sigma \leq \theta \restrict \Sigma$.
\end{definition}
The notion of generality between substitutions that is based on
nominal capture avoiding composition has a different flavor from that
based on the traditional form of substitution composition. For
example, if $a$ is a nominal constant, the substitution $\{a/x\}$ is
strictly less general than $\{a/x, y' a/y\}$ relative to $\Sigma$ for
any $\Sigma$ which contains $x$ and $y$. To see this, note that we can
compose the latter substitution with $\{(\lambda z.y)/y'\}$ to obtain
the former, but the naive attempt to compose the former with
$\{y'a/y\}$ yields $\{b/x, y'a/y\}$ where $b$ is a nominal constant
distinct from $a$. In fact, the ``most general'' solution relative to
$\Sigma$ containing $\{a/x\}$ will be $\{a/x\}\cup \{z'a/z \mid
z\in\Sigma\backslash\{x\}\}$.

\subsection{Nominal Abstraction}

The nominal abstraction relation allows implicit formula-level
bindings represented by nominal constants to be moved into explicit
abstractions over terms. The following notation is useful for
defining this relationship.

\begin{notation}
Let $t$ be a term, let $c_1,\ldots,c_n$ be distinct nominal constants that
possibly occur in $t$, and let $y_1,\ldots,y_n$ be distinct variables
not occurring in $t$ and such that, for $1 \leq i \leq n$, $y_i$ and
$c_i$ have the same type. Then we write $\lambda c_1
\ldots\lambda c_n. t$ to denote the term $\lambda y_1 \ldots \lambda
y_n . t'$ where $t'$ is the term obtained from $t$ by replacing
$c_i$ by $y_i$ for $1\leq i\leq n$.
\end{notation}

There is an ambiguity in the notation introduced above in that
the choice of variables $y_1,\ldots,y_n$ is not fixed. However, this
ambiguity is harmless: the terms that are produced by acceptable
choices are all equivalent under a renaming of bound variables.

\begin{definition}\label{nominal-abstraction}
Let $n\ge 0$ and
let $s$ and $t$ be terms of type $\tau_1 \to \cdots \to \tau_n \to
\tau$ and $\tau$, respectively; notice, in particular, that $s$ takes
$n$ arguments to yield a term of the same type as $t$.
Then the expression $s \unrhd t$ is a formula that is referred to as a
nominal abstraction of degree $n$ or simply as a nominal abstraction. The
symbol $\unrhd$ is used here in an overloaded way in that the degree
of the nominal abstraction it participates in can vary.
The nominal abstraction $s \unrhd t$ of degree $n$ is said to hold just in
the case that $s$ $\lambda$-converts to $\lambda c_1\ldots c_n.t$ for
some nominal constants $c_1,\ldots,c_n$.
\end{definition}

Clearly, nominal abstraction of degree $0$ is the same as equality
between terms based on $\lambda$-conversion, and we will therefore use
$=$  to denote this relation in that situation. In the more general
case,
the term on the left of the operator serves as a pattern for isolating
occurrences of nominal constants. For example, the relation $(\lambda
x.x) \unrhd t$ holds exactly when $t$ is a nominal constant.

% DM*: Notice that up to this point, the $\unrhd$ relationship is a
% *mathematical* relationship (it is not a symbol in the logic).  As a
% result, the expression $s \unrhd t$ is *not* a term or syntax.
% I've reorganized some sentences and added some more that I hope will
% help the reader.

The symbol $\unrhd$ corresponds, at the moment, to a mathematical
relation that holds between pairs of terms as explicated by
Definition~\ref{nominal-abstraction}. We now overload this symbol by
treating it also as a binary predicate symbol of \logic. In the next
subsection we shall add inference rules to make the mathematical
understanding of $\unrhd$ coincide with its syntactic use as a
predicate in sequents. It is, of course, necessary to be able to
determine when we mean to use $\unrhd$ in the mathematical sense and
when as a logical symbol. When we write an expression such as $s\unrhd
t$ without qualification, this should be read as a logical formula
whereas if we say that ``$s\unrhd t$ holds'' then we are referring to
the abstract relation from Definition~\ref{nominal-abstraction}. We
might also sometimes use an expression such as ``$(s\unrhd
t)\cas{\theta}$ holds.'' In this case, we first treat $s \unrhd t$ as
a formula to which we apply the substitution $\theta$ in a nominal
capture avoiding way to get a (syntactic) expression of the form
$s'\unrhd t'$. We then read $\unrhd$ in the mathematical sense,
interpreting the overall expression as the assertion that ``$s'\unrhd
t'$ holds.'' Note in this context that $s \unrhd t$ constitutes a
single formula when read syntactically and hence the expression
$(s\unrhd t)\cas{\theta}$ is, in general, {\it not} equivalent to the
expression $s\cas{\theta}\unrhd t\cas{\theta}$.

In the proof-theoretic setting, nominal abstraction will be used with
terms that contain free occurrences of variables for which
substitutions can be made. The following definition is relevant to
this situation.

\begin{definition}\label{nasolution}
A substitution $\theta$ is said to be a solution to the nominal
abstraction $s \unrhd t$ just in the case that $(s\unrhd t)\cas{\theta}$ holds.
\end{definition}

Solutions to a nominal abstraction can be used to provide rich
characterizations of the structures of terms. For example, consider
the nominal abstraction
$(\lambda x.\fresh x T) \unrhd S$ in which $T$ and $S$ are
variables and {\sl fresh} is a binary predicate symbol.  Any solution
to this problem requires that $S$ be
substituted for by a term of the form $\fresh a R$ where $a$
is a nominal constant and $R$ is a term in which $a$ does not appear,
\ie, $a$ must be ``fresh'' to $R$.

An important property of solutions to a nominal abstraction is that
these are preserved under permutations to nominal constants. We
establish this fact in the lemma below; this lemma will be used later
in showing the stability of the provability of sequents with
respect to the replacement of formulas by ones they are equivalent to
modulo the $\approx$ relation.

\begin{lemma}
\label{lem:na-approx}
Suppose $(s\unrhd t) \approx (s'\unrhd t')$. Then $s\unrhd t$ and
$s'\unrhd t'$ have exactly the same solutions. In particular,
$s\unrhd t$ holds if and only if $s'\unrhd t'$ holds.
\end{lemma}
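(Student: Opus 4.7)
The plan is to reduce the claim to two manageable sub-facts: first, that nominal capture avoiding application commutes with $\approx$ (this is Lemma~\ref{lem:approx-cas}, already established); second, that the abstract relation $\unrhd$ described in Definition~\ref{nominal-abstraction} is invariant under the action of any permutation of nominal constants on both of its arguments. With these in hand, the main claim is almost immediate, and the parenthetical ``in particular'' follows by specialising to the identity substitution.

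First I would unpack the hypothesis $(s\unrhd t) \approx (s'\unrhd t')$. Since $\unrhd$ is a binary predicate symbol and permutation application distributes over applicative structure, this equivalence is equivalent to the existence of a permutation $\pi$ such that $s$ is $\lambda$-convertible to $\pi.s'$ and $t$ is $\lambda$-convertible to $\pi.t'$. Thus modulo $\lambda$-conversion, passing from $(s'\unrhd t')$ to $(s\unrhd t)$ is exactly the action of a single permutation on both sides.

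Next I would prove the key auxiliary fact: for any permutation $\pi$ and any terms $p,q$, the nominal abstraction $p\unrhd q$ holds iff $(\pi.p)\unrhd(\pi.q)$ holds. By symmetry (replacing $\pi$ with $\pi^{-1}$) it suffices to prove one direction. Suppose $p\unrhd q$ holds, so $p$ is $\lambda$-convertible to $\lambda c_1\ldots c_n.q$ for some nominal constants $c_1,\ldots,c_n$. Unfolding the overloaded abstraction notation, this means $p$ is $\lambda$-convertible to $\lambda y_1\ldots y_n. q^\ast$ where $q^\ast$ is obtained from $q$ by replacing each $c_i$ by a fresh variable $y_i$. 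Applying $\pi$ distributes over $\lambda$-abstraction on the bound variables $y_i$ (which are untouched by $\pi$) and over the replacement of $c_i$ by $y_i$, so $\pi.p$ is $\lambda$-convertible to $\lambda y_1\ldots y_n.(\pi.q)^\ast$ where $(\pi.q)^\ast$ is obtained from $\pi.q$ by replacing each $\pi(c_i)$ by $y_i$. Since $\pi$ maps $\mathcal{C}$ into $\mathcal{C}$, the constants $\pi(c_1),\ldots,\pi(c_n)$ are themselves nominal constants, and hence $(\pi.p)\unrhd(\pi.q)$ holds by definition.

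To conclude, suppose $\theta$ is a solution of $s\unrhd t$, i.e., $(s\unrhd t)\cas{\theta}$ holds. By Lemma~\ref{lem:approx-cas} applied to $(s\unrhd t)\approx(s'\unrhd t')$, we have $(s\unrhd t)\cas{\theta}\approx(s'\unrhd t')\cas{\theta}$. Syntactically these are of the form $u\unrhd v$ and $u'\unrhd v'$ related by some permutation, so the auxiliary fact above (together with preservation of $\unrhd$ under $\lambda$-conversion, which is built into Definition~\ref{nominal-abstraction}) gives that $(s'\unrhd t')\cas{\theta}$ holds, i.e., $\theta$ is a solution of $s'\unrhd t'$. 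The reverse direction is symmetric, and specialising to the identity substitution yields the ``in particular'' statement. The one subtle point, and the main obstacle, is verifying the commutation of $\pi$ with the overloaded $\lambda c_1\ldots c_n$ binder in the auxiliary fact; once this bookkeeping with the variables $y_i$ is done carefully, the rest of the proof is routine.
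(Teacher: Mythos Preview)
Your proof is correct and follows essentially the same approach as the paper. The paper also reduces the general claim to the ``particular'' fact that $\unrhd$ is invariant under permutations (proving this first, via $s'$ $\lambda$-converting to $\lambda(\pi.\vec{c}).t'$), and then combines this with Lemma~\ref{lem:approx-cas} exactly as you do; the only cosmetic difference is that the paper proves the ``in particular'' statement first and derives the general statement from it, whereas you isolate the permutation-invariance as an auxiliary fact, prove the general statement, and then remark that the particular case follows by taking $\theta$ to be the identity---which is slightly redundant since your auxiliary fact already \emph{is} the particular statement.
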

\begin{proof}
We prove the particular result first. It suffices to only show it in
the forward direction since $\approx$ is symmetric. Let $\pi$ be the
permutation such that the expression $s'\unrhd t'$ $\lambda$-converts to
$\pi.(s\unrhd t)$. Now suppose $s \unrhd t$ holds since $s$
$\lambda$-converts to $\lambda\vec{c}.t$. Then $s'$ will
$\lambda$-convert to $\lambda(\pi.\vec{c}).t'$ where $\pi.\vec{c}$ is
the result of applying $\pi$ to each element in the sequence $\vec{c}$.
Thus $s' \unrhd t'$ holds.

For the general result it again suffices to show it in one direction,
\ie, that all the solutions of $s\unrhd t$ are solutions to $s'\unrhd
t'$. Let $\theta$ be a substitution such that $(s\unrhd
t)\cas{\theta}$ holds. By Lemma~\ref{lem:approx-cas}, $(s\unrhd
t)\cas{\theta} \approx (s'\unrhd t')\cas{\theta}$. Thus by the
particular result from the first half of this proof, $(s'\unrhd
t')\cas{\theta}$ holds.
\end{proof}

\subsection{Proof Rules for Nominal Abstraction}

% DM*: It seems natural to link these two figures together visually.
\begin{figure}[t]
\small
\begin{align*}
\infer[\unrhdL]{\Sigma : \Gamma, s \unrhd t \lra C}
{\left\{\Sigma\theta : \Gamma\cas{\theta} \lra C\cas{\theta} \;|\;
  \hbox{$\theta$ is a solution to $(s \unrhd t)$}
  \right\}_\theta}
&&
\infer[\unrhdR,\ \hbox{$s \unrhd t$ holds}]
{\Sigma : \Gamma \lra s \unrhd t}
{}
\end{align*}
\caption{Nominal abstraction rules}
\label{fig:na-rules}
\bigskip
\begin{equation*}
\infer[\unrhdL_{\CSNAS}]
       {\Sigma : \Gamma, s \unrhd t \lra C}
       {\left\{\Sigma\theta : \Gamma\cas{\theta} \lra C\cas{\theta}
            \;|\;
          \theta \in \CSNAS(\Sigma, s, t)
        \right\}_\theta}
\end{equation*}
\caption{A variant of $\unrhdL$ based on \CSNAS}
\label{fig:csnas}
\end{figure}

We now add the left and right introduction rules for $\unrhd$ that are
shown in Figure~\ref{fig:na-rules} to link its use as a predicate
symbol to its mathematical interpretation. The expression $\Sigma
\theta$ in the $\unrhdL$ rule denotes the application of a
substitution $\theta=\{t_1/x_1,\ldots,t_n/x_n\}$ to the signature
$\Sigma$ that is defined to be the signature that results from
removing from $\Sigma$ the variables $\{x_1,\ldots,x_n\}$ and then
adding every variable that is free in any term in
$\{t_1,\ldots,t_n\}$. Notice also that in the same inference rule the
operator $\cas{\theta}$ is applied to a multiset of formulas in the
natural way: $\Gamma\cas{\theta}=\{B\cas{\theta}\;|\; B\in\Gamma\}$.
Note that the $\unrhdL$ rule has an {\it a priori} unspecified number
of premises that depends on the number of substitutions that are
solutions to the relevant nominal abstraction. If $s \unrhd t$
expresses an unsatisfiable constraint, meaning that it has no
solutions, then the premise of $\unrhdL$ is empty and the rule
provides an immediate proof of its conclusion.

The $\unrhdL$ and $\unrhdR$ rules capture nicely the intended
interpretation of nominal abstraction. However, there is an obstacle
to using the former rule in derivations: this rule has an infinite
number of premises any time the nominal abstraction $s \unrhd t$ has a
solution. We can overcome this difficulty by describing a rule that
includes only a few of these premises but in such way that their
provability ensures the provability of all the other premises.  Since
the provability of $\Gamma \lra C$ implies the provability of
$\Gamma\cas{\theta} \lra C\cas{\theta}$ for any $\theta$ (a property
established formally in Chapter\ref{ch:meta-theory}), if the first
sequent is a premise of an occurrence of the $\unrhdL$ rule, the
second does not need to be used as a premise of that same rule
occurrence.  Thus, we can limit the set of premises to be considered
if we can identify with any given nominal abstraction a (possibly
finite) set of solutions from which any other solution can be obtained
through composition with a suitable substitution. The following
definition formalizes the idea of such a ``covering set.''

\begin{definition}\label{csnas}
A {\em complete set of nominal abstraction solutions} (\CSNAS) of $s$
and $t$ on $\Sigma$
% (where $\Sigma$ contains at least the eigenvariables of $s$ and $t$)
is a set $S$ of substitutions such
that
\begin{enumerate}
\item each $\theta \in S$ is a solution to $s \unrhd t$, and
\item for every solution $\rho$ to $s \unrhd t$, there exists a
$\theta \in S$ such that $\rho \leq_\Sigma \theta$.
\end{enumerate}
We denote any such set by $\CSNAS(\Sigma, s, t)$.
\end{definition}
Using this definition we present an alternative version of $\unrhdL$
in Figure~\ref{fig:csnas}. Note that if we can find a finite complete
set of nominal abstraction solutions then the number of premises to
this rule will be finite.

\begin{theorem}\label{thm:csnas}
The rules $\unrhdL$ and $\unrhdL_{\CSNAS}$ are inter-admissible.
\end{theorem}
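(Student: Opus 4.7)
The plan is to establish the two directions of inter-admissibility separately. For the easy direction, suppose we have a derivation ending in $\unrhdL_{\CSNAS}$ with its premises $\Sigma\theta : \Gamma\cas{\theta} \lra C\cas{\theta}$ for each $\theta \in \CSNAS(\Sigma, s, t)$. Since every element of $\CSNAS(\Sigma, s, t)$ is by definition a solution to $s \unrhd t$, these premises are a subset of the premises required by $\unrhdL$. Hence any instance of $\unrhdL_{\CSNAS}$ is immediately an instance of $\unrhdL$ (with additional, irrelevant premises available), and the conclusion follows directly.

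For the interesting direction, I would show that $\unrhdL_{\CSNAS}$ suffices to derive any instance of $\unrhdL$. Suppose we have derivations of $\Sigma\theta : \Gamma\cas{\theta} \lra C\cas{\theta}$ for every solution $\theta$ of $s \unrhd t$; we need to conclude $\Sigma : \Gamma, s \unrhd t \lra C$ using $\unrhdL_{\CSNAS}$. Since $\CSNAS(\Sigma, s, t) \subseteq \{\theta \mid \theta \text{ is a solution to } s \unrhd t\}$, we already have the required premises and can apply $\unrhdL_{\CSNAS}$ directly. So in fact this direction is also essentially immediate from the definitions.

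The genuine content of the theorem therefore lies in verifying that $\unrhdL_{\CSNAS}$ is \emph{sound} as a derived rule in a system containing $\unrhdL$, and \emph{conversely} that $\unrhdL$ is sound when only $\unrhdL_{\CSNAS}$ is available. The nontrivial direction is the latter: given only the CSNAS-indexed premises, we must reconstruct a derivation whose final rule is $\unrhdL$, which requires producing a derivation of $\Sigma\rho : \Gamma\cas{\rho} \lra C\cas{\rho}$ for \emph{every} solution $\rho$. To do this I would appeal to the substitution lemma promised for Chapter~\ref{ch:meta-theory}, which states that if $\Sigma : \Gamma \lra C$ is derivable then so is $\Sigma\sigma : \Gamma\cas{\sigma} \lra C\cas{\sigma}$ for any $\sigma$. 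Given an arbitrary solution $\rho$, the CSNAS property yields $\theta \in \CSNAS(\Sigma, s, t)$ and $\sigma$ such that $\rho \approx_\Sigma \theta \bullet \sigma$. Applying the substitution lemma to the premise indexed by $\theta$ gives a derivation of $\Sigma\theta\sigma : \Gamma\cas{\theta}\cas{\sigma} \lra C\cas{\theta}\cas{\sigma}$, and by the identity $B\cas{\theta\bullet\sigma} \approx B\cas{\theta}\cas{\sigma}$ noted after Definition~\ref{nascomp} together with the $\approx$-invariance of provability (also from Chapter~\ref{ch:meta-theory}), this yields a derivation of $\Sigma\rho : \Gamma\cas{\rho} \lra C\cas{\rho}$, as required.

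The main obstacle is keeping track of the nominal-constant bookkeeping in the reduction $\rho \approx_\Sigma \theta \bullet \sigma \Rightarrow \Gamma\cas{\rho} \approx \Gamma\cas{\theta}\cas{\sigma}$, where one must invoke Lemma~\ref{lem:approx-cas} and the behavior of $\bullet$ under permutations. Beyond that, the proof is essentially a packaging of two properties of the system — admissibility of substitution and invariance under $\approx$ — whose full verification is postponed to the meta-theory chapter, so here I would simply cite those results and present the derivation of each premise of $\unrhdL$ from the corresponding CSNAS premise.
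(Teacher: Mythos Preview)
Your eventual argument (third paragraph onward) is correct and is exactly the paper's approach: for the nontrivial direction you take an arbitrary solution $\rho$, find $\theta \in \CSNAS(\Sigma,s,t)$ with $\rho \leq_\Sigma \theta$, apply the substitution lemma (Lemma~\ref{lem:proof-subst}) to the $\theta$-premise, and then use $\approx$-invariance of provability (Lemma~\ref{lem:proof-perm}) together with $B\cas{\theta\bullet\sigma} \approx B\cas{\theta}\cas{\sigma}$ to obtain the $\rho$-premise. The paper does precisely this, and also notes (as you should) that since all eigenvariables of the sequent lie in $\Sigma$, the restriction in $\rho \approx_\Sigma \theta\bullet\sigma$ can be dropped.

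However, your first two paragraphs have the directions reversed. In paragraph~1 you start from an instance of $\unrhdL_{\CSNAS}$ and claim it is ``immediately an instance of $\unrhdL$ (with additional, irrelevant premises available).'' That is false: having derivations of a \emph{subset} of the premises required by $\unrhdL$ does not let you apply $\unrhdL$; the missing premises are not ``available'' but are exactly what must be constructed via the substitution lemma. Conversely, paragraph~2, which you label the ``interesting'' direction, is the genuinely trivial one: if you have derivations for every solution, you in particular have them for the CSNAS subset, so $\unrhdL_{\CSNAS}$ applies directly --- this is the paper's easy direction. You effectively recognize and repair this in paragraph~3, but the write-up should be reorganized so that the easy direction is ``all-solution premises $\Rightarrow$ apply $\unrhdL_{\CSNAS}$'' and the hard direction is ``CSNAS premises $\Rightarrow$ reconstruct all-solution premises via substitution,'' rather than the other way around.
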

\begin{proof}
Suppose we have the following arbitrary instance of $\unrhdL$ in a
derivation:
\begin{equation*}
\infer
 [\unrhdL]
 {\Sigma : \Gamma, s \unrhd t \lra C}
 {\left\{
%\{\raisebox{-1.5ex}{
%   \deduce[]
%    {
\Sigma\theta : \Gamma\cas{\theta} \lra C\cas{\theta} \;|\;
     \hbox{$\theta$ is a solution to $(s \unrhd t)$}%}
%    {\Pi_\theta}
%   }
\right\}_\theta}
\end{equation*}
This rule can be replaced with a use of $\unrhdL_{\CSNAS}$ instead if
we could be certain that, for each $\rho \in \CSNAS(\Sigma,s,t)$, it is
the case that $\Sigma\rho : \Gamma\cas{\rho} \lra
C\cas{\rho}$ is included in the set of premises of the shown rule
instance. But this must be the case: by the
definition  of $\CSNAS$, each such $\rho$ is a solution to $s \unrhd
t$.

In the other direction, suppose we have the following arbitrary
instance of $\unrhdL_{\CSNAS}$.
\begin{equation*}
\infer
 [\unrhdL_{\CSNAS}]
 {\Sigma : \Gamma, s \unrhd t \lra C}
 {\left\{
%\raisebox{-1.5ex}{
%   \deduce[]
%    {
\Sigma\theta : \Gamma\cas{\theta} \lra C\cas{\theta}
      \;|\;
    \theta \in \CSNAS(\Sigma, s, t)%}
%    {\Pi_\theta}
%  }
\right\}_\theta}
\end{equation*}
To replace this rule with a use of the $\unrhdL$ rule
instead, we need to be able to construct a
derivation of $\Sigma\rho : \Gamma\cas{\rho} \lra C\cas{\rho}$ for
each $\rho$ that is a solution to $s \unrhd t$. By the definition of
$\CSNAS$, we know that for any such $\rho$ there exists a $\theta \in
\CSNAS(\Sigma,s,t)$ such that $\rho \leq_\Sigma \theta$, \ie, such
that there
exists a $\sigma$ for which $\rho\restrict\Sigma \approx
(\theta\restrict\Sigma) \bullet \sigma$. Since we are considering the
application of these substitutions to a sequent all of whose
eigenvariables are contained in $\Sigma$, we can drop the restriction
on the substitutions and suppose that $\rho \approx \theta \bullet
\sigma$. Now, we shall show in Chapter~\ref{ch:meta-theory} that if a
sequent has a derivation then the result of applying a substitution to
it in a nominal capture-avoiding way produces a sequent that also has
a derivation. Using this observation, it follows that
$\Sigma\theta\sigma : \Gamma\cas{\theta}\cas{\sigma} \lra
C\cas{\theta}\cas{\sigma}$  has a proof. But this sequent is
permutation equivalent to $\Sigma\rho : \Gamma\cas{\rho} \lra
C\cas{\rho}$ which must, again by a result established explicitly in
Chapter~\ref{ch:meta-theory}, also have a proof.
\end{proof}

Theorem~\ref{thm:csnas} allows us to choose which of the left rules we
wish to consider in any given context. We shall assume the $\unrhdL$
rule in the formal treatment in the rest of this thesis, leaving the
use of the $\unrhdL_{\CSNAS}$ rule to practical applications of the
logic.

\subsection{Computing Complete Sets of Nominal Abstraction
  Solutions}\label{ssec:complete-sets}

For the $\unrhdL_{CSNAS}$ rule to be useful, we need an effective way
to compute restricted complete sets of nominal abstraction
solutions. We show here that the task of finding such complete sets of
solutions can be reduced to that of finding complete sets of unifiers
(\CSU) for higher-order unification problems \cite{huet75tcs}. In the
straightforward approach to finding a solution to a nominal
abstraction $s \unrhd t$, we would first identify a substitution
$\theta$ that we apply to $s \unrhd t$ to get $s' \unrhd t'$ and we
would subsequently look for nominal constants to abstract from $t'$ to
get $s'$.  To relate this problem to the usual notion of unification,
we would like to invert this order: in particular, we would like to
consider all possible ways of abstracting over nominal constants first
and only later think of applying substitutions to make the terms
equal. The difficulty with this second approach is that we do not know
which nominal constants might appear in $t'$ until after the
substitution is applied. However, there is a way around this
problem. Given the nominal abstraction $s \unrhd t$ of degree $n$, we
first consider substitutions for the variables occurring in it that
introduce $n$ new nominal constants in a completely general way.  Then
we consider all possible ways of abstracting over the nominal
constants appearing in the altered form of $t$ and, for each of these
cases, we look for a complete set of unifiers.

The idea described above is formalized in the following definition and
associated theorem. We use the notation $\CSU(s,t)$ in them to denote
an arbitrary but fixed selection of a complete set of unifiers for
the terms $s$ and $t$.

\begin{definition}\label{def:s}
Let $s$ and $t$ be terms of type $\tau_1 \to \ldots \to \tau_n
\to \tau$ and $\tau$, respectively. Let $c_1,\ldots,c_n$ be $n$
distinct nominal constants disjoint from $\supp(s\unrhd t)$ such that,
for $1 \leq i \leq n$, $c_i$ has the type $\tau_i$. Let $\Sigma$ be a
set of variables and for each $h \in \Sigma$ of type $\tau'$, let
$h'$ be a distinct variable not in $\Sigma$ that has type
$\tau_1\to \ldots\to\tau_n\to \tau'$. Let $\sigma = \{h'\ c_1\ \ldots\
c_n/h \mid h \in \Sigma\}$ and let $s' = s[\sigma]$ and $t' =
t[\sigma]$. Let
\begin{equation*}
C = \bigcup_{\vec{a}} \CSU(\lambda \vec{b}.s', \lambda \vec{b}. \lambda\vec{a}.t')
\end{equation*}
where $\vec{a}= a_1,\ldots,a_n$ ranges over all selections of $n$
distinct nominal constants from $\supp(t)\cup \{\vec{c}\}$ such that,
for $1 \leq i \leq n$,
$a_i$ has type $\tau_i$ and $\vec{b}$ is some corresponding listing of
all the nominal constants in $s'$ and $t'$ that are not included in
$\vec{a}$.
%Nominal constants are to be treated as
%bound variables for the purposes of unification.
Then we define
\begin{equation*}
S(\Sigma, s, t) = \{ \sigma \bullet \rho \mid \rho \in C \}
\end{equation*}
\end{definition}

The use of the substitution $\sigma$ above represents
another instance of the application of the general technique of
raising that allows
certain variables (the $h$ variables in this definition) whose
substitution instances might depend on certain nominal constants
($c_1,\ldots,c_n$ here) to be replaced by new variables of higher type
(the $h'$ variables) whose substitution instances are not allowed to
depend on those nominal constants. This technique was previously used
in the $\existsL$ and $\forallR$ rules presented in
Section~\ref{sec:logic}.

\begin{theorem}
$S(\Sigma, s, t)$ is a complete set of nominal abstraction solutions
for $s\unrhd t$ on $\Sigma$.
\end{theorem}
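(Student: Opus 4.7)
The plan is to verify the two conditions defining a $\CSNAS$ for $s \unrhd t$ on $\Sigma$: soundness, meaning every element of $S(\Sigma,s,t)$ is a solution, and completeness, meaning every solution is subsumed by one of its elements. Soundness should be essentially a direct unfolding. Given $\theta = \sigma \bullet \rho$ in $S(\Sigma,s,t)$ with $\rho \in \CSU(\lambda \vec{b}.s', \lambda \vec{b}.\lambda \vec{a}.t')$, I would compute $s\cas{\theta} \approx s\cas{\sigma}\cas{\rho} = s'[\rho]$ and similarly $t\cas{\theta} \approx t'[\rho]$, using Lemma~\ref{lem:approx-cas} to absorb the ambient permutation. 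Since $\rho$ does not act on the nominal constants $\vec{a}$ or $\vec{b}$, the unification identity collapses to $s'[\rho] = \lambda \vec{a}.t'[\rho]$, and hence $s\cas{\theta} \approx \lambda \vec{a}.t\cas{\theta}$, so $(s \unrhd t)\cas{\theta}$ holds by the definition of nominal abstraction.

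For completeness, fix any solution $\rho$ to $s \unrhd t$; so $s\cas{\rho}$ $\lambda$-converts to $\lambda \vec{a'}.t\cas{\rho}$ for some tuple $\vec{a'}$ of $n$ nominal constants. I would begin by applying a permutation of nominal constants to $\rho$ so that the permuted abstraction constants lie within $\supp(t) \cup \{\vec{c}\}$: such a permutation exists because $|\vec{a'}| = n = |\vec{c}|$ and $\vec{c}$ was chosen disjoint from $\supp(s \unrhd t)$, so any member of $\vec{a'}$ outside $\supp(t)$ may be swapped into an unused element of $\vec{c}$. By Lemma~\ref{lem:na-approx}, the permuted substitution remains a solution, and under this normalization $\vec{a'}$ coincides with one of the tuples $\vec{a}$ ranged over in the definition of $C$.

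Next I would lift $\rho$ through the raising substitution $\sigma$ by setting $\hat{\rho}(h') = \lambda \vec{c}.\rho(h)$ for each $h \in \Sigma$. Because the abstraction $\lambda \vec{c}$ binds $\vec{c}$ away, $\supp(\hat{\rho})$ is disjoint from $\supp(\sigma) = \{\vec{c}\}$, so $\sigma \bullet \hat{\rho} = \sigma \circ \hat{\rho}$, and a direct $\beta$-reduction gives $(\sigma \bullet \hat{\rho})(h) = (h'\,\vec{c})[\hat{\rho}] = \rho(h)$, whence $\rho \approx_\Sigma \sigma \bullet \hat{\rho}$. Consequently $s'[\hat{\rho}] = s\cas{\rho} = \lambda \vec{a}.t\cas{\rho} = \lambda \vec{a}.t'[\hat{\rho}]$, so $\hat{\rho}$ unifies $\lambda \vec{b}.s'$ with $\lambda \vec{b}.\lambda \vec{a}.t'$. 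The CSU property then furnishes some $\theta \in \CSU(\lambda \vec{b}.s', \lambda \vec{b}.\lambda \vec{a}.t') \subseteq C$ and some $\tau$ with $\hat{\rho} \approx \theta \bullet \tau$; composing yields $\rho \approx_\Sigma (\sigma \bullet \theta) \bullet \tau$, witnessing $\rho \leq_\Sigma \sigma \bullet \theta \in S(\Sigma,s,t)$ as required.

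The main obstacle lies in the bookkeeping around the normalization permutation and the nominal-capture-avoiding composition. In particular, one must ensure that the chosen permutation aligns $\vec{a'}$ with a specific tuple $\vec{a}$ enumerated in the definition of $C$ and that the complementary listing $\vec{b}$ produced by the normalization matches the one prescribed there, possibly adjusting by a further benign permutation. The algebraic identity $B\cas{\sigma \bullet \hat{\rho}} \approx B\cas{\sigma}\cas{\hat{\rho}}$ and the behavior of $\bullet$ under lifted substitutions must be invoked carefully, and throughout one must be watchful that the CSU guarantee is phrased with $\circ$ rather than $\bullet$, so the conversion to $\bullet$ relies precisely on the disjointness of supports established above.
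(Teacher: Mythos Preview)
Your proposal is correct and follows essentially the same approach as the paper: soundness by unfolding the definitions (using that $(s\unrhd t)\cas{\sigma}=s'\unrhd t'$ since $\supp(\sigma)\cap\supp(s\unrhd t)=\emptyset$), and completeness by first permuting the given solution so its ``important'' abstraction constants land among $\vec{c}$, then lifting via $\hat{\rho}(h')=\lambda\vec{c}.\,\rho(h)$ to factor through $\sigma$, and finally invoking the $\CSU$ property. The paper's proof is organized identically; the only cosmetic differences are in naming (the paper calls your $\hat{\rho}$ by $\rho'$ and your arbitrary solution by $\theta$) and in how the switch between $\cas{\cdot}$ and $[\cdot]$ is phrased---the paper abstracts over $\vec{b}$ first so that no nominal constants remain before dropping to ordinary application, which is exactly the disjointness-of-supports point you flag in your closing paragraph.
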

\begin{proof}
First note that $\supp(\sigma) \cap \supp(s\unrhd t) = \emptyset$ and
thus $(s\unrhd t)\cas{\theta}$ is equal to $(s' \unrhd t')$.
Now we must show that every element of $S(\Sigma, s, t)$ is a
solution to $s \unrhd t$. Let $\sigma\bullet\rho \in S(\Sigma, s, t)$
be an arbitrary element where $\sigma$ is as in Definition~\ref{def:s},
$\rho$ is from $\CSU(\lambda\vec{b}.s',
\lambda\vec{b}.\lambda\vec{a}.t')$, and $s' = s[\sigma]$ and $t' =
t[\sigma]$.  By the definition of $\CSU$ we know $(\lambda\vec{b}.s' =
\lambda\vec{b}.\lambda\vec{a}.t')[\rho]$. This means $(s' =
\lambda\vec{a}.t')\cas{\rho}$ holds and thus $(s' \unrhd
t')\cas{\rho}$ holds. Rewriting $s'$ and $t'$ in terms of $s$ and $t$ this
means $(s \unrhd t)\cas{\sigma}\cas{\rho}$. Thus $\sigma\bullet\rho$
is a solution to $s\unrhd t$.

In the other direction, we must show that if $\theta$ is a solution to $s
\unrhd t$ then there exists $\sigma\bullet\rho \in S(\Sigma, s, t)$
such that $\theta \le_\Sigma \sigma\bullet\rho$. Let $\theta$ be a
solution to $s\unrhd t$. Then we know $(s\unrhd t)\cas{\theta}$ holds.
The substitution $\theta$ may introduce some nominal constants which
are abstracted out of the right-hand side when determining equality, so
let us call these the {\em important} nominal constants. Let $\sigma =
\{h'\ c_1\ \ldots\ c_n/h \mid h \in \Sigma\}$ be as in
Definition~\ref{def:s} and let $\pi'$ be a permutation which maps the
important nominal constants of $\theta$ to nominal constants from
$c_1, \ldots, c_n$. This is possible since $n$ nominal constants are
abstract from the right-hand side and thus there are at most $n$
important nominal constants. Then let $\theta' = \pi'.\theta$, so that
$(s\unrhd t)\cas{\theta'}$ holds and it suffices to show that $\theta'
\le_\Sigma \sigma\bullet\rho$. Note that all we have done at this
point is to rename the important nominal constants of $\theta$ so that
they match those introduced by $\sigma$. Now we define $\rho' = \{
\lambda c_1\ldots\lambda c_n.r / h' \mid r / h \in \theta'\}$ so that
$\theta' = \sigma\bullet\rho'$. Thus $(s\unrhd
t)\cas{\sigma}\cas{\rho'}$ holds. By construction, $\sigma$ shares no
nominal constants with $s$ and $t$, thus we know $(s'\unrhd
t')\cas{\rho'}$ where $s' = s[\sigma]$ and $t' = t[\sigma]$. Also by
construction, $\rho'$ contains no interesting nominal constants and
thus $(s' = \lambda\vec{a}. t')\cas{\rho}$ holds for some nominal
constants $\vec{a}$ taken from $\supp(t) \cup \{\vec{c}\}$. If we let
$\vec{b}$ be a listing of all nominal constants in $s'$ and $t'$ but
not in $\vec{a}$, then $(\lambda\vec{b}. s' =
\lambda\vec{b}.\lambda\vec{a}. t')\cas{\rho}$ holds. At this point the inner
equality has no nominal constants and thus the substitution $\rho$ can
be applied without renaming: $(\lambda\vec{b}. s' =
\lambda\vec{b}.\lambda\vec{a}. t')[\rho']$ holds. By the definition of
$\CSU$, there must be a $\rho \in \CSU(\lambda\vec{b}.s',
\lambda\vec{b}.\lambda\vec{a}.t')$ such that $\rho' \le \rho$. Thus
$\sigma\bullet\rho' \le_\Sigma \sigma\bullet\rho$ as desired.
\end{proof}

\section{Definitions, Induction, and Co-induction}
\label{sec:definitions}

\begin{figure}[t]
\begin{center}
$\infer[\defL]
      {\Sigma : \Gamma, p\ \vec{t} \lra C}
      {\Sigma : \Gamma, B\ p\ \vec{t} \lra C}
\hspace{1in}
\infer[\defR]
      {\Sigma : \Gamma \lra p\ \vec{t}}
      {\Sigma : \Gamma \lra B\ p\ \vec{t}}
$
\end{center}
\caption{Introduction rules for atoms whose predicate is defined as $\forall
  \vec{x}.~p\ \vec{x} \triangleq B\ p\ \vec{x}$}
\label{fig:defrules}
\end{figure}

The sequent calculus rules presented in Figure~\ref{fig:core-rules}
treat atomic judgments as fixed, unanalyzed objects.
We now add the capability of defining such judgments by means of
formulas, possibly involving other predicates. In particular, we shall
assume that we are given a
fixed, finite set of \emph{clauses} of the
form $\forall \vec{x}.~p\ \vec{x} \triangleq B\ p\ \vec{x}$ where $p$
is a predicate constant that takes a number of arguments equal to the
length of $\vec{x}$. Such a clause is said to define $p$ and the
entire collection of clauses is called a {\em
  definition}. The expression $B$, called the {\em body} of the
clause, must be a term that does not contain $p$ or
any of the variables in $\vec{x}$ and must have a type such that
$B\ p\ \vec{x}$ has type $o$.  Definitions are also restricted so that
a predicate is defined by at most one clause.
The intended interpretation of a clause $\forall \vec{x}.~p\ \vec{x}
\triangleq B\ p\ \vec{x}$ is that the atomic
formula $p\ \vec{t}$, where $\vec{t}$ is a list of terms of the same
length and type as the variables in $\vec{x}$, is true if and only if
$B\ p\ \vec{t}$ is true.
This interpretation is realized by adding to the calculus the rules
$\defL$ and $\defR$ shown in Figure~\ref{fig:defrules} for unfolding
predicates on the left and the right of sequents using their defining
clauses.

A definition can have a recursive structure. For example, in the clause
$\forall \vec{x}.~p\ \vec{x} \triangleq B\ p\ \vec{x}$, the predicate
$p$ can appear free in $B\ p\ \vec{x}$.  In this setting, the meanings
of predicates are intended to be given by any one of the fixed points
that can be associated with the definition.  Such an interpretation may
not always be sensible. In particular, without further restrictions,
the resulting proof system may not be consistent.  There are two
constraints that suffice to ensure consistency. First, the body of a
clause must not contain any nominal constants. This restriction
can be justified from another perspective as well: as we see in
Chapter~\ref{ch:meta-theory}, it helps in establishing that $\approx$
is a provability preserving equivalence between formulas. Second, definitions
should be {\em stratified} so that clauses, such as $a\triangleq
(a\supset \bot)$, in which a predicate has a negative dependency on
itself, are forbidden.  While such stratification can be enforced in
different ways, we use a simple approach to doing this in this
thesis. This approach is based on associating with each predicate $p$
a natural number that is called its {\em level} and that is denoted
by $\lvl(p)$.  This measure is then extended to arbitrary formulas by
the following definition.
\begin{definition}
Given an assignment of levels to predicates, the function $\lvl$ is
extended to all formulas in $\lambda$-normal form as follows:
\begin{enumerate}
\item $\lvl(p\ \bar{t}) = \lvl(p)$
\item $\lvl(\bot) = \lvl(\top) = \lvl(s\unrhd t) = 0$
\item $\lvl(B\land C) = \lvl(B\lor C) = \max(\lvl(B),\lvl(C))$
\item $\lvl(B\supset C) = \max(\lvl(B)+1,\lvl(C))$
\item $\lvl(\forall x.B) = \lvl(\nabla x.B) = \lvl(\exists x.B) =
\lvl(B)$
\end{enumerate}
In general, the level of a formula $B$, written as $\lvl(B)$, is the
level of its $\lambda$-normal form.
\end{definition}

A definition is {\em stratified} if we can assign levels to predicates
in such a way that $\lvl(B\ p\ \vec{x}) \leq \lvl(p)$ for each clause
$\forall \vec{x}.~p\ \vec{x} \triangleq B\ p\ \vec{x}$ in that
definition.

\begin{figure}[t]
\begin{center}
$\infer[\IL]
      {\Sigma : \Gamma, p\; \vec{t} \lra C}
      {\vec{x} : B\; S\; \vec{x} \lra S\; \vec{x} \qquad
       \Sigma : \Gamma, S\; \vec{t} \lra C}$\\[5pt]
provided $p$ is defined as $\forall  \vec{x}.~ p\  \vec{x}
\mueq  B\ p\  \vec{x}$ and $S$ is a term that has the same type as $p$
      \\[15pt]
$\infer[\CIR]
      {\Sigma : \Gamma \lra p\; \vec{t}}
      {\Sigma : \Gamma \lra S\; \vec{t} \qquad
       \vec{x} : S\; \vec{x} \lra B\; S\; \vec{x}}
$\\[5pt]
provided $p$ is defined as $\forall  \vec{x}.~ p\  \vec{x}
\nueq  B\ p\  \vec{x}$ and $S$ is a term that has the same type as $p$
\end{center}
\caption{The induction left and co-induction right rules}
\label{fig:indandcoind}
\end{figure}

The $\defL$ and $\defR$ rules do not discriminate between any of the
fixed points of a definition.
% In the current interpretation of definitions, we do not require them
% to be selective of the fixed point they represent.  A richer
% interpretation is to take particular clauses in them as describing the
% least or greatest fixed-point of their recursive equations, which leads to
%treating these clauses inductively or co-inductively, respectively.
We now allow the selection of least and greatest fixed points so as to
support inductive and co-inductive definitions of predicates.
Specifically, we denote an inductive clause by $\forall \vec{x}.~ p\
\vec{x} \mueq B\ p\ \vec{x}$ and a co-inductive one by $\forall
\vec{x}.~ p\ \vec{x} \nueq B\ p\ \vec{x}$. As a refinement of the
earlier restriction on definitions, a predicate may have at most one
defining clause that is designated to be inductive, co-inductive or
neither. The $\defL$ and $\defR$ rules may be used with clauses in any
one of these forms. Clauses that are inductive admit additionally the
left rule $\IL$ shown in Figure~\ref{fig:indandcoind}. This rule is
based on the observation that the least fixed point of a monotone
operator is the intersection of all its pre-fixed points; intuitively,
anything that follows from any pre-fixed point should then also follow
from the least fixed point. In a proof search setting, the term
corresponding to the schema variable $S$ in this rule functions like
the induction hypothesis and is accordingly called the invariant of
the induction. Clauses that are co-inductive, on the other hand, admit
the right rule $\CIR$ also presented in Figure~\ref{fig:indandcoind}.
This rule reflects the fact that the greatest fixed point of a
monotone operator is the union of all the post-fixed points; any
member of such a post-fixed point must therefore also be a member of
the greatest fixed point. The substitution that is used for $S$ in
this rule is called the co-invariant or the simulation of the
co-induction. Just like the restriction on the body of clauses, in
both $\IL$ and $\CIR$, the (co-)invariant $S$ must not contain any
nominal constants.

As a simple illustration of the use of these rules, consider the
clause $p \mueq p$. The desired
inductive reading of this clause
implies that $p$ must be false. In a proof-theoretic setting, we would
therefore expect that the sequent $\cdot : p \lra \bot$ can be
proved. This can, in fact, be done by using $\IL$ with the invariant
$S = \bot$. On the other hand, consider the clause $q
\nueq q$. The co-inductive reading intended here implies that $q$ must
be true. The logic \logic satisfies this expectation: the
sequent $\cdot : \cdot \lra q$ can be proved using $\CIR$ with the
co-invariant $S = \top$.

The addition of inductive and co-inductive forms of clauses and the
mixing of these forms in one setting might be expected to require
stronger conditions than those described earlier in this section to
guarantee consistency. One condition, in addition to the absence of
nominal constants in the bodies of clauses and stratification based on
levels, that suffices and that is also practically acceptable is the
following that is taken from \cite{tiu.momigliano}: in a clause of any
of the forms $\forall \vec{x}.~ p\ \vec{x} \triangleq B\ p\ \vec{x}$,
$\forall \vec{x}.~ p\ \vec{x} \mueq B\ p\ \vec{x}$ or $\forall
\vec{x}.~ p\ \vec{x} \nueq B\ p\ \vec{x}$, it must be that $\lvl(B\
(\lambda\vec{x}.\top)\ \vec{x}) < \lvl(p)$. This disallows any mutual
recursion between clauses, a restriction which can easily be overcome
by merging mutually recursive clauses into a single clause. We
henceforth assume that all definitions satisfy all three conditions
described for them in this section.
%GN*: Best not to use the name G here; we have not yet explained what
%this logic is in total.
Corollary \ref{consistency} in
Chapter~\ref{ch:meta-theory} establishes the consistency of the logic
under these restrictions.

\section{A Pattern-Based Form for Definitions}
\label{sec:pattern-form}

When presenting a definition for a predicate, it is often convenient
to write this as a collection of clauses whose applicability is also
constrained by patterns appearing in the head. For example, in logics
that support equality but not nominal abstraction, list membership
may be defined by the two pattern based clauses shown below.
\begin{equation*}
\member X (X::L) \triangleq \top \hspace{2cm}
\member X (Y::L) \triangleq \member X L
\end{equation*}
These logics also include rules for directly treating definitions
presented in this way. In understanding these rules, use may be made
of the translation of the extended form of definitions to a version
that does not use patterns in the head and in which there is at most
one clause for each predicate. For example, the definition of the list
membership predicate would be translated to the following form:
\begin{equation*}
\member X K \triangleq (\exists L.~ K = (X :: L)) \lor
 (\exists Y \exists L.~ K = (Y :: L) \land \member X L)
\end{equation*}
The treatment of patterns and multiple clauses can now be understood
in terms of the rules for definitions using a single clause and the
rules for equality, disjunction, and existential quantification.

In the logic \logic, the notion of equality has been generalized to
that of nominal abstraction. This allows us also to expand the
pattern-based form of definitions to use nominal abstraction in
determining the selection of clauses. By doing this, we would allow
the head of a clausal definition to describe not only the term
structure of the arguments, but also to place restrictions on the
occurrences of nominal constants in these arguments.
For example, suppose we want to describe the contexts in typing
judgments by lists of the form $\of {c_1} {T_1} :: \of {c_2} {T_2} ::
\ldots :: nil$ with the further proviso that each $c_i$ is a distinct
nominal constant. We will allow this to be done by using the following
pattern-based form of definition for the predicate $\ctx$:
\begin{equation*}
\ctx nil \triangleq \top \hspace{2cm}
(\nabla x. \ctx (\of x T :: L)) \triangleq \ctx L
\end{equation*}
Intuitively, the $\nabla$ quantifier in the head of the second clause
imposes the requirement that, to match it, the argument of $\ctx$
should have the form $\of x T :: L$ where $x$ is a nominal constant
that does not occur in either $T$ or $L$. To understand this
interpretation, we could think of the earlier definition of {\sl ctx}
as corresponding to the following one that does not use patterns or
multiple clauses:
\begin{equation*}
\ctx K \triangleq (K = nil) \lor
(\exists T \exists L.~ (\lambda x . \of x T :: L) \unrhd K \land \ctx
L)
\end{equation*}
Our objective in the rest of this section is to develop machinery
for allowing the extended form of definitions to be used directly. We
do this by presenting its
syntax formally, by describing rules that allow us to work off of such
definitions and, finally, by justifying the new rules by means of a
translation of the kind indicated above.

\begin{definition}
A {\em pattern-based definition} is a finite collection of clauses of
the form
\[\forall \vec{x}.(\nabla \vec{z}. p\ \vec{t}) \triangleq
B\ p\ \vec{x}\] where $\vec{t}$ is a sequence of terms that do not
have occurrences of nominal constants in them, $p$ is a constant such
that $p\ \vec{t}$ is of type $o$ and $B$ is a term devoid of
occurrences of $p$, $\vec{x}$ and nominal constants and such that $B\
p\ \vec{t}$ is of type $o$. Further, we expect such a collection of
clauses to satisfy a stratification condition: there must exist an
assignment of levels to predicate symbols such that for any clause
$\forall \vec{x}.(\nabla \vec{z}. p\ \vec{t}) \triangleq B\ p\
\vec{x}$ occurring in the set, assuming $p$ has arity $n$, it is the
case that $\lvl(B\ (\lambda \vec{x}.\top)\ \vec{x}) <
\lvl(p)$. Notice that we allow the collection to contain more than one
clause for any given predicate symbol.
% however, we require that all clauses
% for a given predicate symbol must be uniformly inductive,
% co-inductive, or neither.
\end{definition}

\begin{figure}[t]
\begin{center}
$\infer[\defR^p]
      {\Sigma : \Gamma \lra p\; \vec{s}}
      {\Sigma : \Gamma \lra (B\; p\; \vec{x})[\theta]}$\\[5pt]
for any clause $\forall \vec{x}.(\nabla \vec{z}. p\ \vec{t}) \triangleq
B\ p\ \vec{x}$ in $\cal D$ and any $\theta$ such that
$range(\theta)\cap \Sigma = \emptyset$ and $(\lambda \vec{z} . p\ \vec{t})[\theta]
                        \unrhd p\ \vec{s}$ holds\\[20pt]
$\infer[\defL^p]
      {\Sigma : \Gamma, p\; \vec{s} \lra C}
      {\left\{
        \begin{tabular}{l|l}
         $\Sigma\theta : \Gamma\cas{\theta}, (B\; p\;
         \vec{x})\cas{\theta} \lra C\cas{\theta}$ &
           $\forall \vec{x}.(\nabla \vec{z}. p\ \vec{t}) \triangleq
                              B\ p\ \vec{x} \in {\cal D}$ and \\
        &
           $\theta$ is a solution to $((\lambda \vec{z} . p\ \vec{t}) \unrhd p\
                        \vec{s})$
       \end{tabular}
              \right\}
      }
$
\end{center}
\caption{Introduction rules for a pattern-based definition $\cal D$}
\label{fig:patterndefrules}
\end{figure}

The logical rules for treating pattern-based definitions are presented
in Figure~\ref{fig:patterndefrules}. These rules encode the
idea of matching an instance of a predicate with the head of a
particular clause and then replacing the predicate with the
corresponding clause body. The kind of matching involved is made
precise through the construction of a nominal abstraction after
replacing the $\nabla$ quantifiers in the head of the clause by
abstractions. The right rule embodies the fact that it is enough if
an instance of any one clause can be used in this way to yield a
successful proof. In this rule, the substitution $\theta$ that results
from the matching must be applied in a nominal capture avoiding way to
the body. However, since $B$ does not contain nominal constants,
the ordinary application of the substitution also suffices.
To accord with the treatment in the right rule, the left rule
must consider all possible ways in which an instance of an atomic
assumption  $p\ \vec{s}$ can be matched by a clause and must show that
a proof can be constructed in each such case.

The soundness of these rules is the content of the following theorem
whose proof also makes explicit the intended interpretation of the
pattern-based form of definitions.

\begin{theorem}
The pattern-based form of definitions and the associated proof rules
do not add any new power to the logic. In particular, the $\defL^p$
and $\defR^p$ rules are admissible under the intended interpretation
via translation of the pattern-based form of
definitions.
\end{theorem}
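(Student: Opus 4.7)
The plan is to define a translation $(\cdot)^\ast$ that maps each pattern-based definition $\mathcal{D}$ to an ordinary definition $\mathcal{D}^\ast$, and then to simulate every use of $\defR^p$ or $\defL^p$ on $\mathcal{D}$ by applying $\defR$ or $\defL$ to $\mathcal{D}^\ast$ together with the rules for disjunction, existential quantification, conjunction, and the nominal abstraction rules $\unrhdR$ and $\unrhdL$. Concretely, for each predicate $p$ whose defining pattern-clauses in $\mathcal{D}$ are $\{\forall \vec{x}_i.(\nabla \vec{z}_i. p\ \vec{t}_i) \triangleq B_i\ p\ \vec{x}_i\}_{i=1}^{k}$, I put into $\mathcal{D}^\ast$ the single clause
$$\forall \vec{y}.~ p\ \vec{y} \;\triangleq\; \bigvee_{i=1}^{k} \exists \vec{x}_i.~ \bigl((\lambda \vec{z}_i. p\ \vec{t}_i) \unrhd p\ \vec{y}\bigr) \;\land\; B_i\ p\ \vec{x}_i,$$
where $\vec{y}$ is a fresh vector of variables of the argument types of $p$. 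This clause is well-formed for \logic: its body is devoid of nominal constants because neither $B_i$ nor $\vec{t}_i$ contains any and $\vec{y}$ is fresh, and it meets the stratification condition since nominal abstraction literals have level $0$, the connectives $\lor$, $\land$ and $\exists$ do not raise levels, and each $B_i$ already lies strictly below $\lvl(p)$ by the stratification of $\mathcal{D}$.

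To admit $\defR^p$, I take an arbitrary instance concluding $\Sigma : \Gamma \lra p\ \vec{s}$ whose premise is $\Sigma : \Gamma \lra (B_i\ p\ \vec{x}_i)[\theta]$ and whose side condition is that $\theta$ witnesses $(\lambda \vec{z}_i. p\ \vec{t}_i)[\theta] \unrhd p\ \vec{s}$. I simulate it by applying $\defR$ on $\mathcal{D}^\ast$, then $\lorR$ to select the $i$-th disjunct, then a sequence of $\existsR$ steps instantiating the components of $\vec{x}_i$ according to $\theta$, and finally $\landR$ to split. The left conjunct $(\lambda \vec{z}_i. p\ \vec{t}_i)[\theta] \unrhd p\ \vec{s}$ is discharged by $\unrhdR$ using the matching hypothesis, while the right conjunct coincides with the premise of the original $\defR^p$ step.

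To admit $\defL^p$, I begin from a conclusion $\Sigma : \Gamma, p\ \vec{s} \lra C$ whose premises are the sequents $\{\Sigma\theta : \Gamma\cas{\theta}, (B_i\ p\ \vec{x}_i)\cas{\theta} \lra C\cas{\theta}\}$ indexed by clauses $i$ and solutions $\theta$ of $(\lambda \vec{z}_i. p\ \vec{t}_i) \unrhd p\ \vec{s}$. Applying $\defL$ on $\mathcal{D}^\ast$ unfolds the hypothesis to the disjunction; $\lorL$ splits the derivation into one subgoal per disjunct $i$; within each, $\existsL$ (with its customary raising over the support of $p\ \vec{s}$) introduces fresh eigenvariables $\vec{h}_i$ for $\vec{x}_i$, and two applications of $\landL$ extract both the nominal-abstraction hypothesis and the body hypothesis; a final $\unrhdL$ then produces one premise per solution of that nominal abstraction, yielding sequents of the required shape. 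The main obstacle will be reconciling the two indexings of premises: the raised variables produced by $\existsL$ shift solutions from the form $\theta$ acting on $\vec{x}_i$ to a form $\theta'$ acting on $\vec{h}_i$, and I must verify that the correspondence $\theta'(\vec{h}_i)\ \vec{c} = \theta(\vec{x}_i)$ (for $\vec{c} = \supp(p\ \vec{s})$) induces a bijection between the two premise sets. The stability results for provability under substitution and under $\approx$ promised in Chapter~\ref{ch:meta-theory} should then suffice to identify the matched premises up to equivalence, completing the simulation in both directions.
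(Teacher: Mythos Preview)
Your proposal is correct and follows essentially the same route as the paper: translate each multi-clause pattern definition into a single non-pattern clause whose body is a disjunction of existentially quantified conjunctions of a nominal-abstraction literal and the original body, then simulate $\defR^p$ by $\defR$/$\lorR$/$\existsR$/$\landR$/$\unrhdR$ and $\defL^p$ by $\defL$/$\lorL$/$\existsL$/$\landL$/$\unrhdL$.

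Two small differences are worth flagging. First, the paper writes the nominal-abstraction literal using a \emph{fresh} term constructor $p'$ of the same argument types as $p$, i.e.\ $(\lambda \vec{z}_i.\, p'\,\vec{t}_i)\unrhd p'\,\vec{y}$, rather than reusing $p$ as you do. For the present theorem this makes no difference, since $\defL$/$\defR$ feed $p$ itself back into the body and the solutions to $(\lambda \vec{z}_i.\, p\,\vec{t}_i)\unrhd p\,\vec{s}$ and $(\lambda \vec{z}_i.\, p'\,\vec{t}_i)\unrhd p'\,\vec{s}$ coincide. But the paper's choice is deliberate: the same translation is reused for the admissibility of $\IL^p$, where the body is evaluated at an arbitrary invariant $S$, and using $p$ would replace the head constructor in the nominal-abstraction literal by $S$, changing its solution set. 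Second, the paper simply writes $\vec{x}_i$ for the eigenvariables after $\existsL$ and does not display the raising over $\supp(p\,\vec{s})$; it treats the identification of the two premise sets as immediate rather than as an obstacle requiring the substitution and $\approx$ lemmas you invoke. (Also note that extracting both conjuncts requires a contraction before the two $\landL$ steps; the paper names this combination $\landL^*$.)
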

\begin{proof}
Let $p$ be a predicate whose clauses in the definition being
considered are given by the following set of clauses.
\begin{equation*}
\{\forall \vec{x}_i.~ (\nabla \vec{z}_i. p\ \vec{t}_i) \triangleq
B_i\ p\ \vec{x}_i\}_{i\in 1..n}
\end{equation*}
Let $p'$ be a new constant symbol with the same argument types as
$p$. Then the intended interpretation of the definition of $p$ in a
setting that does not allow the use of patterns in the head and that
limits the number of clauses defining a predicate to one is given by
the clause
\begin{equation*}
\forall \vec{y} . p\ \vec{y} \triangleq \bigvee_{i\in 1..n} \exists \vec{x}_i
. ((\lambda \vec{z}_i . p'\ \vec{t}_i) \unrhd p'\ \vec{y}) \land B_i\
p\ \vec{x}_i
\end{equation*}
in which the variables $\vec{y}$ are chosen such that they do not
appear in the terms $\vec{t}_i$ for $1 \leq i \leq n$. Note also that we are using the term
constructor $p'$ here so as to be able to match the entire
head of a clause at once, thus ensuring that the $\nabla$-bound
variables in the head are assigned a consistent value for all
arguments of the predicate.

Based on this translation, we can replace
an instance of $\defR^p$,
\begin{equation*}
\infer[\defR^p]
      {\Gamma \lra p\; \vec{s}}
      {\Gamma \lra (B_i\; p\; \vec{x}_i)[\theta]}
\end{equation*}
with the following sequence of rules, where a double inference line
indicates that a rule is used multiple times.
\begin{equation*}
\infer[\defR]{\Gamma \lra p'\; \vec{t}}
{\infer=[\lorR]
 {\Gamma \lra \bigvee_{i\in 1..n} \exists \vec{x}_i
  . ((\lambda \vec{z}_i . p'\ \vec{t}_i) \unrhd p'\ \vec{s}) \land
     B_i\ p\ \vec{x}_i
 }
 {\infer=[\existsR]
  {\Gamma \lra \exists \vec{x}_i
   . ((\lambda \vec{z}_i . p'\ \vec{t}_i) \unrhd p'\ \vec{s}) \land
     B_i\ p\ \vec{x}_i
  }
  {\infer[\landR]
   {\Gamma \lra ((\lambda \vec{z}_i . p'\ \vec{t}_i)[\theta] \unrhd
     p'\ \vec{s}) \land (B_i\ p\ \vec{x}_i)[\theta]
   }
   {\infer[\unrhdR]{\Gamma \lra (\lambda \vec{z}_i . p'\
       \vec{t}_i)[\theta] \unrhd p'\ \vec{s}}{}
    &
    \Gamma \lra (B_i\; p\; \vec{x}_i)[\theta]
   }
  }
 }
}
\end{equation*}
Note that we have made use of the fact that $\theta$ instantiates only
the variables $x_i$ and thus has no effect on $\vec{s}$. Further, the
side condition associated with the $\defR^p$ rule ensures that the
$\unrhdR$ rule that appears as a left leaf in this derivation is well
applied.

Similarly, we can replace an instance of $\defL^p$,
\begin{equation*}
\infer[\defL^p]
      {\Sigma : \Gamma, p\; \vec{s} \lra C}
      {\left\{
         \Sigma\theta : \Gamma\cas{\theta}, (B_i\; p\;
         \vec{x}_i)\cas{\theta} \lra C\cas{\theta}\ |\
           \hbox{$\theta$ is a solution to $((\lambda \vec{z} . p\ \vec{t}_i) \unrhd p\
                        \vec{s})$}
              \right\}_{i\in 1..n}
      }
\end{equation*}
with the following sequence of rules
\begin{equation*}
\hspace{-1.8cm}
\infer[\defL]{\Gamma, p\; \vec{s} \lra C}
{
 \infer=[\lorL]
 {\Gamma, \bigvee_{i\in 1..n} \exists \vec{x}_i
  . ((\lambda \vec{z}_i . p'\ \vec{t}_i) \unrhd p'\ \vec{s}) \land
     B_i\ p\ \vec{x}_i
  \lra C
 }
 {\hspace{2.2cm}\left\{\raisebox{-6ex}{
  \infer=[\existsL]
   {\Gamma, \exists \vec{x}_i
     . ((\lambda \vec{z}_i . p'\ \vec{t}_i) \unrhd p'\ \vec{s}) \land
     B_i\ p\ \vec{x}_i
    \lra C
   }
   {\infer[\landL^*]
    {\Gamma, ((\lambda \vec{z}_i . p'\ \vec{t}_i) \unrhd p'\ \vec{s}) \land
     B_i\ p\ \vec{x}_i
     \lra C
    }
    {\infer[\unrhdL]
     {\Gamma, (\lambda \vec{z}_i . p'\ \vec{t}_i) \unrhd p'\ \vec{s},
       B_i\ p\ \vec{x}_i
       \lra C
     }
     {\left\{\hbox{
       $\Gamma\cas{\theta}, (B_i\; p\; \vec{x}_i)\cas{\theta}
         \lra C\cas{\theta}\ |\ \theta$ is a solution to
	   $((\lambda \vec{z} . p'\ \vec{t}_i) \unrhd p'\ \vec{s})$
      }
      \right\}
     }
    }
   }
 }\right\}_{i \in 1..n}\hspace{3cm}
 }
}
\end{equation*}
Here $\landL^*$ is an application of $\cL$ followed by $\landL_1$ and
$\landL_2$ on the contracted formula. It is easy to see that the
solutions to $(\lambda \vec{z}.p\;\vec{t}_i)
\unrhd p\;\vec{s}$ and $(\lambda \vec{z}.p'\;\vec{t}_i)
\unrhd p'\;\vec{s}$ are identical and hence the leaf sequents in this
partial derivation are exactly the same as the upper sequents of the
instance of the $\defL^p$ rule being considered.
\end{proof}

A weak form of a converse to the above theorem also holds. Suppose
that the predicate $p$ is given by the following clauses
\begin{equation*}
\{\forall \vec{x}_i.~ (\nabla \vec{z}_i. p\ \vec{t}_i) \triangleq
B_i\ p\ \vec{x}_i\}_{i\in 1..n}
\end{equation*}
in a setting that uses pattern-based definitions and that has the
$\defL^p$ and $\defR^p$ but not the $\defL$ and $\defR$ rules. In such
a logic, it is easy to see that the following is provable:
\begin{equation*}
\forall \vec{y} . \left[p\ \vec{y} \equiv \bigvee_{i\in 1..n} \exists
  \vec{x}_i . ((\lambda \vec{z}_i . p'\ \vec{t}_i) \unrhd p'\ \vec{y})
  \land B_i\ p\ \vec{x}_i\right]
\end{equation*}
Where $B \equiv C$ denotes $(B \supset C) \land (C \supset B)$. Thus,
in the presence of \cut, the $\defL$ and $\defR$ rules can be treated
as derived ones relative to the translation interpretation of
pattern-based definitions.

We would like also to allow patterns to be used in the heads of
clauses when writing definitions that are intended to pick out the
least and greatest fixed points, respectively. Towards this end we
admit in a definition also clauses of the form $\forall
\vec{x}.(\nabla \vec{z}. p\ \vec{t}) \mueq B\ p\ \vec{x}$ and $\forall
\vec{x}.(\nabla \vec{z}. p\ \vec{t}) \nueq B\ p\ \vec{x}$ with the
earlier provisos on the form of $B$ and $\vec{t}$ and the types of $B$ and
$p$ and with the additional requirement that all the clauses for any
given predicate are un-annotated or annotated uniformly with either $\mu$ or
$\nu$. Further, a definition must satisfy stratification conditions as
before. In reasoning about the least or greatest fixed point forms of
definitions, we may use the translation into the earlier, non-pattern
form together with the rules $\IL$ and $\CIR$. It is possible to
formulate an induction rule that works directly from pattern-based
definitions using the idea that to show $S$ to be an induction
invariant for the predicate $p$, one must show that every clause of
$p$ preserves $S$. A rule that is based on this intuition is presented
in Figure~\ref{fig:pattern-induction-rule}. The soundness of this rule
is shown in the following theorem.

\begin{figure}[t]
\begin{equation*}
\infer[\IL^p]
{\Sigma : \Gamma, p\ \vec{s} \lra C}
{\left\{\vec{x}_i : B_i\ S\ \vec{x}_i \lra \nabla \vec{z}_i.S\
  \vec{t}_i\right\}_{i\in 1..n} \quad
  \Sigma : \Gamma, S\ \vec{s} \lra C}
\end{equation*}
\begin{center}
assuming $p$ is defined by the set of clauses $\{\forall \vec{x}_i.
(\nabla \vec{z}_i. p\ \vec{t}_i) \mueq B_i\ p\ \vec{x}_i\}_{i\in 1..n}$
\end{center}
\caption{Induction rule for pattern-based definitions}
\label{fig:pattern-induction-rule}
\end{figure}

\begin{theorem}
The $\IL^p$ rule is admissible under the intended translation of
pattern-based definitions.
\end{theorem}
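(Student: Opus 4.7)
The plan is to follow the same strategy used earlier in proving the admissibility of $\defL^p$ and $\defR^p$: reduce the pattern-based rule $\IL^p$ to the standard rule $\IL$ applied to the translation of the pattern-based definition. Concretely, the predicate $p$ defined by $\{\forall \vec{x}_i. (\nabla \vec{z}_i. p\ \vec{t}_i) \mueq B_i\ p\ \vec{x}_i\}_{i\in 1..n}$ translates to the single non-pattern clause $\forall \vec{y}. p\ \vec{y} \mueq B\ p\ \vec{y}$ where $B\ p\ \vec{y} = \bigvee_{i\in 1..n} \exists \vec{x}_i . ((\lambda \vec{z}_i. p'\ \vec{t}_i) \unrhd p'\ \vec{y}) \land B_i\ p\ \vec{x}_i$. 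Given the premises of $\IL^p$ with invariant $S$, I will invoke $\IL$ using the very same $S$; its right-hand premise $\Sigma : \Gamma, S\ \vec{s} \lra C$ is then literally the second premise of $\IL^p$, so all the work concentrates on deriving its invariant-preservation premise $\vec{y} : B\ S\ \vec{y} \lra S\ \vec{y}$ from the $n$ premises $\{\vec{x}_i : B_i\ S\ \vec{x}_i \lra \nabla \vec{z}_i. S\ \vec{t}_i\}_{i\in 1..n}$.

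To reduce $\vec{y} : B\ S\ \vec{y} \lra S\ \vec{y}$, I will apply $\lorL$ to split into $n$ cases, one for each clause. In case $i$, I apply $\existsL$ to introduce fresh eigenvariables for $\vec{x}_i$ (their support is empty since $\vec{t}_i$, $B_i$ and $S$ contain no nominal constants, by the restrictions on pattern-based definitions and on invariants), then $\landL$ with contraction to separate the nominal-abstraction hypothesis from $B_i\ S\ \vec{x}_i$. This leaves $\vec{y}, \vec{x}_i : (\lambda \vec{z}_i. p'\ \vec{t}_i) \unrhd p'\ \vec{y}, B_i\ S\ \vec{x}_i \lra S\ \vec{y}$, to which I apply $\unrhdL$.

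The heart of the argument is identifying the solutions generated by $\unrhdL$. Since $\vec{t}_i$ contains no nominal constants and the only free variables on the left of $\unrhd$ are in $\vec{x}_i$ (which are not in $\Sigma = \{\vec{y}\}$), the analysis of Section~\ref{ssec:complete-sets} shows that a complete set of nominal abstraction solutions on $\{\vec{y}\}$ consists (up to permutation equivalence) of the single substitution $\theta = \{\vec{t}_i[\vec{c}/\vec{z}_i]/\vec{y}\}$ where $\vec{c}$ is a sequence of fresh nominal constants. Applying $\theta$ in a nominal capture-avoiding way to the remaining sequent then leaves $\vec{x}_i : B_i\ S\ \vec{x}_i \lra S\ \vec{t}_i[\vec{c}/\vec{z}_i]$, because $S$ and $B_i\ S\ \vec{x}_i$ do not contain $\vec{y}$. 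Finally, since the $\vec{c}$ are fresh nominal constants absent from the left-hand side, $|\vec{z}_i|$ uses of $\nablaR$ convert the right-hand side into $\nabla \vec{z}_i. S\ \vec{t}_i$, which is exactly the $i$-th premise of $\IL^p$.

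The main obstacle is the CSNAS bookkeeping in the previous paragraph: one must check that the restriction that $S$, $B_i$, and $\vec{t}_i$ contain no nominal constants really does make the enumeration of solutions collapse to the expected single (up to permutation) substitution, and that the resulting sequent matches the $\IL^p$ premise on the nose after a few $\nablaR$ steps. Once that matching is nailed down and the use of Theorem~\ref{thm:csnas} to replace $\unrhdL$ by $\unrhdL_{\CSNAS}$ is justified, combining the pieces gives a derivation of the conclusion of $\IL^p$ from its premises using only rules already shown to be sound for the translated definition, establishing admissibility.
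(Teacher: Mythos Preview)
Your proposal is correct and follows essentially the same approach as the paper: apply $\IL$ to the translated clause with the same invariant $S$, then derive the invariant-preservation premise by decomposing with $\lorL$, $\existsL$, $\landL$, and $\unrhdL$, observing that the most general solution substitutes $\vec{t}_i[\vec{c}/\vec{z}_i]$ for $\vec{y}$ so that the resulting sequents match the $\IL^p$ premises after $\nablaR$ steps. The paper is slightly terser on the last step (it simply asserts the derivability equivalence without spelling out the $\nablaR$ applications), but the argument is the same.
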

\begin{proof}
Let the clauses for $p$ in the pattern-based definition be given by
the set
\[\{\forall \vec{x}_i.
(\nabla \vec{z}_i. p\ \vec{t}_i) \mueq B_i\ p\ \vec{x}_i\}_{i\in
  1..n}\]
in which case the translated form of the definition for $p$ would be
\begin{equation*}
\forall \vec{y} . p\ \vec{y} \mueq \bigvee_{i\in 1..n} \exists \vec{x}_i
. ((\lambda \vec{z}_i . p'\ \vec{t}_i) \unrhd p'\ \vec{y}) \land B_i\
p\ \vec{x}_i.
\end{equation*}
In this context, the rightmost upper sequents of the $\IL^p$ and the $\IL$
rules that are needed to derive a sequent of the form $\Sigma :
\Gamma, p\ \vec{s} \lra C$ are identical. Thus, to show
that $\IL^p$ rule is admissible, it suffices to show that the left
upper sequent in the $\IL$ rule can be derived in the original
calculus from all but the rightmost upper sequent in an $\IL^p$
rule. Towards this end, we observe that we can construct the following
derivation:
\begin{equation*}
\small
\hspace{-2.3cm}
\infer=[\lorL]
{\vec{y} : \bigvee_{i\in 1..n} \exists \vec{x}_i
  . ((\lambda \vec{z}_i . p'\ \vec{t}_i) \unrhd p'\ \vec{y}) \land
  B_i\ S\ \vec{x}_i
  \lra S\ \vec{y}
}
{\hspace{2.8cm}\left\{\raisebox{-6ex}{
    \infer=[\existsL]
    {\vec{y} : \exists \vec{x}_i
      . ((\lambda \vec{z}_i . p'\ \vec{t}_i) \unrhd p'\ \vec{y}) \land
      B_i\ S\ \vec{x}_i \lra S\ \vec{y}
    }
    {\infer[\landL^*]
      {\vec{y}, \vec{x}_i :
        ((\lambda \vec{z}_i . p'\ \vec{t}_i) \unrhd p'\ \vec{y}) \land
        B_i\ p\ \vec{x}_i \lra S\ \vec{y}
      }
      {\infer[\unrhdL]
        {\vec{y}, \vec{x}_i :
          (\lambda \vec{z}_i . p'\ \vec{t}_i) \unrhd p'\ \vec{y},
          B_i\ S\ \vec{x}_i
          \lra S\ \vec{y}
        }
        {\left\{\hbox{
            $(\vec{y}, \vec{x}_i)\theta : (B_i\; p\; \vec{x}_i)\cas{\theta}
            \lra (S\ \vec{y})\cas{\theta}\ |\ \theta$ is a solution to
            $((\lambda \vec{z} . p'\ \vec{t}_i) \unrhd p'\ \vec{y})$
          }
          \right\}
        }
      }
    }
  }\right\}_{i \in 1..n}\hspace{3cm}
}
\end{equation*}
Since the variables $\vec{y}$ are distinct and do not occur in
$\vec{t}_i$, the solutions to $(\lambda \vec{z} . p'\ \vec{t}_i)
\unrhd p'\ \vec{y}$ have a simple form. In particular, let $\vec{t}'_i$
be the result of replacing in $\vec{t}_i$ the variables $\vec{z}$ with
distinct nominal constants. Then $\vec{y} = \vec{t}'_i$ will be a most
general solution to the nominal abstraction. Thus the upper sequents
of the invariant derivation above will be
\begin{equation*}
\vec{x}_i : B_i\ p\ \vec{x}_i \lra S\ \vec{t}'_i
\end{equation*}
which are derivable if and only if the sequents
\begin{equation*}
\vec{x}_i : B_i\ p\ \vec{x}_i \lra \nabla\vec{z}_i. S\ \vec{t}_i
\end{equation*}
are derivable.
\end{proof}

We do not introduce a co-induction rule for pattern-based
definitions largely because it seems that there are few interesting
co-inductive definitions that require patterns and multiple clauses.

\section{Examples}
\label{sec:examples}

We now provide some examples to illuminate the properties of nominal
abstraction and its usefulness in both specification and reasoning
tasks; while \logic has many more features, their characteristics and
applications have been exposed in other work (\eg, see
\cite{mcdowell02tocl,momigliano03types,tiu04phd,tiu.tocl}). In the
examples that are shown, use will be made of the pattern-based form of
definitions described in Section~\ref{sec:pattern-form}. We will also
use the convention that tokens given by capital letters denote
variables that are implicitly universally quantified over the entire
clause.

\subsection{Properties of $\nabla$ and Freshness}
\label{sec:nabla-freshness}

We can use nominal abstraction to gain a better insight into the
behavior of the $\nabla$ quantifier. Towards this end, let the {\sl fresh}
predicate be defined by the following clause.
\begin{equation*}
(\nabla x.\fresh x E) \triangleq \top
\end{equation*}
We have elided the type of {\sl fresh} here; it will have to be
defined at each type that it is needed in the examples we consider
below. Alternatively, we can ``inline'' the definition by using nominal
abstraction directly, \ie, by replacing occurrences of of $\fresh
{t_1} {t_2}$ with $\exists E. (\lambda x. \tup{x, E} \unrhd \tup{t_1,
  t_2})$ for a suitably typed pairing construct $\tup{\cdot,\cdot}$.

Now let $B$ be a formula whose free variables are among $z, x_1,
\ldots, x_n$, and let $\vec{x} = x_1 :: \ldots :: x_n :: nil$ where
$::$ and $nil$ are constructors in the logic.\footnote{We are, once
  again, finessing typing issues here in that the $x_i$ variables may
  not all be of the same type. However, this problem can be solved by
  surrounding each of them with a constructor that yields a term with
  a uniform type.} Then the following
formulas logically imply one another in \logic.
\[
\nabla z. B \qquad\quad
\exists z. (\fresh z \vec{x} \land B) \qquad\quad
\forall z. (\fresh z \vec{x} \supset B)
\]
Note that the type of $z$ allows it to be an arbitrary term in the
last two formulas, but its occurrence as the first argument of {\sl
  fresh} will restrict it to being a nominal constant (even when
$\vec{x} = nil$).

In the original presentation of the $\nabla$ quantifier
\cite{miller03lics}, it was shown that one can move a $\nabla$
quantifier inwards over universal and existential quantifiers by using
raising to encode an explicit dependency. To illustrate this, let $B$
be a formula with two variables abstracted out, and let $C \equiv D$
be shorthand for $(C \supset D) \land (D\supset C)$. The the following
formulas are provable in the logic.
\begin{align*}
\nabla z. \forall x. (B\ z\ x) &\equiv \forall h. \nabla z.
(B\ z\ (h\ z)) &
\nabla z. \exists x. (B\ z\ x) &\equiv \exists h. \nabla z.
(B\ z\ (h\ z))
\end{align*}
In order to move a $\nabla$ quantifier outwards over universal and
existential quantifiers, one would need a way to make non-dependency
(\ie, freshness) explicit. This is now possible using nominal
abstraction as shown by the following equivalences.
\begin{align*}
\forall x. \nabla z. (B\ z\ x) &\equiv
\nabla z. \forall x. (\fresh z x \supset B\ z\ x)
&
\exists x. \nabla z. (B\ z\ x) &\equiv
\nabla z. \exists x. (\fresh z x \land B\ z\ x)
\end{align*}
Finally, we note that the two sets of equivalences for moving the
$\nabla$ quantifier interact nicely. Specifically, starting with a
formula like $\nabla z. \forall x. (B\ z\ x)$ we can push the $\nabla$
quantifier inwards and then outwards to obtain $\nabla z. \forall h.
(\fresh z (h\ z) \supset B\ z\ (h\ z))$. Here $\fresh z (h\ z)$ will
only be satisfied if $h$ projects away its first argument, as
expected.

\subsection{Polymorphic Type Generalization}

In addition to reasoning, nominal abstraction can also be useful in
providing declarative specifications of computations. We consider the
context of a type inference algorithm that is also discussed in
\cite{cheney08toplas} to illustrate such an application. In this
setting, we might need a predicate {\sl spec} that relates a
polymorphic type $\sigma$, a list of distinct variables
list of distinct variables $\vec{\alpha}$ (represented by nominal
constants) and a monomorphic type $\tau$ just in the case that $\sigma =
\forall\vec{\alpha}.\tau$. Using nominal abstraction, we can define
this predicate as follows.
\begin{align*}
&\spec {(\monoTy T)} {nil} T \mueq \top \\
(\nabla x. &\spec {(\polyTy P)} {(x::L)} {(T\ x)}) \mueq
\nabla x. \spec {(P\ x)} L {(T\ x)}.
\end{align*}
Note that we use $\nabla$ in the head of the second clause to
associate the variable $x$ at the head of the list $L$ with its
occurrences in the type $(T\ x)$. We then use $\nabla$ in the body of
this clause to allow for the recursive use of {\sl spec}.

\subsection{Arbitrarily Cascading Substitutions}

Many reducibility arguments, such as Tait's proof of normalization for
the simply typed $\lambda$-calculus \cite{tait67jsl}, are based on
judgments over closed terms. During reasoning, however, one has often
to work with open terms. To accommodate this requirement, the closed
term judgment is extended to open terms by considering all possible
closed instantiations of the open terms. When reasoning with \logic,
open terms are denoted by terms with nominal constants representing
free variables. The general form of an open term is thus $M\; c_1\;
\cdots\; c_n$, and we want to consider all possible instantiations
$M\; V_1\; \cdots\; V_n$ where the $V_i$ are closed terms. This type
of arbitrary cascading substitutions is difficult to realize in
reasoning systems where variables are given a simple type since $M$
would have an arbitrary number of abstractions but the type of $M$
would {\em a priori} fix that number of abstractions.

We can define arbitrary cascading substitutions in \logic using
nominal abstraction. In particular, we can define a predicate which
holds on a list of pairs $\tup{c_i,V_i}$, a term with the form $M\; c_1\;
\cdots\; c_n$ and a term of the form $M\; V_1\; \cdots\; V_n$. The
idea is to iterate over the list of pairs and for each pair $\tup{c,V}$
use nominal abstraction to abstract $c$ out of the first term and then
substitute $V$ before continuing. The following definition of the
predicate {\sl subst} is based on this idea.
\begin{align*}
& \subst {nil} T T \mueq \top \\
(\nabla x. &\subst {(\tup{x,V}::L)} {(T\; x)} S) \mueq
\subst L {(T\; V)} S
\end{align*}

Given the definition of {\sl subst} one may then show that arbitrary
cascading substitutions have many of the same properties as normal
higher-order substitutions. For instance, in the domain of the untyped
$\lambda$-calculus, we can show that {\sl subst} acts compositionally via
the following lemmas.
\begin{align*}
&\forall \ell, t, r, s.~
\subst \ell {(\app t r)} s \supset
\exists u, v. (s = \app u v \land \subst \ell t u \land \subst \ell r v)
\\
&\forall \ell, t, r.~
\subst \ell {(\uabs t)} r \supset
\exists s. (r = \uabs s \land \nabla z. \subst \ell {(t\; z)} (s\; z))
\end{align*}
Both of these lemmas have straightforward proofs by induction on {\sl
  subst}.

We use this technique for describing arbitrary cascading substitutions again in
Section~\ref{sec:girards-strong-norm} to formalize Girard's strong
normalization argument for the simply-typed $\lambda$-calculus.

%%% Local Variables:
%%% mode: latex
%%% TeX-master: "root"
%%% End:

% LocalWords:  Nadathur Heister Eriksson Tiu Girard bisimilarity GN para paras
% LocalWords:  renamings composable DM arity inline Girard's

\chapter{Some Properties of the Meta-logic}
\label{ch:meta-theory}

In this chapter we study some of the meta-theory of \logic. There are
two parts to our discussion. In the first part of the chapter, we
prove various properties of the logic which show that the logic is
well-designed and which are also useful when working within the logic.
Most significantly, we prove the cut-elimination property for \logic
and then use this to establish the consistency of the logic. In the
second part of the chapter we look at the question of how we can
formally relate an object system to a potential encoding of it in
\logic. The naturalness of such a relationship is a strong
recommendation for the meta-logic: it is ultimately this
correspondence that allows us to use \logic in establishing properties
of an object system. Showing this type of relationship depends
crucially on the earlier cut-elimination result which further
justifies the emphasis we place on it.

\section{Consistency of the Meta-logic}
\label{sec:meta-theory}

The logic \logic, whose proof rules consist of the ones
Figures~\ref{fig:core-rules}, \ref{fig:na-rules}, \ref{fig:defrules},
and \ref{fig:indandcoind}, combines and extends the features in
several logics such as $\FOLDN$ \cite{mcdowell00tcs}, \foldnb
\cite{miller05tocl}, $LG^\omega$ \cite{tiu08lgext} and Linc$^-$
\cite{tiu.momigliano}. The relationship to Linc$^-$ is of special
interest to us below: \logic is a conservative extension to this logic
that is obtained by adding a treatment of the $\nabla$ quantifier and
the associated nominal constants and by generalizing the proof rules
pertaining to equality to ones dealing with nominal abstraction. This
correspondence will allow the proof of the critical meta-theoretic
property of cut-elimination for Linc$^-$ to be lifted to \logic.

We shall actually establish three main properties of \logic in this
section.  First, we shall show that the provability of a sequent is
unaffected by the application of permutations of nominal constants to
formulas in the sequent.  This property consolidates our understanding
that nominal constants are quantified implicitly at the formula level;
such quantification also renders irrelevant the particular names chosen
for such constants. Second, we show that the application of substitution
in a nominal capture-avoiding way preserves provability; by contrast,
ordinary application of substitution does not have this property.
% as we have noted already.
Finally, we show that the $\cut$ rule can be
dispensed with from the logic without changing the set of provable
sequents. This implies that the left and right rules of the logic are
balanced and moreover, that the logic is consistent. This is the main
result of this section and its proof uses the earlier two results
together with the argument for cut-elimination for Linc$^-$.

Several of our arguments will be based on induction on the heights
of proofs. This measure is defined formally below. Notice that
the height of a proof can be an infinite ordinal because the $\unrhdL$
rule can have an infinite number of premises. Thus, we will be using
a transfinite form of induction.

\begin{definition}
\label{def:ht}
The {\em height} of a derivation $\Pi$, denoted by $\htf(\Pi)$, is $1$
if $\Pi$ has no premise derivations and is the least upper bound of
$\{\htf(\Pi_i)+1\}_{i\in\mathcal{I}}$ if $\Pi$ has the premise
derivations $\{\Pi_i\}_{i\in\mathcal{I}}$ where $\mathcal{I}$ is some
index set.
\end{definition}

Many proof systems, such as Linc$^-$, include a weakening rule that
allows formulas to be dropped (reading proofs bottom-up) from the
left-hand sides of sequents.
While \logic does not include such a rule directly, its effect is
captured in a strong sense as we show in the lemma below. Two proofs
are to be understood here and elsewhere as having the same structure
if they are isomorphic as trees, if the same rules appear at
corresponding places within them and if these rules pertain to
formulas that can be obtained one from the other via a renaming of
eigenvariables and nominal constants.

\begin{lemma}
\label{lem:proof-weak}
Let $\Pi$ be a proof of $\Sigma : \Gamma \lra B$ and let $\Delta$ be
a multiset of formulas whose eigenvariables are contained in $\Sigma$.
Then there exists a proof of $\Sigma : \Delta, \Gamma \lra B$ which
has the same structure as $\Pi$. In particular $\htf(\Pi) =
\htf(\Pi')$ and $\Pi$ and $\Pi'$ end with the same rule application.
\end{lemma}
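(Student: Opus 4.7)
The plan is to proceed by transfinite induction on $\htf(\Pi)$, performing a case analysis on the last rule applied in $\Pi$ and, in each case, constructing the weakened proof $\Pi'$ by reapplying the very same rule to premises obtained from the induction hypothesis. Transfinite induction is required because the $\unrhdL$ rule can have an infinite set of premises, so proof heights in general range over ordinals rather than natural numbers. The strategy ensures automatically that $\Pi'$ ends in the same rule and is structurally isomorphic to $\Pi$, giving $\htf(\Pi) = \htf(\Pi')$.

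For the propositional connective rules, the axiomatic leaves ($id$, $\topR$, $\botL$, and $\unrhdR$ when $s \unrhd t$ holds), the contraction rule $\cL$, the definition unfolding rules $\defL$ and $\defR$, and the quantifier rules that do not alter the signature ($\forallL$ and $\existsR$), the construction is entirely routine: one simply inserts $\Delta$ into the context of every premise and reapplies the last rule, since none of the side conditions refer to $\Gamma$. The same is true of $\unrhdL$ when the relation has no solutions, in which case the rule has no premises at all.

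The rules whose premises or side conditions interact with the signature or the whole left context require slightly more care, but only superficially. For $\forallR$ and $\existsL$, the rule introduces a fresh eigenvariable $h \notin \Sigma$; since by hypothesis the eigenvariables of $\Delta$ already lie in $\Sigma$, the freshness condition continues to hold in the enlarged signature $\Sigma, h$, and the induction hypothesis applies to the unique premise with $\Delta$ adjoined. For $\nablaL$ and $\nablaR$, the side condition constrains the chosen nominal constant $a$ only with respect to $\supp(B)$, not with respect to any other component of the sequent, so the same $a$ can be reused regardless of which nominal constants may occur in $\Delta$. For $\IL$ and $\CIR$, the (co-)invariant premise does not reference $\Gamma$ at all and can therefore be copied verbatim, while the other premise is weakened by the induction hypothesis.

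The only case requiring a small but non-trivial verification is $\unrhdL$, which is the main obstacle worth naming. Its premises have the form $\Sigma\theta : \Gamma\cas{\theta} \lra C\cas{\theta}$, one for each solution $\theta$ of $s \unrhd t$, and the corresponding weakened premises must take the form $\Sigma\theta : \Delta\cas{\theta}, \Gamma\cas{\theta} \lra C\cas{\theta}$. To invoke the induction hypothesis at each premise, one must check that the eigenvariables of $\Delta\cas{\theta}$ lie in $\Sigma\theta$. This follows directly from the definition of $\Sigma\theta$, which removes the domain of $\theta$ from $\Sigma$ and adjoins every variable free in the range of $\theta$: any eigenvariable of $\Delta$ is either replaced by a term whose free variables lie in $\Sigma\theta$, or else survives in $\Sigma\theta$ unchanged. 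With this observation the induction closes, and since the same index set of solutions $\theta$ is used in $\Pi$ and $\Pi'$, the two proofs are structurally isomorphic as claimed.
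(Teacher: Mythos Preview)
Your proposal is correct and follows exactly the approach the paper indicates: the paper's own proof reads in its entirety ``The lemma can be proved by an easy induction on $\htf(\Pi)$. We omit the details.'' You have simply supplied the omitted details, correctly noting the need for transfinite induction because of $\unrhdL$, and correctly handling the only mildly delicate points (freshness of $h$ in $\forallR$/$\existsL$, independence of the $\nabla$ side conditions from $\Delta$, the untouched invariant premise in $\IL$/$\CIR$, and the eigenvariable bookkeeping for $\Delta\cas{\theta}$ in the $\unrhdL$ case).
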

\begin{proof}
The lemma can be proved by an easy induction on $\htf(\Pi)$. We omit
the details.
\end{proof}

The following lemma shows a strong form of the preservation of
provability under permutations of nominal constants appearing in
formulas, the first of our mentioned results.

\begin{lemma}
\label{lem:proof-perm}
Let $\Pi$ be a proof of $\Sigma : B_1, \ldots, B_n \lra B_0$ and let
$B_i \approx B_i'$ for $i \in \{0,1,\ldots,n\}$. Then there exists a
proof $\Pi'$ of $\Sigma : B_1', \ldots, B_n' \lra B_0'$ which has
the same structure as $\Pi$. In particular $\htf(\Pi) = \htf(\Pi')$
and $\Pi$ and $\Pi'$ end with the same rule application.
\end{lemma}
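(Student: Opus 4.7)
The plan is to proceed by transfinite induction on $\htf(\Pi)$, doing a case analysis on the last rule applied in $\Pi$. The fundamental observation that drives the argument is that permutations of nominal constants distribute over all of the logical structure: for any permutation $\pi$ of $\mathcal{C}$, $\pi.(B \land C) = (\pi.B) \land (\pi.C)$, $\pi.(\forall x. B) = \forall x.(\pi.B)$, $\pi.(s \unrhd t) = (\pi.s) \unrhd (\pi.t)$, and similarly for the other connectives and quantifiers. Hence whenever $B_i \approx B_i'$ witnessed by a permutation $\pi_i$, the top-level shape of $B_i$ and $B_i'$ must coincide, and the witnessing permutation relates the corresponding immediate subformulas. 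In each case I would construct $\Pi'$ to end with the same rule as $\Pi$ and to appeal to the induction hypothesis on premises that are themselves related by $\approx$.

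For the purely propositional rules ($\landL$, $\landR$, $\lorL$, $\lorR$, $\supsetL$, $\supsetR$, $\botL$, $\topR$), together with $\cL$ and $\cut$, the witnessing permutations on the conclusion restrict to witnessing permutations on the premises, so the induction hypothesis applies directly; the $id$ case uses only transitivity of $\approx$. For $\forallL$ and $\existsR$ with instantiation term $t$, I would use $\pi.t$ (where $\pi$ witnesses the permutation of the principal formula) as the new instantiation, noting that $(B[t/x])$ is sent by the appropriate permutation to $(\pi.B)[(\pi.t)/x]$. For $\forallR$ and $\existsL$, the premise uses the raised term $h\ \vec{c}$ with $\supp(B) = \{\vec{c}\}$; since $\supp(\pi.B) = \pi(\{\vec{c}\})$, I would simply replace $h\ \vec{c}$ by $h\ (\pi.\vec{c})$, which is again a well-formed term of the right type, and apply the induction hypothesis (the side condition $h \notin \Sigma$ is unaffected).

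The nominal abstraction rules are where Lemma~\ref{lem:na-approx} is essential. For $\unrhdR$, that lemma transfers the side condition that $s \unrhd t$ holds to $s' \unrhd t'$, giving the required instance of $\unrhdR$ immediately. For $\unrhdL$, Lemma~\ref{lem:na-approx} tells us that $s \unrhd t$ and $s' \unrhd t'$ have exactly the same set of solutions, so the premise family of the target rule instance is indexed by the same $\theta$'s; the induction hypothesis, together with Lemma~\ref{lem:approx-cas} applied to each $\theta$, provides the required premise derivations of matching height. The definition rules $\defL$ and $\defR$ together with the induction and co-induction rules $\IL$ and $\CIR$ are straightforward because the body $B\ p\ \vec{x}$ and any (co-)invariant $S$ contain no nominal constants, so $\pi$ only acts on the argument list $\vec{t}$ of the atom being unfolded, which matches exactly how $\pi$ acts on the principal formula.

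The delicate step, and the one I would verify most carefully, is the handling of $\nablaL$ and $\nablaR$: the rule picks a nominal constant $a \notin \supp(B)$, but the corresponding instance for the permuted formula must pick some $a'$ with $a' \notin \supp(B')$ while still making the premise $\approx$-equivalent to a substitution instance that the induction hypothesis can consume. I would choose $a'$ fresh for both $\supp(B)$ and $\supp(B')$, then extend the witnessing permutation $\pi$ by composing with the transposition that swaps $a$ to $a'$; the resulting permutation witnesses $B[a/x] \approx B'[a'/x]$, so the induction hypothesis applies and yields the desired $\Pi'$. Once this case is checked, structural identity of $\Pi$ and $\Pi'$, equality of heights, and agreement on the final rule all follow by construction from the inductive assembly.
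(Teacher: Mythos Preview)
Your proposal is correct and mirrors the paper's proof almost exactly: transfinite induction on height with case analysis on the last rule, using that permutations distribute over logical structure, invoking Lemma~\ref{lem:na-approx} (together with Lemma~\ref{lem:approx-cas}) for the nominal-abstraction rules, and exploiting the absence of nominal constants in clause bodies and (co-)invariants for $\defL$, $\defR$, $\IL$, and $\CIR$. The only point to tighten is the $\nabla$ case: composing $\pi$ literally with the transposition $(a\ a')$ need not witness $B[a/x] \approx B'[a'/x]$ when $\pi$ itself moves $a$ or $a'$; the paper handles this by choosing $d$ fresh for both supports and arguing $C[a/x] \approx C[d/x] \approx C'[d/x]$, which is precisely the adjustment your sketch is gesturing at.
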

\begin{proof}
The proof is by induction on $\htf(\Pi)$ and proceeds specifically by
considering the last rule used in $\Pi$. When this is a left rule, we
shall assume without loss of generality that it operates on $B_n$.

The argument is easy to provide when the last rule in $\Pi$ is one of $\botL$
or $\topR$. If this rule is an $id$, \ie, if $\Pi$ is of the form
\begin{equation*}
\infer[id]{\Sigma : B_1, \ldots, B_n \lra B_0}
             {B_j \approx B_0}
\end{equation*}
then, since $\approx$ is an equivalence relation, it must be the case
that $B_j' \approx B_0'$. Thus, we can let
$\Pi'$ be the derivation
\begin{equation*}
\infer[id]{\Sigma : B_1', \ldots, B_n' \lra B_0'}
          {B_j' \approx B_0'}
\end{equation*}
If the last rule is a $\unrhdL$ applied to a nominal abstraction $s
\unrhd t$ that has no solutions, then, by Lemma~\ref{lem:na-approx},
the sequent $\Sigma : B_1',\ldots B_n' \lra
B_0'$ also has a nominal abstraction with no solutions. Thus, $\Pi'$
can be a derivation consisting of the single rule
$\unrhdL$. Lemma~\ref{lem:na-approx} similarly provides the key
observation when the last rule in $\Pi$ is an $\unrhdR$.

All the remaining cases correspond to derivations of height greater
than 1. We shall show that the last rule in $\Pi$ in all these cases could
also have $\Sigma : B_1', \ldots, B_n' \lra B_0'$ as a conclusion
with the premises in this application of the rule being related via
permutations in the way required by the lemma to the premises of the
rule application in $\Pi$. The lemma then follows from the induction
hypothesis.

In the case when the last rule in $\Pi$ pertains to a binary
connective---\ie, when the rule is one of $\lorL$, $\lorR$, $\landL$,
$\landR$, $\supsetL$ or $\supsetR$---the desired conclusion follows
naturally from the observation that permutations distribute over the
connective. The proof can be similarly completed when a
$\existsL$, $\existsR$, $\forallL$ or $\forallR$ rule ends the
derivation, once we have noted that the application of permutations can
be moved under the $\exists$ and $\forall$ quantifiers. For the
$\cut$ and $\cL$ rules, we have to show that permutations
can be extended to include the newly introduced formula in the upper
sequent(s). This is easy: for the $\cut$ rule we use the identity
permutation and for $\cL$ we replicate the permutation used to obtain
$B_n'$ from $B_n$.

The two remaining rules from the core logic are $\nablaL$ and
$\nablaR$. The argument in these cases are similar and we consider
only the later in detail. In this case, the last rule in $\Pi$ is of
the form
\begin{equation*}
\infer[\nablaR]
      {\Sigma : B_1, \ldots, B_n \lra \nabla x. C}
      {\Sigma : B_1, \ldots, B_n \lra C[a/x]}
\end{equation*}
where $a
\notin \supp(C)$. Obviously, $B_0' = \nabla x.C'$ for some $C'$ such
that $C \approx C'$. Let $d$ be a nominal constant such that $d
\notin
\supp(C)$ and $d \notin \supp(C')$. Such a constant must exist since
both sets are finite. Then $C[a/x] \approx C[d/x] \approx C'[d/x]$.
Thus the following
\begin{equation*}
\infer[\nablaR]
      {\Sigma : B_1', \ldots, B_n' \lra \nabla x. C'}
      {\Sigma : B_1', \ldots, B_n' \lra C'[d/x]}
\end{equation*}
is also an instance of the $\nablaR$ rule and its upper sequent has
the form desired.

The only case that remains to be treated when the last rule applies to
a nominal abstraction is that of $\unrhdL$ that has at least one
upper sequent. In this case the rule has the structure
\begin{equation*}
\infer[\unrhdL]
      {\Sigma : B_1, \ldots, s \unrhd t \lra B_0}
      {\left\{ \Sigma\theta : B_1\cas{\theta}, \ldots,
                 B_{n-1}\cas{\theta} \lra B_0\cas{\theta} \;|\;
                 \theta\ \mbox{is a solution to}\ s \unrhd t
       \right\}}
\end{equation*}
Here we know that $B_n'$ is a nominal abstraction $s'\unrhd t'$ that,
by Lemma~\ref{lem:na-approx}, has the same solutions as $s \unrhd
t$. Further, by Lemma~\ref{lem:approx-cas}, $B_i\cas{\theta} \approx
B_i'\cas{\theta}$ for any substitution $\theta$. Thus
\begin{equation*}
\infer[\unrhdL]
      {\Sigma : B_1', \ldots, s'\unrhd t' \lra B_0'}
      {\left\{ \Sigma\theta : B_1'\cas{\theta}, \ldots,
                B_{n-1}'\cas{\theta} \lra
                B_0'\cas{\theta} \;|\;
          \theta\ \mbox{is a solution to}\ s' \unrhd t'
       \right\}}
\end{equation*}
is also an instance of the $\unrhdL$ rule and its upper sequents have
the required property.

The arguments for the rules $\defL$ and $\defR$ are similar and we
therefore only consider the case for the former rule in detail. Here,
$B_n$ must be of the form $p\; \vec{t}$ where $p$ is a predicate
symbol and the upper sequent must be identical to the lower one except
for the fact that $B_n$ is replaced by a formula of the form $B\ p\;
\vec{t}$ where $B$ contains no nominal constants. Further, $B_n'$ is
of the form $p\; \vec{s}$ where $p\;\vec{t} \approx p\;\vec{s}$.
From this it follows
that $B\ p\; \vec{t} \approx B\ p\; \vec{s}$ and hence that $\Sigma:
B_1',\ldots,B_n' \lra B_0'$ can be the lower sequent of a rule whose
upper sequent is related in the desired way via permutations to the
upper sequent of the last rule in $\Pi$.

The only remaining rules to consider are $\IL$ and $\CIR$. Once again,
the arguments in these cases are similar and we therefore consider
only the case for $\IL$ in detail. Here, $\Pi$ ends with a rule
of the form
\begin{equation*}
\infer[\IL]
      {\Sigma : B_1, \ldots, p\; \vec{t} \lra B_0}
      {\vec{x} : B\; S\; \vec{x} \lra S\; \vec{x} \qquad
       \Sigma : B_1, \ldots, S\; \vec{t} \lra B_0}
\end{equation*}
where $p$ is a predicate symbol defined by a clause of the form
$\forall \vec{x}.~p\;\vec{x} \mueq B\ p\;\vec{x}$
and $S$
contains no nominal constants. Now, $B_n'$ must be of the form
$p\;\vec{r}$ where $p\;\vec{t}
\approx p\;\vec{r}$. Noting the proviso on $S$, it follows that $S\;
\vec{t} \approx S\;\vec{r}$. But then the following
\begin{equation*}
\infer[\IL]
      {\Sigma : B_1', \ldots, p\; \vec{r} \lra B_0'}
      {\vec{x} : B\; S\; \vec{x} \lra S\; \vec{x} \qquad
       \Sigma : B_1', \ldots, S\; \vec{r} \lra B_0'}
\end{equation*}
is also an instance of the $\IL$ rule and its upper sequents are
related in the manner needed to those of the $\IL$ rule used in $\Pi$.
\end{proof}

Several rules in \logic require the selection of new eigenvariables
and nominal constants. Lemma~\ref{lem:proof-perm} shows that we obtain
what is essentially the same proof regardless of how we choose nominal
constants in such rules so long as the local non-occurrence conditions
are satisfied. A similar observation with regard to the choice of
eigenvariables is also easily verified. We shall therefore identify
below proofs that differ only in the choices of eigenvariables and
nominal constants.

We now turn to the second of our desired results, the preservation of
provability under substitutions.

\begin{lemma}
\label{lem:proof-subst}
Let $\Pi$ be a proof of $\Sigma : \Gamma \lra C$ and let $\theta$ be
a substitution. Then there is a proof $\Pi'$ of $\Sigma\theta :
\Gamma\cas{\theta} \lra C\cas{\theta}$ such that $\htf(\Pi') \leq
\htf(\Pi)$.
\end{lemma}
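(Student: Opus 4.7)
The plan is to proceed by transfinite induction on $\htf(\Pi)$, analyzing the last rule applied in $\Pi$. As a preliminary normalization, I will first invoke Lemma~\ref{lem:proof-perm} to replace $\Pi$ by a structurally identical proof in which every nominal constant appearing anywhere in $\Pi$ is disjoint from $\supp(\theta)$. This makes the nominal capture-avoiding application $B\cas{\theta}$ coincide with the ordinary application $B[\theta]$ on each formula of $\Pi$ while preserving heights, so the bound $\htf(\Pi') \le \htf(\Pi)$ will fall out of the constructive case analysis that follows.

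The core-logic cases are largely routine: for $\topR$, $\botL$, $\nablaL$, $\nablaR$, and the propositional connective rules, substitution commutes with the connective, and one simply applies the induction hypothesis to each premise and re-applies the same rule. For $id$, Lemma~\ref{lem:approx-cas} ensures the two side formulas remain $\approx$-related after substitution, so $id$ again closes the substituted sequent. For $\forallL$ and $\existsR$ one instantiates the new quantifier with $t\cas{\theta}$ in place of $t$; the contraction and cut rules pose no difficulty.

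The first subtle case is $\forallR$ (and, symmetrically, $\existsL$). Here the raised eigenvariable $h$ is tied to $\supp(B) = \{\vec{c}\}$, but $\theta$ may introduce additional nominal constants so that $\supp(B\cas{\theta}) = \{\vec{c}\} \cup \{\vec{d}\}$ for some $\vec{d} \subseteq \supp(\theta)$. To match this enlarged dependency, I would pick a fresh $h'$ of appropriately higher type and apply the induction hypothesis to the original premise under the augmented substitution $\theta \cup \{\lambda \vec{y}.\, h'\ \vec{y}\ \vec{d} / h\}$; after $\beta$-conversion this produces exactly the upper sequent of a fresh $\forallR$ instance whose raised variable ranges over $\vec{c},\vec{d}$. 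The second subtle case, $\unrhdL$, is handled along the lines already used in the proof of Theorem~\ref{thm:csnas}: every solution $\sigma$ of $(s \unrhd t)\cas{\theta}$ yields the solution $\theta \bullet \sigma$ of $s \unrhd t$, and so the corresponding premise of the original rule instance, combined with Lemma~\ref{lem:proof-perm} and the identity $B\cas{\theta}\cas{\sigma} \approx B\cas{\theta \bullet \sigma}$, directly supplies a proof of the required upper sequent at no greater height.

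For $\defL$, $\defR$, $\IL$, and $\CIR$, the global restriction that clause bodies and (co-)invariants are free of nominal constants means substitution passes through these rules without interaction, so the induction hypothesis applied to the premises suffices. The main obstacle, and the place I expect to need the most care, is the $\forallR$/$\existsL$ case: the bookkeeping of the augmented substitution, the $\beta$-conversion that realigns the raised eigenvariable with the enlarged support, and the preliminary permutation normalization of $\Pi$ all need to line up precisely. The remaining cases, by contrast, reduce to observing that nominal capture-avoiding substitution commutes with the relevant syntactic constructs once nominal constants in $\Pi$ have been permuted away from $\supp(\theta)$.
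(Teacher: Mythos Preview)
Your overall strategy matches the paper's: transfinite induction on $\htf(\Pi)$ with case analysis on the last rule, and you correctly single out $\unrhdL$, $\forallR$/$\existsL$, $\forallL$/$\existsR$, and $\IL$/$\CIR$ as the delicate cases, handling each with essentially the same idea the paper uses. The preliminary global normalization of $\Pi$ via Lemma~\ref{lem:proof-perm} is an extra step the paper does not take---the paper works directly with $\cas{\theta}$ and lets the per-formula permutation built into Definition~\ref{ncasubst} absorb the bookkeeping. Your normalization is a reasonable way to make that bookkeeping explicit, but note that Lemma~\ref{lem:proof-perm} as stated only permutes the endsequent; to control nominal constants introduced higher up you are really leaning on the identification-of-proofs convention that follows it, and you also need a closing appeal to Lemma~\ref{lem:proof-perm} (or to the ambiguity in $\cas{\cdot}$) to reconcile the normalized endsequent with the original one.

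There is one genuine slip in your $\forallR$ case. The equality $\supp(B\cas{\theta}) = \{\vec c\}\cup\{\vec d\}$ need not hold, because ordinary substitution can \emph{remove} nominal constants from the support (take $B = X\,c$ with $\theta(X)=\lambda z.\top$; then $B[\theta]=\top$ and $c$ is gone). If $\{\vec c,\vec d\}$ properly contains $\supp(B\cas{\theta})$, the upper sequent you construct is not an instance of $\forallR$, since that rule insists on raising over \emph{exactly} the support of the body---and you cannot invoke Corollary~\ref{cor:extend} to repair this without circularity, as that corollary is a consequence of the present lemma. The paper sidesteps the issue by taking $\{\vec a\}=\supp((\forall x.B)\cas{\theta})$ directly and using the augmenting substitution $\theta\cup\{\lambda\vec c.\,h'\,\vec a/h\}$, so that $h\,\vec c$ is sent to $h'\,\vec a$ and the resulting premise matches $\forallR$ on the nose.
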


\begin{proof}
We show how to transform the proof $\Pi$ into a proof $\Pi'$ for the
modified sequent. The transformation is by recursion on $\htf(\Pi)$,
the critical part of it being a consideration of the last rule in
$\Pi$. The transformation is, in fact, straightforward in all cases
other that when this rule is $\unrhdL$, $\forallR$, $\existsL$,
$\existsR$,
$\forallL$, $\IL$ and $\CIR$. In these cases, we simply apply the
substitution in a nominal capture avoiding way to the lower and any
possible upper sequents of the rule. It is easy to see that the resulting
structure is still an instance of the same rule and its upper sequents
are guaranteed to have proofs (of suitable heights) by induction.

Suppose that the last rule in $\Pi$ is an $\unrhdL$, \ie, it is of the form
\begin{equation*}
\infer[\unrhdL]{\Sigma : \Gamma, s\unrhd t \lra C}
      {\left\{\Sigma\rho : \Gamma\cas{\rho} \lra C\cas{\rho} \;|\;
        \rho\ \mbox{is a solution to}\ s \unrhd t\right\}}
\end{equation*}
Then the following
\begin{equation*}
\infer[\unrhdL]{\Sigma\theta : \Gamma\cas{\theta},
                (s\unrhd t)\cas{\theta} \lra C\cas{\theta}}
      {\left\{
        \Sigma(\theta\bullet\rho') : \Gamma\cas{\theta \bullet\rho'}
                 \lra C\cas{\theta \bullet \rho'}
                 \;|\; \rho'\ \mbox{is a solution to}\ (s \unrhd
                t)\cas{\theta}
      \right\}}
\end{equation*}
is also an $\unrhdL$ rule. Noting that if $\rho'$ is a solution to
$(s\unrhd t)\cas{\theta}$, then $\theta\bullet \rho'$ is a solution to
$s\unrhd t$, we see that the upper sequents of this rule are contained
in the upper sequents of the rule in $\Pi$. It follows that we can
construct a proof of the lower sequent whose height is less than or
equal to that of $\Pi$.

The argument is similar in the cases when the last rule in $\Pi$ is a
$\forallR$ or a $\existsL$ so we consider only the former in
detail. In this case the rule has the form
\begin{equation*}
\infer[\forallR]
      {\Sigma : \Gamma \lra \forall x.B}
      {\Sigma, h : \Gamma \lra B[h\;\vec{c}/x]}
\end{equation*}
where $\{\vec{c}\} = \supp(\forall x.B)$. Let $\{\vec{a}\} =
\supp((\forall x. B)\cas{\theta})$. Further, let $h'$ be a new variable
name. We assume without loss of generality that neither $h$ nor $h'$
appear in the domain or range of $\theta$. Letting $\rho = \theta \cup
\{\lambda\vec{c}.h'\;\vec{a}/h\}$, consider the structure
\begin{equation*}
\infer[]
      {\Sigma\theta : \Gamma\cas{\theta} \lra (\forall x.B)\cas{\theta}}
      {(\Sigma, h)\rho :
               \Gamma\cas{\rho} \lra B[h\;\vec{c}/x]\cas{\rho}}
\end{equation*}
The upper sequent here is equivalent under $\lambda$-conversion to
$\Sigma\theta, h' : \Gamma\cas{\theta} \lra (B\cas{\theta})[h'\;
  \vec{a}/x]$ so this structure is, in  fact, also an instance of the
$\forallR$ rule. Moreover, its upper sequent is obtained via
substitution from the upper sequent of the rule in $\Pi$. The lemma
then follows by induction.

The arguments for the cases when the last rule is an $\existsR$ or an
$\forallL$ are similar and so we provide it explicitly only for the
former. In this case, we have the rule
\begin{equation*}
\infer[\existsR]{\Sigma : \Gamma \lra \exists_\tau x.B}
      {\Sigma, \mathcal{K}, \mathcal{C} \vdash t:\tau &
       \Sigma : \Gamma \lra B[t/x]}
\end{equation*}
ending $\Pi$.
Assuming that the substitution $(\exists_\tau x.B)\cas{\theta}$ uses
the permutation $\pi$ to avoid the capture of nominal constants,
consider the structure
\begin{equation*}
\infer[]{\Sigma\theta : \Gamma\cas{\theta} \lra
                                (\exists_\tau x.B)\cas{\theta}}
      {\Sigma, \mathcal{K}, \mathcal{C} \vdash \pi.t:\tau &
       \Sigma\theta : \Gamma\cas{\theta} \lra
                B\cas{\theta}[\pi.t/x]}
\end{equation*}
This is also obviously an instance of the $\existsR$ rule and its
right upper sequent is related via substitution to that of the rule in
$\Pi$. The lemma follows from these observations by induction.

The only remaining cases for the last rule are $\IL$ and $\CIR$. The
arguments in these cases are, yet again, similar and it suffices to
make only the former explicit. In this case, the end of $\Pi$ has the form
\begin{equation*}
\infer[\IL]
      {\Sigma : \Gamma, p\; \vec{t} \lra C}
      {\vec{x} : B\; S\; \vec{x} \lra S\; \vec{x} &
       \Sigma : \Gamma, S\; \vec{t} \lra C}
\end{equation*}
But then the following
\begin{equation*}
\infer[]
      {\Sigma\theta : \Gamma\cas{\theta}, (p\; \vec{t})\cas{\theta}
                      \lra C\cas{\theta}}
      {\vec{x} : B\; S\; \vec{x} \lra S\; \vec{x} &
       \Sigma\theta : \Gamma\cas{\theta}, (S\;
         \vec{t})\cas{\theta} \lra C\cas{\theta}}
\end{equation*}
is also an instance of the $\IL$ rule. Moreover, the same proof as in
$\Pi$ can be used for the left upper sequent and the right upper
sequent has the requisite form for using the induction hypothesis.
\end{proof}

The proof of Lemma~\ref{lem:proof-subst} effectively defines a
transformation of a derivation $\Pi$ based on a substitution
$\theta$. We shall use the notation $\Pi\cas{\theta}$ to denote the
transformed derivation. Note that $\htf{(\Pi\cas{\theta})}$ can be less
than $\htf{(\Pi)}$. This may happen because the transformed version of a
$\unrhdL$ rule can have fewer upper sequents.

\begin{corollary}
\label{cor:extend}
The following rules are admissible.
\begin{equation*}
\infer[\forallR^*]
      {\Sigma : \Gamma \lra \forall x.B}
      {\Sigma, h : \Gamma \lra B[h\;\vec{a}/x]}
\hspace{3cm}
\infer[\existsL^*]
      {\Sigma : \Gamma, \exists x.B \lra C}
      {\Sigma, h : \Gamma, B[h\;\vec{a}/x] \lra C}
\end{equation*}
where $h \notin \Sigma$ and $\vec{a}$ is any listing of distinct
nominal constants which contains $\supp(B)$.
\end{corollary}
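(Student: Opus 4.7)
The plan is to reduce $\forallR^*$ (and dually $\existsL^*$) to the ordinary $\forallR$ (resp.\ $\existsL$) rule by substituting for the raising variable in a way that collapses the dependency from the extended list $\vec{a}$ down to the actual support $\vec{c}$ of $B$. Concretely, to prove admissibility of $\forallR^*$, suppose we have a derivation of $\Sigma, h : \Gamma \lra B[h\;\vec{a}/x]$ where $\{\vec{c}\} = \supp(B) \subseteq \{\vec{a}\}$, and $h \notin \Sigma$. Pick a fresh variable $h'$ of the appropriate type (\ie, one whose arity matches the length of $\vec{c}$) and consider the substitution
\[\theta = \{(\lambda \vec{a}.\, h'\;\vec{c})/h\}.\]
This is well-typed since $h$'s argument types come from $\vec{a}$ and $\vec{c}$ is a sublist of $\vec{a}$. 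Moreover $\supp(\theta) = \emptyset$ because the nominal constants $\vec{a}$ occurring in the range of $\theta$ are bound by the outer $\lambda$-abstraction.

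Applying Lemma~\ref{lem:proof-subst} with this $\theta$ yields a derivation of $(\Sigma,h)\theta : \Gamma\cas{\theta} \lra (B[h\;\vec{a}/x])\cas{\theta}$. Since the freshness condition $h \notin \Sigma$ forces $h$ not to appear in $\Gamma$ or $B$, we have $\Gamma\cas{\theta} = \Gamma$ and $B\cas{\theta} = B$, and since $\supp(\theta) = \emptyset$ the nominal capture-avoiding application coincides with the ordinary one. Computing on the right-hand side,
\[(B[h\;\vec{a}/x])\cas{\theta} = B\bigl[((\lambda\vec{a}.\,h'\;\vec{c})\;\vec{a})/x\bigr] = B[h'\;\vec{c}/x]\]
after $\beta$-reduction. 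Likewise $(\Sigma,h)\theta = \Sigma, h'$ because $h$ is removed and only $h'$ is freshly introduced by the range of $\theta$. Thus we obtain a derivation of $\Sigma, h' : \Gamma \lra B[h'\;\vec{c}/x]$, and we can conclude by an ordinary application of $\forallR$, since $\{\vec{c}\} = \supp(B)$ and $h' \notin \Sigma$.

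The argument for $\existsL^*$ is entirely parallel: from a derivation of $\Sigma, h : \Gamma, B[h\;\vec{a}/x] \lra C$, applying the same substitution $\theta$ gives a derivation of $\Sigma, h' : \Gamma, B[h'\;\vec{c}/x] \lra C$ (noting that $C$ is untouched since $h \notin \Sigma$), and then an ordinary $\existsL$ discharges the existential. The only point of possible friction is checking that the type of $h'$ and the well-formedness of $\theta$ go through when $\vec{c}$ is a strict sublist of $\vec{a}$, but this is immediate from the typing discipline, and no delicate reasoning about nominal constants is needed because $\theta$ carries none in its support. Hence both rules are admissible.
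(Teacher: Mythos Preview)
Your proof is correct and follows essentially the same approach as the paper: both use the substitution $\theta = \{(\lambda\vec{a}.\,h'\;\vec{c})/h\}$, apply Lemma~\ref{lem:proof-subst}, and finish with the ordinary $\forallR$ (resp.\ $\existsL$). The only difference is a presentational one: the paper concludes that the resulting sequent has formulas $\approx$-equivalent to the desired ones and invokes Lemma~\ref{lem:proof-perm} to realign them, whereas you exploit the observation $\supp(\theta)=\emptyset$ to choose the identity permutation in the definition of $\cas{\theta}$ and thereby obtain the target sequent on the nose, bypassing Lemma~\ref{lem:proof-perm}.
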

\begin{proof}
Let $\Pi$ be a derivation for $\Gamma \lra B[h\;\vec{a}/x]$, let $h'$
be a variable that does not appear in $\Pi$, and let $\{\vec{c}\} =
\supp(B)$. By Lemma~\ref{lem:proof-subst}, $\Pi\cas{\lambda\vec{a}.h'\
  \vec{c} / h}$ is a valid derivation. Since $\vec{a}$ contains
$\vec{c}$, no nominal constants appear in the substitution
$\{\lambda\vec{a}.h'\  \vec{c} / h\}$. It can now be seen that the
last sequent in  $\Pi\cas{\lambda\vec{a}.h'\;\vec{c} / h}$ has the
form $\Sigma, h' : \Gamma' \lra B'$ where $B' \approx
B[h'\;\vec{c}/h]$ and $\Gamma'$ results from replacing
some of the formulas in $\Gamma$ by ones that they are equivalent to under
$\approx$. But then,
by Lemma~\ref{lem:proof-perm}, there must be a derivation for $\Sigma,
h' : \Gamma \lra B[h'\;\vec{c}/h]$. Using a $\forallR$ rule below this we
get a derivation for $\Sigma : \Gamma \lra \forall x. B$, verifying
the admissibility of $\forallR^*$. The argument for $\existsL^*$ is
analogous.
\end{proof}

We now turn to the main result of this section, the redundancy from a
provability perspective of the $\cut$ rule in \logic. The usual
approach to proving such a property is to define a set of
transformations called cut reductions on derivations that leave the
end sequent unchanged but that have the effect of pushing occurrences
of $\cut$ up the proof tree to the leaves where they can be
immediately eliminated. The difficult part of such a proof is showing
that these cut reductions always terminate. In simpler sequent
calculi such as the one for first-order logic, this argument can be
based on an uncomplicated measure such as the size of the cut formula.
However, the presence of definitions in a logic like \logic renders
this measure inadequate. For example, the following is a natural way
to define a cut reduction between a $\defL$ and a $\defR$ rule that
work on the cut formula:
\begin{equation*}
\begin{array}{c}
\infer[\cut]{\Sigma : \Gamma, \Delta \lra C}{
  \infer[\defR]
        {\Sigma : \Gamma \lra p\;\vec{t}}
        {\deduce{\Sigma : \Gamma \lra B\ p\;\vec{t}}{\Pi'}} &
  \infer[\defL]
        {\Sigma : p\;\vec{t}, \Delta \lra C}
        {\deduce{\Sigma : B\ p\;\vec{t}, \Delta \lra C}{\Pi''}}}
\\
\vspace{-0.5cm}
\\
\Downarrow
\\
\vspace{-0.5cm}
\\
\infer[\cut]{\Sigma : \Gamma, \Delta \lra C}{
  {\deduce{\Sigma : \Gamma \lra B\ p\;\vec{t}}{\Pi'}} &
  {\deduce{\Sigma : B\ p\;\vec{t}, \Delta \lra C}{\Pi''}}}
\end{array}
\end{equation*}
Notice that $B\ p\;\vec{t}$, the cut formula in the new cut introduced
by this transformation, could be more complex than $p\;\vec{t}$, the
old cut formula.
To overcome this difficulty, a more complicated argument based on the
idea of reducibility in the style of Tait \cite{tait67jsl} is often
used. Tiu and Momigliano
\cite{tiu.momigliano} in fact formulate a notion of parametric
reducibility for derivations that is based on the Girard's proof of
strong normalizability for System F \cite{girard89book} and that works
in the presence of the induction and co-induction rules for
definitions. Our proof makes extensive use of this notion and the
associated argument structure.

\begin{theorem}\label{thm:cut-elim}
The $\cut$ rule can be eliminated from \logic without affecting the
provability relation.
\end{theorem}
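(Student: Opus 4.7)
The plan is to adapt the reducibility-based cut-elimination argument developed for Linc$^-$ in \cite{tiu.momigliano} to \logic. Since \logic extends Linc$^-$ conservatively with nominal constants, the $\nabla$ quantifier, and the nominal abstraction predicate $\unrhd$, the bulk of the reducibility machinery---including its handling of induction and co-induction---can be reused essentially unchanged. What needs to be added is: (i) cut reduction rules that cover the new connectives, (ii) an extension of the definition of reducibility to the enlarged formula syntax, and (iii) verification that the reducibility invariants are stable under the two structural operations peculiar to \logic, namely renaming of nominal constants and nominal capture-avoiding application of substitutions. The two previous lemmas (Lemma~\ref{lem:proof-perm} and Lemma~\ref{lem:proof-subst}) are precisely what is needed for (iii).

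First I would define the new cut reductions. For a cut whose left premise ends in $\nablaR$ and whose right premise ends in $\nablaL$ on the same formula $\nabla x.B$, I would pick a common nominal constant $a$ for the two subproofs---reconciling the choices by Lemma~\ref{lem:proof-perm}---and reduce to a cut on $B[a/x]$. For a cut whose left premise ends in $\unrhdR$ (so $s\unrhd t$ holds) and whose right premise ends in $\unrhdL$ on the same formula, I would select from the family of upper sequents of $\unrhdL$ the one corresponding to the identity solution, which is available precisely because $s\unrhd t$ holds. For non-principal cases, cuts permute past $\nablaL$, $\nablaR$, $\unrhdL$, and $\unrhdR$ in the standard way; Lemma~\ref{lem:proof-subst} is what lets a cut push down through the substitution instances generated by an $\unrhdL$ standing above it, since applying the solution substitution to the cut-formula subderivation still yields a derivation of the appropriate sequent of no greater height.

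Next I would extend Tiu and Momigliano's parametric notion of reducibility to the new formula forms, taking the reducibility set of $\nabla x.B$ to consist of derivations whose $\nablaR$-expansion on a fresh nominal constant is reducible at $B$, and the reducibility set of $s\unrhd t$ to be generated by $\unrhdR$ derivations together with closure under the reductions above. The central lemma---that every derivation is reducible at its end sequent---is then proved by induction on the derivation in exactly the format used for Linc$^-$; the new cases follow immediately from the cut reductions just described, while the cases for $\defL$, $\defR$, $\IL$, and $\CIR$ are inherited with only cosmetic changes once one notes that the stipulation forbidding nominal constants in definition bodies and in (co-)invariants ensures that reducibility candidates commute with the $\approx$-equivalence used in Lemma~\ref{lem:proof-perm}. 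Consistency (Corollary~\ref{consistency}) then follows in the usual way.

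The principal obstacle is the $\unrhdL$ rule, which may have an unbounded, possibly infinite, number of upper sequents indexed by all solutions to $s\unrhd t$. Both the reducibility candidates and the induction on derivation height must therefore be set up to accommodate the transfinite branching already built into Definition~\ref{def:ht}, and permuting a cut past an $\unrhdL$ requires simultaneously transforming every one of its premises. This is where Lemmas~\ref{lem:proof-perm} and~\ref{lem:proof-subst} do the real work, since they guarantee that the transformed premises are still derivable and of no larger height, keeping the inductive measure well-founded; combined with the restriction that definition bodies and induction invariants contain no nominal constants, this allows the Linc$^-$ argument to lift without substantive modification to the richer setting of \logic.
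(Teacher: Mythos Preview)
Your proposal is correct and follows essentially the same approach as the paper, which likewise adapts the Linc$^-$ reducibility argument and relies on Lemmas~\ref{lem:proof-perm} and~\ref{lem:proof-subst} to absorb the new features. The paper additionally singles out two cases you did not enumerate---the $id$ rule (which now matches up to $\approx$ rather than $\lambda$-convertibility) and the essential $\forallR$/$\forallL$ and $\existsL$/$\existsR$ cuts (where raising over nominal constants forces the reduction to apply $\cas{\lambda\vec{c}.t/h}$ to the subderivation)---but both are dispatched by exactly the lemmas you already invoke.
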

\begin{proof}
The relationship between \logic and the logic Linc$^-$
treated by Tiu and Momigliano can be understood as follows: Linc$^-$
does not treat the $\nabla$ quantifier and therefore has no rules for
it. Consequently, it does not have nominal constants, it does not
use raising over nominal constants in the rules $\forallR$ and
$\existsL$, it has no need to consider permutations in the $id$ (or
initial) rule and has equality rules in place of nominal abstraction
rules. The rules in \logic other than the ones for $\nabla$, including
the ones for definitions, induction, and co-induction, are essentially
identical to the ones in Linc$^-$ except for the additional attention
to nominal constants.

Tiu and Momigliano's proof can be extended to \logic in a fairly
direct way since the addition of nominal constants and their
treatment in the rules is quite modular and does not create any new
complexities for the reduction rules. The main issues in realizing this
extension is building in the idea of identity under permutations of
nominal constants and lifting the Linc$^-$ notion of
substitution on terms, sequents, and derivations to a form that
avoids capture of nominal constants. The machinery for doing this has
already been developed in Lemmas~\ref{lem:proof-perm} and
\ref{lem:proof-subst}. In the rest of this proof we
assume a familiarity with the argument for cut-elimination for Linc$^-$
and discuss only the changes to the cut reductions of Linc$^-$ to
accommodate the differences.

The $id$ rule in \logic identifies formulas which are equivalent
under $\approx$ which is more permissive than equality under
$\lambda$-convertibility that is used in the Linc$^-$ initial
rule. Correspondingly, we have to
be a bit more careful about the cut reductions associated with the
$id$ (initial) rule. For example, consider the following reduction:
\begin{equation*}
\infer[\cut]{B, \Gamma, \Delta \lra C}{
  \infer[id]
        {\Sigma : \Gamma, B \lra B'}
        {B \approx B'} &
  \deduce{\Sigma : B', \Delta \lra C}
         {\Pi'}}
\hspace{1cm}
\raisebox{1.5ex}{$\Longrightarrow$}
\hspace{1cm}
\deduce{\Sigma : B', \Delta \lra C}{\Pi'}
\end{equation*}
This reduction has not preserved the end sequent. However, we know $B
\approx B'$ and so we can now use Lemma~\ref{lem:proof-perm} to
replace $\Pi'$ with a derivation of $\Sigma : B, \Delta \lra C$.
Then we can use Lemma~\ref{lem:proof-weak} to produce a derivation of
$\Sigma : B, \Gamma, \Delta \lra C$ as desired. The changes to the
cut reduction when $id$ applies to the right upper sequent of the
$\cut$ rule are similar.

The $\forallR$ and $\existsL$ rules of \logic extend the corresponding
rules of Linc$^-$ by raising over nominal constants in the support of
the quantified formula. The $\forallL$ and $\existsR$ rules of \logic
also extend the corresponding rules in Linc$^-$ by allowing
instantiations which contain nominal constants. Despite these changes,
the cut reductions involving these quantifier rules remain unchanged
for \logic except for the treatment of essential cuts that involve an
interaction between $\forallR$ and $\forallL$ and, similarly, between
$\existsR$ and $\existsL$. The first of these is treated as follows:
\begin{equation*}
\begin{array}{c}
\infer[\cut]{\Sigma : \Gamma, \Delta \lra C}{
  \infer[\forallR]
        {\Sigma : \Gamma \lra \forall x.B}
        {\deduce{\Sigma, h : \Gamma \lra B[h\;\vec{c}/x]}{\Pi'}} &
  \infer[\forallL]
        {\Sigma : \Delta, \forall x.B \lra C}
        {\deduce{\Sigma : \Delta, B[t/x] \lra C}{\Pi''}}}
\\
\vspace{-0.5cm}
\\
\Downarrow
\\
\vspace{-0.5cm}
\\
\infer[\cut]{\Sigma : \Gamma, \Delta \lra C}{
  {\deduce{\Sigma : \Gamma \lra B[t/x]}{\Pi'\cas{\lambda\vec{c}.t/h}}} &
  {\deduce{\Sigma : \Delta, B[t/x] \lra C}{\Pi''}}}
\end{array}
\end{equation*}
The existence of the derivation $\Pi'\cas{\lambda \vec{c}. t/h}$ (with
height at most that of $\Pi'$) is guaranteed by
Lemma~\ref{lem:proof-subst}. The end sequent of this derivation is
$\Sigma : \Gamma\cas{\lambda \vec{c}. t/h} \lra B[h\
\vec{c}/x]\cas{\lambda \vec{c}. t/h}$. However,
$\Gamma\cas{\lambda\vec{c}.t/h}\approx \Gamma$ because $h$ is new to
$\Gamma$ and $B[h\;\vec{c}/x]\cas{\lambda \vec{c}. t/h} \approx
B[t/x]$ because $\{\vec{c}\} = \supp(B)$ and so $\lambda \vec{c}. t$ has
no nominal constants in common with $\supp(B)$. Thus, by
Lemma~\ref{lem:proof-perm} and by an abuse of notation, we may
consider $\Pi'\cas{\lambda\vec{c}./h}$ to also be a derivation of
$\Sigma : \Gamma \lra B[t/x]$. The reduction for a cut involving an
interaction between an $\existsR$ and an $\existsL$ rule is analogous.

The logic \logic extends the equality rules in Linc$^-$ to treat the
more general case of nominal abstraction. Our notion of nominal
capture-avoiding substitution correspondingly generalizes the Linc$^-$
notion of substitution, and we have shown in
Lemma~\ref{lem:proof-subst} that this preserves provability. Thus the
reductions for nominal abstraction are the same as for equality,
except that we use nominal capture-avoiding substitution in place of regular
substitution. For example, the essential cut involving an interaction
between an $\unrhdR$ and an $\unrhdL$ rule is treated as follows:
\begin{equation*}
\infer[\cut]{\Sigma : \Gamma, \Delta \lra C}{
  \infer[\unrhdR]
        {\Sigma : \Gamma \lra s\unrhd t}
        {} &
  \infer[\unrhdL]
        {\Sigma : \Delta, s\unrhd t \lra C}
        {\left\{\raisebox{-1.5ex}{
          \deduce{\Sigma\theta : \Delta\cas{\theta} \lra C\cas{\theta}}
                 {\Pi_\theta}
         }\right\}}}
\hspace{1cm}
\raisebox{1.5ex}{$\Longrightarrow$}
\hspace{1cm}
\deduce{\Sigma : \Delta \lra C}{\Pi_\epsilon}
\end{equation*}
Here we know $s\unrhd t$ holds and thus $\epsilon$, the identity
substitution, is a solution to this nominal abstraction. Therefore we
have the derivation $\Pi_\epsilon$ as needed. We can then apply
Lemma~\ref{lem:proof-weak} to weaken this derivation to one for
$\Sigma : \Gamma, \Delta \lra C$. For the other cuts involving nominal
abstraction, we make use of the fact proved in
Lemma~\ref{lem:proof-subst} that nominal capturing avoiding
substitution preserves provability. This allows us to commute other
rules with $\unrhdL$. For example, consider the following reduction of
a cut where the upper right derivation uses an $\unrhdL$ on a formula
different from the cut formula:
\begin{equation*}
\begin{array}{c}
\infer[\cut]{\Sigma : \Gamma, \Delta, s\unrhd t \lra C}{
  \deduce{\Sigma : \Gamma \lra B}{\Pi'} &
  \infer[\unrhdL]
        {\Sigma : B, \Delta, s\unrhd t \lra C}
        {\left\{\raisebox{-1.5ex}{
          \deduce{\Sigma\theta :
                  B\cas{\theta}, \Delta\cas{\theta} \lra C\cas{\theta}}
                 {\Pi_\theta}
         }\right\}}}
\\
\vspace{-0.5cm}
\\
\Downarrow
\\
\vspace{-0.5cm}
\\
\infer[\unrhdL]
      {\hspace{2.8cm}\Sigma : \Gamma, \Delta, s\unrhd t \lra C\hspace{2.8cm}}
      {\left\{\raisebox{-3ex}{
        \infer[\cut]
              {\Sigma\theta : \Gamma\cas{\theta}, \Delta\cas{\theta} \lra
                C\cas{\theta}}
              {\deduce{\Sigma\theta : \Gamma\cas{\theta} \lra B\cas{\theta}}
                      {\Pi'\cas{\theta}} &
              \deduce{\Sigma\theta : B\cas{\theta}, \Delta\cas{\theta} \lra
                C\cas{\theta}}
               {\Pi_\theta}}
       }\right\}\hspace{0.6cm}}
\end{array}
\end{equation*}

Finally, \logic has new rules for treating the $\nabla$-quantifier.
The only reduction rule which deals specifically with either the
$\nablaL$ or $\nablaR$ rule is the essential cut between both rules
which is treated as follows:
\begin{equation*}
\begin{array}{c}
\infer[\cut]{\Sigma : \Gamma, \Delta \lra C}{
  \infer[\nablaR]
        {\Sigma : \Gamma \lra \nabla x.B}
        {\deduce{\Sigma : \Gamma \lra B[a/x]}{\Pi'}} &
  \infer[\nablaL]
        {\Sigma : \nabla x.B, \Delta \lra C}
        {\deduce{\Sigma : B[a/x], \Delta \lra C}{\Pi''}}}
\\
\vspace{-0.5cm}
\\
\Downarrow
\\
\vspace{-0.5cm}
\\
\infer[\cut]{\Sigma : \Gamma, \Delta \lra C}
      {\deduce{\Sigma : \Gamma \lra B[a/x]}{\Pi'} &
       \deduce{\Sigma : B[a/x], \Delta \lra C}{\Pi''}}.
\end{array}
\end{equation*}

With these changes, the cut-elimination argument for Linc$^-$
extends to \logic, \ie, \logic admits cut-elimination.

\end{proof}

The consistency of \logic is an easy consequence of
Theorem~\ref{thm:cut-elim}.

\begin{corollary}\label{consistency}
The logic \logic is consistent, \ie, not all sequents are provable in
it.
\end{corollary}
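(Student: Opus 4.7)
The plan is to exhibit an unprovable sequent, namely $\cdot : \cdot \lra \bot$, and argue its unprovability by appeal to Theorem~\ref{thm:cut-elim}. Since cut-elimination tells us that any provable sequent admits a cut-free derivation, it suffices to show that no cut-free proof of $\cdot : \cdot \lra \bot$ exists. This reduces consistency to a simple case analysis on the last inference rule of a hypothetical cut-free derivation.

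I would proceed by inspecting each of the inference rules from Figures~\ref{fig:core-rules}, \ref{fig:na-rules}, \ref{fig:defrules}, and \ref{fig:indandcoind}, and observing that none of them can conclude the sequent $\cdot : \cdot \lra \bot$. The left rules ($\botL$, $\landL$, $\lorL$, $\supsetL$, $\forallL$, $\existsL$, $\nablaL$, $\unrhdL$, $\cL$, $\defL$, $\IL$) are ruled out because they require at least one formula in the antecedent, and our candidate sequent has an empty context. The $id$ rule is similarly ruled out because it requires a matching formula on the left. The right rules ($\topR$, $\landR$, $\lorR$, $\supsetR$, $\forallR$, $\existsR$, $\nablaR$, $\unrhdR$, $\defR$, $\CIR$) are ruled out because each of them concludes a formula of a specific syntactic shape (a connective, a quantifier, a nominal abstraction, or a defined atom) that does not match $\bot$. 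Finally $\cut$ is ruled out by cut-elimination itself.

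Since no rule of the cut-free calculus can have $\cdot : \cdot \lra \bot$ as its conclusion, no such derivation exists; by Theorem~\ref{thm:cut-elim}, this sequent is not provable in \logic at all. Hence \logic is consistent.

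I do not anticipate any real obstacle here: the argument is a routine inspection of the rules once cut-elimination is in hand. The only point worth being a little careful about is making sure the catalogue of rules is complete (in particular, that the rules for nominal abstraction, definitions, and induction/co-induction are all accounted for), but in each case the structural requirement on either the conclusion formula or the antecedent immediately disqualifies the rule from deriving $\cdot : \cdot \lra \bot$.
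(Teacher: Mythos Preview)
Your proposal is correct and follows exactly the same approach as the paper: exhibit $\cdot : \cdot \lra \bot$ and observe that, by Theorem~\ref{thm:cut-elim}, it would need a cut-free proof, but no rule can conclude it. The paper's proof is a single sentence asserting this without the rule-by-rule case analysis you spell out, but the underlying argument is identical.
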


\begin{proof} The sequent $\lra \bot$ has no cut-free proof and,
  hence, no proof in \logic.
\end{proof}

\section{Adequacy of Encodings and Theorems in the Meta-logic}
\label{sec:meta-adequacy}

The logic \logic provides various features such as $\lambda$-terms,
definitions, and $\nabla$-quantification which form a convenient
vehicle for encoding computational systems. With all these features,
one might rightfully ask if our encodings in \logic are faithful
representations of the computational systems they describe. This kind
of property for encodings, which is formally known as {\em adequacy},
is similar to the one that we have already encountered with respect to
the specification logic. A proof of adequacy establishes a
relationship between terms and judgments in an object system and their
encoding in \logic in such a way that we can relate reasoning results
proven about the encoding to results about the original system. In
this section we discuss adequacy in more detail, we describe the
general approach to proving adequacy, and we present an example which
illustrates some of the nuances which may arise for particular
encodings.

At a philosophical level, adequacy is the method by which we assign
meaning to our logic. Without adequacy, the logic has only behavior.
Thus, one may naively ask a question such as, ``what does the
$\nabla$-quantifier mean?'' To which a valid answer is that the
$\nabla$-quantifier has no meaning in itself. It has the behavior of
introducing a fresh nominal constant into a formula, but it is only
through adequacy that we can interpret this behavior and provide it
with some meaning. For instance, we might establish a correspondence
between nominal constants in a \logic formula and free variables in a
typing judgment for an object system. In this setting, the meaning of
$\nabla$-quantification can be interpreted as quantifying over fresh
free variables.

A proof of adequacy for an encoding of an object system in \logic
consists of two parts:
\begin{enumerate}
\item the description of a bijection between the terms of the object
system and their encoding in \logic, and

\item a proof, based on this bijection, that a judgment in the object
system holds if and only if its encoding in \logic is provable.
\end{enumerate}
For the second point, the cut-elimination result from
Section~\ref{sec:meta-theory} is of critical importance since it
allows us to restrict the sort of proofs we must consider. Without
an independent proof of the cut-elimination property, proving adequacy
would require establishing something like a cut-elimination theorem
relative to each encoding that we wish to prove adequate.

Our ultimate objective is, of course, to prove theorems about the
original system. However, this follows naturally from the proof of a
relevant theorem in \logic and the adequacy of encodings in the
following way: 1) using adequacy, object level judgments are
translated into \logic formulas, 2) the relevant theorem proven in
\logic is used as a lemma on these formulas, and 3) using adequacy,
the result of that lemma application is then translated back into an
object level judgment. The end result is that the theorem is proven
for the object system while most of the reasoning takes place within
\logic. The {\sl cut} rule plays an essential role here as it allows
us to use theorems proven in \logic as lemmas which is very useful in
reasoning and absolutely vital in the adequacy argument outlined
above. It is for this reason that we cannot simply exclude the {\sl
  cut} rule from our logic and hope to avoid the work involved in
showing cut-elimination.

It is important to remember that adequacy is only an interface issue,
\ie, it is only a question about the ``inputs'' and ``outputs'' of
\logic. We show that an encoding of an object system (the ``input'')
is adequate and we use this to relate reasoning results in \logic
(the ``output'') to results about the original system.
Any auxiliary notions that we use
within the logic in order to establish the results of interest do not
matter for the purposes of adequacy. This is not to say that we do not
care what goes on in between. Certainly we have designed the logic
\logic so that the intermediate reasoning can closely mimic the
informal reasoning that is typically done. But in the end, the
correctness of the reasoning that is performed depends only on the
adequacy results and the cut-elimination property for \logic.

\begin{figure}[t]
\begin{align*}
\infer
 {(\lambda x. r) \Downarrow (\lambda x. r)}
 {}
&&
\infer
 {(m\ n) \Downarrow v}
 {m \Downarrow (\lambda x. r) &
  r[x := n] \Downarrow v}
\end{align*}
\caption{An evaluation relation for untyped $\lambda$-terms}
\label{fig:ulc-eval}
\end{figure}

\begin{figure}[t]
\begin{align*}
&\eval {(\uabs R)} {(\uabs R)} \mueq \top \\
&\eval {(\app M N)} V \mueq \exists R.~ \eval M {(\uabs R)} \land
\eval {(R\ N)} V
\end{align*}
\caption{An encoding of the evaluation relation in Figure~\ref{fig:ulc-eval}}
\label{fig:ulc-eval-enc}
\end{figure}

As an example, let us now consider the adequacy of a proof of
determinacy for an evaluation relation on untyped $\lambda$-terms. The
evaluation relation of interest is presented in
Figure~\ref{fig:ulc-eval}. This example will be sufficient to
illustrate the key issues involved in showing adequacy for an encoding
in \logic, while a more thorough example is presented later in
Section~\ref{sec:adequacy-seq}.

To represent untyped $\lambda$-terms in \logic, we introduce the type
$tm$ along with the constructors $\hsl{app} : tm \to tm \to tm$ and
$\hsl{abs} : (tm \to tm) \to tm$. Then we encode the evaluation
relation as a definition for a predicate $\hsl{eval} : tm \to tm \to o$
as shown in Figure~\ref{fig:ulc-eval-enc}. Given this definition,
we can prove the following determinacy result in
\logic:
\begin{equation*}
\forall t,v_1,v_2. (\eval t v_1 \land \eval t v_2) \supset v_1 = v_2.
\end{equation*}
What we want to do is use this result to obtain a similar determinacy
result for evaluation in the original system. We will develop the
bijections and the associated adequacy lemmas below to be able to
obtain such a translation.

We begin by defining a mapping $\enc{\cdot}$ from untyped
$\lambda$-terms to their representation in \logic:
\begin{align*}
\enc{x} = x && \enc{t_1\ t_2} = \app {\enc{t_1}} {\enc{t_2}} &&
\enc{(\lambda x. t)} = \uabs ({\lambda x. \enc{t}})
\end{align*}
Note that we conflate the names of variables in untyped
$\lambda$-terms with the corresponding names in \logic. In truth, the
bound variables of untyped $\lambda$-terms will be mapped to bound
variables of type $tm$ in \logic, while the free variables of untyped
$\lambda$-terms will be mapped to nominal constants of type $tm$ in
\logic. Assuming a one-to-one correspondence between such terms, the
above mapping is obviously bijective. Moreover, closed untyped
$\lambda$-terms will map to terms in \logic without nominal constants
and vice-versa. Thus our representation of untyped $\lambda$-terms is
adequate.\footnote{A subtle but important point: we do not permit
  $\nabla$-quantification at type $tm \to tm$. Allowing this would
  mean that we will have terms in \logic such as $\abs c$ for a
  nominal constant $c$. Since such a term cannot be the image of any
  untyped $\lambda$-term, the representation would then not be
  adequate.}

Since we use the substitution mechanism of \logic in
the definition of {\sl eval} to encode substitution on untyped
$\lambda$-terms, we will later need to know that these two
substitution relations are related via $\enc{\cdot}$ in the following
sense.
\begin{lemma}\label{lem:enc-comp}
Let $t_1$ and $t_2$ be untyped $\lambda$-terms. Then $\enc{t_1[x :=
  t_2]} = \enc{t_1}[\enc{t_2}/x]$ where the substitution on the left
takes place in the context of untyped $\lambda$-terms and the
substitution on the right takes place in \logic.
\end{lemma}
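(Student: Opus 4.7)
The plan is to prove this by structural induction on $t_1$, since the encoding $\enc{\cdot}$ is defined by recursion on term structure and both substitution operations are likewise defined compositionally. I will first fix a convention: free variables of untyped $\lambda$-terms correspond to nominal constants of type $tm$ in \logic and bound variables of untyped $\lambda$-terms correspond to bound $\lambda$-variables of type $tm$ in \logic. The variable $x$ being substituted for is treated as a free variable, hence maps to a nominal constant, so that the right-hand substitution $\enc{t_1}[\enc{t_2}/x]$ is a genuine substitution in \logic. As is standard, I will implicitly use $\alpha$-conversion to assume bound variables in $t_1$ differ from $x$ and do not appear free in $t_2$.

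The variable case splits on whether $t_1 = x$ or $t_1 = y$ with $y \neq x$; in both subcases both sides reduce immediately to the same term (either $\enc{t_2}$ or $y$). The application case $t_1 = r\ s$ is routine: unfolding $\enc{\cdot}$ and distributing the respective substitutions over application on both sides produces $\app {\enc{r[x := t_2]}} {\enc{s[x := t_2]}}$ on the left and $\app {\enc{r}[\enc{t_2}/x]} {\enc{s}[\enc{t_2}/x]}$ on the right, and the induction hypothesis on $r$ and $s$ closes the case.

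The interesting case is the abstraction $t_1 = \lambda y. r$. Under the $\alpha$-conventions above, the left-hand side unfolds as
\begin{equation*}
\enc{(\lambda y.r)[x := t_2]} = \enc{\lambda y.\,r[x := t_2]} = \uabs(\lambda y. \enc{r[x := t_2]}),
\end{equation*}
while the right-hand side unfolds, using $\beta$-equivalence in \logic together with the fact that $y$ is a bound variable distinct from the nominal constant $x$ and does not occur in $\enc{t_2}$, as
\begin{equation*}
\enc{\lambda y.r}[\enc{t_2}/x] = \uabs(\lambda y. \enc{r})[\enc{t_2}/x] = \uabs(\lambda y.\, \enc{r}[\enc{t_2}/x]).
\end{equation*}
The induction hypothesis applied to $r$ then equates the two.

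The main obstacle, as usual, is not any computation but rather the bookkeeping in the abstraction case: one must make precise that the object-level capture-avoiding substitution and the \logic-level substitution (mediated by $\lambda$-conversion and the fact that $x$ is a nominal constant rather than a bound variable) commute with abstraction in exactly the same way. Once the variable conventions are fixed cleanly at the start, the induction is mechanical; the real content of the lemma is that $\enc{\cdot}$ is designed precisely so that object-level capture-avoiding substitution is simulated by \logic's built-in $\beta$-reduction.
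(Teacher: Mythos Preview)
Your proof is correct and takes exactly the approach the paper indicates: the paper's own proof is the single sentence ``The proof is by a straightforward induction on the structure of $t_1$,'' and you have correctly supplied the standard case analysis that this entails.
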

\begin{proof}
The proof is by a straightforward induction on the structure of $t_1$.
\end{proof}

Next we want to show an if-and-only-if relationship between the
original evaluation judgment and its encoding in \logic. This is
formalized as follows.
\begin{lemma}\label{lem:eval-adq}
$t \Downarrow v$ has a derivation if and only if $\lra \eval {\enc{t}}
{\enc{v}}$ is provable in \logic.
\end{lemma}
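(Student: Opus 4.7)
The plan is to prove both directions separately. The forward direction is a straightforward induction on the evaluation derivation, while the backward direction requires invoking the cut-elimination theorem (Theorem~\ref{thm:cut-elim}) to restrict attention to cut-free proofs and then proceeding by induction on proof height.

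For the forward direction, I would induct on the structure of the derivation of $t \Downarrow v$. In the base case $t = v = \lambda x. r$, the sequent $\lra \eval{(\uabs (\lambda x. \enc{r}))}{(\uabs (\lambda x. \enc{r}))}$ follows by applying $\defR^p$ with the first clause and then $\topR$. In the inductive case, where $t = m\ n$ with premises $m \Downarrow (\lambda x. r)$ and $r[x := n] \Downarrow v$, the induction hypothesis yields \logic proofs of $\lra \eval{\enc{m}}{(\uabs (\lambda x. \enc{r}))}$ and $\lra \eval{\enc{r[x := n]}}{\enc{v}}$. By Lemma~\ref{lem:enc-comp} and $\beta$-conversion, the second sequent is identical to $\lra \eval{((\lambda x. \enc{r})\ \enc{n})}{\enc{v}}$. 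The desired sequent then follows by applying $\defR^p$ with the second clause (using the substitution $\{\enc{m}/M, \enc{n}/N, \enc{v}/V\}$), followed by $\existsR$ with witness $\lambda x. \enc{r}$, and $\landR$.

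For the backward direction, I would first appeal to Theorem~\ref{thm:cut-elim} to obtain a cut-free proof $\Pi$ of $\lra \eval{\enc{t}}{\enc{v}}$, and then induct on $\htf(\Pi)$ with $t$ and $v$ varying. Because the right-hand formula is atomic and the left context is empty, the last rule in $\Pi$ must be $\defR^p$ matching one of the two clauses for {\sl eval}. If the first clause matches, there exists $R^*$ with $\enc{t} = \enc{v} = \uabs R^*$; by bijectivity of $\enc{\cdot}$, then $t = v = \lambda x. r_0$ for some $r_0$, and the abstraction rule applies directly. If the second clause matches, bijectivity forces $\enc{t} = \app{\enc{t_1}}{\enc{t_2}}$ with $t = t_1\ t_2$, and the premise must end with $\existsR$ supplying some witness $R^*$ followed by $\landR$, yielding proofs of $\lra \eval{\enc{t_1}}{(\uabs R^*)}$ and $\lra \eval{(R^*\ \enc{t_2})}{\enc{v}}$. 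Using the adequacy of the term encoding, $R^* = \lambda x. \enc{r_0}$ for some untyped $\lambda$-term $r_0$, so $\uabs R^* = \enc{\lambda x. r_0}$ and, via $\beta$-conversion together with Lemma~\ref{lem:enc-comp}, $R^*\ \enc{t_2} = \enc{r_0[x := t_2]}$. The induction hypothesis then gives $t_1 \Downarrow \lambda x. r_0$ and $r_0[x := t_2] \Downarrow v$, from which $t_1\ t_2 \Downarrow v$ follows by the application rule.

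The main obstacle is the step in the backward direction where we must argue that the existential witness $R^*$ genuinely arises as $\lambda x. \enc{r_0}$ for some untyped $\lambda$-term $r_0$. This requires the careful term adequacy noted in the footnote: because $\nabla$-quantification is disallowed at type $tm \to tm$, every closed \logic term of this type corresponds via $\enc{\cdot}$ to an untyped $\lambda$-term. Without that restriction, pathological witnesses such as $\lambda x.\; c$ for a nominal constant $c$ of type $tm$ used in a way that does not correspond to any free variable of the original terms could, in principle, appear and obstruct the inductive argument. Once this adequacy of the representation of terms is in hand, the rest of the backward direction is a routine case analysis on the last rule of the cut-free proof.
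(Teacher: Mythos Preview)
Your proposal is correct and mirrors the paper's proof almost exactly: forward by induction on the evaluation derivation, backward by invoking cut-elimination and then inducting on the height of the cut-free proof, using bijectivity of $\enc{\cdot}$ and Lemma~\ref{lem:enc-comp} to handle the existential witness in the application case. One small clarification: the pathological witness ruled out by the footnote is a nominal constant $c$ of type $tm \to tm$ itself (so that $\uabs\, c$ has no preimage), not a term $\lambda x.\,c$ with $c$ of type $tm$, since the latter is already the encoding of $\lambda x.\,h_c$ for the free variable $h_c$ corresponding to $c$.
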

\begin{proof}
The proof in the forward direction is by straightforward induction on
the derivation of $t \Downarrow v$.

For the backward direction we first note that $\lra \eval {\enc{t}}
{\enc{v}}$ must have a cut-free derivation by
Theorem~\ref{thm:cut-elim}. The proof will be by induction on the
height of this cut-free derivation. The cut-free derivation must end
with $\defR$ though for ease of presentation we may suppose that it
ends with $\defR^p$.\footnote{Note that cut-elimination was shown for
  the logic containing $\defL$ and $\defR$, whereas $\defL^p$ and
  $\defR^p$ are only admissible additions to the logic.} The
interesting case is when considering the second clause for {\sl eval},
\ie, when $t = (m\ n)$ and the derivation ends as follows.
\begin{equation*}
\infer
 [\defR^p]
 {\lra \eval {(\app {\enc{m}} {\enc{n}})} \enc{v}}
 {\infer[\existsR]
   {\lra \exists r.~ \eval {\enc{m}} {(\uabs r)} \land
                     \eval {(r\ \enc{n})} \enc{v}}
   {\infer[\landR]
     {\lra \eval {\enc{m}} {(\uabs R)} \land
           \eval {(R\ \enc{n})} \enc{v}}
     {\lra \eval {\enc{m}} {(\uabs R)} &
      \lra \eval {(R\ \enc{n})} \enc{v}}
   }
 }
\end{equation*}
Here $R$ is a term of type $tm \to tm$. By the bijectivity of
$\enc{\cdot}$, we know that $(\uabs R)$ is the representation of an
untyped $\lambda$-term and thus we can apply the inductive hypothesis
to the upper left sequent. Similarly, we can apply the inductive
hypothesis to the upper right sequent after using
Lemma~\ref{lem:enc-comp} to convert $(R\ \enc{n})$ to the
representation of a substitution over untyped $\lambda$-terms.
\end{proof}

It was essential to applying the inductive hypothesis in the proof of
the lemma above that our mapping $\enc{\cdot}$ was a bijection. This
property would not hold, for instance, if we restricted attention to
only closed untyped $\lambda$-terms in the object language and we still
allowed $\nabla$-quantification at type $tm$ and, hence, admitted
nominal constants of this type; specifically, we would have terms of
type $tm$ in \logic that do not correspond to any closed untyped
$\lambda$-terms. We would then not have been able to apply the
inductive hypothesis in the proof of Lemma~\ref{lem:eval-adq}
because we would have to consider the possibility that particular
occurrences of the $\existsR$ rule generalize on terms of type $tm$
that contain one or more nominal constants. However, it is still
possible to
use a proof in \logic to establish a property about the original
system even in this case. To do this, we would have to
introduce a definition in \logic for the class of terms of type $tm$
that {\em do not} contain nominal constants and we would have to
relativize the theorem we prove in \logic to the class of terms
satisfying this definition. From this perspective, adequacy is not
always just a matter of mapping terms in the object system to terms in
\logic: we may need to map terms in the object system to terms
satisfying a particular predicate in \logic.

We now return to showing how a theorem in \logic about the determinacy
of the evaluation relation can be combined with the adequacy property
for the encoding of untyped $\lambda$-terms to yield a theorem about
the determinacy of the evaluation relation in the original calculus.

\begin{theorem}
\label{thm:eval-det-roundtrip}
If $t \Downarrow v_1$ and $t \Downarrow v_2$ then $v_1$ equals $v_2$.
\end{theorem}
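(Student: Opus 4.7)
The plan is to exploit the adequacy results together with the \logic theorem on determinacy that was stated just before the theorem. The argument will be a round-trip: translate the object-level hypotheses into \logic, apply the \logic-level determinacy lemma, and translate the equality conclusion back to the object language via the bijection $\enc{\cdot}$.

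First I would apply the forward direction of Lemma~\ref{lem:eval-adq} to the two hypotheses $t \Downarrow v_1$ and $t \Downarrow v_2$. This yields derivations in \logic of the sequents $\lra \eval{\enc{t}}{\enc{v_1}}$ and $\lra \eval{\enc{t}}{\enc{v_2}}$. Next I would take the already-stated \logic-level determinacy result
\begin{equation*}
\forall t,v_1,v_2.~ (\eval{t}{v_1} \land \eval{t}{v_2}) \supset v_1 = v_2
\end{equation*}
and, using $\forallL$, $\supsetL$, $\landR$, and \cut{} (or, more concretely, three uses of \cut{} to splice the two evaluation derivations into this lemma), obtain a derivation of $\lra \enc{v_1} = \enc{v_2}$ in \logic. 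A cut-free derivation of this equality sequent can only end in $\unrhdR$ (equality being the degree-zero case of nominal abstraction), so we conclude that $\enc{v_1}$ and $\enc{v_2}$ are $\lambda$-convertible and hence equal as \logic terms.

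Finally I would appeal to the bijectivity of $\enc{\cdot}$ (noted in the paragraph introducing the mapping) to transfer the equality $\enc{v_1} = \enc{v_2}$ back to the object language, giving $v_1 = v_2$. The main obstacle is a conceptual one rather than a technical one: the proof relies essentially on cut-admissibility (Theorem~\ref{thm:cut-elim}) to justify using the \logic-level determinacy theorem as a lemma on the adequacy-produced derivations, and on the bijectivity of $\enc{\cdot}$ on the codomain reached here (closed terms together with nominal constants standing for free variables). Both of these prerequisites have already been established, so the proof itself is a short composition of previously proved facts.
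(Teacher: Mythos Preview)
Your proposal is correct and follows essentially the same route as the paper: apply Lemma~\ref{lem:eval-adq} to obtain the two \logic{} derivations, combine them with the \logic{}-level determinacy theorem via $\forallL$, $\supsetL$, $\landR$, $id$, and \cut{} to derive $\lra \enc{v_1} = \enc{v_2}$, invoke cut-elimination so the only possible last rule is $\unrhdR$, and then use bijectivity of $\enc{\cdot}$ to conclude. The paper's proof differs only in explicitly naming the $id$ rule among those used, which you implicitly need as well.
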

\begin{proof}
Suppose $t \Downarrow v_1$ and $t \Downarrow v_2$ both have
derivations. By Lemma~\ref{lem:eval-adq}, that means we have proofs of
$\lra \eval {\enc{t}} {\enc{v_1}}$ and $\lra \eval {\enc{t}}
{\enc{v_2}}$. We also know from before that the following has a
derivation in \logic:
\begin{equation*}
\lra \forall t,v_1,v_2. (\eval t v_1 \land \eval t v_2) \supset v_1 =
v_2.
\end{equation*}
Then using the rules $\forallL$, $\supsetL$, $\landR$, $id$, and
$\cut$, we
can construct a derivation of $\lra \enc{v_1} = \enc{v_2}$. By
Theorem~\ref{thm:cut-elim} we know that $\lra \enc{v_1} = \enc{v_2}$
must have a cut-free derivation. This derivation must end with
$\unrhdR$ which applies only if $\enc{v_1}$ is equal to $\enc{v_2}$.
Since $\enc{\cdot}$ is a bijection, this means that $v_1$ equals
$v_2$.
\end{proof}

The discussion of adequacy in this section is reminiscent of an
earlier discussion relative to the specification logic and hence
raises the question of what, if anything, is different. The main
observation here is that the logic \logic is significantly richer than
the \hh logic. In particular, when proving properties about an \hh
specification, reasoning is conducted using general mathematical
techniques, while for proving properties about an encoding in \logic,
the reasoning is conducted within \logic itself. Thus, when working
with \logic, we use adequacy to connect results proven in \logic with
corresponding results about the original system. One may informally
think of this as establishing adequacy for the theorems in \logic
relative to their counterparts about the original system.

%%% Local Variables:
%%% mode: latex
%%% TeX-master: "root"
%%% End:

% LocalWords:  Gopalan Nadathur eigenvariables multiset im jp Tiu Momigliano tm
% LocalWords:  Girard's normalizability Momigliano's encodings GN conflate
% LocalWords:  bijection eval bijective versa bijectivity relativize bijections

\chapter{An Interactive Theorem Prover for the Meta-logic}
\label{ch:architecture}

As part of this thesis, we have developed an interactive theorem
prover called Abella for the logic \logic \cite{gacek08ijcar,
  gacek-abella-website}. Abella is implemented in OCaml and currently
comprises approximately 4,000 lines of code. This system has been
available to the public as open source software since March 2008 and
has, in fact, been downloaded by several researchers. One of the key
components of a theorem prover for \logic is the treatment of nominal
abstraction problems. We have discussed in
Section~\ref{ssec:complete-sets} how the task of finding a solution to
particular instances of the nominal abstraction predicate can be
reduced to solving higher-order unification problems. Abella makes use
of this reduction. Moreover, it assumes that the resulting unification
problems lie within a restricted class known as the {\em higher-order
  pattern unification} class \cite{miller91jlc,nipkow93lics}. To solve
such problems, it uses an algorithm developed by Nadathur and Linnell
\cite{nadathur05iclp} that was initially implemented in Standard ML
and that has subsequently been adapted to OCaml.

In this chapter, we briefly describe the architecture of Abella; this
discussion serves the auxiliary purpose of building up ideas and
terminology that we need for presenting applications of \logic in
Chapter~\ref{ch:applications}. Abella requires proofs to be
constructed through an interaction with a user. At any time, the state
of a proof is represented as a collection of subgoals, all of which
need to be proved for the overall proof to succeed. The user applies a
{\em tactic} to a subgoal in order to make progress towards a
completed proof. If we think of the proof as a derivation constructed
in \logic, then the subgoals in Abella correspond to sequents in the
derivation which do not themselves have derivations as yet. Tactics
then correspond to schemes for applying the rules of \logic to such
sequents in order to (incrementally) fill out their derivations.

There are two guiding principles for designing tactics in Abella:
\begin{enumerate}
\item they should correspond to some combination of rules from \logic,
and
\item they should correspond to natural reasoning steps.
\end{enumerate}
For the most part, the rules of \logic themselves resemble natural
reasoning steps. The role of many tactics therefore, is simply to
chain these together into larger steps. For example, given a goal of
the form
\begin{equation*}
\Sigma : \Gamma \lra \forall \vec{x}.~ H_1 \supset \ldots \supset H_n
\supset C
\end{equation*}
we may want to transition in one step into a goal of the following form:
\begin{equation*}
\Sigma, \vec{x} : \Gamma, H_1, \ldots, H_n \lra C.
\end{equation*}
Tactics are also used to group together many alternative rules. For
example, a ``case analysis'' tactic may actually perform $\lorL$,
$\landL$, $\botL$, $\defL$, $\existsL$, or $\nablaL$ based on the
structure of the formula to which it is applied.

In the rest of this chapter, we describe two areas in which tactics
greatly massage the rules of \logic into a convenient form. The first
concerns how hypotheses or lemmas of a particular form can be applied
to other hypotheses. The second concerns a treatment of induction and
co-induction which can naturally accommodate even sophisticated
inductive and co-inductive arguments.

\section{A Framework for Using Lemmas}
\label{sec:application-lemmas}

Suppose we have a hypothesis of the form
\begin{equation*}
\forall \vec{x}.~ H_1 \supset \ldots \supset H_n \supset C
\end{equation*}
and further hypotheses $H_1', \ldots, H_n'$ which match $H_1, \ldots,
H_n$ under proper instantiations of the $\vec{x}$. Then we would like
a tactic to apply the first hypothesis to $H_1'$, \ldots, $H_n'$, \ie,
a tactic which finds the proper instantiations for $\vec{x}$ and
chains together the rules of \logic to generate a new hypothesis $C'$
that is the corresponding instantiation of $C$. To be more specific,
let $\Gamma$ contain $H_1'$, \ldots, $H_n'$. Then we want a tactic
which constructs the derivation
\begin{equation*}
\infer=[\forallL]
 {\Gamma, \forall \vec{x}.~ H_1 \supset \ldots
   \supset H_n \supset C \lra B}
 {\infer[\supsetL]
   {\Gamma, H_1[\vec{t}/\vec{x}] \supset \ldots
    \supset H_n[\vec{t}/\vec{x}] \supset C[\vec{t}/\vec{x}] \lra B}
   {\deduce{\Gamma \lra H_1[\vec{t}/\vec{x}]}{\Pi_1} &
    \deduce
     {\Gamma, H_2[\vec{t}/\vec{x}] \supset \ldots
      \supset H_n[\vec{t}/\vec{x}] \supset C[\vec{t}/\vec{x}] \lra B}
     {\deduce
       {\vdots}
       {\infer[\supsetL]
         {\Gamma, H_n[\vec{t}/\vec{x}] \supset C[\vec{t}/\vec{x}] \lra B}
         {\deduce{\Gamma \lra H_n[\vec{t}/\vec{x}]}{\Pi_n} &
          \deduce{\Gamma, C[\vec{t}/\vec{x}] \lra B}{\Pi}}
       }
     }
   }
 }
\end{equation*}
where each $\Pi_i$ is just the identity rule. In an actual
implementation, this construction may be accomplished by replacing the
variables $\vec{x}$ with instantiatable meta-variables $\vec{v}$ and
using unification between $H_i[\vec{v}/\vec{x}]$ and $H_i'$ to
determine specific values for the $\vec{v}$.

Using the above construction, we can think of more sophisticated ways
in which $H_i'$ will match $H_i[\vec{t}/\vec{x}]$. All that we
effectively require is that a derivation of $H_i' \lra
H_i[\vec{t}/\vec{x}]$ can be constructed automatically. One useful
case arises when $H_i[\vec{t}/\vec{x}]$ has the form
$\nabla\vec{z}.H_i''$ for some formula $H_i''$, and where $H_i'$ will
match $H_i''[\vec{a}/\vec{z}]$ for some distinct listing of nominal
constants $\vec{a}$ which are not in the support of $H_i''$. If such a
case holds, then a derivation of $H_i' \lra \nabla \vec{z}.H_i''$ can
be constructed by repeated use of $\nablaR$ followed by the initial
rule. As before, in an actual implementation, we might be working with
$H_i[\vec{v}/\vec{x}] = \nabla\vec{z}.H_i'''$ where $\vec{v}$ are
instantiatable meta-variables. In such a case, we can determine proper
instantiations for the $\vec{v}$ by solving the nominal abstraction
$\lambda \vec{z} . H_i''' \unrhd H_i'$.

Typically, lemmas also have the form
\begin{equation*}
\forall \vec{x}.~ H_1 \supset \ldots \supset H_n \supset C.
\end{equation*}
If we have independently proven such a lemma, then we can use $\cut$
to bring it in as a hypothesis at any time. Then we can use this lemma
together with other hypotheses as described above so as to derive a
suitable instance of $C$.

By supporting an easy and direct use of lemmas, the system encourages
large proofs to be broken down into separate lemmas which build
towards a final result. In practice, these intermediate lemmas and the
points at which they are used are often the most important pieces in
the development of a proof. In fact, the structure of most arguments
is the following: use the induction rule, then perform case analysis
and finally use particular lemmas and the induction hypothesis to
obtain the goal. Thus in actual presentation of proofs, the detailed
proof steps are hidden by default, and instead the focus is on the
series of lemmas that lead to the desired conclusions
\cite{gacek-abella-website}.

A final point worth mentioning is that we deliberately consider
formulas of the form
\begin{equation*}
\forall \vec{x}.~ H_1 \supset \ldots \supset H_n \supset C
\end{equation*}
even though the following form is equivalent and perhaps more easy to
read for humans:
\begin{equation*}
\forall \vec{x}.~ H_1 \land \ldots \land H_n \supset C.
\end{equation*}
The reason we prefer the first form is two-fold: 1) it has a recursive
structure which is easier to work with in an implementation, and 2) in
the degenerate case the when $n = 0$, then first form is
$\forall\vec{x}.~C$ while the second is the more obtuse
$\forall\vec{x}. \top \supset C$. In the future, we shall always work
with formulas in the first form.

\section{An Annotation Based Scheme for Induction}
\label{sec:induct-co-induct}

The rule for induction in \logic can be somewhat awkward to use from a
traditional reasoning perspective: it requires one to formulate an
invariant $S$, prove that $S$ is truly an invariant, and then use $S$
in place of the predicate that was given by the inductive definition
under consideration. In traditional reasoning, these steps are often
merged into a single idea which is called simply ``reasoning by
induction.'' In this section we present a treatment of induction based
on annotating formulas which aims to capture this simplified approach
to induction. Further, we justify this treatment by translating the
tactic that underlies it into a particular application of the logical
rules of \logic.

Let us consider a very simple inductive argument to introduce the
annotation based treatment of induction. Suppose we define {\sl even}
and {\sl odd} on natural numbers as follows.
\begin{align*}
\even z &\mueq \top & \odd (s\ z) &\mueq \top \\
\even (s\ (s\ N)) &\mueq \even N & \odd (s\ (s\ N)) &\mueq
\odd N
\end{align*}
Suppose we want to prove that if $N$ is even then $s\ N$ is odd:
\begin{equation*}
\forall N.~ \even N \supset \odd (s\ N).
\end{equation*}
The proof is by induction on the {\sl even} hypothesis. The annotation
based treatment of this induction proceeds by creating a new
hypothesis (called the inductive hypothesis) of the form
\begin{equation*}
\forall N.~ (\even N)^* \supset \odd (s\ N)
\end{equation*}
and changing the goal to
\begin{equation*}
\forall N.~ (\even N)^@ \supset \odd (s\ N).
\end{equation*}
The $*$ annotation indicates that the inductive hypothesis can only be
applied to an argument which has that same annotation. The $@$
annotation indicates that when this atomic formula is subjected to
case analysis, any recursive calls to {\sl even} will be annotated
with $*$. In all other respects, the annotations are to be ignored,
and besides the induction tactic there is no way to introduce these
annotations. In this way, Abella allows the inductive hypothesis to be
applied only when the distinguished inductive argument has been
subjected to case analysis.

Coming back to the proof, let us abbreviate the inductive hypothesis
by $IH$. Then we can eventually do case analysis on the {\sl even}
hypothesis which leads to the following sequents.
\begin{align*}
IH \lra \odd (s\ z) &&
IH, (\even N')^* \lra \odd (s\ (s\ (s\ N')))
\end{align*}
The first of these is easily provable. In the second we apply the
inductive hypothesis which is allowed based on the annotations, and
this produces a hypothesis of $\odd (s\ N')$. The rest of the proof is
straightforward.

We will now show how this annotation based treatment of induction is
sound by translating it to rules from \logic. Suppose we want to prove
the following.
\begin{equation*}
\forall \vec{x}.~ H_1 \supset \ldots \supset H_n \supset C
\end{equation*}
Further, assume that we want to do this by induction on $H_i = p\
\vec{t}$ where $p$ is defined by $\forall \vec{y}. p\ \vec{y} \mueq B\
p\ \vec{y}$. Then we define the invariant $S$ as
\begin{equation*}
S = \lambda \vec{y} . \forall \vec{x}.~ \vec{y} = \vec{t} \supset H_1
\supset \ldots \supset H_n \supset C
\end{equation*}
where $\vec{y} = \vec{t}$ denotes an equality between appropriately
typed tuples involving the indicated terms. Using this invariant, we
can construct the following derivation in \logic.
\begin{equation*}
\infer=[\forallR]
 {\cdot : \cdot \lra \forall \vec{x}.~ H_1 \supset \ldots \supset H_n
   \supset C}
 {\infer=[\supsetR]
   {\vec{x} : \cdot \lra H_1 \supset \ldots \supset H_n \supset C}
   {\infer[\cL]
     {\vec{x} : H_1, \ldots, H_n \lra C}
     {\infer[\IL]
      {\vec{x} : p\ \vec{t}, H_1, \ldots, H_n \lra C}
      {\deduce{\vec{y} : B\ S\ \vec{y} \lra S\ \vec{y}}{\Pi_S} &&&&
       \deduce{\vec{x} : S\ \vec{t}, H_1, \ldots, H_n \lra C}{\Pi}}
     }
   }
 }
\end{equation*}
Now, the missing derivation $\Pi$ is trivial to construct using
$\forallL$, $\supsetL$, $\unrhdR$ and $id$. We fill in the
other missing derivation, $\Pi_S$, as follows:
\begin{equation*}
\infer=[\forallR]
 {\vec{y} : B\ S\ \vec{y} \lra \forall \vec{x}.~ \vec{y} = \vec{t}
   \supset H_1 \supset \ldots \supset H_n \supset C}
 {\infer=[\supsetR]
   {\vec{x}, \vec{y} : B\ S\ \vec{y} \lra \vec{y} = \vec{t}
    \supset H_1 \supset \ldots \supset H_n \supset C}
   {\infer[\unrhdL_{\CSNAS}]
     {\vec{x}, \vec{y} : B\ S\ \vec{y}, \vec{y} = \vec{t},
      H_1, \ldots, H_n \lra C}
     {\deduce
       {\vec{x} : B\ S\ \vec{t}, H_1, \ldots, H_n \lra C}
       {\Pi_S'}
     }
   }
 }
\end{equation*}
Then we fill in $\Pi_S'$ based on the content of the inductive
argument carried out within the annotation based scheme.

% AG*: I have rewritten this discussion. Hopefully Gopalan can find a
% chance to give his thoughts on the new version and still if it is
% still confusing or poorly expressed.
To complete this picture, let us consider how uses of the induction
hypothesis in the annotation based treatment of induction correspond to
making use of the hypothesis $B\ S\ \vec{t}$ in constructing the
derivation $\Pi'_S$. Within the annotation based treatment, the
induction hypothesis has the following form:
\begin{equation*}
\forall \vec{x}.~ H_1 \supset \ldots \supset (p\ \vec{t})^* \supset \ldots
\supset H_n \supset C.
\end{equation*}
Given the restrictions on annotations, this hypothesis can only be
used if instantiations are found for the $\vec{x}$ such that $(p\
\vec{t})^*$ is equal to one of the $(p\ \vec{s})^*$ which occurs as a
result of case analysis on the original hypothesis of $(p\
\vec{t})^@$. By understanding case analysis as $\defL$ in \logic, we
see that these occurrences of $(p\ \vec{s})^*$ for which the induction
hypothesis is applicable are exactly those occurrences of $p$ in $B\ p\
\vec{t}$. In turn, the induction invariant is available for those same
occurrences of $p$ when constructing the derivation $\Pi'_S$, which is
precisely what is realized via the hypothesis $B\ S\ \vec{t}$. Thus
the annotation based treatment of induction can be translated to a
proper derivation in \logic, and therefore the treatment is sound.

\section{Extensions to the Basic Scheme for Induction}

The treatment of induction that we have just described can be extended
in a few different ways. Each of these brings some additional
complications to the construction of a corresponding derivation in
\logic. For clarity of presentation, we shall consider each extension
in isolation, but we note that they could all be combined.

\subsection{Induction on a Predicate in the Scope of Generic Quantifiers}
\label{sec:induct-with-nabla}

We can extend the annotation based treatment of induction to work with
predicates which occur underneath $\nabla$-quantifiers. Suppose again
we want to prove
\begin{equation*}
\forall \vec{x}.~ H_1 \supset \ldots \supset H_n \supset C
\end{equation*}
where, this time, we want to induct on $H_i = \nabla\vec{z}.~ p\
\vec{t}$ where $p$ is defined by $\forall \vec{y}. p\ \vec{y} \mueq B\
p\ \vec{y}$. Within the annotation based treatment, nothing needs to
be changed to cater to this situation: $(p\ \vec{t})$ is annotated
with $*$ in the inductive hypothesis and with $@$ in the goal and the
rules for applying an inductive hypothesis with $\nabla$s over the
inductive argument are the same as those described in
Section~\ref{sec:application-lemmas}.

We justify this treatment by defining the invariant $S$ as follows.
\begin{equation*}
S = \lambda \vec{y} . \forall \vec{x}.~ (\lambda\vec{z}. \vec{t}
\unrhd \vec{y}) \supset H_1 \supset \ldots \supset H_n \supset C
\end{equation*}
We can follow the original construction with this invariant, and the
only wrinkle is in the construction of $\Pi_S$, a derivation of
$\vec{y} : B\ S\ \vec{y} \lra S\ \vec{y}$. We construct this as
follows.
\begin{equation*}
\infer=[\forallR]
 {\vec{y} : B\ S\ \vec{y} \lra \forall \vec{x}.~ (\lambda\vec{z}. \vec{t}
   \unrhd \vec{y}) \supset H_1 \supset \ldots \supset H_n \supset C}
 {\infer=[\supsetR]
   {\vec{x}, \vec{y} : B\ S\ \vec{y} \lra (\lambda\vec{z}. \vec{t}
    \unrhd \vec{y}) \supset H_1 \supset \ldots \supset H_n \supset C}
   {\infer[\unrhdL_{\CSNAS}]
     {\vec{x}, \vec{y} : B\ S\ \vec{y}, (\lambda\vec{z}. \vec{t}
      \unrhd \vec{y}), H_1, \ldots, H_n \lra C}
     {\deduce
       {\vec{x} : B\ S\ \vec{t}, H_1, \ldots, H_n \lra C}
       {\Pi_S'}
     }
   }
 }
\end{equation*}
Here and in the future, we simplify the presentation by treating the
free variables $\vec{z}$ in $\vec{t}$ as nominal constants. Now we
fill in $\Pi_S'$ based on the content of the inductive argument
carried out within the annotation based scheme. After using $\nablaL$
and case analysis on $H_i = \nabla\vec{z} . p\ \vec{t}$ we will have
$B\ p\ \vec{t}$ and also $B\ S\ \vec{t}$. Thus we have the inductive
hypothesis available for the recursive calls to $p$. The restrictions
enforced by the nominal abstraction in $S$ are the same as those
enforced when applying hypotheses which have embedded occurrences of
$\nabla$, as per the discussion in
Section~\ref{sec:application-lemmas}. Thus this treatment is sound.

\subsection{Induction in the Presence of Additional Premises}
\label{sec:induct-context}

We extend the annotation based treatment of induction by allowing
induction in the context of other hypotheses. That is, instead of
proving $\cdot : \cdot \lra \forall \vec{x}.~ H_1 \supset \ldots
\supset H_n \supset C$, we prove
\begin{equation*}
\Sigma : \Gamma \lra \forall \vec{x}.~ H_1 \supset
\ldots \supset H_n \supset C
\end{equation*}
Within the annotation based treatment of induction, there is nothing
that needs to be changed to handle this case: we annotate the goal and
generate an annotated induction hypothesis which is added to the other
hypotheses.

To verify the soundness of this extension, we reconstruct the original
soundness argument using the invariant $S' = \lambda \vec{y}. \forall
\Sigma.~ \bigwedge \Gamma \supset S\ \vec{y}$ where $S$ is the
invariant prescribed in the original construction and $\bigwedge
\Gamma$ denotes the conjunction of all formulas in $\Gamma$. Then the
only significant change in the construction is that $\Pi_S$ needs to
be a derivation of $\vec{y} : B\ S'\ \vec{y} \lra S'\ \vec{y}$. Using
$\forallR$, $\supsetR$, and $\landL$ this becomes $\Sigma, \vec{y} :
\Gamma, B\ S'\ \vec{y} \lra S\ \vec{y}$. Finally, we know $\forall
\Sigma.\forall \vec{y}.~ \bigwedge \Gamma \supset S'\ \vec{y} \supset
S\ \vec{y}$ by the definition of $S'$, and since $B$ does not use its
first argument negatively (due to stratification), we know $\forall
\Sigma. \forall \vec{y}. \bigwedge \Gamma \supset B\ S'\ \vec{y}
\supset B\ S\ \vec{y}$. By using this, all we have left to show is
$\Sigma, \vec{y} : \Gamma, B\ S\ \vec{y} \lra S\ \vec{y}$ which we can
unfold as in the original construction and what is left matches the
work done in the annotation based treatment.

\subsection{Delayed Applications of the Induction Hypothesis}
\label{sec:non-immed-induct}

Another extension we can make is to allow the inductive hypothesis to
be applied not just for immediate recursive calls, but for finitely
nested ones as well. This is supported in the annotation based
treatment by saying that case analysis on a hypothesis with a $*$
annotation results in recursive calls which also have the $*$
annotation. For example, taking {\sl even} and {\sl odd} as before,
suppose we want to prove every natural number is either even or odd:
\begin{equation*}
\forall N.~ \nat N \supset \even N \lor \odd N.
\end{equation*}
The proof is by induction on $\nat N$. Thus we have the inductive
hypothesis $IH$ as follows:
\begin{equation*}
\forall N.~ (\nat N)^* \supset \even N \lor \odd N.
\end{equation*}
When we perform case analysis on the hypothesis $(\nat N)^@$ in the
goal it leads to the following sequents.
\begin{align*}
IH \lra \even z \lor \odd z &&
IH, (\nat N')^* \lra \even (s\ N') \lor \odd (s\ N')
\end{align*}
The first sequent is trivial to prove, and we can apply case analysis
to $(\nat N')^*$ in the second to get the following two sequents.
\begin{align*}
IH \lra \even (s\ z) \lor \odd (s\ z) &&
IH, (\nat N'')^* \lra \even (s\ (s\ N'')) \lor \odd (s\ (s\ N''))
\end{align*}
Again the first sequent is trivial. In the second sequent we can apply
the inductive hypothesis to get the sequent
\begin{equation*}
\ldots, \even N'' \lor \odd N'' \lra
\even (s\ (s\ N'')) \lor \odd (s\ (s\ N'')).
\end{equation*}
Now we can apply $\lorL$ and the rest of the proof is trivial to construct.

The justification for this extension in \logic is to use the invariant
$S' = \lambda \vec{y}. S\ \vec{y} \land B\ S\ \vec{y}$ in the original
construction where $S$ is the original invariant. Then only
significant change in the construction is that we are required to fill
out the following derivation
\begin{equation*}
\infer[\landR]
 {\vec{y} : B\ S'\ \vec{y} \lra S'\ \vec{y}}
 {\deduce{\vec{y} : B\ S'\ \vec{y} \lra S\ \vec{y}}{\Pi_1} &&&&
  \deduce{\vec{y} : B\ S'\ \vec{y} \lra B\ S\ \vec{y}}{\Pi_2}}
\end{equation*}
Now note that $\forall \vec{x}.~ S'\ \vec{x} \supset S\ \vec{x}$ and
$\forall \vec{x}.~ S'\ \vec{x} \supset B\ S\ \vec{x}$ are both
trivially provable after expanding the definition of $S'$. Since $B$
does not allow its first argument to occur negatively (due to
stratification) this means we can inductively construct derivations of
$\forall \vec{x}.~ B\ S'\ \vec{x} \supset B\ S\ \vec{x}$ and $\forall
\vec{x}.~ B\ S'\ \vec{x} \supset B\ (B\ S)\ \vec{x}$. The construction
of the derivation $\Pi_2$ follows directly from the first of these.
The derivation $\Pi_1$ contains the real content of the inductive
proof. If case analysis is eventually used on $H_i = p\ \vec{t}$ in
this derivation then the $\vec{y}$ will have been instantiated with
$\vec{t}$ so that we have the hypothesis $B\ S'\ \vec{t}$. Thus we
will have $B\ S\ \vec{t}$ which is the regular inductive hypothesis
and also $B\ (B\ S)\ \vec{t}$ which is the inductive hypothesis
applied to recursive calls nested at depth two. This depth can be
extended to any finite number by repeating the above construction with
the appropriate $S'$.

\subsection{Nested Inductions}
\label{sec:nested-inductions}

The use of annotations can be extended to allow nested inductions. For
example, suppose we define the following predicate {\sl ack} for
computing the Ackermann function.
\begin{align*}
\ack z N (s\ N) &\mueq \top \\
\ack {(s\ M)} z R &\mueq \ack M {(s\ z)} R \\
\ack {(s\ M)} {(s\ N)} R &\mueq \exists R'.~ \ack {(s\ M)} N {R'}
\land \ack M {R'} R
\end{align*}
And suppose we want to prove that this function is total in its first
two arguments:
\begin{equation*}
\forall M, N.~ \nat M \supset \nat N \supset \exists R.~ \nat R \land
\ack M N R
\end{equation*}
The proof requires an outer induction on $\nat M$ and an inner
induction on $\nat N$. In the annotation based treatment of induction,
this is realized as follows. Applying induction to $\nat M$ produces
the outer inductive hypothesis
\begin{equation*}
\forall M, N.~ (\nat M)^* \supset \nat N \supset \exists R.~ \nat R \land
\ack M N R
\end{equation*}
and the goal
\begin{equation*}
\forall M, N.~ (\nat M)^@ \supset \nat N \supset \exists R.~ \nat R \land
\ack M N R.
\end{equation*}
Then applying induction to $\nat N$ in this goal produces the inner
inductive hypothesis
\begin{equation*}
\forall M, N.~ (\nat M)^@ \supset (\nat N)^{**} \supset \exists R.~ \nat R \land
\ack M N R
\end{equation*}
and the goal
\begin{equation*}
\forall M, N.~ (\nat M)^@ \supset (\nat N)^{@@} \supset \exists R.~ \nat R \land
\ack M N R.
\end{equation*}
The treatment of annotations is the same as described before. The
annotations $*$ and $**$ as well as $@$ and $@@$ are considered
distinct and unrelated. Thus the outer inductive hypothesis applies as
before, while the inner inductive hypothesis can only be applied to
$(\nat M)^@$ from the goal and something with the $**$ annotation
which can only come from case analysis on $(\nat N)^{@@}$.

We will use this treatment to finish the proof of totality for the
Ackermann function. Let $IH$ and $IH'$ be the outer and inner
induction hypotheses, respectively. Then the interesting part of the
proof comes after we have done case analysis on both $(\nat M)^@$ and
$(\nat N)^{@@}$. In particular, in the case where $M = s\ M'$ and $N =
s\ N'$ we need to prove the following sequent.
\begin{equation*}
IH, IH', (\nat (s\ M'))^@, (\nat M')^*, (\nat N')^* \lra \exists R.~
\nat R \land \ack {(s\ M')} {(s\ N')} R
\end{equation*}
Note that we must have performed contraction on $(\nat M)^@$ prior to
case analysis in order to keep a copy of it. Then we can apply the
inner induction hypothesis to $(\nat (s\ M'))^@$ and $(\nat N')^*$ to
get the hypotheses $\nat R'$ and $\ack {(s\ M')} N {R'}$ for some new
variable $R'$. Applying the outer inductive hypothesis to $(\nat M')^*$
and $\nat R'$ produces the hypotheses $\nat R''$ and $\ack {M'} {R'}
{R''}$. Then we can apply $\existsR$ with $R = R''$, and the rest of
the proof is trivial.

We now justify the annotation based treatment of nested induction.
As in the original construction, suppose we want to prove
\begin{equation*}
\forall \vec{x}.~ H_1 \supset \ldots \supset H_n \supset C.
\end{equation*}
And suppose the proof is by an outer induction on $H_i = p\ \vec{t}$
where $p$ is defined by $\forall \vec{y}.p\ \vec{y} \mueq B\ p\
\vec{y}$ and an inner induction on $H_j = q\ \vec{s}$ where $q$ is
defined by $\forall \vec{z}.q\ \vec{z} \mueq B'\ q\ \vec{z}$. We
proceed with the original construction using the original invariant
$S$ for the outer induction. This leaves us with a need to prove the
following.
\begin{equation*}
\vec{x} : B\ S\ \vec{t}, H_1, \ldots, H_n \lra C
\end{equation*}
Now we apply contraction on $H_j = q\ \vec{s}$ and induct on one of
the copies using the following invariant.
\begin{equation*}
S' = \lambda \vec{z} . \forall \vec{x}.~B\ S\ \vec{t} \supset \vec{z}
= \vec{s} \supset H_1 \supset \ldots \supset H_n \supset C
\end{equation*}
The only non-trivial sequent to prove will be $\vec{z} : B'\ S'\
\vec{z} \lra S'\ \vec{z}$. Applying $\forallR$, $\supsetR$, and
$\unrhdL_{\CSNAS}$, this reduces to showing
\begin{equation*}
\vec{x} : B'\ S'\ \vec{s}, B\ S\ \vec{t}, H_1, \ldots, H_n \lra C
\end{equation*}
Now from $B\ S\ \vec{t}$ we have the outer induction invariant
available for the recursive calls to $p$ which arise from case
analysis on $H_i = p\ \vec{t}$. From $B'\ S'\ \vec{s}$ we have the
inner induction invariant available for the recursive calls to $q$
which arise from case analysis on $H_j = q\ \vec{s}$. The caveat is
that the inner induction invariant $S'$ requires a proof of $B\ S\
\vec{t}$. This constrains the variables $\vec{x}$ in the inner
induction variant based on their occurrences in $\vec{t}$. In the
annotation based treatment, the requirement of a hypothesis with a $@$
annotation enforces exactly this condition for the inner inductive
hypothesis.

\section{An Annotation Based Scheme for Co-induction}
\label{sec:annot-co-induct}

\begin{figure}[t]
\centering
\begin{tikzpicture}[shorten >=1pt,->,bend angle=45,auto]
  \tikzstyle{vertex}=[circle,draw=black,minimum size=25pt,inner
  sep=1pt]

  \node[vertex] (p0) at (0,0) {$p_0$};
  \node[vertex] (p1) at (2,0) {$p_1$};
  \path (p0) edge [bend left] (p1)
        (p1) edge [bend left] (p0) ;

  \node[vertex] (q0) at (5,0) {$q_0$};
  \node[vertex] (q1) at (7,0) {$q_1$};
  \node[vertex] (q2) at (9,0) {$q_2$};
  \path (q0) edge [bend left] (q1)
        (q1) edge [bend left] (q0)
        (q1) edge [bend left] (q2) ;
\end{tikzpicture}
\label{fig:co-step}
\caption{Transition diagrams for two different processes}
\end{figure}

We can also use annotations to treat co-induction. To illustrate how
this works, we will take an example from the domain of process
calculi. Let us consider the two processes depicted in
Figure~\ref{fig:co-step}. Here the circles represent states and the
arrows represent possible transitions between those states. We say
that a $P$ is {\em simulated by} a state $Q$ if for every transition
that $P$ can make to a state $P'$ there exists a state $Q'$ to which
$Q$ can transition and such that $P'$ is simulated by $Q'$. We
consider the notion of simulation as co-inductive so a state can be
simulated by another state even if both have infinite (possibly
cyclic) chains of transitions from them. Suppose then, that we want to
show that the state $p_0$ is simulated by the state $q_0$. We can see
that this is true by considering all possible transitions from these
states and recognizing that $p_1$ is simulated by the state $q_1$.

Let us now think of conducting this example in \logic. We start by
encoding the two processes using the following definition of {\sl
  step}.
\begin{align*}
\step {p_0} {p_1} &\triangleq \top &
\step {p_1} {p_0} &\triangleq \top \\
\step {q_0} {q_1} &\triangleq \top &
\step {q_1} {q_0} &\triangleq \top &
\step {q_1} {q_2} &\triangleq \top
\end{align*}
Then we define simulation as a co-inductive predicate $\simp P Q$ which
holds when the process $P$ is simulated by the process $Q$. The
precise definition is as follows.
\begin{equation*}
\simp P Q \nueq \forall P'.~ \step P P' \supset \exists Q'.~ \step Q
Q' \land \simp {P'} {Q'}
\end{equation*}
Our goal is then to prove $\simp {p_0} {q_0}$ which we generalize
based on the argument sketched above into the following formula to
prove:
\begin{equation*}
\forall P, Q.~ (P = p_0 \land Q = q_0) \lor (P = p_1 \land Q = q_1)
\supset \simp P Q.
\end{equation*}
If we apply annotation based co-induction to this goal we get the
co-inductive hypothesis
\begin{equation*}
\forall P, Q.~ (P = p_0 \land Q = q_0) \lor (P = p_1 \land Q = q_1)
\supset (\simp P Q)^+
\end{equation*}
and the new goal
\begin{equation*}
\forall P, Q.~ (P = p_0 \land Q = q_0) \lor (P = p_1 \land Q = q_1)
\supset (\simp P Q)^{\#}.
\end{equation*}
Note that the annotations for co-induction apply to the consequent of
an implication rather than one of the hypotheses. The rules for these
new annotations are as follows. If we unfold (\ie, use $\defR$ on) a
co-inductive definition with a $\#$ annotation then all of its
recursive calls have the $+$ annotation. Hypotheses with a $+$
annotation are obtained from the co-inductive hypothesis and can {\em
  only} be used to match a goal with the $+$ annotation. For all other
purposes, the annotations can be ignored. The proof of the above
simulation eventually reduces to the following two sequents where $CH$
is the co-inductive hypothesis.
\begin{align*}
CH \lra (\simp {p_0} {q_0})^\# &&
CH \lra (\simp {p_1} {q_1})^\#
\end{align*}
The proofs of these two sequents are similar, so we will consider only
the first one. Here if we apply $\defR$ we will eventually end up with
the sequent
\begin{align*}
CH \lra (\simp {p_1} {q_1})^+.
\end{align*}
At this point we can apply the co-inductive hypothesis to get a
hypothesis which will match the goal.

We can justify the annotation based treatment of co-induction by
translating it into appropriate rules from \logic. Suppose we want to
prove the following where $p$ is defined by $\forall \vec{y}. p\
\vec{y} \nueq B\ p\ \vec{y}$.
\begin{equation*}
\forall \vec{x}.~ H_1 \supset \ldots \supset H_n \supset p\ \vec{t}
\end{equation*}
We proceed as in the construction for induction to get the sequent
\begin{equation*}
\vec{x} : H_1, \ldots, H_n \lra p\ \vec{t}.
\end{equation*}
We then apply co-induction with the invariant $S$ as follows.
\begin{equation*}
S = \lambda\vec{y}. \exists \vec{x} .~ \vec{y} = \vec{t} \land H_1
\land \ldots \land H_n
\end{equation*}
The $\CIR$ rule applied to the earlier sequent requires us to show
$\vec{x} : H_1, \ldots, H_n \lra S\ \vec{t}$ which is trivial and
$\vec{y} : S\ \vec{y} \lra B\ S\ \vec{y}$ which contains the real
content of the co-inductive proof. A derivation of this later sequent
can be constructed as follows.
\begin{equation*}
\infer=[\existsL]
 {\vec{y} : \exists \vec{x} .~ \vec{y} = \vec{t} \land H_1
  \land \ldots \land H_n \lra B\ S\ \vec{y}}
 {\infer=[\landL]
  {\vec{y}, \vec{x} : \vec{y} = \vec{t} \land H_1
   \land \ldots \land H_n \lra B\ S\ \vec{y}}
  {\infer[\unrhdL_{\CSNAS}]
   {\vec{y}, \vec{x} : \vec{y} = \vec{t}, H_1
    \ldots, H_n \lra B\ S\ \vec{y}}
   {\vec{x} : H_1, \ldots, H_n \lra B\ S\ \vec{t}}
  }
 }
\end{equation*}
The derivation for the upper-most sequent here can be constructed
based on the argument carried out in the the annotation based
treatment. Within that argument, when the goal $(p\ \vec{t})^\#$ is
unfolded, the recursive calls will be annotated with $+$ and will be
provable using the co-inductive hypothesis. This is what is given in
the formal derivation by the goal $B\ S\ \vec{t}$.

% AG*: This is smudging a few things slightly. E.g. you can do
% co-induction and never appeal to the co-inductive hypothesis. That
% would require a more complicated invariant S but it's possible to
% formally construct.

This annotation based treatment of co-induction can be extended in ways
similar to the inductive treatment. For example, we can allow
co-induction within a context of other hypotheses, or we can allow the
goal to be unfolded multiple times before applying the co-inductive
hypotheses. The soundness arguments for these extensions are similar
to the inductive case.

% AG*: We can also mixing induction and co-induction, but we have no
% examples which actually use this (David's thesis may have one but it
% requires mutually recursive inductive and co-inductive definitions
% which we disallow). Also, the argument for soundness starts to get
% fairly messy when mixing these two. We won't say anything for now,
% but maybe someday if this gets written up in more detail I can
% address it.

%%% Local Variables: 
%%% mode: latex
%%% TeX-master: "root"
%%% End: 

% LocalWords:  Abella OCaml Nadathur subgoals IH ack Ackermann sep CH subgoal
% LocalWords:  instantiatable tuples

\chapter{A Two-level Logic Approach to Reasoning}
\label{ch:two-level-reasoning}

One approach to reasoning about object systems is to encode their
descriptions directly into definitions in \logic and to then use the
inference rules of \logic with these definitions. In this chapter we
explore an alternative approach. In particular, we show how the
meta-logic \logic can be used to encode the specification logic \hh
and to then reason about \hh specifications through this encoding.
This is the two-level logic approach to reasoning that was enunciated
by McDowell and Miller earlier in the context of the meta-logic \FOLDN
\cite{mcdowell02tocl}.

An important part of assessing the value of the two-level logic
approach to reasoning is understanding both its benefits and its
costs. One benefit is that the specification logic carves out a useful
subset of the specifications that are possible in the meta-logic while
at the same time possessing a complete proof search procedure which
make it possible to execute the specifications. A second benefit is
that by encoding an entire specification logic in the meta-logic, we
can formalize properties of the specification logic and make them
available during reasoning. An auxiliary observation in this context
is that because of the way the specification logic can be used to
encode object systems, the properties of this logic that are used in
meta-logic reasoning often turn out to be based on intuitions about
the properties of the object systems themselves. From a cost
perspective, one issue with the two-level logic approach to reasoning
is that there is an additional overhead to reasoning about
specifications through the encoded semantics of the specification
logic rather than directly. Another cost to be considered is that
because the specification logic is only a subset of the full range of
specifications allowed by the meta-logic, this approach in some ways
limits what we are able to say within a specification.

After all aspects are taken into account, we believe that the
combination of the \hh specification logic and the meta-logic \logic
seems to provide a nice balance between the benefits and costs of the
two-level logic approach to reasoning. The specification logic \hh
elegantly encodes many systems of interest, and there are efficient
implementations of this specification logic. Moreover, as we saw in
Section~\ref{sec:spec-example}, the properties of \hh provide useful
results during reasoning. Finally, as we shall see in this chapter,
the encoding of \hh into \logic is lightweight and therefore imposes
little overhead on the reasoning process.

The rest of this chapter is laid out as follows.
Section~\ref{sec:encod-spec-logic} describes the encoding of \hh into
\logic. Section~\ref{sec:form-meta-theory} formalizes some properties
of \hh as theorems in \logic; these theorems can then be used as
lemmas to simplify subsequent reasoning.
Section~\ref{sec:example-two-level} illustrates our specific
realization of the two-level logic approach to reasoning and
demonstrates its power by using it to formalize the informal proof
that we have presented in Chapter~\ref{ch:introduction} of the fact
that types are preserved by evaluation in the simply-typed
$\lambda$-calculus. Finally, Section~\ref{sec:adequacy-g} discusses
the issue of adequacy relative to the two-level logic approach to
reasoning.

\section{Encoding the Specification Logic}
\label{sec:encod-spec-logic}

There are two components to our encoding of the specification logic
\hh into the meta-logic \logic. First, we encode the syntax by
defining a mapping $\psi$ from specification logic types and terms to
meta-logic types and terms. Since both logics are constructed from
Church's simple theory of types and hence contain subsets of
expressions that are isomorphic, this encoding can be very shallow.
Second, we encode the semantics of \hh (\ie, the provability relation)
via the definition of a suitably chosen atomic judgment in \logic.
This encoding is lightweight which makes later reasoning fairly
transparent. To aid in that reasoning we observe some formulas that
can be proved in \logic involving the judgment that encodes
specification logic provability. These theorems of \logic can be used
as lemmas to shorten other proofs that we would want to construct in
\logic.

\subsection{Encoding the Syntax of the Specification Logic}

The types of our specification logic are mapped to isomorphic types in
the meta-logic. We define the mapping $\psi$ on types as follows.
\begin{align*}
\psi(\tau) = \tau \;\;\mbox{ if $\tau$ is a base type} &&
\psi(\tau_1 \to \tau_2) = \psi(\tau_1) \to \psi(\tau_2)
\end{align*}
For each specification type, we assume a bijective mapping between
eigenvariables of that type (in the specification logic) and nominal
constants of that type (in the meta-logic). We denote this mapping
using subscripts: the eigenvariable $h$ maps to the nominal constant
$a_h$ and the nominal constant $a$ maps to the eigenvariable $h_a$.
Using this, we define the encoding of specification terms as follows.
\begin{equation*}
\psi(c) = c \;\;\mbox{ if $c$ is a constant} \hspace{1cm}
\psi(h) = a_h \;\;\mbox{ if $h$ is an eigenvariable}
\end{equation*}
\begin{equation*}
\psi(x) = x \;\;\mbox{ if $x$ is a variable} \hspace{1cm}
\psi(\lambda x. t) = \lambda x. \psi(t) \hspace{1cm}
\psi(t_1\ t_2) = \psi(t_1)\ \psi(t_2)
\end{equation*}

Now for clarity and correctness of the encoding, we make two
adjustments to this mapping. First, the specification logic type $o$
for formulas is mapped to a distinguished type $frm$ to avoid
conflicting with the type $o$ for meta-logic formulas. Second, we
introduce a distinguished type $atm$ for atomic specification logic
formulas and a constructor $\langle \cdot \rangle : atm \to frm$ to
inject such atoms into formulas. We then modify the type of the
specification logic $\supset$ connective to $atm \to frm \to frm$ to
enforce the restriction that the left-hand side of an implication is
atomic.

Note that we map specification logic constants to constants of the
same name in the meta-logic. This means, for example, that the
meta-logic will have two constants called $\land$. One will be the
logical connective of \logic with type $o \to o \to o$, and the other
will be a term constructor for representations of specification logic
formulas with type $frm \to frm \to frm$. We will always be able to
distinguish between such constants based on the context in which they
are used.

Our encoding is clearly bijective. Furthermore, typing judgments are
preserved by the bijection in the following sense. Let $\mathcal{K}$
denote the set of meta-logic constants which represent the constants
of the specification logic, then $\Sigma \vdash t : \tau$ is a valid
specification logic typing if and only if $\psi(\Sigma), \mathcal{K}
\vdash \psi(t) : \psi(\tau)$ is a valid meta-logic typing where
$\psi(\Sigma) = \{\psi(h) \mid h \in \Sigma\}$. Since our mapping
$\psi$ is bijective we will use the mapping $\psi^{-1}$ freely.

\subsection{Encoding the Semantics of the Specification Logic}

In the encoding of the semantics of our specification logic, we shall
use two auxiliary notions. First, we introduce a type $nt$ for natural
numbers with the constructors $z : nt$ and $s : nt \to nt$ and the
predicate $\hsl{nat} : nt \to o$ defined by
\begin{align*}
\nat z &\mueq \top &
\nat {(s\ N)} &\mueq \nat N
\end{align*}
As we see below, these numbers will be used to capture the idea of the
height of a derivation in our encoding of the provability relation of
the specification logic. Second, we introduce a type $atmlist$ with
constructors $nil : atmlist$ and the infix $::\ :\ atm \to atmlist \to
atmlist$ and the predicate $\hsl{member} : atm \to atmlist \to o$
defined by
\begin{align*}
\member A (A::L) \mueq \top &&
\member A (B::L) \mueq \member B L
\end{align*}
We shall use lists of this kind and the corresponding membership
predicate to encode the addition to premise sets when trying to prove
implicational formulas in \hh.

We encode \hh provability in \logic through the predicate $\hsl{seq} :
nt \to atmlist \to frm \to o$ that is defined by the clauses in
Figure~\ref{fig:seq}. This encoding of \hh provability derives from
McDowell and Miller \cite{mcdowell02tocl}. As described in
Chapter~\ref{ch:specification-logic}, proofs in \hh contain sequents
of the form $\Sigma : \Delta,{\cal L}\vdash G$ where $\Delta$ is a
fixed set of closed $D$-formulas and $\cal L$ is a varying set of
atomic formulas. The eigenvariables in $\Sigma$ are encoded as nominal
constants in \logic. The meta-logic predicate $\hsl{prog} : atm \to
frm \to o$ is used to represent the $D$-formulas in $\Delta$: the $D$
formula $\forall\vec x.[G_1\supset\cdots\supset G_n\supset A]$ is
encoded as the clause $\forall \vec x. \prog A (G_1\land\cdots\land
G_n) \triangleq \top$ and $\forall \vec{x}. A$ is encoded by the
clause $\forall \vec x. \prog A \top \triangleq \top$. We denote these
{\sl prog} clauses by $\Psi(\Delta)$, and we note that such clauses do
not contain any nominal constants since the formulas of $\Delta$ are
closed. Finally, the \hh sequent is encoded as $\seq N
{\psi(\mathcal{L})} {\psi(G)}$ where we define $\psi$ on lists of
atomic formulas as $\psi(A_n, \ldots, A_1) = A_1 :: \ldots :: A_n ::
nil$. The argument $N$, written as a subscript, roughly corresponds to
the height of the proof tree and is used in inductive arguments. To
simplify notation, we write $L \tridotn n G$ for $\seq n L G$ and $L
\tridot G$ for $\exists n . \nat n \land \seq n L G$. When $L$ is
$nil$ we write simply $\,\tridotn n G$ or $\,\tridot G$.

\begin{figure}[t]
\begin{align*}
& \seq {(s\ N)} L \top \mueq \top \\
& \seq {(s\ N)} L (B \lor C) \mueq \seq N L B \\
& \seq {(s\ N)} L (B \lor C) \mueq \seq N L C \\
& \seq {(s\ N)} L (B \land C) \mueq \seq N L B \land \seq N L C \\
& \seq {(s\ N)} L (A \supset B) \mueq \seq N {(A
  :: L)} B \\
& \seq {(s\ N)} L (\forall B) \mueq \nabla x. \seq N L (B\  x) \\
& \seq {(s\ N)} L (\exists B) \mueq \exists x. \seq N L (B\  x) \\
& \seq {(s\ N)} L \langle A \rangle \mueq \member A L \\
& \seq {(s\ N)} L \langle A \rangle \mueq \exists b. \prog A b
\land \seq N L b
\end{align*}
\caption{Second-order hereditary Harrop logic in \logic}
\label{fig:seq}
\end{figure}

Proofs of universally quantified $G$ formulas in \hh are generic in
nature. A natural encoding of this (object-level) quantifier in the
definition of {\sl seq} uses a (meta-level) $\nabla$-quantifier. In
the case of proving an implication, the atomic assumption is
maintained in a list (the second argument of {\sl seq}). The last
clause for {\sl seq} implements backchaining over a fixed \hh
specification (stored as {\sl prog} atomic formulas). The matching of
atomic judgments to heads of clauses is handled by the treatment of
definitions in the logic \logic, thus the last rule for {\sl seq}
simply performs this matching and makes a recursive call on the
corresponding clause body.

Note that for each specification type $\tau$ we have the constants
$\forall_\tau : (\tau \to frm) \to frm$ and $\exists_\tau : (\tau \to
frm) \to frm$, thus we should have {\sl seq} clauses for each of
these. However, here and going forward, we present only general rules
for $\forall$ and $\exists$, knowing that the actual rules are easily
derived from these.

With this kind of an encoding, we can now formulate and prove in
\logic statements about what is or is not provable in \hh. In
constructing such proofs, we shall sometimes need induction over the
height of derivations. Such arguments can be realized via induction on
the predicate $\nat n$ in a formula of the form $\exists n . \nat n
\land \seq n L G$ occurring on the left of a sequent. We may sometimes
also want to use strong induction in our arguments. Towards this end,
we introduce the auxiliary predicate $\lt : nt \to nt \to o$ defined
as follows.
\begin{align*}
\lt z (s\ N) &\mueq \top \\
\lt {(s\ M)} {(s\ N)} &\mueq \lt M N
\end{align*}
Now, a formula such as $\forall n. (\nat n) \supset P$ can be proven
using strong induction by proving $\forall n, m. (\nat n \land \lt n m
\land \nat m) \supset P$ and using induction on $\nat m$.
Section~\ref{sec:example-two-level} contains an example that uses this
approach. Finally, the $\defL$ rule can be used to realize case
analysis based reasoning in the derivation of an atomic goal. Using
this rule leading eventually to a consideration of the different ways
in which an atomic judgment may have been inferred in the
specification logic.

In the rest of this chapter, we shall conduct all of our reasoning by
constructing derivations in \logic, with the exception of adequacy
arguments where we will need to reason over \logic derivations. Thus,
when we say that ``a formula $F$ is provable'' or that ``a formula $F$
is provable in \logic'', we shall mean that the sequent $\lra F$ is
provable in \logic. Moreover, when we talk about the ``proof of a
formula F'' we shall mean the derivation in \logic of the sequent
$\lra F$. When we say that such proofs are constructed ``by
induction'' we shall mean that we use the $\IL$ rule of \logic with an
induction invariant derived from the entire sequent being considered.
We shall also talk about proving a formula by induction on one of its
hypotheses (\ie, one of its subformulas to the left of a $\supset$) by
which we mean following the constructions for induction described in
Chapter~\ref{ch:architecture}. The construction of the derivations in
\logic is often straightforward, with only a few sequents which may be
interesting, and so we shall frequently skip directly to such
sequents. Finally, we shall often use running text to describe the
construction of a derivation in \logic; this is possible since the
rules of \logic often mimic traditional mathematical reasoning, but it
must be remembered that the proof is still being carried out within
\logic.

Several of the results that we present below concern the provability
of formulas in \logic. While our proofs of these results here involve
arguing about derivations in \logic, it is important to note that
these arguments sketch a scheme for actually carrying out the proof
{\it within} a system such as Abella. Thus, the justification for
using such formulas in subsequent arguments is completely formalized
through actual mechanical proofs and the lemma mechanism of Abella; in
particular, the resulting style of (mechanized) argument does not rely
on the informal proofs we present to justify the approach.

\subsection{Some Provable Properties of the Specification Logic}
\label{sec:deriv-rules-infer}

It is often convenient to reason directly with formulas of the form $L
\tridot G$ rather than expanding them into $\exists n . \nat n \land
\seq n L G$. In this section, we show that certain schematic formulas
corresponding to $\tridot$ judgments are provable in \logic. Using
these as lemmas allows us to encode certain direct forms of reasoning
about $\tridot$ in \logic proofs. The particular formulas that we show
to be provable in \logic closely mirror the clauses which define the
{\sl seq} predicate.

\begin{lemma}
\label{lem:deriv-forward}
The following formulas are provable in \logic.
\begin{enumerate}
\item\label{item:top} $\forall \ell. (\ell \tridot \top)$
\item\label{item:or1} $\forall \ell, g_1, g_2. (\ell \tridot g_1)
\supset (\ell \tridot g_1 \lor g_2)$
\item\label{item:or2} $\forall \ell, g_1, g_2. (\ell \tridot g_2)
\supset (\ell \tridot g_1 \lor g_2)$
\item\label{item:and} $\forall \ell, g_1, g_2. (\ell \tridot g_1)
\land (\ell \tridot g_2) \supset (\ell \tridot g_1 \land g_2)$
\item\label{item:imp} $\forall \ell, a, g. (a::\ell \tridot g) \supset
(\ell \tridot a \supset g)$
\item\label{item:all} $\forall \ell, g. (\nabla x . (\ell \tridot (g\
x)) \supset (\ell \tridot \forall g)$
\item\label{item:ex} $\forall \ell, g, t. (\ell \tridot (g\ t))
\supset (\ell \tridot \exists g)$
\end{enumerate}
\end{lemma}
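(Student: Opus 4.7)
The common strategy for all seven parts is to unfold the abbreviation $L \tridot G \equiv \exists n.\, \nat n \land \seq n L G$ in both the hypotheses and the conclusion, and then close the resulting goal via $\defR$ on the corresponding defining clause of $\seq$. In each case, we extract the natural-number witness $n$ (or witnesses $n_i$) from the hypothesis via $\existsL$ and $\landL$, pick $s\ n$ as the witness for the conclusion using $\existsR$, derive $\nat(s\ n)$ from $\nat n$ by a single unfolding of the defining clauses of $\nat$, and then match the appropriate $\seq$ clause.

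Items 1, 2, 3, 5, and 7 go through directly by this recipe because each conclusion corresponds to a $\seq$ clause whose body mentions at most one recursive $\seq$ call, so the single height extracted from the hypothesis suffices. Item 6, the universal quantifier, is slightly more delicate: the hypothesis has shape $\nabla x.\, \exists n.\, \nat n \land \seq n \ell (g\ x)$ and we need to pull $\exists n$ out past the $\nabla x$ to obtain a single height witness. I would handle this by first establishing the auxiliary fact $\forall n.\, \nat n \supset \nabla x.\, \fresh x n$, proved by an easy induction on $\nat n$ (since $z$ and $s$ introduce no nominal constants), and then using the equivalence $\nabla z.\, \exists x.\, B\ z\ x \equiv \exists x.\, \nabla z.\, B\ z\ x$ from Section 3.5.1 whose side condition is discharged by this freshness fact. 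Once the $\exists n$ is outside, $\defR$ on the $\forall$-clause for $\seq$ closes the goal.

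The main obstacle is item 4, conjunction. Its defining clause $\seq{(s\ N)} L (B \land C) \mueq \seq N L B \land \seq N L C$ forces both subderivations to share the same height $N$, whereas the two hypotheses supply derivations at independent heights $n_1$ and $n_2$. To reconcile this, I would first prove an auxiliary monotonicity lemma
\begin{equation*}
\forall n, m, L, G.\, \seq n L G \supset \lt n m \supset \seq m L G,
\end{equation*}
by induction on $\nat m$ with an inner case analysis on the $\seq n L G$ hypothesis, each case going through because every $\seq$ clause decomposes the height by one successor which can be matched against the predecessor of $m$. With this lemma in hand, the proof of item 4 extracts $n_1, n_2$ from the two hypotheses, lifts both $\seq{n_i} \ell g_i$ derivations to a common upper bound (for concreteness, by iterating the successor on one of them past the other), and then applies $\defR$ on the conjunction clause of $\seq$ at one more successor.
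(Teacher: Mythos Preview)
Your treatment of items 1--3, 5, and 7 matches the paper. Your handling of item 4 is also essentially the paper's argument; the paper additionally records the trichotomy lemma $\forall m, n.\ \nat m \land \nat n \supset (\lt m n) \lor (m = n) \lor (\lt n m)$ to locate the larger of the two heights, and proves monotonicity by induction on the $\lt m n$ hypothesis rather than on $\nat m$ (your stated lemma has no $\nat m$ hypothesis to induct on), but the overall shape is the same.

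There is, however, a genuine gap in your argument for item 6. Your auxiliary fact $\forall n.\ \nat n \supset \nabla x.\ \fresh x n$ is true but does not do the work you need. After applying $\nablaL$ and then $\existsL$ to the hypothesis $\nabla x.\ \exists n.\ (\nat n \land \seq n \ell {(g\ x)})$, the $\existsL$ rule \emph{raises} the fresh eigenvariable over the support of the formula, so you obtain $\nat{(n'\ a)}$ and $\seq{(n'\ a)} \ell {(g\ a)}$ where $a$ is the nominal constant introduced by $\nablaL$. To invoke the equivalence from Section~3.5.1 you would need $\fresh a {(n'\ a)}$, but instantiating your lemma at $n'\ a$ yields only $\nabla x.\ \fresh x {(n'\ a)}$, which speaks about a \emph{different}, freshly introduced nominal constant, not about $a$ itself.

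What is actually required is the lemma with the $\nabla$ on the hypothesis side,
\[
\forall n.\ (\nabla x.\ \nat{(n\ x)}) \supset \exists p.\ n = \lambda y.\ p,
\]
which says that anything satisfying $\hsl{nat}$ cannot genuinely depend on a nominal constant. This is what the paper proves (by induction on $\hsl{nat}$), and it resolves the raising directly: from $\nat{(n'\ a)}$ one concludes $n' = \lambda y.\ p$, whence $n'\ a = p$ with $a \notin \supp(p)$, and the proof finishes. Your intuition that ``$z$ and $s$ introduce no nominal constants'' is exactly right; the point is that this must be formalized with the $\nabla$ \emph{inside} the $\hsl{nat}$ hypothesis, not outside it.
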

\begin{proof}
It is easy to see that the formulas~\ref{item:top}, \ref{item:or1},
\ref{item:or2}, \ref{item:imp}, and \ref{item:ex} are provable in
\logic by unfolding (\ie, using $\defR$ on) the goal formulas.

In the straightforward construction of a proof of
formula~\ref{item:and}, we shall need to construct a proof of the
following sequent.
\begin{equation*}
\nat n, \seq n \ell g_1, \nat m, \seq m \ell g_2 \lra \exists p. \nat
p \land \seq p \ell (g_1 \land g_2).
\end{equation*}
To prove this we must reconcile the measures $n$ and $m$. Towards this
end, we might first show that the following formula that relates $n$
and $m$ is provable in \logic:
\begin{equation*}
\forall m, n. (\nat m) \land (\nat n) \supset (\lt m n) \lor (m = n)
\lor (\lt n m).
\end{equation*}
This can be proved by induction on one of the {\sl nat} hypotheses.
Then we can also prove the following formula which allows us to
increase the measure of a derivation:
\begin{equation*}
\forall m, n, \ell, g . (\lt m n) \land (\ell \tridotn m g) \supset
(\ell \tridotn n g).
\end{equation*}
This is proved by induction on $\lt m n$. Using these two lemmas the
rest of the proof is straightforward.

In constructing a proof of Formula~\ref{item:all} we will find it
necessary to construct a proof of the sequent
\begin{equation*}
\exists n. \nat n \land \seq n \ell (g\ a) \lra \exists m. \nat m \land
\seq m \ell (\forall g).
\end{equation*}
where $a$ is a nominal constant. Now when we apply $\existsL$, we have
the sequent
\begin{equation*}
\nat (n'\ a) \land \seq {(n'\ a)} \ell (g\ a) \lra \exists m. \nat m \land
\seq m \ell (\forall g).
\end{equation*}
The raising of $n'$ over $a$ here prevents this proof from going
through immediately, thus we need the following lemma.
\begin{equation*}
\forall n. (\nabla x. \nat (n\ x)) \supset \exists p. n = \lambda y. p
\end{equation*}
This is proved by induction on {\sl nat}. Once we apply this lemma we
have $n' = \lambda y. p$ for some $p$ and rest of the proof is
straightforward.
\end{proof}

\section[Formalizing Properties of the Specification Logic]{Formalizing Meta-Theoretic Properties of the Specification Logic}
\label{sec:form-meta-theory}

In Section~\ref{sec:prop-spec-logic} we observed certain
meta-theoretic properties of \hh which are useful in reasoning about
\hh specifications. Since we have encoded the entire
specification logic into \logic, we can formalize such properties of
the specification logic within \logic. In particular, we can consider
particular formulas in \logic that encode these properties and then we
can show that these formulas are provable in \logic. Doing this will
allow us to later bring these properties to bear on particular
reasoning tasks that are carried out using \logic. The particular
properties of \hh that we consider in this way in this section are
monotonicity, instantiation, and cut admissibility. With one
exception, the proofs of these properties never use a {\sl prog}
formula except in the initial rule and thus the proofs are independent
of any particular specification encoded in {\sl prog}. The one
exception is specifically noted, and even here the proof is
independent of the specification.

\paragraph{Monotonicity}

The statement of monotonicity for \hh, expressed as a formula of
\logic, is
\begin{equation*}
\forall n, \ell_1, \ell_2, g. (\ell_1 \tridotn n g) \land (\forall e .
\member e \ell_1 \supset \member e \ell_2) \supset (\ell_2 \tridotn n
g).
\end{equation*}
The proof is by straightforward induction on the hypothesis $\nat n$
in $\ell_1 \tridotn n g$.

\paragraph{Instantiation}

The instantiation property recovers the notion of universal
quantification from our representation of the specification logic
$\forall$ using $\nabla$. This property is expressed in \logic through
the formula
\begin{equation*}
\forall \ell, g. (\nabla x. (\ell\  x) \tridotn n (g\  x))
\supset \forall t. (\ell\  t) \tridotn n (g\  t).
\end{equation*}
Stated another way, although $\nabla$ quantification cannot be
replaced by $\forall$ quantification in general, it can be replaced in
this way when dealing with specification judgments. The proof of this
formula is by induction on the hypothesis $\nat n$ in $(\ell\ x)
\tridotn n (g\ x)$, and the following two auxiliary results are useful
in constructing this proof.
\begin{equation*}
\forall \ell, a. (\nabla x. \member {(a\ x)} {(\ell\ x)}) \supset
\forall t. \member {(a\ t)} {(\ell\ t)}
\end{equation*}
\begin{equation*}
\forall a, b. (\nabla x. \prog {(a\; x)} {(b\; x)}) \supset \forall t.
\prog {(a\ t)} {(b\ t)}
\end{equation*}
The first is proved by induction on the {\sl member} hypothesis. The
second depends on the particular specification encoded in {\sl prog},
but the core of the proof is always applying $\defL$ to $\prog {(a\;
  x)} {(b\; x)}$ followed by $\defR$ on $\prog {(a\ t)} {(b\ t)}$.
This will succeed for any specification since {\sl prog} only performs
pattern matching and contains no ``logic.''

\paragraph{Cut admissibility}
The cut admissibility property of \hh is expressed in \logic through
the formula
\begin{tabbing} \qquad
$\forall \ell, a, g. (\ell \tridot \langle a\rangle) \land
(a :: \ell \tridot g) \supset (\ell \tridot g).$
\end{tabbing}
The proof is by induction on the $\nat n$ assumption in $\exists n .
\nat n \land \seq n {(a::\ell)} g$. If $n = z$ then the {\sl seq}
judgment is impossible, thus we know $n = s\ m$ for some $m$. The
proof proceeds by case analysis on the {\sl seq} judgment.
\begin{enumerate}
\item One case is when $g = \langle a' \rangle$ and $\member {a'}
(a::\ell)$. Applying $\defL$ to this {\sl member} hypothesis
results in two additional cases: either
$a = a'$ so that $\ell \tridot \langle a \rangle$
holds by assumption, or we know $\member {a'} \ell$ and thus $\ell
\tridot \langle a' \rangle$ holds by applying $\defR^p$ and {\sl init}.

\item Another case is when $g = a' \supset g'$ so that we have $a' ::
a :: \ell \tridotn m g'$. We then apply the monotonicity property once
to get $a :: a' :: \ell \tridotn m g'$ and another time to get $a' ::
\ell \tridot \langle a \rangle$. Then we can apply the inductive
hypothesis to get $a' :: \ell \tridot g'$ and therefore $\ell \tridot
a' \supset g'$.

\item The remaining cases follow directly from the inductive hypothesis
and the results in Lemma~\ref{lem:deriv-forward}.
\end{enumerate}

\section{An Example of the Two-level Logic Reasoning Approach}
\label{sec:example-two-level}

\begin{figure}[t]
\begin{align*}
&\prog {(\eval {(\abs A R)} {(\abs A R)})} \top \triangleq \top \\
&\prog {(\eval {(\app M N)} V)}
       {(\langle \eval M (\abs A R) \rangle \land
         \langle \eval {(R\ N)} V \rangle)} \triangleq \top \\
&\prog {(\of {(\app M N)} B)}
       {(\langle \of M (\arr A B) \rangle \land
         \langle \of N A \rangle)} \triangleq \top \\
&\prog {(\of {(\abs A R)} (\arr A B))}
       {(\forall x. \of x A \supset \langle \of {(R\ x)} B\rangle)}
       \triangleq \top
\end{align*}
\caption{{\sl prog} clauses for simply-typed $\lambda$-calculus}
\label{fig:example-prog}
\end{figure}

Within this framework of the two-level logic approach to reasoning, we
come back to 
the example of evaluation and typing for the simply-typed
$\lambda$-calculus. We use the \hh specification of these notions
given in Section~\ref{sec:spec-example} which yields the {\sl prog}
clauses shown in Figure~\ref{fig:example-prog}. We can now formalize
the type preservation theorem completely in the meta-logic:
\begin{theorem}\label{thm:example-two-level}
The following formula is derivable in \logic.
\begin{equation*}
\forall e, t, v. (\,\tridot \langle \eval e v \rangle) \land
(\,\tridot \langle \of e t \rangle) \supset (\,\tridot \langle \of v t
\rangle)
\end{equation*}
\end{theorem}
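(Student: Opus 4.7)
The plan is to emulate the informal argument of Theorem~\ref{thm:stlc-sr} entirely inside \logic by inducting on the specification-logic derivation of $\eval e v$. I would begin by introducing the universal quantifiers and the two hypotheses, then unfold the first hypothesis $\,\tridot \langle \eval e v\rangle$ into its defined form $\exists n.\ \nat n \land \seq n {nil} {\langle \eval e v\rangle}$ and extract the witness $n$. Next I apply the annotation-based induction tactic of Section~\ref{sec:induct-co-induct} on $\nat n$, using case analysis on $\,\tridotn n \langle \eval e v\rangle$ to force $n = s\ m$ and then unfold the backchain clause of $\seq{}{}{}$. The only applicable {\sl prog} clauses are those for $\eval$, so exactly two cases remain.

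In the first case $e = v = \abs A R$, so the existing hypothesis $\,\tridot \langle \of e t\rangle$ is literally the goal. In the second case we have $e = \app M N$ together with $\,\tridotn {m'} \langle \eval M (\abs A R)\rangle$ and $\,\tridotn {m'} \langle \eval (R\ N) v\rangle$ for some $A$, $R$ and some $m'$. Case analysis on the typing hypothesis $\,\tridot \langle \of (\app M N) t\rangle$ must come from backchaining over the $\of$-application clause, yielding $\,\tridot \langle \of M (\arr B t)\rangle$ and $\,\tridot \langle \of N B\rangle$ for some $B$. Applying the induction hypothesis to the first eval derivation and the typing of $M$ produces $\,\tridot \langle \of (\abs A R) (\arr B t)\rangle$. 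A further case analysis on this — necessarily a backchain on the $\of$-abstraction clause — forces $B = A$ and yields a hypothesis of the form $\nabla x.\, ((\of x A :: nil) \tridot \langle \of (R\ x) t\rangle)$.

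At this point I invoke the two specification-logic lemmas established in Section~\ref{sec:form-meta-theory}. The instantiation lemma applied to the $\nabla$-generic judgment, with the term $N$ substituted for the $\nabla$-bound variable, produces $(\of N A :: nil) \tridot \langle \of (R\ N) t\rangle$. Cut admissibility applied to this together with $\,\tridot \langle \of N A\rangle$ discharges the assumption and yields $\,\tridot \langle \of (R\ N) t\rangle$. A second use of the induction hypothesis, now on the second eval derivation together with this typing, produces $\,\tridot \langle \of v t\rangle$, completing the case.

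The main obstacle is ensuring that the annotation discipline admits both uses of the induction hypothesis: the recursive $\tridotn{m'}$ occurrences of $\langle \eval \cdot\ \cdot\rangle$ produced by case analysis must inherit the $*$ annotation from the original $(\,\tridotn n \langle \eval e v\rangle)^@$, which is exactly what the scheme of Section~\ref{sec:induct-co-induct} guarantees. A secondary point requiring care is the treatment of the nominal constant $c$ introduced by $\nabla$ when stripping the abstraction typing: its non-occurrence in $A$, $R$, $N$, and $t$ follows from the $\nabla\mathcal{L}$ and backchain rules and from the restriction that bodies of $\prog$ clauses contain no nominal constants, and this is what legitimises substituting $N$ for $c$ via instantiation. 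With these two book-keeping issues handled, the proof is a faithful replay of the informal argument, with the object-level substitution lemma replaced entirely by the generic instantiation-plus-cut combination on $\tridot$ supplied by the two-level logic approach.
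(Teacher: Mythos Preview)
Your proposal is correct and follows the same high-level line as the paper's proof: replay the informal argument of Theorem~\ref{thm:stlc-sr} inside \logic, using case analysis on the encoded {\sl eval} derivation and invoking the formalized instantiation and cut-admissibility lemmas from Section~\ref{sec:form-meta-theory} in place of an object-level substitution lemma. The treatment of the two {\sl eval} cases, the unfolding of the abstraction typing, and the final two uses of the induction hypothesis all match.

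Where you differ is in how the induction itself is set up. The paper does not use the annotation scheme here; instead it explicitly generalizes the statement to
\[
\forall e, t, v, i, j.~ (\nat j) \land (\lt i j) \land (\seq i {nil}\,\langle \eval e v \rangle) \land (\,\tridot \langle \of e t \rangle) \supset (\,\tridot \langle \of v t \rangle)
\]
and performs ordinary induction on $\nat j$, so that the hypothesis is available for \emph{any} $i$ with $\lt i j'$. This makes the strong-induction structure fully explicit at the level of raw \logic\ rules. Your approach instead relies on the annotation machinery of Chapter~\ref{ch:architecture}; note that you actually need the delayed-application extension of Section~\ref{sec:non-immed-induct}, not just the basic scheme of Section~\ref{sec:induct-co-induct}, since unfolding the {\sl seq} clauses for backchain and conjunction consumes two successors before you reach the recursive {\sl eval} judgments. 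Both routes are sound; the paper's version shows the underlying \logic\ derivation directly, while yours is closer to what one would actually do in Abella (cf.\ Section~\ref{sec:arch-two-level}).
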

\begin{proof}
The informal argument for the proof of type preservation presented in
Section~\ref{sec:spec-example} is based on strong induction over the
height of \hh derivations. We will now show how we can mimic that same
style of induction in \logic. We first generalize the formula we want
to prove to the following.
\begin{equation*}
\forall e, t, v, i, j. (\nat j) \land (\lt i j) \land (\seq i {nil}
\langle \eval e v \rangle) \land (\,\tridot \langle \of e t \rangle)
\supset (\,\tridot \langle \of v t \rangle)
\end{equation*}
If we prove this generalization, then we can use the $\cut$ rule to
bring it in as a hypothesis in a proof of the original formula. The
resulting sequent will then be easily provable. To prove the
generalization, we use induction on $\nat j$. In the case where $j =
z$, the proof is trivial since $\lt i z$ is unsatisfiable. In the
other case we have $j = s\ j'$ and we know the result holds for any
$i$ such that $\lt i j'$. In this way, we can completely handle the
strong induction within our logic.

The rest of proof of the generalization closely follows the informal
argument with only the following points worthy of note.

Case analysis on specification judgments in the informal argument is
realized in the construction of a derivation in \logic by using
$\defL$ twice. Specifically, if we want to do case analysis on a
derivation such as $\seq i {nil} \langle \eval e v \rangle$ then we
apply $\defL$ which results in two cases. The first is that $\member
{(\eval e v)} nil$ holds which is impossible. The second is that
$\exists b. \prog {(\eval e v)} b \land \seq {i'} {nil} b$ holds for
some $i'$ such that $i = s\ i'$. Then we can apply $\defL$ on $\prog
{(\eval e v)} b$ which gives us the two cases corresponding to the
clauses for forming {\sl eval} judgments.

The instantiation and cut admissibility properties of our
specification logic which are used the informal argument are now
formal lemmas which are applied in this proof. Thus the entire proof
is formally constructed within \logic while still using meta-theoretic
properties of \hh.
\end{proof}

\section{Architecture of a Two-level Logic Based Theorem Prover}
\label{sec:arch-two-level}

The architecture of the Abella theorem prover for \logic presented in
Chapter~\ref{ch:architecture} can be naturally extended to support the
two-level logic approach to reasoning that is the topic of discussion
in this current chapter. In fact, the Abella system already
incorporates such an extension \cite{gacek-abella-website}. In this
section we briefly describe the architectural changes which facilitate
this support. Most of these changes can be motivated from the type
preservation example shown in the previous section which we will refer
to as simply ``the example.''

The first step in the two-level logic approach to reasoning is
encoding a specification into the proper {\sl prog} statement. Abella
facilitates this by reading specifications written in the subset of
$\lambda$Prolog which corresponds to \hh. In this way, the
specifications used by Abella are directly executable by
$\lambda$Prolog implementations such as Teyjus without the potentially
error-prone need to translate between different input languages.

To reduce syntactic overhead associated with the two-level logic
approach to 
reasoning, Abella has specialized syntax for representing judgments of
the form $\ell \tridot g$. Direct reasoning on these judgments is
enabled by incorporating the derived rules of inference from
Section~\ref{sec:deriv-rules-infer}. Case analysis on judgments of the
form $\ell \tridot g$ in Abella corresponds to applying $\defL$ to
underlying the {\sl seq} judgment followed by applying $\defL$ to the
resulting {\sl prog} judgment. Trivial cases such as $\member E nil$
are handled automatically. Thus much of the overhead which is shown in
the example is hidden when working with Abella.

The monotonicity, instantiation, and cut-admissibility properties of
the specification logic (Section~\ref{sec:form-meta-theory}) are
incorporated into Abella in the form of tactics. Moreover, the
monotonicity property is incorporated into some other existing tactics
since it seems to be used most often. For example, when determining if
$\ell \tridot g$ implies $\ell' \tridot g$ the system checks if $\ell$
is an obvious subset of $\ell'$. Such checks arise often, for example,
when applying a lemma to hypotheses.

Abella simulates strong induction on \hh derivations using the
technique shown in the example. In general, the induction tactic
applied to a judgment of the form $\ell \tridot g$ is treated as
strong induction on the underlying measure. This is approximated using
the annotation based treatment of induction from
Section~\ref{sec:induct-co-induct} applied directly to specification
judgments. This has the benefit of removing much of the tedious
reasoning about natural numbers which would otherwise clutter a proof.
As an example of this annotation based treatment, suppose we want to
prove a formula of the form
\begin{equation*}
\forall \vec{x}.~ (\ell \tridot g) \supset F.
\end{equation*}
Then the induction scheme creates the following inductive hypothesis
and goal, respectively:
\begin{align*}
\forall \vec{x}.~ (\ell \tridot g)^* \supset F &&
\forall \vec{x}.~ (\ell \tridot g)^@ \supset F.
\end{align*}
Eventual case analysis on $(\ell \tridot g)^@$ results in recursive
judgments of the form $(\ell' \tridot g')^*$ which are subject to the
inductive hypothesis. The monotonicity and instantiation properties of
the specification logic preserve the height of \hh derivations, and
thus tactics which implement them preserve induction annotations as
well (since induction is being carried out on the underlying height
measure). Finally, suppose we want to deal with mutual induction on
specification judgments. For example, suppose we have a goal of the
form
\begin{equation*}
(\forall \vec{x}_1.~ (\ell_1 \tridot g_1) \supset F_1) \land
(\forall \vec{x}_2.~ (\ell_2 \tridot g_2) \supset F_2).
\end{equation*}
We can perform induction on both of the specification judgments
simultaneously by instead considering the following goal
\begin{equation*}
\forall n. \nat n \supset (\forall \vec{x}_1.~ (\ell_1 \tridotn n g_1)
\supset F_1) \land (\forall \vec{x}_2.~ (\ell_2 \tridotn n g_2)
\supset F_2),
\end{equation*}
and performing induction on $\nat n$. Once this new goal is proven,
the original is an easy consequence. We extend the annotation based
treatment of induction to treat this kind of mutual induction
directly. Specifically, it creates the following two inductive
hypotheses
\begin{align*}
(\forall \vec{x}_1.~ (\ell_1 \tridot g_1)^* \supset F_1) &&
(\forall \vec{x}_2.~ (\ell_2 \tridot g_2)^* \supset F_2),
\end{align*}
and the goal becomes
\begin{equation*}
(\forall \vec{x}_1.~ (\ell_1 \tridot g_1)^@ \supset F_1) \land
(\forall \vec{x}_2.~ (\ell_2 \tridot g_2)^@ \supset F_2).
\end{equation*}
The proof then proceeds as normal. When case analysis is performed on
a judgment with a $@$ annotation, the recursive calls will have the
$*$ annotation and thus be candidates for either of the inductive
hypotheses.

\section{Adequacy for the Two-level Logic Approach to Reasoning}
\label{sec:adequacy-g}

Adequacy within the framework based on the two-level logic approach to
reasoning has three components:
\begin{enumerate}
\item Our encoding of the object system into \hh must be adequate.
\item Our encoding of \hh into \logic must be adequate.
\item We must show that information about object system
properties can be extract from theorems in \logic via the two
encodings.
\end{enumerate}
The first component is particular to the object system of interest.
For example, adequacy for the \hh encoding of evaluation and typing
for the simply-typed $\lambda$-calculus was shown in
Section~\ref{sec:spec-adequacy}. In the current section we are primary
concerned with latter two components which deal with adequacy relative
to \logic. The second component is a general result about \hh and its
encoding in the predicate {\sl seq} (we shall often call this simply
``the adequacy of {\sl seq}''). The proof of this result is carried
out in the next subsection, and it never needs to be changed since \hh
and {\sl seq} are fixed. The last component of adequacy is particular
to the theorems of interest, and in Section~\ref{sec:adequacy-seq-sr}
we show this adequacy for the example of type preservation for the
simply-typed $\lambda$-calculus.

There is some difficulty in establishing adequacy relative to \logic.
When we represent objects in \logic we usually denote bound variables
using $\lambda$-terms and free variables using nominal constants.
Then, when we quantify over such objects, we are usually interested
only in objects whose free variables are restricted to a particular
set (\eg, we may care only about closed objects). The $\forall$ and
$\exists$ quantifiers of \logic, however, allow nominal constants to
appear freely in the terms that instantiate them. There are two ways
to address this mismatch (without modifying the logic \logic). The
first is to define an explicit typing of objects (\eg, through a
predicate $\hsl{typeof}\ L\ T\ A$ where $L$ is a context of nominal
constants), and to attach this typing judgment to all quantified
variables. This is a very heavy approach and requires explicitly
maintaining a context of which nominal constants are allowed to appear
in objects. An alternative approach, and the one we use to establish
the adequacy of {\sl seq} in the next subsection, is to restrict the
use of nominal constants in such a way that adequacy can still be
established. How exactly this is done depends on the particular system
of interest and how nominal constants are treated by it. In the case
of {\sl seq} we know that nominal constants can always be
instantiated, thus the only restriction we need is that nominal
constants are allowed only at inhabited types.

\subsection{Adequacy of Encoding of the Specification Logic}
\label{sec:adequacy-seq}

We now show that our encoding of the specification logic \hh in the
definition of {\sl seq} and {\sl prog} is adequate. The critical
aspect of this result is showing that theoremhood in the two systems
is preserved under an appropriate mapping.

\begin{theorem}\label{thm:seq-adequacy}
Let $\Delta$ be a list of closed $D$-formulas, $\mathcal{L}$ a list of
atoms, $G$ a $G$-formula, and $\Sigma$ a set of eigenvariables
containing at least the free variables of $\Delta$, $\mathcal{L}$, and
$\mathcal{G}$. Suppose that all non-logical specification logic
constants and types are represented by equivalent constants and types
in \logic. Suppose also that specification logic
$\forall$-quantification (eigenvariables) and meta-logic
$\nabla$-quantification (nominal constants) are allowed only at
inhabited types. Then $\Sigma : \Delta, \mathcal{L} \vdash G$ has a
derivation in \hh if and only if $\psi(\mathcal{L}) \tridot \psi(G)$
is provable in \logic with the clauses for {\sl nat}, {\sl member},
and {\sl seq} as stated before and the clauses for {\sl prog} as given
by $\Psi(\Delta)$.
\end{theorem}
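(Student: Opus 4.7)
The plan is to establish both directions by induction on proof height in the respective systems. For the forward direction (\hh to \logic), I would induct on the height of the \hh derivation $\Pi$ of $\Sigma : \Delta, \mathcal{L} \vdash G$. The nine defining clauses of {\sl seq} mirror the seven \hh inference rules almost one-for-one, so each case reduces to unfolding the appropriate {\sl seq} clause (and, for BACKCHAIN, also the matching {\sl prog} clause in $\Psi(\Delta)$) and then appealing to the inductive hypothesis. The bijection between \hh eigenvariables and \logic nominal constants handles the GENERIC case, where $\nablaR$ introduces the nominal constant $a_h$ corresponding to the fresh eigenvariable $h$; and the fact that $\psi$ commutes with substitution handles INSTANCE and BACKCHAIN. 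The main subtlety is reconciling the measure $n$ when the \hh rule has multiple premises (as in AND or BACKCHAIN): for this I would first establish, by induction on \hsl{lt}, that {\sl seq} is monotone in its first argument, and then take the maximum of the premise measures, exactly as in Lemma~\ref{lem:deriv-forward}.

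For the backward direction, I would first invoke cut-elimination for \logic (Theorem~\ref{thm:cut-elim}) to obtain a cut-free proof of ${} \lra \psi(\mathcal{L}) \tridot \psi(G)$, then induct on the height of this proof. Such a proof must end by introducing the outermost existential and conjunction and establishing $\hsl{nat}\;n$, then unfolding the atomic {\sl seq} judgment via $\defR$ using one of its nine defining clauses. Each clause provides the premises needed to reconstruct a corresponding \hh rule: $\top$ gives TRUE, the two $\lor$-clauses give $\OR_1$ and $\OR_2$, the $\land$-clause gives AND, the $\supset$-clause gives AUGMENT, the $\forall$-clause gives GENERIC (with the fresh $a$ introduced by $\nablaR$ translated to a fresh eigenvariable $h_a$), the $\exists$-clause gives INSTANCE, and the two atomic clauses give either the axiom case (for {\sl member}) or BACKCHAIN (for {\sl prog}, after analyzing which clause of $\Psi(\Delta)$ was matched). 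In each case the inductive hypothesis yields \hh derivations of the premises, which are then assembled by the reconstructed rule.

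The hard part will be the quantifier cases of the backward direction, where one must carefully manage the correspondence between the \hh signature and the set of nominal constants occurring in the \logic proof. For the $\exists$ case, $\existsR$ in \logic provides a term $t$ that may contain nominal constants and is typed against $\Sigma, \mathcal{K}, \mathcal{C}$; translating $t$ back to a specification-logic term $\psi^{-1}(t)$ requires that every nominal constant occurring in $t$ correspond, under the bijection, to an eigenvariable present in $\Sigma$. Nominal constants extraneous to the goal can be eliminated by the strengthening equation for $\nabla$, but for those that are actually referenced, the restriction that $\nabla$- and $\forall$-quantification is allowed only at inhabited types is essential: it guarantees that the \hh typing judgment $\Sigma \vdash \psi^{-1}(t) : \psi^{-1}(\tau)$ remains derivable. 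The $\forall$ case poses the dual issue: the side-condition $a \notin \supp(B)$ on $\nablaR$ must be matched to the side-condition $c \notin \Sigma$ on GENERIC, which follows by maintaining throughout the induction the invariant that the nominal constants in the support of the current sequent are precisely the images of the eigenvariables in the current $\Sigma$. Finally, the {\sl prog} subcase of the atomic backchaining step reduces, by the pattern-based form of $\Psi$, to selecting exactly one clause of $\Delta$ and producing the BACKCHAIN substitution instance; this step is routine but depends on the aforementioned bijective correspondence between clauses of $\Delta$ and clauses of $\Psi(\Delta)$.
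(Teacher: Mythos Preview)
Your overall structure matches the paper's, and the forward direction is essentially identical. There are, however, two issues in the backward direction worth flagging.

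First, a minor point: the paper does not induct on the height of the cut-free \logic proof. Instead it strips off the outer $\exists n.\,\hsl{nat}\;n \land \cdots$ once, and then inducts on the numeral $n$ itself (the explicit measure that {\sl seq} carries). Your plan can be made to work, but it would force you to reformulate the inductive hypothesis in terms of raw {\sl seq} judgments anyway, since after one $\defR$ unfolding you no longer have a $\tridot$-shaped conclusion.

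Second, and more substantively, your handling of the $\exists$ case has a gap. You propose to eliminate extraneous nominal constants in the witness $t$ via $\nabla$-strengthening, and to rely on the inhabited-types restriction to ensure $\Sigma \vdash \psi^{-1}(t):\tau$. Neither step does what you need. Strengthening for $\nabla$ is a statement about the equivalence of \emph{formulas}, not about rewriting the term chosen by $\existsR$; and if $t$ mentions a nominal constant $a$ with $h_a \notin \Sigma$, the \hh typing judgment $\Sigma \vdash \psi^{-1}(t):\tau$ simply fails, inhabited types or not. The paper's fix is different and uses a result you do not mention: apply the inductive hypothesis anyway, obtaining an \hh derivation over an \emph{enlarged} signature $\Sigma'$, and then invoke the instantiation property of \hh (the lemma in Section~\ref{sec:prop-spec-logic} that $\Sigma\cup\{c\}:\Delta\vdash G$ implies $\Sigma:\Delta[t/c]\vdash G[t/c]$) to replace each extra eigenvariable by a closed inhabitant of its type. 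This is where the inhabited-types hypothesis is actually discharged. The same manoeuvre is needed in the {\sl prog}/BACKCHAIN case, which you describe as routine: the matching substitution $\vec{t}$ for the clause head may likewise introduce stray nominal constants, and the paper repeats the enlarge-then-instantiate argument there, including a small check that the head equation $A = A'\,\vec{s}$ survives the instantiation.
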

\begin{proof}
Note that in this proof we will desugar the representation of
quantification and substitution in the specification logic.

{\bf Forward direction.} Given a derivation of $\Sigma :
\Delta, \mathcal{L} \vdash G$ in \hh, we will construct a proof of
$\psi(\mathcal{L}) \tridot \psi(G)$ in \logic. The construction uses
structural induction on the \hh derivation and proceeds by cases on
the last rule used in the derivation.
\begin{enumerate}
\item Suppose the derivation ends with $\OR_1$:
\begin{equation*}
\infer[\OR_1]
      {\Sigma : \Delta, \mathcal{L} \vdash G_1 \lor G_2}
      {\Sigma : \Delta, \mathcal{L} \vdash G_1}
\end{equation*}
By the inductive hypothesis we know $\psi(\mathcal{L}) \tridot
\psi(G_1)$ is provable in \logic. Then we know $\psi(\mathcal{L})
\tridot \psi(G_1 \lor G_2)$ using the appropriate formula from
Lemma~\ref{lem:deriv-forward}.

\item Suppose the derivation ends with $\TRUE$, $\OR_2$, $\AND$, or
$\AUGMENT$: these cases are similar to the previous one.

\item Suppose the derivation ends with $\GENERIC$:
\begin{equation*}
\infer[\GENERIC]
      {\Sigma : \Delta, \mathcal{L} \vdash \forall G'}
      {\Sigma, c : \Delta, \mathcal{L} \vdash G'\ c}
\end{equation*}
By the inductive hypothesis we know $\psi(\mathcal{L}) \tridot
\psi(G'\ c)$ is provable in \logic. We also know $\psi(G'\ c) =
\psi(G')\ a_c$ where $a_c$ is a nominal constant not in $\psi(\Sigma)$
(and therefore not occurring in $\psi(\mathcal{L})$ or $\psi(G')$).
Thus we know there is a proof of $\nabla x. (\psi(\mathcal{L}) \tridot
(\psi(G')\ x))$. Using the appropriate formula from
Lemma~\ref{lem:deriv-forward}, there must be a proof of
$\psi(\mathcal{L}) \tridot \forall \psi(G')$.

\item Suppose the derivation ends with $\INSTANCE$:
\begin{equation*}
\infer[\INSTANCE]
      {\Sigma : \Delta, \mathcal{L} \vdash \exists_\tau G'}
      {\Sigma : \Delta, \mathcal{L} \vdash G'\ t}
\end{equation*}
By the inductive hypothesis we know $\psi(\mathcal{L}) \tridot
\psi(G'\ t)$ is provable in \logic. We also know $\psi(G'\ t) =
\psi(G')\ \psi(t)$. Using the appropriate formula from
Lemma~\ref{lem:deriv-forward}, there must be a proof of
$\psi(\mathcal{L}) \tridot \exists \psi(G')$.

\item Suppose the derivation ends with $\BACKCHAIN$:
\begin{equation*}
\infer[\BACKCHAIN]
      {\Sigma : \Delta, \mathcal{L} \vdash A}
      {\Sigma : \Delta, \mathcal{L} \vdash G_1\ \vec{t} &
       \cdots &
       \Sigma : \Delta, \mathcal{L} \vdash G_m\ \vec{t}}
\end{equation*}
where $\forall \vec{x} . (G_1\ \vec{x} \supset \cdots \supset G_m\
\vec{x} \supset A'\ \vec{x}) \in \Delta, \mathcal{L}$ and $A'\ \vec{t}
= A$. We distinguish two cases based on whether the formula is in
$\Delta$ or in $\mathcal{L}$.
\begin{enumerate}
\item Suppose $\forall \vec{x} . (G_1\ \vec{x} \supset \cdots \supset
G_m\ \vec{x} \supset A'\ \vec{x}) \in \Delta$. Then we must have the
following clause.
\begin{equation*}
\forall \vec{x}. \prog {(\psi(A')\ \vec{x})} (\psi(G_1)\ \vec{x} \land
\cdots \land \psi(G_m)\ \vec{x}) \triangleq \top
\end{equation*}
By the inductive hypothesis we have a proof of $\psi(\mathcal{L})
\tridot \psi(G_i\ \vec{t})$ for each $i \in \{1,\ldots,m\}$. By
repeatedly using the appropriate formula from
Lemma~\ref{lem:deriv-forward} we can construct a proof of
$\psi(\mathcal{L}) \tridot (\psi(G_1\ \vec{t}) \land \cdots \land
\psi(G_m\ \vec{t}))$, which we can write as $\psi(\mathcal{L}) \tridot
(\psi(G_1)\ \overrightarrow{\psi(t)}) \land \cdots \land \psi(G_m)\
\overrightarrow{\psi(t)})$. Finally we know $\psi(A) = \psi(A'\
\vec{t}) = \psi(A')\ \overrightarrow{\psi(t)}$. Thus we know $\exists
b. \prog {\psi(A)} b \land (\psi(\mathcal{L}) \tridot b)$ and we can
construct a proof of $\psi(\mathcal{L}) \tridot \langle
\psi(\mathcal{A}) \rangle$.

\item Suppose $\forall \vec{x} . (G_1\ \vec{x} \supset \cdots \supset
G_m\ \vec{x} \supset A'\ \vec{x}) \in \mathcal{L}$. Since $\mathcal{L}$
contains only atoms we must have $A = A'$ and thus $A \in \mathcal{L}$.
Then $\member {\psi(A)} {\psi(\mathcal{L})}$ is provable and thus so
is $\psi(\mathcal{L}) \tridot \langle \psi(\mathcal{A}) \rangle$.
\end{enumerate}
\end{enumerate}

{\bf Backward direction.} It suffices to show if $\nat (s\ n)$ and
$\seq {(s\ n)} {\psi(\mathcal{L})} {\psi(G)}$ have cut-free proofs in
\logic, then we can construct a derivation of $\Sigma : \Delta,
\mathcal{L} \vdash G$ in \hh for any $\Sigma$ which contains at least
the eigenvariables of $\mathcal{L}$ and $G$. The proof is by induction
on the natural number denoted by $(s\ n)$ (which we know is a natural
number since $\nat (s\ n)$ has a proof). This proof will always end
with $\defR^p$ (or can be seen to) and we will consider cases based on
the definitional clause used in this rule.
\begin{enumerate}
\item The cases for the first five clauses of {\sl seq} are all
similar and thus we will consider just one instance. Suppose the
cut-free proof ends with,
\begin{equation*}
\infer[\defR^p]
      {\lra \seq {(s\ n)} {\psi(\mathcal{L})} {(\psi(G_1) \lor
          \psi(G_2))}}
      {\lra \seq n {\psi(\mathcal{L})} {\psi(G_1)}}
\end{equation*}
By the inductive hypothesis we know there is a derivation of $\Sigma :
\Delta, \mathcal{L} \vdash G_1$ and we can construct the following.
\begin{equation*}
\infer[\OR_1]
      {\Sigma : \Delta, \mathcal{L} \vdash G_1 \lor G_2}
      {\Sigma : \Delta, \mathcal{L} \vdash G_1}
\end{equation*}

\item Suppose the cut-free proof ends with,
\begin{equation*}
\infer[\defR^p]
      {\lra \seq {(s\ n)} {\psi(\mathcal{L})} {(\forall \psi(G'))}}
      {\infer[\nablaR]
             {\lra \nabla x. \seq n {\psi(\mathcal{L})} {(\psi(G')\ x)}}
             {\lra \seq n {\psi(\mathcal{L})} {(\psi(G')\ a)}}}
\end{equation*}
Since $\psi(G')\ a = \psi(G'\ h_a)$ we know from the inductive
hypothesis that there is a derivation of $\Sigma, h_a : \Delta,
\mathcal{L} \vdash G'\ h_a$. Thus we can construct the following.
\begin{equation*}
\infer[\GENERIC]
      {\Sigma : \Delta, \mathcal{L} \vdash \forall G'}
      {\Sigma, h_a : \Delta, \mathcal{L} \vdash G'\ h_a}
\end{equation*}

\item Suppose the cut-free proof ends with,
\begin{equation*}
\infer[\defR^p]
      {\lra \seq {(s\ n)} {\psi(\mathcal{L})} {(\exists_\tau \psi(G'))}}
      {\infer[\existsR]
             {\lra \exists_\tau x. \seq n {\psi(\mathcal{L})} {(\psi(G')\ x)}}
             {\mathcal{C}, \mathcal{K} \vdash t : \tau &
              \lra \seq n {\psi(\mathcal{L})} {(\psi(G')\ t)}}}
\end{equation*}
Now $t$ may contain any nominal constants and therefore $t' =
\psi^{-1}(t)$ may contain eigenvariables not in $\Sigma$. Thus when we
apply the inductive hypothesis to $\seq n {\psi(\mathcal{L})}
{\psi(G'\ t')}$ we get a derivation of $\Sigma' : \Delta, \mathcal{L}
\vdash G'\ t'$ where $\Sigma'$ may contain additional eigenvariables.
To reconcile this, we use the restriction that eigenvariables
are allowed only at inhabited types. For each eigenvariable in $t'$ and
not in $\Sigma$, we select an inhabitant of the corresponding type and
substitute it for the eigenvariable using the instantiation property
of \hh. Since these eigenvariables do not occur in $\Sigma$, they also
do not occur in $\mathcal{L}$ or $G$ and therefore the instantiations
affect only $t'$. Thus the result of all these instantiations is a
derivation of $\Sigma : \Delta, \mathcal{L} \vdash G'\ t''$ for some
$t''$. Then we can construct the following.
\begin{equation*}
\infer[\INSTANCE]
      {\Sigma : \Delta, \mathcal{L} \vdash \exists G'}
      {\Sigma : \Delta, \mathcal{L} \vdash G'\ t''}
\end{equation*}

\item Suppose the cut-free proof ends with,
\begin{equation*}
\infer[\defR^p]
      {\lra \seq {(s\ n)} {\psi(\mathcal{L})} {\langle \psi(A) \rangle}}
      {\lra \member {\psi(A)} \psi(\mathcal{L})}
\end{equation*}
Then it must be that $A \in \mathcal{L}$, and so we can construct the
following.
\begin{equation*}
\infer[\BACKCHAIN]
      {\Sigma : \Delta, \mathcal{L} \vdash A}
      {}
\end{equation*}

\item Suppose the cut-free proofs ends with,
\begin{equation*}
\infer
 [\defR^p]
 {\lra \seq {(s\ n)} {\psi(\mathcal{L})} {\langle \psi(A) \rangle}}
 {\infer
   [\existsR]
   {\lra \exists b. \prog {\psi(A)} b \land \seq n {\psi(\mathcal{L})}
     b}
   {\infer
     [\landR]
     {\lra \prog {\psi(A)} b \land \seq n {\psi(\mathcal{L})} b}
     {\infer[\defR^p]{\lra \prog {\psi(A)} b}{} &
      \lra \seq n {\psi(\mathcal{L})} b}}}
\end{equation*}
for some instantiation of $b$. Suppose also that $\prog {\psi(A)} b$
holds by matching with some clause,
\begin{equation*}
\forall \vec{x}. \prog {(\psi(A')\ \vec{x})} (\psi(G_1)\ \vec{x} \land
\cdots \land \psi(G_m)\ \vec{x}) \triangleq \top.
\end{equation*}
Then we know $\forall \vec{x} . (G_1\ \vec{x} \supset \cdots \supset
G_m\ \vec{x} \supset A'\ \vec{x}) \in \Delta$. From matching with the
{\sl prog} clause we know there exists $\vec{t}$ such that $\psi(A) =
\psi(A')\ \vec{t}$, so let $\vec{s} = \psi^{-1}(\vec{t})$. Then $b$ is
$\psi(G_1\ \vec{s}) \land \cdots \land \psi(G_m\ \vec{s})$ and we have
proofs of $\seq n {\psi(\mathcal{L})} \psi(G_i\ \vec{s})$ for each $i
\in \{1,\ldots,m\}$. By the inductive hypothesis we have derivations
of $\Sigma' : \Delta, \mathcal{L} \vdash G_i\ \vec{s}$ where $\Sigma'$
contains the eigenvariables of $\mathcal{L}, G_1, \ldots, G_m$, and
$\vec{s}$. Note that as was the case for the {\sl seq} rule governing
the existential quantifier, $\Sigma'$ may contain some eigenvariables
from $\vec{s}$ which do not occur in $\Sigma$. As with that case, we
can use the restriction on specification logic eigenvariables to
instantiate all such eigenvariables with inhabitants therefore
yielding derivations $\Sigma : \Delta, \mathcal{L} \vdash G_i\
\vec{r}$ where $\vec{r}$ is the result of the instantiations on
$\vec{s}$. Finally, we know $A = A'\ \vec{s}$ but we need to know $A =
A'\ \vec{r}$. Note that $A'$ contains no eigenvariables and the
eigenvariables of $A$ are a subset of $\Sigma$, thus the
eigenvariables in $\vec{s}$ but not in $\Sigma$ play no role in the
equality $A = A'\ \vec{s}$. Therefore instantiating those
eigenvariables does not change the equality and we have $A = A'\
\vec{r}$. Thus we can construct the following.
\begin{equation*}
\infer[\BACKCHAIN]
      {\Sigma : \Delta, \mathcal{L} \vdash A}
      {\Sigma : \Delta, \mathcal{L} \vdash G_1[\vec{r}/\vec{x}] &
       \cdots &
       \Sigma : \Delta, \mathcal{L} \vdash G_m[\vec{r}/\vec{x}]} \qedhere
\end{equation*}
\end{enumerate}

\end{proof}

Note that this theorem restricts the definitions of the predicates
{\sl nat}, {\sl member}, {\sl seq}, and {\sl prog}, but makes no
explicit reference to other predicates. Indeed, the definitions of
other predicates have no affect on the adequacy of the encoding of the
specification logic. Additionally, \logic may make use of additional
constants and types which are unconnected to the constants and types
used to represent the specification logic without affecting the
adequacy of the encoding.

Another point of interest is the following condition of the previous
theorem: specification logic $\forall$-quantification and meta-logic
$\nabla$-quantification are allowed only at inhabited types. This
condition arises because we have chosen to do a shallow encoding of
the typing judgment of the specification logic. That is, rather than
encode an explicit typing judgment for specification logic terms, we
have instead relied on the typing judgment of \logic to enforce the
well-formedness of terms. Due to the lack of restrictions on the
occurrences of nominal constants, the typing judgment in \logic is more
permissive than the specification logic typing. As the previous
theorem shows, however, this difference only manifests itself for
uninhabited types. A deeper encoding involving an explicit typing
judgment would avoid this condition, but would also impose some
overhead additional costs in terms of reasoning about and through the
encoding. We find the shallow encoding to be a good balance in
practice.

\subsection{Adequacy of Type Preservation Example}
\label{sec:adequacy-seq-sr}

We can now use our adequacy results to extract a proof of type
preservation for the simply-typed $\lambda$-calculus from the proof of
its encoding in \logic.

\begin{theorem}
\label{thm:sr-roundtrip}
If $t \Downarrow v$ and $\vdash t : a$ then $\vdash v : a$.
\end{theorem}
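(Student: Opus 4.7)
The plan is to combine the two adequacy results with the already-proved meta-logic theorem via a round-trip through the encodings. Specifically, I would start with the object-level hypotheses $t \Downarrow v$ and $\vdash t : a$, then push them through the \hh-encoding bijection $\phi$ of Section~\ref{sec:spec-adequacy} to obtain \hh derivations, then push those through the $\psi$ encoding of {\sl seq} to obtain \logic proofs, apply Theorem~\ref{thm:example-two-level}, and finally reverse both translations.

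In more detail: first, by the forward direction of adequacy for the \hh encoding of the simply-typed $\lambda$-calculus (Section~\ref{sec:spec-adequacy}), from $t \Downarrow v$ and $\vdash t : a$ we obtain \hh derivations of $\Delta \vdash \eval {\phi(t)} {\phi(v)}$ and $\Delta \vdash \of {\phi(t)} {\phi(a)}$, where $\Delta$ is the clause set of Figure~\ref{fig:stlc-hh}. Next, applying the forward direction of Theorem~\ref{thm:seq-adequacy} (with empty $\mathcal{L}$ and $\Sigma$), these become \logic proofs of $\,\tridot \langle \eval {\psi(\phi(t))} {\psi(\phi(v))} \rangle$ and $\,\tridot \langle \of {\psi(\phi(t))} {\psi(\phi(a))} \rangle$, noting that $\Psi(\Delta)$ is exactly the set of {\sl prog} clauses of Figure~\ref{fig:example-prog}. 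Instantiating Theorem~\ref{thm:example-two-level} with $e := \psi(\phi(t))$, $v := \psi(\phi(v))$, $t := \psi(\phi(a))$ and using \cut in \logic with these two hypotheses yields a \logic proof of $\,\tridot \langle \of {\psi(\phi(v))} {\psi(\phi(a))} \rangle$.

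To finish, I would reverse the translations: the backward direction of Theorem~\ref{thm:seq-adequacy} extracts an \hh derivation of $\Delta \vdash \of {\phi(v)} {\phi(a)}$ from this \logic proof, and the inverse of the bijection $\phi$ (which, by Section~\ref{sec:spec-adequacy}, faithfully maps \hh typing derivations back to object-level typing derivations) yields a derivation of $\vdash v : a$ in the simply-typed $\lambda$-calculus.

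The main obstacle I expect is bookkeeping around the side conditions of Theorem~\ref{thm:seq-adequacy} rather than any deep mathematical step: in particular, I must check that $\forall$-quantification over specification-level eigenvariables and $\nabla$-quantification in the \logic encoding occur only at the inhabited types $tp$ and $tm$ (which is the case here, since $tp$ contains $i$ and every pre-term inhabits $tm$), and that the specification logic constants $\hsl{eval}$, $\hsl{of}$, $\hsl{app}$, $\hsl{abs}$, $i$, $\hsl{arr}$ have identically-named counterparts in the meta-logic so that $\psi$ acts as the identity on them. Assuming the composition $\psi \circ \phi$ is treated as the overall bijection between object-language entities and their \logic representations, the composite round-trip is well-defined and yields precisely the desired object-level conclusion $\vdash v : a$.
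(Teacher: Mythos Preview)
Your proposal is correct and follows essentially the same route as the paper: push the hypotheses through $\phi$ to get \hh derivations, then through Theorem~\ref{thm:seq-adequacy} to get \logic proofs, apply Theorem~\ref{thm:example-two-level} via \cut, and reverse both translations. Your explicit check of the inhabited-type side condition for Theorem~\ref{thm:seq-adequacy} is a detail the paper's proof leaves implicit, but otherwise the two arguments coincide.
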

\begin{proof}
Suppose $t \Downarrow v$ and $\vdash t : a$, then by the adequacy
results in Section~\ref{sec:spec-adequacy}, we know that $\Delta
\vdash \eval {\phi(t)} {\phi(v)}$ and $\Delta \vdash \of {\phi(t)}
{\phi(a)}$ have derivations in \hh where $\phi$ is the bijection
between the object language and its specification logic representation
and $\Delta$ is the specification of {\sl eval} and {\sl of}. By
Theorem~\ref{thm:seq-adequacy}, we know $\ \tridot \langle \eval
{\psi(\phi(t))} {\psi(\phi(v))} \rangle$ and $\ \tridot \langle \of
{\psi(\phi(t))} {\psi(\phi(v))} \rangle$ have proofs in \logic. Using
these proofs and the proof of the formula in
Theorem~\ref{thm:example-two-level} together with various rules of
\logic (notably the {\sl cut} rule), we can construct a proof of $\
\tridot \langle \of {\psi(\phi(v))} {\psi(\phi(a))} \rangle$ in
\logic. Then using the backwards direction of
Theorem~\ref{thm:seq-adequacy} we know $\Delta \vdash \of {\phi(v)}
{\phi(a)}$ has a derivation in \hh, and using adequacy results from
Section~\ref{sec:spec-adequacy} we find that $\vdash v : a$ must hold.
\end{proof}

%%% Local Variables:
%%% mode: latex
%%% TeX-master: "root"
%%% End:

% LocalWords:  BACKCHAIN hh atm frm nt nat lt atmlist prog Harrop tp tm eval sr
% LocalWords:  desugar typeof backchaining bijective eigenvariables bijection
% LocalWords:  eigenvariable implicational subformulas init Abella Prolog
% LocalWords:  Teyjus encodings theoremhood formedness

\chapter{Applications of The Framework}
\label{ch:applications}

In this chapter we consider various applications of the proposed
framework, focusing mainly on the reasoning component. The purpose of
these applications is illustrate both the strengths and the weaknesses
of the framework. From this perspective, we are interested in the {\em
  quality} of the encodings and associated reasoning, \eg, properties
such as naturalness, expressiveness, complexity, and overhead. We will
try to expose and highlight these traits in this chapter.

We begin in Section~\ref{sec:type-uniq-simply} with a proof of type
uniqueness for the simply-typed $\lambda$-calculus which provides a
simple example of how judgment contexts and the related variable
freshness information is handled in the framework. In
Section~\ref{sec:poplmark-challenge} we present a solution to part of
the POPLmark challenge \cite{aydemir05tphols} which demonstrates the
more sophisticated inductive reasoning that is possible within \logic.
Section~\ref{sec:path-equiv-lambda} contains an example of proving the
equivalence of $\lambda$-terms based on the set of paths they contain,
and shows how easily the framework handles formulas with a more
sophisticated quantification structure. In
Section~\ref{sec:conv-de-bruijn} we describe a translation between
higher-order abstract syntax and de Bruijn notation for
$\lambda$-terms, and we show that this translation is deterministic in
both directions. This example highlights a more expressive use of
definitions to describe the structure of judgment contexts. Finally,
in Section~\ref{sec:girards-strong-norm} we show how Girard's proof of
strong normalization for the simply-typed $\lambda$-calculus can be
encoded. This is by far the largest application in this chapter, and
it uses many of the features highlighted by previous examples as well
as introducing new ones such as a way of dealing with an arbitrary
number of substitutions applied to a term.

There have been many other applications of the reasoning component of
our framework that we do not discuss explicitly in this thesis. These
include the following.
\begin{itemize}
\item Properties of big and small step evaluation and typing in the
simply-typed $\lambda$-calculus
\item Translation among combinatory logic, natural deduction, and
sequent calculus
\item Soundness and completeness for a focused sequent calculus
\item Cut-admissibility for LJ
\item Takahashi's proof of the Church-Rosser theorem
\item Properties of bi-simulation in CCS and the $\pi$-calculus
\item Tait's argument for weak normalization of the simply-typed
$\lambda$-calculus \cite{gacek08lfmtp}.
\item The substitution theorem for Canonical LF.
\end{itemize}
All of the applications mentioned above and the ones presented in this
chapter are available on the Abella website
\cite{gacek-abella-website}. We note that some of these examples have
been developed by other researchers. Randy Pollack contributed the
formalization of the Church-Rosser result. The formalization of the
substitution theorem for Canonical LF was contributed by Todd Wilson
and is the largest development done in Abella to date. This
development includes two sophisticated results: one which uses a
triply nested induction where the innermost induction is an eight-way
mutual induction and another which uses a doubly nested induction with
an outer strong induction and an inner three-way mutual induction. The
richness and elegance of this development serves as a powerful example
of the expressivity of Abella.

Finally, before we proceed to the examples we establish a few common
items and conventions which simplify the presentation. First, in
specification formulas we elide the outermost universal quantifiers
and assume that tokens given by capital letters denote variables that
are implicitly universally quantified over the entire formula.
Second, for judgments of the form $(L\tridot \langle A \rangle)$ we
write simply $(L\tridot A)$ since we will only ever display this with
atomic formulas on the right of the judgment. We assume the following
definition of {\sl name} (with appropriate type based on the
application):
\begin{equation*}
(\nabla x. \name x) \triangleq \top.
\end{equation*}
We will use the following result about the (non)occurrences of nominal
constants in lists:
\begin{equation*}
\forall L, E.\nabla x.~\member {(E\ x)} L \supset \exists E'.~ (E =
\lambda y.E').
\end{equation*}
This says that if an element of a list depends on a nominal constant
and the list itself does not, then the element's dependency must be
vacuous. The proof is by induction on the {\sl member} hypothesis. We
will leave out the details of most proofs except to note the uses of
induction or the particularly interesting cases. Also, we will freely
and implicitly make use of the properties of the specification logic.

\section{Type-uniqueness for the Simply-typed $\lambda$-calculus}
\label{sec:type-uniq-simply}

The type of a $\lambda$-term in the simply-typed $\lambda$-calculus is
unique. Proving this type uniqueness property requires reasoning
inductively about typing judgments which, in turn, requires
generalizing the context in which typing judgments are made. We can
encode such arguments directly in our framework so long as we can
describe the structure of the judgment contexts. Such descriptions can
be naturally expressed using nominal abstraction and, in fact, this is
the most common use of nominal abstraction. Thus, we use the present
example to demonstrate how nominal abstraction can be used in this way
and to point out the related lemmas that often go along with such
descriptions.

\begin{figure}[t]
\begin{tabbing}
\hspace{3cm}\= $\ctx nil \mueq \top$ \\
\> $\ctx (\of X A :: L) \mueq\null$\=
$(\forall M, N.~ X = \app M N \supset \bot) \land\null$ \\
\>\> $(\forall R, B .~ X = \abs B R \supset \bot) \land\null$ \\
\>\> $(\forall B.~ \member {(\of X B)} L \supset \bot) \land \null$ \\
\>\> $\ctx L$
\end{tabbing}
\caption{Potential {\sl ctx} definition without nominal abstraction}
\label{fig:ctx-lg}
\end{figure}

We will use the specification of the simply-typed $\lambda$-calculus
developed thus far in the thesis (Section~\ref{sec:spec-example}).
Relative to this, we can formally state type uniqueness as
\begin{equation*}
\forall E, T_1, T_2.~ (\,\tridot \of E T_1) \supset (\,\tridot \of E
T_2) \supset (T_1 = T_2).
\end{equation*}
Suppose we try to prove this directly by induction on one of the
typing judgments. Then, when we consider the case where $E$ is an
abstraction, the typing context will grow which means the inductive
hypothesis will not be able to apply. Instead, we need to generalize
the statement of type uniqueness to the following.
\begin{equation*}
\forall L, E, T_1, T_2.~ \ctx L \supset (L \tridot \of E T_1) \supset
(L \tridot \of E T_2) \supset (T_1 = T_2).
\end{equation*}
Where {\sl ctx} is a definition which restricts $L$ so that the
formula is provable. In particular, $\ctx L$ should enforce that $L$
has the structure $(x_1, A_1)::\ldots::(x_n, A_n)::nil$ where each
$x_i$ is atomic and unique. In the logics which preceded \logic, these
atomicity and uniqueness properties could not be directly described
and instead one needed to encode them by explicitly excluding the other
possibilities as shown in Figure~\ref{fig:ctx-lg}. However, using
nominal abstraction we define {\sl ctx} as
\begin{align*}
\ctx nil \mueq \top && (\nabla x. \ctx (\of x A :: L)) \mueq
\ctx L.
\end{align*}
Note that in $(\of x A :: L)$, the atomicity of $x$ is enforced by it
being $\nabla$ quantified while the uniqueness is enforced by $L$
being quantified outside the scope of $x$. Had we wanted to allow $x$
to occur later in the context we could have written $(L\ x)$ in place
of $L$.

The definition of {\sl ctx} enforces atomicity and uniqueness
properties for the first element of the context and then calls itself
recursively on the remaining portion of the context. Thus, to know
that an arbitrary element of the context has the atomicity and
uniqueness properties requires inductive reasoning. We state these
properties in the following two lemmas.
\begin{align*}
& \forall L, X, A.~ \ctx L \supset \member {(\of X A)} L \supset \name
X \\
& \forall L, X, A_1, A_2.~ \ctx L \supset \member {(\of X A_1)} L
\supset \member {(\of X A_2)} \supset (A_1 = A_2)
\end{align*}
Both of these lemmas have direct proofs using induction on one
of the {\sl member} hypotheses.

With the above lemmas in place, the rest of the type uniqueness proof
is straightforward. There is an interesting point to be noted here,
though, concerning the treatment of abstractions, \ie, when
considering the typing in the context $L$ of a $\lambda$-term of the
form $\abs A R$. The use of a universal
quantifier in the specification of typing in this case and the
interpretation in the meta-logic of such universal quantifiers via
$\nabla$-quantifiers ensures that the typing of $R$ will be done in a
context given by $\of x A :: L$ where $x$ is a nominal constant not
appearing in $L$.
In the type uniqueness proof, we will need to show that this extended
typing context is well-formed. This is done by showing that $\ctx (\of
x A :: L)$ follows from $\ctx L$ which is clear based on the
definition of {\sl ctx} and the way
$x$ was
introduced in the typing process. If a definition such as in
Figure~\ref{fig:ctx-lg} were used, this argument would be more
complicated.

\section{The POPLmark Challenge}
\label{sec:poplmark-challenge}

The POPLmark challenge is a call to researchers to develop tools and
methodologies for animating and for reasoning about systems with
binding \cite{aydemir05tphols}. The particular challenge proposed
focuses on System $F_{\fsub}$, a polymorphic $\lambda$-calculus with
subtyping \cite{cardelli94ic,curien94taoop}. This challenge is of
interest to us primarily because it provides a common benchmark on
which various frameworks may be compared. In addition, some of the
reasoning required for this problem illustrates the sophistication and
naturalness of the reasoning tools available in our framework.

The POPLmark challenge consists of three challenge problems which
focus on 1) the type system, 2) evaluation, type preservation, and
progress, and 3) animation. In this section we explain the solution to
the first challenge problem which requires sophisticated induction
schemes and some reasoning about binding structure. The second
challenge problem requires a significant amount of reasoning about
binding structure, but since we take binding as fundamental in our
framework, this challenge problem is straightforward and fairly
mundane in our framework (the development is available on the Abella
website). Finally, the last challenge problem could be
addressed through an animation system for $\lambda$Prolog, but we do
not explore this in this section. The first and second challenge
problems also have an additional component that asks for proofs to be
repeated for System $F_{\fsub}$ extended with records and patterns.
This extension requires a significant amount of additional work
without providing much additional insight in the framework, and thus
we do not pursue this extension.

The first POPLmark challenge problem focuses on the type system of
System $F_{\fsub}$.  In particular, given an algorithmic presentation of
the subtyping rules for System $F_{\fsub}$, the challenge asks one to
show that the subtyping relation is reflexive and transitive, the key
results needed to show equivalence between the algorithmic and
declarative descriptions of subtyping.
Reflexivity turns out to be straightforward, while transitivity
requires sophisticated inductive reasoning. In the rest of this
section we focus on the proof of transitivity.

\begin{figure}[t]
\begin{center}
\begin{tabular}{@{\hspace{2cm}}c@{\hspace{3cm}}l}
$\Gamma \vdash S ~\fsub~ {\tt Top}$ &
(SA-Top) \\[16pt]
$\Gamma \vdash X ~\fsub~ X$ &
(SA-Refl-TVar) \\[16pt]
\raisebox{-6pt}{
$\infer{\Gamma \vdash X ~\fsub~ T}
{X\fsub U \in \Gamma & \Gamma \vdash U ~\fsub~ T}$
} &
(SA-Trans-TVar) \\[16pt]
\raisebox{-6pt}{
$\infer{\Gamma \vdash S_1 \to S_2 ~\fsub~ T_1 \to T_2} {\Gamma \vdash
  T_1 ~\fsub~ S_1 & \Gamma \vdash S_2 ~\fsub~ T_2}$
} &
(SA-Arrow) \\[16pt]
\raisebox{-6pt}{
$\infer{\Gamma \vdash (\forall X \fsub S_1.~ S_2) ~\fsub~ (\forall X
  \fsub T_1.~ T_2)} {\Gamma \vdash T_1 ~\fsub~ S_1 & \Gamma, X \fsub
  T_1 \vdash S_2 ~\fsub~ T_2}$
} &
(SA-All)
\end{tabular}
\end{center}
\caption{Algorithmic subtyping rules for System $F_{\fsub}$}
\label{fig:fsub}
\end{figure}

Types and typing contexts in System $F_{\fsub}$ are described by the
following grammars.
\begin{align*}
T &::= X \mid {\tt Top} \mid T \to T \mid \forall X \fsub T.~T \\
\Gamma &::= \emptyset \mid \Gamma, X \fsub T
\end{align*}
Here $X$ denotes a variable occurrence, and $\forall X \fsub T_1.~ T_2$
denotes that the variable $X$ is bound within the scope of $T_2$ (but
not in the scope of $T_1$). In $\Gamma, X \fsub T$ it is assumed that $X$
does not occur in $\Gamma$. The algorithmic subtyping relation of
System $F_{\fsub}$ is denoted by $\Gamma \vdash S ~\fsub~ T$, and is
defined by the rules in Figure~\ref{fig:fsub}.

The challenge problem is to prove that the subtyping relation is
transitive: if $\Gamma \vdash S ~\fsub~ Q$ and $\Gamma \vdash Q
~\fsub~ T$ then $\Gamma \vdash S ~\fsub~ T$. The proof of this
property requires another result called narrowing to be proved
simultaneously: if $\Gamma, X\fsub Q, \Delta \vdash M ~\fsub~ N$ and
$\Gamma \vdash P ~\fsub~ Q$ then $\Gamma, X \fsub P, \Delta \vdash M
~\fsub~ N$. The proof of these two properties requires a mutual
induction on the structure of the type $Q$. Within this induction the
transitivity property is proved by induction on the structure of
$\Gamma \vdash S ~\fsub~ Q$ and it uses the narrowing property for
structurally smaller types $Q$. The narrowing property is proved by an
inner induction on the structure of $\Gamma, X\fsub Q, \Delta \vdash M
~\fsub~ N$ and uses the transitivity property for the type $Q$. With
the proper induction schemes as described, the details of the proof
are straightforward.

To formalize System $F_{\fsub}$ types we introduce the type $ty$ and the
following constants.
\begin{align*}
\hsl{top} : ty && \hsl{arrow} : ty \to ty \to ty && \hsl{all} : ty
\to (ty \to ty) \to ty
\end{align*}
Typing contexts will be represented using the context of specification
logic judgments. We introduce the constant $bound : ty \to ty \to o$
for representing individual type bindings within that context.

\begin{figure}[t]
\begin{align*}
&\sub S \ftop \\
&\bound X U \supset \sub X X \\
&\bound X U \supset \sub U T \supset \sub X T \\
&\sub {T_1} {S_1} \supset \sub {S_2} {T_2} \supset
\sub {(\arrow {S_1} {S_2})} {(\arrow {T_1} {T_2})} \\
&\sub {T_1} {S_1} \supset (\forall x. \bound x T_1 \supset \sub {(S_2\
  x)} {(T_2\ x)}) \supset \sub {(\all {S_1} {S_2})} {(\all {T_1}
  {T_2})}
\end{align*}
\caption{Specification of algorithmic subtyping for System $F_{\fsub}$}
\label{fig:spec-fsub}
\end{figure}

We encode subtyping rules of System $F_{\fsub}$ as specification logic
formulas concerning the constant $sub : ty \to ty \to o$ as presented in
Figure~\ref{fig:spec-fsub}. Note that we do not explicitly represent the
typing context, but instead make assumptions of the form $\bound X T$
to denote a typing assumption of $X \fsub T$. Also, in the formal
rules {\sc SA-Refl-TVar} and {\sc SA-Trans-TVar} the variable $X$
represents only type variables while our translation of these rules
do not directly enforce this constraint. Instead, our translations
require that any such $X$ satisfy a $\bound X U$ judgment for some
$U$. Since we only make such judgments for $X$ which denotes a type
variable, our encoding remains adequate.

To reason about subtyping we first formalize the notion that a typing
context is well-formed. Strictly speaking, a context is well-formed if
it is either $\emptyset$ or $\Gamma, X\fsub T$ where $X$ is a variable
which does not occur in $\Gamma$. For reasons we discuss later, we
deliberately weaken this notion and require only that $X$ is a
variable. We recognized such well-formed contexts with the following
definition.
\begin{align*}
\ctx nil \mueq \top && \ctx (\bound X U :: L) \mueq \name X
\land \ctx L
\end{align*}
We also prove the following associated lemma.
\begin{align*}
&\forall E, L.~ \ctx L \supset \member E L \supset \exists X, U.~ (E =
\bound X U) \land \name X
\end{align*}
This is proved by a simple induction on the {\sl member} hypothesis.

The logic \logic allows for induction only on definitions and not on
terms. Thus to induct on the structure of a System $F_{\fsub}$ type we
must create a definition which recognizes such types. We define a
predicate $\hsl{wfty} : ty \to o$ as follows.
\begin{align*}
\wfty \ftop &\mueq \top \\
(\nabla x. \wfty x) &\mueq \top \\
\wfty (\arrow {T_1} {T_2} )&\mueq \wfty T_1 \land \wfty T_2 \\
\wfty (\all {T_1} {T_2} )&\mueq \wfty T_1 \land \nabla x. \wfty
(T_2\ x)
\end{align*}
Induction on $\wfty Q$ will correspond to structural induction on the
type $Q$ as needed. Note that we could impose additional
well-formedness constraints which restrict variable occurrences
relative to some context of type variables, but such restrictions are
unnecessary for the proof at hand.

We can state the combined transitivity and narrowing property as
follows.
\begin{tabbing}
\hspace{1cm}\=$\forall Q.~ \wfty Q \supset$ \\
\>\hspace{1cm}\=
$(\forall L, S, T.~ \ctx L \supset (L \tridot \sub S Q) \supset (L
\tridot \sub Q T) \supset (L \tridot \sub S T)) \land~$ \\
\>\> $(\forall L, P, X, M, N.~ \ctx (\bound X Q :: L) \supset (L \tridot
\sub P Q) \supset$ \\
\>\>\hspace{2cm} $(\bound X Q :: L \tridot \sub M N) \supset
(\bound X P :: L \tridot \sub M N))$
\end{tabbing}
The proof is by an outer induction on $\wfty Q$. To prove the inner
conjunction we use the following derived rule of \logic.
\begin{equation*}
\infer[\landR^*]
      {\Gamma \lra B \land C}
      {\Gamma \lra B & \Gamma, B \lra C}
\end{equation*}
This rule is clearly admissible using $\cut$ and $\landR$. We use this
rule with $B$ as the transitivity result for the type $Q$ and $C$ as
the narrowing result for the type $Q$. Thus this rule allows us to use
the transitivity result for the type $Q$ while proving the
corresponding narrowing result. Once this is applied we can prove
transitivity using a further induction on $(L \tridot \sub S Q)$ and
narrowing using a further induction on $(\bound X Q :: L \tridot \sub
M N)$. The reasoning which remains is straightforward.

Notice that in the original statement of narrowing, the distinguished
typing assumption $X\fsub Q$ is taken from the middle of the typing
context, while in our formalized statement we consider the assumption
$\bound X Q$ only at the front. By formalizing narrowing in this way,
we greatly simplify the associated reasoning (\eg, we do not need to
talk about appending contexts as we would with a direct statement).
The cost is that when we add other elements to the context, we must
show that the distinguished binding can always be moved to the front.
This is possible since we have weakened the {\sl ctx} judgment to not
contain any freshness information, and therefore no ordering
information. Since freshness information is not relevant to the
transitivity and narrowing results, there is no cost to leaving this
information out. To establish adequacy, we can use a more precise
description of typing contexts and still make use of these results
proved for the looser description.

\section{Path Equivalence for $\lambda$-terms}
\label{sec:path-equiv-lambda}

\begin{figure}[t]
\centering
\begin{tikzpicture}[sibling distance=2cm]
\tikzstyle{edge from parent}=[draw,solid,thick] \node {$\lambda x$}
child {node {@} child {node {$x$}} child {node {$\lambda y$} child
    {node {$y$}}}} ;
\end{tikzpicture}
\label{fig:tree}
\caption{Tree form of $\lambda x.(x(\lambda y.y))$}
\end{figure}

We can characterize $\lambda$-terms by means of their paths, where a
path formalizes the idea of descending through the abstract syntax
tree of a term. For example, the tree for the
$\lambda$-term $\lambda x. (x(\lambda y.y))$ is shown in
Figure~\ref{fig:tree} has has two paths:
\begin{enumerate}
\item descend through the binder for $x$, go left at the application,
stop at $x$, and
\item descend through the binder for $x$, go right at the application,
descend through the binder for $y$, stop at $y$
\end{enumerate}
Our goal is section is to show that if two $\lambda$-terms share all
the same paths, then the terms must be equal. We call this the {\em
  path equivalence} property.

We are interested in the path equivalence property since it expresses
a model checking-like property over terms with binding structure. This
type of property is difficult or impossible to formalize in
competing frameworks like Twelf \cite{pfenning99cade} since expressing
the hypothetical property that
two $\lambda$-terms have all the same paths requires a sufficiently
rich logic. However, in our framework, we find that this property can
be stated and reasoned about directly. Also, this application
illustrates how we can use definitions to describe the structure of
multiple judgment contexts which have related structure. Finally, a
complication in this application demonstrates the need for occasional
vacuity properties to be established regarding the occurrences of
nominal constants in terms.

\begin{figure}[t]
\begin{align*}
&\term M \supset \term N \supset \term (\app M N) \\
&(\forall x. \term x \supset \term (R\ x)) \supset \term (\uabs R) \\[10pt]
&\pathp M P \supset \pathp {(\app M N)} {(\pleft P)} \\
&\pathp N P \supset \pathp {(\app M N)} {(\pright P)} \\
&(\forall x.\forall p. \pathp x p \supset \pathp {(R\ x)} {(S\ p)})
\supset \pathp {(\uabs R)} {(\bnd S)}
\end{align*}
\caption{Specification of paths through $\lambda$-terms}
\label{fig:spec-path}
\end{figure}

We introduce a type $tm$ for untyped $\lambda$-terms and $pt$ for
paths together with the following constructors.
\begin{align*}
\hsl{app} &: tm \to tm \to tm & \hsl{abs} &: (tm \to tm) \to tm
\end{align*}
\vspace{-0.9cm}
\begin{align*}
\hsl{left} &: pt \to pt & \hsl{right} &: pt \to pt &
\hsl{bnd} &: (pt \to pt) \to pt
\end{align*}
We then introduce the predicates $\hsl{term} : tm \to o$ and ${\hsl
  path} : tm \to pt \to o$ defined by the specification logic formulas
in Figure~\ref{fig:spec-path}.

Given this description of paths through $\lambda$-terms we can state
the path equivalence property as follows.
\begin{align*}
&\forall M, N.~ (\,\tridot \term M) \supset (\forall P.~ (\,\tridot
\pathp M P) \supset (\,\tridot \pathp N P)) \supset (M = N)
\end{align*}
Note that we have added the explicit assumption $(\,\tridot \term M)$
so that we can induct on the structure of $M$. Also, we have stated
only that the paths in $M$ are also in $N$, but not vice-versa. It
turns out that this weaker property is sufficient to prove the result.

Before we can proceed with the proof of the above statement, we need
to strengthen it. In particular, when $M$ is an abstraction we need to
consider how the contexts for the {\sl term} and {\sl path} judgments
will grow. This is done with the following definition of {\sl ctxs}
which describes not only how each context grows, but how the two
contexts are related.
\begin{align*}
\ctxs {nil} {nil} \mueq \top && (\nabla x. \nabla p. \ctxs {(\term x
  :: L)} {(\pathp x p :: K)}) \mueq \ctxs L K
\end{align*}
Along with this definition, we need the following lemmas which allow
us to extract information about a term based on its membership in one
of the contexts described by {\sl ctxs}.
\begin{align*}
&\forall X, L, K.~ \ctxs L K \supset \member {(\term X)} L \supset \\
&\hspace{5cm}
\name X \land \exists P.~\member {(\pathp X P)} K \\
& \forall X, P, L, K.~ \ctxs L K \supset \member {(\pathp X P)} K
\supset \name X \land \name P
\end{align*}
The proofs of both lemma are by straightforward induction on the {\sl
  member} hypotheses.

We can state the strengthened equivalence property as follows.
\begin{align*}
&\forall L, K, M, N.~ \ctxs L K \supset (L \tridot \term M) \supset \\
&\hspace{3cm} (\forall P.~ (K \tridot \pathp M P) \supset (K \tridot
\pathp N P)) \supset (M = N)
\end{align*}
The proof of this statement is by induction on $(L \tridot \term M)$.
In the base case we need the following lemma which is proved by
induction one of the {\sl member} hypotheses.
\begin{align*}
&\forall L, K, X_1, X_2, P.~ \ctxs L K \supset \\
&\hspace{2cm} \member {(\pathp {X_1} P)} K \supset \member {(\pathp
  {X_2} P)} K \supset (X_1 = X_2)
\end{align*}
In the other cases of the proof, we need to show that the top-level
constructor of $M$ is also the top-level constructor for $N$. We do by
finding a path through $M$ and using the hypothesis that $M$ and $N$
share the same paths to find the same path in $N$. The top-level
constructor of that path will determine the top-level constructors of
$M$ and $N$. However, this requires that we can always find a path
through a term which we formalize this as the following lemma.
\begin{align*}
&\forall L, K, M, P.~ \ctxs L K \supset (L \tridot \term M) \supset
\exists P.~ (K\tridot \pathp M P)
\end{align*}
The proof of this lemma is by induction on $(L\tridot \term M)$.

There is one last complication in the proof of path equivalence which
comes from the inductive case concerning abstractions. Suppose $M =
\uabs R$ and $N = \uabs R'$. Here we know
\begin{equation*}
\forall P.~ (K \tridot \pathp {(\uabs R)} P) \supset
(K \tridot \pathp {(\uabs R')} P)
\end{equation*}
but in order to use the inductive hypothesis we must show
\begin{equation*}
\forall P.~ (\pathp x p :: K \tridot \pathp {(R\ x)} P) \supset
(\pathp x p :: K \tridot \pathp {(R'\ x)} P)
\end{equation*}
where $x$ and $p$ are nominal constants. Now the problem is that when
we go to prove this latter formula, the $\forallR$ rule says that we
must replace $P$ by $P'\ x\ p$ for some new eigenvariable $P'$. Note
that $P'$ is raised over both $x$ and $p$ even though the dependency
on $x$ must be vacuous. We must prove this vacuity to finish this case
of the proof, and thus we need the following lemma.
\begin{align*}
&\forall K, M, P. \nabla x, p.~ (\pathp x p :: K \tridot \pathp {(M\ x)}
{(P\ x\ p)}) \supset \exists P'. (P = \lambda z.P')
\end{align*}
This is proved by induction on the {\sl path} judgment. With this
issue resolved, the rest of the path equivalence proof is
straightforward.

As we have seen, the path equivalence property is expressed naturally
in our framework through the use of a formula with a nested universal
quantifier and implication. We briefly discuss the adequacy
considerations regarding such a formula. The goal is to use the path
equivalence property proven in \logic in order to prove the path
equivalence property for the object system. To do this, we need to
show that the hypotheses we have about the object system imply that
there are proofs in \logic of the corresponding hypotheses for the
formalization of the path equivalence problem; if we can show this,
then we will obtain the desired result by using the bijectivity of the
mappings for terms. Looking more carefully at the hypothesis, we see
that the main
concern is showing that if every path in a $\lambda$-term $m$ is a
path in another $\lambda$-term $n$ then the following is provable in
\logic:
\begin{equation}
\label{eq:path}
\forall P.~ (\,\tridot \pathp {\psi(\phi(m))} P) \supset
(\,\tridot \pathp {\psi(\phi(n))} P)
\end{equation}
Here $\phi$ is the bijection between object terms and their specification
logic representations, and $\psi$ is the bijection between
specification logic terms and their meta-logic representations. 

To complete this discussion, we provide a sketch of how a proof
of (\ref{eq:path}) might be constructed. We start with the knowledge
that every path 
in $m$ is a path in $n$. Then, assuming that the specification of {\sl
  path} is adequate, we know that whenever $\Delta \vdash \pathp
{\phi(m)} {\phi(p)}$ has an \hh derivation, it must be that $\Delta
\vdash \pathp {\phi(n)} {\phi(p)}$ also has an \hh derivation where
$\Delta$ is the specification of {\sl path} and {\sl term}. By the
adequacy of {\sl seq} established in Theorem~\ref{thm:seq-adequacy},
we know that whenever $\,\tridot \pathp {\psi(\phi(m))}
{\psi(\phi(p))}$ is provable in \logic, it must be that $\,\tridot
\pathp {\psi(\phi(n))} {\psi(\phi(p))}$ is also provable in \logic. We
will use this knowledge shortly. Now to prove (\ref{eq:path}) in
\logic we start by applying the $\forallR$ and $\supsetR$ rules. Then
we repeatedly apply appropriate left rules starting with the
assumption $\,\tridot \pathp {\psi(\phi(m))} P$. Since $\psi(\phi(m))$
has no eigenvariables and {\sl path} always deconstructs its first
argument, this repeated application of left rules can be made to
result only in sequents with no formulas on the left and where $P$ is
instantiated with a term such that $\,\tridot \pathp {\psi(\phi(m))}
P$ is provable in \logic. Now using our knowledge from before and the
assumption that $\phi$ and $\psi$ are bijections, it must be that
$\,\tridot \pathp {\psi(\phi(n))} P$ is provable in \logic. This is
exactly the form of the right side of each of the sequents which
results from the repeated application of left rules.
Thus each such sequent must be provable, and therefore
(\ref{eq:path}) must also be provable in \logic.

\section{Conversion between HOAS and de Bruijn Notation}
\label{sec:conv-de-bruijn}

De Bruijn notation is a first-order representation of binding which
uses numeric indices to associate variable occurrences with their
binders. More precisely, the index denoting a variable occurrence
corresponds the number of abstractions between the occurrence and its
binder. In this section we describe a translation between higher-order
abstract syntax representation and de Bruijn notation for untyped
$\lambda$-terms, and we prove that this translation is deterministic
in both directions. This example highlights the use of a definition
for describing a context which carries more than just variable
freshness information.

We start by introducing the type $tm$ for the higher-order abstract
syntax representation of untyped $\lambda$-terms with the constructors
$\hsl{app} : tm \to tm \to tm$ and $\hsl{abs} : (tm \to tm) \to tm$.
For natural numbers we use the type $nt$ with constructors $z : nt$
and $s : nt \to nt$. Finally, for de Bruijn notation terms we
introduce the type $db$ with the following constructors.
\begin{align*}
& \hsl{dabs}: db \to db &
& \hsl{dapp} : db \to db \to db &
& \hsl{dvar} : nt \to db
\end{align*}

\begin{figure}[t]
\begin{align*}
&\add z C C. \\
&\add A B C \supset \add {(s\ A)} B {(s\ C)} \\[10pt]
&\hodb M D {M'} \supset \hodb N D {N'} \supset \hodb {(\app M N)} D
{(\dapp {M'} {N'})} \\
&\depth X D_X \supset \add {D_X} {X'} D \supset \hodb X D (\dvar {X'}) \\
&(\forall x. \depth x D \supset \hodb {(R\ x)} {(s\ D)} {R'})
\supset \hodb {(\uabs R)} D {(\uabs {R'})}
\end{align*}
\caption{Specification of translation between HOAS and de Bruijn notation}
\label{fig:spec-ho2db}
\end{figure}

We translate from higher-order abstract syntax to de Bruijn notation
as follows. We walk over the structure of the term keeping track of
the number of abstractions we have descended through. Whenever we come
to an abstraction we use the context to record a new variable for that
abstraction and the abstraction depth at which it was encountered.
When we encounter a variable occurrence, we subtract the current
abstraction depth from the corresponding depth in the context to
determine the index for that variable occurrence. Using the predicates
$\hsl{add} : nt \to nt \to nt \to o$, $\hsl{depth} : tm \to nt \to o$,
and $\hsl{ho2db} : tm \to nt \to db \to o$, the specification of the
translation is presented in Figure~\ref{fig:spec-ho2db}.

Now there is a derivation of $\hodb M z {M'}$ if and only if $M$ is a
higher-order abstract syntax representation of the de Bruijn notation
term $M'$. Moreover, note that the translation is symmetric: we could
start with either $M$ or $M'$ and construct a derivation of $\hodb M z
{M'}$ to determine a value for the other.

Now we want to show that the above translation is deterministic in
both directions. In doing this, we will need to make certain
properties of natural numbers explicit. For this we make use of the
following two definitions.
\begin{align*}
\nat z &\mueq \top & \mle A A &\mueq \top \\
\nat (s\ A) &\mueq \nat A & \mle A (s\ B) &\mueq \mle A B
\end{align*}
Along with these we prove the following arithmetic properties by
straightforward induction.
\begin{align*}
& \forall A, B.~ \mle {(s\ A)} B \supset \mle A B \\
& \forall A.~ \nat A \supset \mle {(s\ A)} A \supset \bot \\
& \forall A, B, C.~ (\,\tridot \add A B C) \supset \mle B C \\
& \forall A_1, A_2, B, C.~ \nat C \supset (\,\tridot \add {A_1} B C)
\supset (\,\tridot \add {A_2} B C) \supset (A_1 = A_2) \\
& \forall A, B_1, B_2, C.~ (\,\tridot \add A {B_1} C)
\supset (\,\tridot \add A {B_2} C) \supset (B_1 = B_2)
\end{align*}
Note that we have made the assumption {\sl nat} explicit in some of
these to provide a target for induction.

Derivations of {\sl ho2db} will construct contexts of the form
\begin{equation*}
\depth {x_n} (s^n\ z)::\ldots::\depth {x_2} {(s\ (s\ z))}::\depth {x_1}
{(s\ z)}::\depth {x_0} z::nil
\end{equation*}
where each $x_i$ is unique. Moreover, the numbers associated with each
$x_i$ will also be unique since they are sequential. Each of these
uniqueness properties will be needed to show determinacy for one or
the other direction of the translation. We can describe these contexts
with the following definition.
\begin{align*}
\dctx {nil} z \mueq \top && (\nabla x. \dctx {(\depth x D :: L)}
{(s\ D)}) \mueq \dctx L D
\end{align*}
The corresponding lemma for {\sl dctx} is as follows
\begin{equation*}
\forall E, L, D.~ \dctx L D \supset \member E L \supset \exists X,
D_X.~ (E = \depth X D_X) \land \name X \\
\end{equation*}
The proof is by induction on the {\sl member} judgment. One
complication related to contexts arises when we call {\sl add} from
within {\sl ho2db}: the {\sl add} judgments inherits the context from
{\sl ho2db}. This is a problem since all of our lemmas about {\sl add}
assume that it has an empty context. We can fix this by proving the
following lemma.
\begin{equation*}
\forall L, D, A, B, C.~ \dctx L D \supset (L \tridot \add A B C)
\supset (\,\tridot \add A B C)
\end{equation*}
This is proved by a simple induction on the {\sl add} judgment.

Now let us consider the determinacy proof going from higher-order
abstract syntax to de Bruijn notation. For this, we need the following
lemma which says that each variable in the context has a unique index
associated with it.
\begin{align*}
&\forall L, D, X, D_1, D_2.~ \dctx L D \supset \\
&\hspace{2cm} \member {(\depth X D_1)} L \supset \member {(\depth X
  D_2)} L \supset (D_1 = D_2)
\end{align*}
This is proved by a straightforward induction on one of the {\sl
  member} hypotheses. Then we can prove the generalized determinacy
result:
\begin{align*}
& \forall L, M, M'_1, M'_2, D.~ \dctx L D \supset \\
&\hspace{2cm} (L \tridot \hodb M D
{M'_1}) \supset (L\tridot \hodb M D {M'_2}) \supset (M'_1 = M'_2).
\end{align*}
This is proved by induction on one of the {\sl ho2db} judgments. We
then apply this generalization with $L = nil$ and $D = z$ to get
the specific determinacy result we care about.

To prove determinacy in the other direction we need a lemma which
says that each index in the context has a unique variable associated
with it. We can state this as
\begin{align*}
&\forall L, D, X_1, X_2, D_X.~ \dctx L D \supset\\
&\hspace{2cm} \member {(\depth {X_1} D_X)} L \supset \member {(\depth
  {X_2} D_X)} L \supset (X_1 = X_2).
\end{align*}
This is proved by induction on one of the {\sl member} hypotheses,
however we need an additional result about the restrictions on
indices in the context for the proof to go through. Specifically, the
following lemma is required.
\begin{align*}
&\forall L, D, D_X, X.~ \dctx L D \supset \member {(\depth X D_X)} L
\supset \mle D D_X \supset \bot
\end{align*}
This is proved by induction on the {\sl member} hypothesis and in turn
requires the following result which follows by a simple induction.
\begin{align*}
&\forall L, D.~ \dctx L D \supset \nat D
\end{align*}
With these lemmas in place, the generalized determinacy result is as
follows.
\begin{align*}
&\forall L, M_1, M_2, D, M'.~ \dctx L D \supset \\
&\hspace{2cm} (L\tridot \hodb {M_1} D M') \supset (L\tridot \hodb
{M_2} D M') \supset (M_1 = M_2)
\end{align*}
This is now proved by straightforward induction on one of the {\sl ho2db}
hypotheses, and again we can substitution $L = nil$ and $D = z$ to
obtain the specialized result.

\section{Formalizing Tait-Style Proofs for Strong Normalization}
\label{sec:girards-strong-norm}

Tait introduced the idea of a logical relation and showed how this
could be used to provide an elegant proof of the strong normalization
property for the typed $\lambda$-calculus \cite{tait67jsl}. Girard
subsequently generalized this idea to obtain a strong normalization
result for the computationally much richer second-order
$\lambda$-calculus or System F \cite{girard72phd}. This style of
argument has both an elegance and a sophistication that would be
interesting to see captured in formalizations. We show in this section
that our framework is up to the task by considering an encoding of the
argument for the simply typed $\lambda$-calculus drawn from
\cite{girard89book}. One note, however, is that the strong
normalization argument requires a definition for a logical relation
which does not satisfy our current stratification restriction. We
strongly believe that the stratification condition on definitions in
\logic could be weakened to allow this definition while preserving
cut-elimination, but at present we have no corresponding
cut-elimination proof.

\begin{figure}[t]
\begin{align*}
&\type i \\
&\type A \supset \type B \supset \type (\arrow A B) \\[10pt]
&\of M (\arrow A B) \supset \of N A \supset \of {(\app M N)} B \\
&\type A \supset (\forall x. \of x A \supset \of {(R\ x)} B) \supset
\of {(\abs A R)} {(\arrow A B)} \\
&\type A \supset \of c A \\[10pt]
&\step M {M'} \supset \step {(\app M N)} {(\app {M'} N)} \\
&\step N {N'} \supset \step {(\app M N)} {(\app M {N'})} \\
&\step {(\app {(\abs A R)} M)} {(R\ M)} \\
&(\forall x. \step {(R\ x)} {(R'\ x)}) \supset \step {(\abs A R)}
{(\abs A {R'})}
\end{align*}
\caption{Specification of typing and one-step reduction}
\label{fig:spec-girard}
\end{figure}

To encode the simply-typed $\lambda$-calculus we use the familiar
types $ty$ and $tm$ along with their constructors {\sl i}, {\sl
  arrow}, {\sl app}, and {\sl abs}. In Girard's argument he assumes
that we are always working with open terms and can therefore always
select a free variable at any type. Rather than explicitly
representing this style of reasoning, we opt to introduce a constant
$c : tm$ which we allow to take on any type. This does not impair the
adequacy of our final result: if a term does not contain $c$ then none
of the terms it reduces to will contain it, and therefore $c$ has no
effect on normalization. The specification of typing ($\hsl{of} : tm
\to ty \to o$) and one-step reduction ($\hsl{step} : tm \to tm \to o$)
is given in Figure~\ref{fig:spec-girard}. The specification includes a
predicate a predicate $\hsl{type} : ty \to o$ to recognize types,
which we use in the abstraction typing rule since this will be needed
for later arguments. Also, we add a typing clause for $c$ to allow it
to take on any type.

Strong normalization says that all reduction paths eventually
terminate. We can succinctly encode this property in the following
definition.
\begin{equation*}
\sn M \mueq \forall M'.~ (\,\tridot \step M {M'}) \supset \sn {M'}
\end{equation*}
Note that there is no explicit base case for {\sl sn}, but if $M$ has
no reductions then $(\,\tridot\step M {M'})$ will be impossible and
therefore $\sn M$ will hold. Also, we will see that structural
induction on the definition of {\sl sn} corresponds to induction on
the structure of the possible reductions from a term. The adequacy of
{\sl sn} can be established in the same manner as adequacy for the
path equivalence application (Section~\ref{sec:path-equiv-lambda}).
We can now state the goal of this section:
\begin{equation*}
\forall M, A.~ (\,\tridot \of M A) \supset \sn M
\end{equation*}
The rest of this section describes definitions and lemmas necessary to
prove this formula.

\subsection{Typing and One-step Reduction}

In order to reason about typing judgments, we need to make explicit
the structure of the contexts of such judgments. They are described by
the following definition.
\begin{align*}
&\ctx nil \mueq \top &
&(\nabla x.\ctx (\of x A :: L)) \mueq (\,\tridot \type A) \land
\ctx L
\end{align*}
We then prove the corresponding lemma about context
membership:
\begin{equation*}
\forall E, L.~ \ctx L \supset \member E L \supset \exists X, A.~ (E =
\of X A) \land \name X \land (\,\tridot \type A)
\end{equation*}
The proof is by induction the the {\sl member} hypothesis. Another
auxiliary lemma we need about typing says that we can extract {\sl
  type} judgments from {\sl of} judgments.
\begin{equation*}
\forall L, M, A.~ \ctx L \supset (L \tridot \of M A) \supset (\,\tridot
\type A)
\end{equation*}
This is proved by induction on the {\sl of} judgment and requires the
following lemma which says that {\sl type} judgments ignore typing
contexts.
\begin{equation*}
\forall L, A.~ \ctx L \supset (L \tridot \type A) \supset (\, \tridot
\type A)
\end{equation*}
This is proved by induction on the {\sl type} judgment.

Now, the first real result we need is that one-step reduction
preserves typing:
\begin{equation*}
\forall L, M, M', A.~ \ctx L \supset (L \tridot \of M A) \supset (\,
\tridot \step M {M'}) \supset (L \tridot \of {M'} A).
\end{equation*}
The proof is by induction on the {\sl step} judgment. Note that we
have to generalize the typing context since one-step reduction can
take place underneath abstractions. Another useful lemma is the
following.
\begin{equation*}
\forall M.~ \sn (\app M c) \supset \sn M
\end{equation*}
The proof is by induction on {\sl sn}.

\subsection{The Logical Relation}

The difficulty with proving strong normalization directly is that it
is not closed under application, \ie, $\sn M$ and $\sn N$ does not
imply $\sn (\app M N)$. Instead, we must strengthen the normalization
property to one which includes a notion of closure under application.
This strengthened condition is called {\em reducibility} and is
originally due to Tait \cite{tait67jsl}. We say that a term $M$
reduces at type $A$ if $\reduce M A$ holds where {\sl reduce} is
defined as follows:
\begin{align*}
\reduce M i \mueq\null & (\,\tridot \of M i) \land \sn M \\
\reduce M (\arrow A B) \mueq\null & (\,\tridot \of M (\arrow A
B)) \land \null \\
& (\forall U.~ \reduce U A \supset \reduce {(\app M U)} B)
\end{align*}
Note that {\sl reduce} is defined with a negative use of itself and
therefore does not satisfy the current stratification condition on
definition. However, the second argument to {\sl reduce} is smaller in
the negative occurrence, and thus there are no logical loops
introduced by this definition. Intuitively, we can think of $(\lambda
x. \reduce x A)$ as defining a separate fixed-point for each type $A$,
and that these fixed-points are constructed based on induction on $A$.

An auxiliary notion used when discussing reducibility is called {\em
  neutrality}\/: a term is called {\em neutral} if it is not an
abstraction. We can define this directly as follows.
\begin{equation*}
\neutral M \triangleq \forall A, R.~ (M = \abs A R) \supset \bot
\end{equation*}
Now Girard lays out three properties of reducibility which we can
formalize as follows.
\begin{align*}
\mbox{(CR 1) } &\forall M, A.~ (\,\tridot \type A) \supset \reduce M A
\supset \sn M \\
\mbox{(CR 2) } &\forall M, M', A.~ (\,\tridot \type A) \supset \reduce M A
\supset (\,\tridot \step M {M'}) \supset \reduce {M'} A \\
\mbox{(CR 3) } &\forall M, A.~ (\,\tridot \type A) \supset \neutral M
\supset (\,\tridot \of M A) \supset\null \\
&\hspace{1cm} (\forall M'.~ (\,\tridot \step M {M'})\supset \reduce
{M'} A) \supset \reduce M A
\end{align*}
Each of these follows by induction on the {\sl type} judgment. The
proof of (CR 2) is straightforward, but the proofs (CR 1) and (CR 3)
are more complicated. In particular, (CR 1) depends on (CR 3) at
types structurally smaller than $A$ while (CR 3) depends on
(CR 1) at the same type $A$. As in the POPLmark application
(Section~\ref{sec:poplmark-challenge}) we can handle this by stating
a combined lemma and using $\landR^*$ within the induction:
\begin{align*}
&\forall A.~ (\,\tridot \type A) \supset \\
&\hspace{1cm} (\forall M.~ \reduce M A \supset \sn M) \land \null \\
&\hspace{1cm} (\forall M.~ \neutral M
\supset (\,\tridot \of M A) \supset\null \\
&\hspace{3cm} (\forall M'.~ (\,\tridot \step M {M'})\supset \reduce
{M'} A) \supset \reduce M A)
\end{align*}
The proof is by induction on the {\sl type} judgment, and the (CR 1)
portion of the proof is relatively straightforward. In the (CR 3)
portion, when $A$ is an arrow type, say $\arrow {A_1} {A_2}$, we need
to show
\begin{equation*}
\forall U.~ \reduce U {A_1} \supset \reduce {(\app M U)} A_2.
\end{equation*}
From the (CR 1) inductive hypothesis on type $A_1$ we can determine
that $\sn A_1$ holds, and then proof is by an inner induction on $\sn
A_1$.

The last reducibility lemma we need says that if for all reducible $U$
of type $A$, $M[U/x]$ is reducible, then so is $\lambdat x A M$. For
$\lambdat x A M$ to be reducible requires showing that for all
reducible $V$ that $M\ V$ is reducible. Girard proves this by
induction on the sum of the lengths of the longest reduction paths
from $M$ and $V$. We can state this unfolded reducibility lemma as
follows.
\begin{align*}
\forall V, M, A, B.~&  (\,\tridot \of {(\abs A M)} (\arrow A B))
\supset \null \\
& \sn V \supset \sn (M\ c) \supset \reduce V A \supset \null \\
&(\forall U.~ \reduce U A \supset \reduce {(M\ U)} B) \supset \null \\
&\hspace{2cm} \reduce {(\app {(\abs A M)} V)} B
\end{align*}
The proof of this formula is by induction on $\sn V$ with a nested
induction on $\sn (M\ c)$.

Clearly {\sl reduce} is closed under application and by (CR 1) it
implies strong normalization, thus we strengthen our desired
normalization result to the following:
\begin{equation*}
\forall M, A.~ (\,\tridot \of M A) \supset \reduce M A.
\end{equation*}
In order to prove this formula we will have to induct on the height of
the proof of the typing judgment. However, when we consider the case
that $M$ is an abstraction, we will not be able to use the inductive
hypothesis since {\sl reduce} is defined only on closed terms, \ie,
those typeable in the empty context. The standard way to deal with
this issue is to generalize the desired formula to say that if $M$, a
possibly open term, has type $A$ then each closed instantiation for
all the free variables in $M$, say $N$, satisfies $\reduce N A$. This
requires a formal description of simultaneous substitutions that can
``close'' a term.

\subsection{Arbitrary Cascading Substitutions and Freshness Results}

Given $(L\tridot \of M A)$, \ie, an open term and its typing context, we
define a process of substituting each free variable in $M$ with a
value $V$ which satisfies the logical relation for the appropriate
type. We define this {\sl subst} relation as follows:
\begin{align*}
& \subst {nil} M M \mueq \top \\
(\nabla x.&\subst {((\of x A) :: L)} {(R\ x)} M) \mueq \exists
U.~ \reduce U A \land \subst L {(R\ U)} M
\end{align*}
By employing nominal abstraction in the second clause, we are able to
use the notion of substitution in the meta-logic to directly and
succinctly encode substitution in the object language. Also note that
we are, in fact, defining a process of cascading substitutions rather
than simultaneous substitutions. Since the substitutions we define
(using closed terms) do not affect each other, these two notions of
substitution are equivalent. We will have to prove some part of this
formally, of course, which in turn requires proving results about the
(non)occurrences of nominal constants in our judgments.

One consequence of defining cascading substitutions via the notion of
substitution in the meta-logic is that we do not get to specify where
substitutions are applied in a term. In particular, given an
abstraction $\abs A R$ we cannot preclude the possibility that a
substitution for a nominal constant in this term will affect the type
$A$. Instead, we must show that well-formed types cannot contain free
variables which we formalize as
\begin{equation*}
\forall A. \nabla x.~ (\,\tridot \type (A\ x)) \supset \exists A'.~ (A =
\lambda y. A').
\end{equation*}
This formula essentially states any dependencies a type has nominal
constants must be vacuous. A related result is that in any provable
judgment of the form $(L\tridot \of M A)$, any nominal constant
(denoting a free variable) in $M$ must also occur in $L$, \ie,
\begin{equation*}
\forall L, M, A. \nabla x.~ \ctx L \supset (L \tridot \of {(M\ x)} (A\
x)) \supset \exists M'.~ (M = \lambda y. M')
\end{equation*}
This is proved by induction on the {\sl of}\/ judgment.

Given these results about the (non)occurrences of nominal constants in
judgments, we can now prove fundamental properties of arbitrary
cascading substitutions. The first property states that closed terms,
those typeable in the empty context, are not affected by
substitutions, \ie,
\begin{equation*}
\forall L, M, N, A.~ (\,\tridot \of M A) \supset \subst L M N \supset (M
= N).
\end{equation*}
The proof here is by induction on {\sl subst} which corresponds to
induction on the length of the list $L$. The key step within the proof
is using the lemma that any nominal constant in the judgment
$(\,\tridot \of M A)$ must also be contained in the context of that
judgment. Since the context is empty in this case, there are no
nominal constants in $M$ and thus the substitutions from $L$ do not
affect it.

We must show that our cascading substitutions act compositionally on
terms in the simply-typed $\lambda$-calculus. For the term $c$ this is
almost trivial,
\begin{equation*}
\forall L, M.~ \subst L c M \supset (M = c).
\end{equation*}
The proof is by induction on {\sl subst}. For application we have the following.
\begin{align*}
& \forall L, M, N, U.~
\ctx L \supset \subst L {(\app M N)} U \supset \null \\
&\hspace{2cm} \exists M_U, N_U.~ (U = \app {M_U} {N_U}) \land
\subst L M M_U \land \subst L N N_U
\end{align*}
This is proved by induction on {\sl subst}. Finally, for abstractions
we prove the following, also by induction on {\sl subst}:
\begin{align*}
&\forall L, A, R, U.~
\ctx L \supset \subst L {(\abs A R)} U \supset (\,\tridot \type A)
\supset\null\\
&\hspace{2cm} \exists R_U.~ (U = \abs A R_U) \land \null \\
&\hspace{3cm} (\forall V.~
\reduce V A  \supset \nabla x.~ \subst {((\of x A) :: L)} {(R\ x)}
(R_U\ V))
\end{align*}
Here we have the additional hypothesis of $(\,\tridot \type A)$
to ensure that the substitutions created from $L$ do not affect $A$.
At one point in this proof we have to show that the order in which
cascading substitutions are applied is irrelevant. The key to showing
this is realizing that all substitutions are for closed terms. Since
closed terms cannot contain any nominal constants, substitutions do
not affect each other.

Finally, we must show that cascading substitutions preserve typing.
Moreover, after applying a full cascading substitution for all the
free variables in a term, that term should now be typeable in the
empty context:
\begin{align*}
&\forall L, M, N, A.~ \ctx L \supset \subst L M N \supset (L\tridot
\of M A) \supset (\,\tridot \of N A).
\end{align*}
This formula is proved by induction on {\sl subst}.

\subsection{The Final Result}

Using cascading substitutions we can now formalize the generalization
of strong normalization that we described earlier: given a (possibly
open) well-typed term, every closed instantiation for it satisfies the
logical relation {\sl reduce}\/:
\begin{equation*}
\forall L, M, N, A .~ \ctx L \supset (L\tridot \of M A) \supset \subst L M N
\supset \reduce N A
\end{equation*}
The proof of this formula is by induction on the typing judgment. The
inductive cases are fairly straightforward using the compositional
properties of cascading substitutions and various results about
reducibility. In the base case, we must prove
\begin{equation*}
\forall L, M, N, A.~ \ctx L \supset \member {(\of M A)} L \supset \subst L
M N \supset \reduce N A,
\end{equation*}
which is done by induction on {\sl member}. Strong normalization is
now a simple corollary where we take $L$ to be $nil$. Thus we have
proved
\begin{equation*}
\forall M, A.~ (\,\tridot \of M A) \supset \sn M.
\end{equation*}

%%% Local Variables:
%%% mode: latex
%%% TeX-master: "root"
%%% End:

% LocalWords:  ctx subtyping Refl TVar ty wfty tm bnd versa ctxs de Bruijn nt
% LocalWords:  dapp dvar nat dctx combinatory LJ Takahashi's CCS TODO sn Girard
% LocalWords:  typeable subst POPLmark Girard's Rosser Abella Prolog HOAS uniq
% LocalWords:  indices compositionally Pollack expressivity abella equiv bruijn
% LocalWords:  bi atomicity SA Trans fsub formedness db girard subsumption LF
% LocalWords:  bijectivity deconstructs

\chapter{Related Work}
\label{ch:related-work}

There are many frameworks which can be used to specify, to prototype,
and to reason about computational systems. Some of these are designed
specifically for this purpose while others have a different
motivation, but can achieve a similar result. In this chapter we
present a selection of these frameworks and contrast their
capabilities with the framework put forth in this thesis. As the
contributions of this thesis are primarily in the reasoning part of
the framework, we shall give extra attention to this component in the
comparisons.

Our framework is based on a two-level logic approach to reasoning. We
have found this to be very effective in practice, but one could use
the logic \logic in a single-level logic fashion as well. The
frameworks in this chapter come in both varieties: some use a
two-level logic approach to which we can compare directly, while
others use a single-level logic approach. In either case, the
differences due to the reasoning approach used are often overshadowed
by the differences in the treatment of binding. Thus we shall often
say very little about the reasoning approach except when comparing
against another two-level logic framework.

We organize our comparison of frameworks around the techniques used to
represent the binding structure of objects. This is by far the most
salient characteristic of the frameworks, and has the largest effect
on the succinctness and the quality of the corresponding reasoning.
Thus we will focus on issues such as the representation of binding,
determining equality modulo renaming of bound variables,
capture-avoiding substitution, and representing judgments with
side-conditions related to binding. We will use the example of the
simply-typed $\lambda$-calculus from Section~\ref{sec:example} to
illustrate these issues. We will order our comparisons based on the
kind of support for binding provided by the framework. Specifically,
we will look at frameworks based on first-order, nominal, and
higher-order representations.

\section{First-order Representations}
\label{sec:related-first-order}

First-order representations provide no special treatment for binders.
As a result, variables must be encoded using strings or integers and
binding aspects must be captured through constructors. Further,
mechanisms for manipulating and reasoning about binders must be
developed by interpreting the constructors representing them on a
case-by-case basis by by users of the framework. On the other hand,
the benefit of first-order representations is that many mature
frameworks exist which support this type of representation. For
example, languages like SML and Prolog can effectively prototype
specifications written using a first-order representation, while in
the reasoning phase, theorem provers like Coq \cite{bertot04book},
ACL2 \cite{kaufmann00book}, and HOL \cite{harrison96fmcad} can operate
directly on first-order representations. Our discussion in this
section will focus not on any particular framework but rather on the
benefits and costs of various first-order representations. In
particular, we look at the three most common first-order
representations: named, nameless, and locally nameless.

\subsection{Named Representation}
\label{sec:named}

The most direct and naive approach to encoding binders is to assign
each variable a fixed name. For instance, the term $(\lambdat x i x)$
might be encoded as $(\abst {\mbox{``$x$''}} i (\var
\mbox{``$x$''}))$. Here we have picked a particular name, $x$, to
denote the otherwise arbitrary variable in the function. This
representation is very natural, but it creates at least three major
problems for users.

First, equality modulo the renaming of bound variables is not
reflected in the representation. For example, the terms $(\lambdat x i
x)$ and $(\lambdat y i y)$ have two different representations, $(\abst
{\mbox{``$x$''}} i (\var \mbox{``$x$''}))$ and $(\abst
{\mbox{``$y$''}} i (\var \mbox{``$y$''}))$. Thus users of a named
representation must explicitly define a notion of equivalence for each
syntactic class with binding. This becomes particularly painful in
reasoning where the user must establish many equivalence lemmas.

Second, no support is provided for capture-avoiding substitution over
binding, and instead users must define this substitution on their own.
Naive capture-avoiding substitution is not structurally recursive, and
thus one must resort to well-founded recursion or instead use
simultaneous capture-avoiding substitution. Either choice results in
additional overhead during reasoning when the user must prove various
substitution lemmas. Moreover, substitution must be defined for each
class of syntactic objects with binding, and the proofs of related
lemmas must be repeated.

Third, no logical support is provided for treating side-conditions
related to variable binding structure. An example of such a
side-condition is manifest in the following rule for typing
abstractions in the $\lambda$-calculus:
\begin{equation*}
\infer[x\notin\dom(\Gamma).]
 {\Gamma \vdash (\lambdat x a r) : a \to b}
 {\Gamma, x:a \vdash r : b}
\end{equation*}
With the named representation, users must devise their own mechanisms
for treating such side-conditions. A naive approach in the case of the
rule above is to select any fresh variable name, but this can lead to
structural induction principles which are too weak to be usable in
practice. Moreover, one must still prove that the choice for a
variable name is truly arbitrary.

Large-scale developments have been constructed using the named
representation, and the result is often that the binding issues
overwhelm the development. For instance, VanInwegen used a named
representation to encode and reason about SML in the HOL theorem
prover \cite{vaninwegen96phd}. She noted:
\begin{quote}
Proving theorems about substitutions (and related operations such as
alpha-conversion) required far more time and HOL code than any other
variety of theorems.
\end{quote}

\subsection{Nameless Representation}
\label{sec:nameless}

A more sophisticated first-order representation encodes each variable
occurrence with an integer denoting the location of its binder
relative to the binding structure around it. Commonly, one uses the
distance from the variable occurrence to its binder, measured in terms
of other binders above it in the abstract syntax tree. For example,
the term $(\lambdat x i (\lambdat y i x))$ would be encoded as $(\abs
i (\abs i (\var 2)))$. Here the 2 denotes that the binder for this
variable occurrence is two binders away. This kind of representation
originates from de Bruijn \cite{debruijn72} and hence is often
referred to as the de Bruijn representation.

The benefit of a nameless representation over a named representation
is that $\alpha$-equivalent terms, \ie, those that differ only in the
names of bound variables, are syntactically identical. Thus in the
reasoning phase the user does not need to prove additional properties
about $\alpha$-equivalence.

The nameless representation shares many problems with the named
representation and has some additional ones as well. The nameless
representation still requires users to define capture-avoiding
substitution themselves, and now this makes it necessary to reason
about the correctness of the arithmetical operations that have to be
carried out for maintaining the consistency of the representation when
effecting substitutions. A new difficulty introduced by the nameless
treatment of variables is that representations become hard for humans
to read, since different occurrences of the same variable in them may
be rendered into different integers depending on the contexts in which
they appear. This also has an impact on the statements of lemmas and
theorems that often need to explicitly talk about re-numberings and
other arithmetical operations over terms, thereby diminishing clarity.

The nameless representation has been used in large-scale developments.
Hirschkoff, for instance, used it to formalize the $\pi$-calculus in
the Coq theorem prover \cite{hirschkoff97tphol}. He found that the
nameless representation simplified much of the work with bound
variables versus the named representation, but the treatment of
binding within it still overwhelmed the development. He concluded:
\begin{quote}
Technical work, however, still represents the biggest part of our
implementation, mainly due to the managing of De Bruijn indexes [...]
Of our 800 proved lemmas, about 600 are concerned with operators on
free names.
\end{quote}

\subsection{Locally Nameless Representation}
\label{sec:locally-nameless}

The most promising first-order representation is a hybrid approach
which uses the nameless representation for bound variables and the
named representation for free variables. This is called the
locally nameless representation \cite{aydemir08popl, chargueraud09ln}.

The locally nameless representation has advantages over both the named
and nameless representations. First, $\alpha$-equivalent terms are
syntactically equal, as in the nameless representation. Second, the
statement of lemmas and theorems rarely need to talk about
arithmetical operations over terms. Third, since free and bound
variables are syntactically distinguished, capture-avoiding
substitution can be defined in a straightforward and structurally
recursive way.

Like other first-order approaches, the locally nameless representation
still requires users to define capture-avoiding substitution and
prove various lemmas about it. A drawback specific to this
representation is that users must provide functions which bind and
unbind variables (\ie, implementing the interface between the named
and nameless representations). Constructing or deconstructing a term
with binding requires going through these functions in order to ensure
that certain invariants regarding free and bound variables are
maintained. Finally, users must show that these binding and unbinding
functions interact with substitution in appropriate ways. Recent
progress has been made in automatically generating this type of
infrastructure \cite{aydemir09lngen}.

The locally nameless representation has some analogs to our own
representation in the following sense: we represent bound variables
using $\lambda$-terms and free variables using nominal constants.
However, we provide capture-avoiding substitution for free to the
user. Unbinding and binding of terms (\eg, switching between
$\lambda$-binders and nominal constants) is handled using application
and nominal abstraction, respectively. In the locally nameless
approach one occasionally needs to prove that free variables can be
renamed while preserving provability, while that is an innate property
of our framework due to our treatment of nominal constants. The
fundamental contrast is that the locally nameless representation
allows one to use an existing theorem prover, but requires significant
binding infrastructure to be constructed, while our representation
requires a new theorem prover, but incorporates binding infrastructure
into the theory underlying the prover.

\section{Nominal Representations}
\label{sec:related-nominal}

The nominal representation of binding is a mild extension of
first-order abstract syntax with support for $\alpha$-equivalence
classes. The basis of the nominal representation is an infinite
collection of names called atoms together with a freshness
predicate---denoted by the infix operator $\#$---between atoms and
other objects and a swapping operation involving a pair of atoms and a
term. Binding is represented by means of a term constructor $\langle
\cdot \rangle \cdot$ which takes an atom and a term. The nominal
representation then assumes certain properties of swapping and
freshness with respect to this constructor so that
$\alpha$-equivalence classes are respected. This representation is
also referred to as nominal abstract syntax.

Nominal representations were first introduce through the nominal logic
of Pitts \cite{Pitts03ic}, which is an extension of first-order logic.
When working with nominal abstract syntax in a logical setting it is
often desirable to quantify over fresh atoms. In this regard, a useful
consequence of the properties assumed for freshness and swapping is
that the following equivalence holds for any formula $\phi$ whose free
variables are $a, x_1,\ldots, x_n$ where $a$ is of atom type:
\begin{equation*}
\exists a. (a \# x_1 \land \ldots \land a \# x_n \land \phi)
\quad \equiv\quad
\forall a. (a \# x_1 \land \ldots \land a \# x_n \supset \phi)
\end{equation*}
Nominal logic introduces the \new-quantifier by defining $\new a.\phi$
as one of the above formulas. This is very reminiscent of the
properties shown for the $\nabla$-quantifier in
Section~\ref{sec:nabla-freshness}, and in general, the
$\nabla$-quantifier and the \new-quantifier behave very similarly.

The most prominent specification and prototyping language based on
nominal representations is $\alpha$Prolog, an extension of Prolog that
accords a proof search interpretation of a version of Horn clauses in
nominal logic \cite{cheney03unif}. In particular, $\alpha$Prolog
allows the \new-quantifier to appear in the heads of clauses. This
allows $\alpha$Prolog to describe specifications which involve a finer
treatment of names than what is possible in our specification logic of
\hh. However, it seems that $\alpha$Prolog clauses bear a close
resemblance to the patterned form of definitions in \logic which allow
the $\nabla$-quantifier in the head (see
Section~\ref{sec:pattern-form}). While a formal encoding of
$\alpha$Prolog clauses as definitions in \logic is left to future
work, we note that such definitions can be animated using a system
similar to Bedwyr \cite{baelde07cade}, a specification tool based on a
simple proof search procedure for the Linc logic (one of the
precursors to \logic).

Nominal logic does not have a parallel to the fixed-point
interpretation of definitions in \logic, and thus nominal logic cannot
be used directly to reason about specifications written within it.
Instead, such reasoning must be carried out indirectly by first
formalizing the relevant nominal logic specification in a richer logic
such as that underlying a system like Coq or Isabelle/HOL and then
using the capabilities of that logic
\cite{aydemir06lfmtp,urban05cade}. The most prominent development in
this area is the Nominal package for Isabelle/HOL. This package allows
for an easy definition of syntactic objects with $\alpha$-equivalence
classes. This construction is conducted completely within the HOL
logic and can thus be trusted. Moreover, the construction of these
$\alpha$-equivalence classes and some boilerplate results about them
are provided automatically via the macro-like features of Isabelle.
This includes a strong induction principle which matches the one used
in typical ``pencil and paper'' proofs, and it includes a recursion
combinator which allows capture-avoiding substitution to be defined
structurally.

The nominal approach has a number of drawbacks. First, binding is only
simulated by means of a distinguished constructor and thus
substitution is not automatically provided. Instead, users must define
it on their own for both specification and reasoning, and
consequently, must prove substitution lemmas relative to their
definition of substitution. Second, in order to use functions and
predicates in the reasoning phase, one must prove properties which
state that name swapping does not change the results of a function or
the provability of a predicate---a property which is enforceable
statically for definitions of predicates in \logic. Third, to
effectively use the nominal representation in reasoning, one really
needs an existing package which automates the construction of
$\alpha$-equivalence classes and proves the related lemmas. Although
such a mature package exists for Isabelle/HOL, other theorem provers
may not have the automation capabilities necessary to effectively
construct such a package. Finally, an often trumpeted benefit of
nominal representations is that they allow a first-class treatment of
names, but the analyses enabled by that treatment seem no more
powerful than what is now provided by nominal abstraction. A
formal validation of this observation is left to future work.

\section{Higher-order Representations}
\label{sec:related-higher-order}

Higher-order representations use the meta-level function space to
encode binding in object languages, \eg, by using data constructors
such as $\hsl{abs} : (tm \to tm) \to tm$. This allows the object
representation to inherit all the properties of binding from the
meta-level. However, traditional tools often have a very strong notion
of equality (\eg, incorporating case analysis or fixed-point
combinators) which makes them ill-suited to encoding higher-order
representations. For this reason, we choose to focus here on
frameworks based on the $\lambda$-tree syntax representation of
binding which assumes only $\alpha\beta\eta$-conversion in determining
equality \cite{miller00cl}. This allows an adequate representation of
object languages with binding, and provides free $\alpha$-conversion
and capture-avoiding substitution for those languages. The cost is
that usually new frameworks must be developed which support the
$\lambda$-tree syntax representation. In this section we discuss such
frameworks which have been implemented.

\subsection{Hybrid}

Hybrid is a system which aims to support reasoning over higher-order
abstract syntax specifications using traditional theorem provers such
as Coq and Isabelle/HOL \cite{felty09tr}. The basic idea of the system
is translate higher-order abstract syntax descriptions into an
underlying de Bruijn representation. The logic of the theorem prover
then serves as the meta-logic in which reasoning is conducted. This
approach necessarily produces more overhead during reasoning due to
the need occasionally to reason about the effects of the translation.
However, there is good reason to believe that most of this can be
automated in the future. Also, Hybrid is often used in a two-level
logic approach using a specification logic which is essentially
identical to our own \hh specification language.

The Hybrid system, by design, lacks a meta-logic with the tools to
elegantly reason over higher-order abstract syntax descriptions. Most
notably, the meta-logics used by Hybrid lack a device like the
$\nabla$-quantifier for reasoning about open terms and generic
judgments. Recent work has suggested that such a device is not
necessary for simple reasoning tasks such as type uniqueness arguments
\cite{felty09ppdp}. Yet, it is unclear how the naive approach used in
this work will scale to problems such as those proposed by the
POPLmark Challenge \cite{aydemir05tphols}. In such problems one needs
to recognize as equivalent those judgments which differ only in the
renaming of free variables. Such a property is built into our
meta-logic by representing such free variables by nominal constants,
while in Hybrid one will have to manually develop and prove properties
about notions of variable permutations.

\subsection{Twelf}
\label{sec:twelf}

Twelf \cite{pfenning99cade} is a system for specifying and reasoning
with $\lambda$-tree syntax using LF, a dependently typed lambda
calculus \cite{harper93jacm}. In the LF methodology, object language
judgments are encoded as LF types, and rules for making judgments are
encoded as LF constructors for the corresponding types. The LF terms
inhabiting these types are then derivations of judgments. Thus LF
constitutes a specification language. Twelf implements an operational
semantics for constructing LF terms which provides a means of
animating LF specifications.

Since dependent types can be exploited in LF specifications, these can
often be more elegant than those described in our simply-typed
setting. For example, one can provide a definition of simply-typed
$\lambda$-terms where the type of a $\lambda$-term is reflected in the
type of its LF representation. When it is done in this way, one does
not need to talk about pre-terms and provide a separate typing
judgment for selecting well-typed terms. Moreover, this allows some
properties to be obtained for free. For example, we can define
evaluation over this representation of simply-typed $\lambda$-calculus
so that type preservation is a direct consequence of the type of the
evaluation judgment (\ie, evaluation is defined to take a
$\lambda$-term with a particular type and return another
$\lambda$-term with the same type). However, in terms of expressive
power, the simply-typed and dependently-typed specification languages
are equivalent \cite{felty91lf}. Thus when referring to the example of
the simply-typed $\lambda$-calculus we will assume that it is encoded
in LF in the same style as in our framework.

Since derivations of judgments are LF terms, we can think of defining
further judgments over such terms. For example, suppose that we encode
the simply-typed $\lambda$-calculus in LF including the type
constructors {\sl of} and {\sl eval} corresponding to typing and
evaluation judgments and the corresponding term constructors for
forming those judgments. Then we could define a judgment named {\sl
  preserve} which holds of a derivation of $(\of t a)$, a derivation
of $(\eval t v)$, and a derivation of $(\of v a)$. Viewing this
judgment as one which takes the first two arguments and produces the
third, we could provide term constructors for {\sl preserve} which
describe how derivations of $(\of t a)$ and $(\eval t v)$ are used to
reconstruct a derivation of $(\of v a)$. Twelf can then check that
this judgment is total in its first two arguments, \ie, it is defined
and terminates for all inputs. If so, we can think of {\sl preserve}
as a proof of the meta-property that evaluation preserves typing in
the simply-typed $\lambda$-calculus. This style of encoding is known
as a Twelf meta-theorem.

The Twelf approach of encoding meta-theorems as LF judgments has some
serious limitations. For example, consider the following statement of
the type preservation theorem: ``{\it forall} derivations of $(\of t
a)$ and {\it forall} derivations of $(\eval t v)$ there {\it exists} a
derivation of $(\of v a)$.'' This theorem was encoded in an LF
judgment which took the first two derivations as input and produced
the last one as output. In general, a judgment representing a Twelf
meta-theorem has inputs corresponding to $\forall$ quantifiers and
outputs corresponding to $\exists$ quantifiers. Therefore,
meta-theorems are restricted to a $\forall\exists$ quantification
structure.

A related issue with the Twelf approach is that Twelf does not have a
definition mechanism. Instead one has to use LF judgments to describe
the properties of a specification. This is severely limiting since LF
judgments can only describe behaviors that {\em may} happen and cannot
describe those which {\em must} happen. For example, to state the
strong normalization property for the simply-typed $\lambda$-calculus
in Section~\ref{sec:girards-strong-norm}, we used the following
definition:
\begin{equation*}
\sn M \mueq \forall M'.~ (\,\tridot \step M {M'}) \supset \sn {M'}
\end{equation*}
This says that in order for $\sn M$ to hold, every term to which $M$
can convert {\em must} also satisfy {\sl sn}. Such a definition is not
possible with Twelf. A similar issue arises if one tries to encode the
path equivalence property for $\lambda$-terms from
Section~\ref{sec:path-equiv-lambda}. The hypothesis in this case is
that every path in one $\lambda$-term {\em must} occur in the other
$\lambda$-term.

There is also a practical issue of relying on Twelf's totality checks
in order to ensure that a meta-theorem is correct. It is possible, for
example, for one to fill out the details of a meta-theorem so that
totality holds, but for Twelf's checker to be unable to determine
totality. In such a case, one must confront various options: 1) try to
rewrite the meta-theorem so that totality is more evident, 2) wait for
a new version of Twelf's totality checker that may be more powerful,
or 3) do a careful hand proof of totality. The first option is not
always possible, and the latter two are fairly undesirable.

An interesting comparison between the Twelf approach and our own is in
the treatment of judgment contexts. In our approach, the definition of
{\sl seq} includes a list argument which keeps track of the context of
a judgment and makes it explicit during reasoning. We then define a
predicate like {\sl ctx} which will recognize the structure of such a
context, and we prove various inversion lemmas about membership in
that context. In Twelf, such contexts are called regular worlds, and
although they are declared explicitly, they are kept implicit during
reasoning. The Twelf machinery automatically provides the associated
inversion properties of regular worlds. Like most automation, this is
very useful when it works and rather bothersome when it does not. For
instance, in the conversion between higher-order abstract syntax and
de Bruijn representations from Section~\ref{sec:conv-de-bruijn}, we
work with a context which has an arithmetical property which depends
on the judgment being made. Specifically, the context must not contain
de Bruijn indices which are greater than the depth at which the
conversion judgment is being made. This is needed to ensure uniqueness
of de Bruijn indices when descending underneath abstractions. The
regular worlds mechanism of Twelf does not allow the description of a
context to the depend on the arguments of the judgments made in that
context. Thus one cannot express this property directly and must
instead find a way to work around this limitation, \eg, by making the
context explicit \cite{crary08lfmtp}.

% AG*: Talk about using LF as specification language in our framework?
%     Might belong to future work?
%     Pros
%       Would allow us to sometimes use more elegant specifications
%     Cons
%       Many LF judgments = heavy encoding and overhead

\subsection{Delphin}
\label{sec:delphin}

Delphin is a higher-order functional programming language which
operates over LF terms and can serve as a meta-logic for LF
specifications \cite{poswolsky08phd}. Delphin makes a distinction
between LF functions which are purely representational (\ie, that must
be parametric in their argument) and Delphin functions which are
computational (\ie, that may perform case analysis on their argument).
A Delphin meta-theorem is a Delphin function which is total. For
example, the property of type preservation for the simply-typed
$\lambda$-calculus is encoded as a function which takes LF terms
denoting derivations of $(\of t a)$ and $(\eval t v)$ and returns an
LF term denoting a derivation of $(\of v a)$. Like Twelf, it is
possible for Delphin not to be able to automatically determine
totality of a meta-theorem, and then one must either rewrite the
meta-theorem, wait for a stronger totality checker, or perform the
totality check by hand.

The central way in which Delphin improves on Twelf is that it treats
Delphin functions as first-class, and thus more sophisticated
properties can be encoded during reasoning. For example, the path
equivalence of $\lambda$-terms from
Section~\ref{sec:path-equiv-lambda} can be encoded fairly directly in
Delphin. The property that all the paths in the $\lambda$-term $s$
must also exist in the $\lambda$-term $t$ can be represented in
Delphin by a function which takes a judgment like $(\pathp s p)$ and
returns a judgment like $(\pathp t p)$, and such a function can be an
input (\ie, hypothesis) to a Delphin meta-theorem stating the path
equivalence property.

Delphin also uses first-class functions to treat the contexts of
specification judgments. When a Delphin meta-theorem is written, it
may make a recursive call to itself underneath some additional
abstractions. These abstractions create new variables for which the
Delphin meta-theorem must be defined. To achieve this, the Delphin
meta-theorem carries around an argument which is a function mapping
such variables to an appropriate invariant. This approach to
representing contexts is more flexible than the regular worlds
approach of Twelf. Specifically, in the example of conversion between
higher-order abstract syntax and de Bruijn representations from
Section~\ref{sec:conv-de-bruijn}, the dependency between the judgment
and the context in the judgment can be made explicit in Delphin. Thus
one can prove that the conversion is deterministic in a fairly
straightforward way in Delphin.

Despite the additional flexibility that Delphin provides in working
with the contexts of judgments, it still does not make those contexts
explicit as in our approach. Thus, some operations over contexts which
we can perform easily in our framework are difficult or impossible in
the Delphin approach. For example, in our formalization of Girard's
proof of strong normalization for the simply-typed $\lambda$-calculus
in Section~\ref{sec:girards-strong-norm}, we defined a process of
closing a term by instantiating all free variables with closed terms
of the appropriate types. This definition was based on walking over
the context of the typing judgment of such a term, something that is
not possible to do in Delphin.

\subsection{Tac}
\label{sec:Tac}

Tac is a general framework for implementing logics. For the purposes
of our present discussion, we will focus on the particular logic
$\mu$LJ which is the most popular logic implemented in Tac
\cite{tac-website, baelde08phd}. The logic $\mu$LJ comes from the same
line of logics as \logic and differs primarily in the semantics
attributed to the $\nabla$-quantifier. We recall that the
interpretation of $\nabla$ in \logic is derived from adding to \FOLDN
the exchange and strengthening properties related to this quantifier
that are embodied in the following equivalences:
\begin{align*}
\nabla x.\nabla y.F \equiv \nabla y.\nabla x.F && \nabla x.F \equiv F
\mbox{, if $x$ does not occur in $F$}
\end{align*}
The $\mu$LJ logic eschews these additions, strengthening the
interpretation of the $\nabla$-quantifier instead through a capability
to lift its predicative effect over types. At a practical, proof
construction level, whereas the $\nabla$-quantifier can be treated in
\logic using nominal constants, in $\mu$LJ it must be treated by using
explicit local contexts for each formula in a sequent. The size and
ordering of the local context is always respected and instantiations
for existentially or universally quantified variables may only use
those generic variables which appear in the local context.

The $\mu$LJ logic does not have an operation like nominal abstraction
and instead treats only equality. The issue with extending $\mu$LJ to
treat nominal abstraction is that the process of nominal
capture-avoiding substitution (through which the nominal abstraction
rules are defined) is based on carrying substitution information from
one formula into all other formulas in a sequent. In the minimal
setting, however, such information may be invalid in other formulas
because the local signatures do not match. For example, a substitution
which replaces $M$ by a variable $x$ from the local context does not
make any sense in a formula which contains $M$ but has an empty local
context. As a result of this lack of nominal abstraction, the
descriptions of properties such as the binding structure of
specification judgment contexts in $\mu$LJ is less direct and thus
harder to work with (see Figure~\ref{fig:ctx-lg} for an example).
Furthermore, without nominal abstraction, one cannot directly
formulate the invariants necessary to perform induction underneath
$\nabla$ (see Section~\ref{sec:induct-with-nabla}). An ability of
equivalent power is obtained in $\mu$LJ instead through the lifting
capability mentioned earlier \cite{baelde08lfmtp}. From a practical
perspective, however, we find that reasoning based on lifting is often
much more complicated than reasoning based on traditional induction
combined with nominal abstraction.

The benefit of minimal treatment of the $\nabla$-quantifier is that
the local context of a formula can be used to provide an adequate
encoding for certain types of similar contexts in an encoding. This
allows certain encodings to be shallower or to have fewer adequacy
side-conditions than their counterparts in our setting. For example,
in the statement of adequacy for our encoding of the specification
logic into the predicate {\sl seq} in Section~\ref{sec:adequacy-seq}
we have the requirement that $\nabla$-quantification is allowed only
at inhabited types. This is necessary since if $\tau$ were an
un-inhabited type then $\exists_\tau x . \top$ should not be provable
in the specification logic, and yet its encoding as a {\sl seq}
judgment is provable if $\nabla$-quantification is allowed at type
$\tau$. The issue is that the specification logic existential
quantifier is mapped to the meta-logic existential quantifier and the
latter allows instantiations containing any nominal constants even if
there are no other inhabitants at that type. If we take the definition
of {\sl seq} as being in $\mu$LJ then it should be an adequate
encoding of the specification logic without any conditions. Thus the
local context in the minimal approach provides an adequate
representation of the variable signature of an \hh sequent. To achieve
the same condition-less adequacy for \logic would require explicitly
carrying around a representation of the specification logic signature
and using this to restrict the type of instantiations for meta-logic
universal and existential quantifiers. This approach would require
more work due to the need to establish properties about the signature,
but this is the same work which is already required in the minimal
approach. Moreover, this explicit encoding of the signature would
allow one to directly analyze and interact with the signature (\eg,
quantifying over all signatures of a certain type) which is not
possible in the minimal approach.

%%% Local Variables:
%%% mode: latex
%%% TeX-master: "root"
%%% End:

% LocalWords:  SML Prolog Coq ACL HOL dom VanInwegen de Bruijn Hirschkoff Twelf
% LocalWords:  FreshML combinator typeof eval forall Twelf's Delphin tm LF ctx
% LocalWords:  Girard's Tac LJ un pre sn encodings Bedwyr analyses POPLmark
% LocalWords:  numberings

\chapter{Conclusion and Future Work}
\label{ch:future-work}

This thesis has concerned the development of a framework for
specifying, prototyping, and reasoning about formal systems. The
specific framework that has been of interest has two defining
characteristics. First, it has been based on an intertwining of two
distinct logics for specification and for reasoning about specifications.
The specification logic has the property of also being executable,
thereby rendering descriptions written in it transparently into
prototypes of the formal systems that are encoded. The reasoning logic
has the capability of directly embedding the specification logic;
specifications themselves are represented indirectly through this
medium. This is, in fact, the style of encoding that is developed
here. The benefits of this approach are that the same specifications
can be used for prototyping and reasoning and generic properties of
the specification logic can be proved and used to advantage in
reasoning. The second important characteristic of our framework is
that uses a higher-order treatment of binding constructs, supporting
this approach in both the specification and the reasoning levels
through targeted logical devices.

The focus in this thesis has been on the reasoning component of the
above framework. In this context, we have developed the logic \logic
that provides the mechanism of fixed-point definitions that can also
be interpreted inductively or co-inductively and that has
sophisticated devices for dealing with higher-order representations of
syntactic constructs. An important component of this logic is the
notion of nominal abstraction that allows for the reflection into
definitions of properties of objects introduced into proofs in the
course of treating binding constructs. We have used \logic as the
basis of an interactive theorem prover called Abella and have explored
a two-level logic approach to reasoning about formal systems in its
context. This system has been applied to several interesting reasoning
examples and has yielded appealing solutions in most of these
situations.

While several promising results have been obtained in this thesis,
there remain many more interesting things still to be done. We sketch
below some possible ways in which the framework for specification,
prototyping, and reasoning that has been considered can be further
enriched. The kind of work involved in realizing these
possibilities ranges from foundational considerations for increasing
the expressive power of the meta-logic to more implementation oriented
efforts to better facilitate the reasoning process.

\section{More Permissive Stratification Conditions for Definitions}

The stratification condition for definitions in \logic is fairly
simplistic, and it rules out seemingly well-behaved definitions such
as the reducibility relation used in logical relations arguments (see
Section~\ref{sec:girards-strong-norm}). One could imagine a more
sophisticated condition which would allow definitions to be stratified
based on an ordering relation over the arguments of the predicate
being defined. The proof theoretic arguments needed to prove
cut-elimination for a logic with such definitions seem rather
delicate, particularly since we allow substitutions which may
interfere with any ordering based on term structure. From the
perspective of developing the theory for such an extension, a first
step might be to realize the addition to the Linc$^-$ logic
\cite{tiu.momigliano}. Given the way the cut-elimination proof for
\logic has been obtained from cut-elimination for Linc$^-$, if we can
successfully carry out such an extension to Linc$^-$, the desired
result relative to \logic might then follow easily.

There is also an interaction of this line of research with the
development of induction and co-induction. The strict notion of
stratification that \logic uses ensures that each definition describes
a single fixed-point and the induction and co-induction rules operate
on this structure. However, if we weaken the stratification condition,
then each definition can be viewed as a possibly infinite collection
of fixed-points. The rules for induction and co-induction must be
carefully adapted in light of this fact.

\section{Context Inversion Properties}

When reasoning about specification judgments we often need to describe
and utilize properties of the contexts in which those judgments are
formed. This takes the form of stating a definition describing those
contexts, proving various inversion lemmas about membership in those
contexts, and then applying these lemmas at the appropriate times.
Manually stating, proving, and using these lemmas introduces a fair
amount of overhead which seems mundane enough that we might want to
avoid it.

One option is to attack this problem with automation. One could
imagine automatically generating and proving inversion properties for
those definitions which can be seen as describing contexts. The
inversion properties follow directly from the definitions, and the
proofs are by simple inductive arguments. These lemmas could then be
automatically applied anytime we have a member of such a context.
However, it is unlikely that such automation of these properties would
be able to cope with more complicated properties of contexts such as
those used in the conversion between higher-order abstract syntax and
the de Bruijn representation (see Section~\ref{sec:conv-de-bruijn}).

\begin{figure}[t]
\begin{align*}
(\nabla x. \typeof {(L\ x)} x A) &\triangleq \nabla x.~\member {(\assm
  x A)} (L\ x) \\
\typeof L {(\app M N)} B &\triangleq \exists A.~\typeof L M (\arr A B)
\land \typeof L N A \\
\typeof L {(\abs A R)} (\arr A B) &\triangleq \nabla x.~ \typeof
{((\assm x A)::L)} {(R\ x)} B
\end{align*}
\caption{Typing judgment directly within \logic}
\label{fig:typeof}
\end{figure}

Another option would be to devise an alternate version of the
specification logic or of its encoding in the meta-logic so that such
context inversion properties are not needed as often. It is unclear
how such alternatives would be developed, but as an analogy, consider
the following. Typing for the simply-typed $\lambda$-calculus can be
defined directly within \logic via a definition of $(\typeof L M A)$
which holds when $M$ has type $A$ in the typing context $L$. The
clauses for this definition are presented in Figure~\ref{fig:typeof}.
Using nominal abstraction, this definition of typing directly
precludes the possibility of looking anything up in the context which
is not of the form $(\assm x A)$ for some nominal constant $x$. Thus
one does not need to deal with superfluous cases when performing case
analysis on a typing judgment. Note, however, that uniqueness
properties regarding the typing context would still need to handled
manually.

\section{Types and Explicit Typing}

The types in \logic play no role in reasoning except to restrict the
valid instantiations of quantifiers. Thus, for example, one cannot
directly perform induction or case analysis on a term based on its
type. Instead, one must create a definition which recognizes terms of
that type, and then use induction or case analysis on that definition.
This requires that one knows that the definition holds on the term,
which in turn may require carrying around more explicit typing
information in the specification or reasoning. All of this creates
overhead just to work effectively with types. For example, in
formalizing Girard's
proof of strong normalization for the simply-typed $\lambda$-calculus
(Section~\ref{sec:girards-strong-norm}) we had to create a specification
logic judgment which recognized well-formed types. This judgment was
then carried around during reasoning, and it even had to be put into
the specification of the object language typing judgment. We then had
to prove a lemma which said that an object language type could not
contain any nominal constants.

One possible solution is to attach explicit typing information to
every variable in the specification and in reasoning. Ideally this
should be done in such a way that the end user would not need to deal
with explicit typing information, but would be able to perform
operations like induction and case analysis based on the type of a
term. A major difficulty in such automation would be dealing with the
contexts needed to recognize terms which use higher-order abstract
syntax. Multiple terms may have different contexts which have a
particular relationship to each other which needs to be maintained. It
is not clear how such information could be succinctly expressed.

\section{Alternate Specification Logics}

One motivation for the two-level logic approach to reasoning is that
it lets us use
general properties of a specification logic in reasoning about
particular specifications. This approach has been successful relative
to the second-order hereditary Harrop formula logic. However,
different problem domains might require different specification
logics. For example, a {\it linear specification logic} that allows
for transient judgments has been found useful in characterizing
properties of hardware \cite{chirimar95phd} and programming languages
with references \cite{mcdowell02tocl}. One can imagine an extension of
the Abella system which allows different specification logics to be
plugged in and used as particular reasoning tasks demand. Given the
way our framework is designed, judgments from these different
specification languages would be able to co-exist during reasoning.

\section{Focusing and Proof Search}

Recent research has been looking at techniques for guiding proof
search in \logic-like logics based on the notion of {\it focusing}
\cite{baelde07lpar,baelde09focused}. These techniques allow the
automation of a significant portion of the reasoning process by
pruning redundant choices. For example, it was proven that if an
atomic judgment is to be inducted on during a proof, then this
induction can be done immediately. These techniques have been
effectively realized in the Tac theorem prover \cite{tac-website}.
The Abella system could also be extended to support this type of
automation. Moreover, one should investigate how this automation
interacts with the two-level logic approach to reasoning.

\section{An Integrated Framework}

The Teyjus system allows for animating descriptions in our
specification logic and the Abella system allows for reasoning about
such descriptions. It would be worthwhile to combine these systems
into an integrated framework which enables a more fluid relationship
between the processes of specification and reasoning. In its simplest
form, such an integration would allow the different aspects of
prototyping and reasoning to be invoked seamlessly from a common
description of a formal system. As an example of a deeper kind of
integration looked at from the perspective of the reasoning component,
uses of the $\defR$ and $\defL$ rules relative to the encodings of
specifications within \logic can draw benefit from computations within
the specification logic. An important issue to be tackled in
implementing such relationships would be that of designing an
interface that allows a smooth transition between the different
functionalities that Teyjus and Abella, the two currently separate
components of our framework, provide.

%%% Local Variables:
%%% mode: latex
%%% TeX-master: "root"
%%% End:

% LocalWords:  de Bruijn Girard's Harrop Abella Tac Teyjus functionalities
% LocalWords:  foundational

\bibliographystyle{alpha}
\bibliography{master}

\end{document}